\theoremstyle{definition}
\newtheorem{Def}{Definition}[section]
\newtheorem{ex}[Def]{Example}
\newtheorem{rem}[Def]{Remark}
\theoremstyle{plain}
\newtheorem{prop}[Def]{Proposition}
\newtheorem{thm}[Def]{Theorem}
\newtheorem*{thm*}{Theorem}
\newtheorem{lem}[Def]{Lemma}
\newtheorem{cor}[Def]{Corollary}
\newtheorem*{cor*}{Corollary}
\newtheorem*{con*}{Conjecture}
\newtheorem*{verm*}{Vermutung}
\newtheorem{notation}[Def]{Notation}
\newcommand{\logX}{Y(\R)}
\newcommand{\bm}{m}
\newcommand{\var}{\operatorname{var}}
\newcommand{\varbar}{\operatorname{\overline{var}}}
\newcommand{\Ima}{\operatorname{Im}}
\newcommand{\Hom}{\operatorname{Hom}} %External Hom
\newcommand{\Vol}{\operatorname{Vol}}
\newcommand{\Gr}{\operatorname{Gr}}
\newcommand{\T}{\operatorname{T}}
\newcommand{\GL}{\operatorname{{\mathbf GL}}}
\newcommand{\cA}{{\mathcal A}}
\newcommand{\cL}{{\mathcal L}}
\newcommand{\cM}{{\mathcal M}}
\newcommand{\cN}{{\mathcal N}}
\newcommand{\cS}{{\mathcal S}}
\newcommand{\C}{{\mathbb C}}
\newcommand{\R}{{\mathbb R}}
\newcommand{\pp}{\mathbb{P}}
\newcommand{\Q}{{\mathbb Q}}
\newcommand{\N}{{\mathbb N}}
\newcommand{\Z}{{\mathbb Z}}
\title[Higher Dimensional Fourier Quasicrystals]{Higher Dimensional Fourier Quasicrystals from Lee--Yang Varieties}
\author{Lior Alon}
\address{Massachusetts Institute of Technology, USA} 
\email{lioralon@mit.edu}
\author{Mario Kummer}
\address{Technische Universit\"at, Dresden, Germany} 
\email{mario.kummer@tu-dresden.de}
\author{Pavel Kurasov}
\address{Stockholm University, Sweden} 
\email{kurasov@math.su.se}
\author{Cynthia Vinzant}
\address{University of Washington, Seattle, USA} 
\email{vinzant@uw.edu}
\thanks{LA was supported by the Simons Foundation Grant 601948, DJ. MK has been partially supported by the Deutsche Forschungsgemeinschaft (DFG, German Research Foundation), Grant number 502861109. PK was partially supported by the Swedish Research Council Grant 2020-03780.
CV was partially supported by NSF grant No.~DMS-2153746 and a Sloan Research Fellowship. }
\begin{document}

\subjclass[2020]{52C23, 42B10, 42A75, 42A38, 14P05}

\begin{abstract}
In this paper, we construct Fourier quasicrystals with unit masses in arbitrary dimensions. 
This generalizes a one-dimensional construction of Kurasov and Sarnak.
To do this, we employ a class of complex algebraic varieties avoiding certain regions in $\C^n$, which generalize hypersurfaces defined by Lee--Yang polynomials.   We show that these are Delone almost periodic sets that have at most finite intersection with every discrete periodic set. 
\end{abstract}
\maketitle

\section{Introduction}
{ In this paper, we describe a general construction of Delone sets $\Lambda\subseteq\R^{d}$, in every dimension $d$, whose counting measures $\mu_{\Lambda} = \sum_{x \in \Lambda} \delta_{x}$ are Fourier quasicrystals with unit masses. Our construction is based on the interplay between real algebraic geometry and discrete geometry, using a new class of complex algebraic varieties that avoid certain regions in $\mathbb{C}^{n}$.} This directly generalizes previous work by Kurasov and Sarnak \cite{kurasovsarnak}, who constructed one-dimensional Fourier quasicrystals with unit masses using Lee--Yang polynomials. The notion of a Fourier quasicrystal was introduced and used in the context of measures, see e.g. \cites{lev2015quasicrystals, favorov2016fourier, kurasovsarnak, lagarias2000mathematical, lev2017fourier}. 

{ A \emph{crystalline measure} is a tempered measure of the form $\mu = \sum_{x \in \Lambda} a_{x} \delta_{x}$, with discrete support $\Lambda$, such that the distributional Fourier transform is also a (complex) measure with discrete support $\hat{\mu}=\sum_{\xi\in\Lambda'}c_{\xi}\delta_{\xi},\ c_{\xi}\in\C$, see \cite{meyerquestion}. A measure $\mu $ is a \emph{Fourier quasicrystal} if it is crystalline and the absolute value measures $|\mu|$ and $|\hat{\mu}|$ are tempered.\footnote{Not all crystalline measures are Fourier quasicrystals, see e.g. \cite{favorov2024crystalline}. } We say $\mu$ has unit masses if $a_{x} = 1$ for all $x \in \Lambda$, in which case $\mu$ is the counting measure of $\Lambda$. To ease the reading, we say that a set $\Lambda \subseteq \mathbb{R}^{d}$ is a Fourier quasicrystal if its counting measure $\mu_{\Lambda} = \sum_{x \in \Lambda} \delta_{x}$ is a Fourier quasicrystal. Explicitly, we have the following.  }
\begin{Def}[FQ]\label{def:FQ}
	A set $\Lambda\subseteq\R^{d}$ is a \emph{Fourier quasicrystal} (FQ) if there exists a set $\Lambda'\subseteq\R^{d}$ and non-zero complex coefficients $(c_{\xi})_{\xi\in\Lambda'}$, called the \emph{spectrum} and the \emph{Fourier coefficients} of $\Lambda$, respectively, such that
\begin{enumerate}
	\item for every Schwartz function $f\in\mathcal{S}(\R^{d})$, 
	\begin{equation*}
		\sum_{x\in\Lambda}\hat{f}(x)=  \sum_{\xi\in\Lambda'}c_{\xi}f(\xi),
	\end{equation*}
	\item both $\Lambda$ and $\Lambda'$ are discrete (locally finite) with polynomial growth bound: there exist $C>0$ and $m\in\N$ such that 
	\[\left|\Lambda\cap B_{R}(0)\right|+\sum_{\xi\in\Lambda'\cap B_{R}(0)}|c_{\xi}|\le CR^{m}\quad\text{for all }\ R>1,\]
	where $B_{R}(0)$ is the ball of radius $R$ around the origin.
\end{enumerate}

\begin{figure}
	\begin{center}
		\includegraphics[height=1.5in]{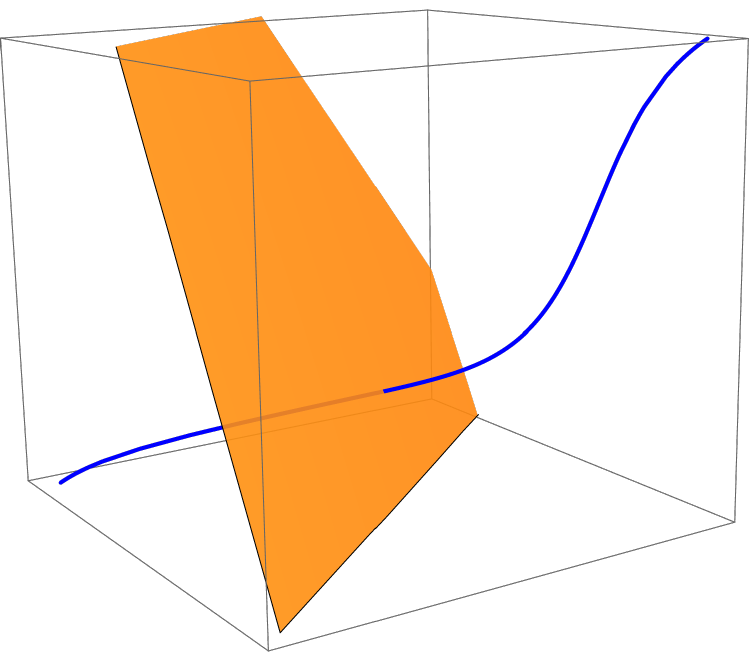} \ \ \ \  \includegraphics[height=1.5in]{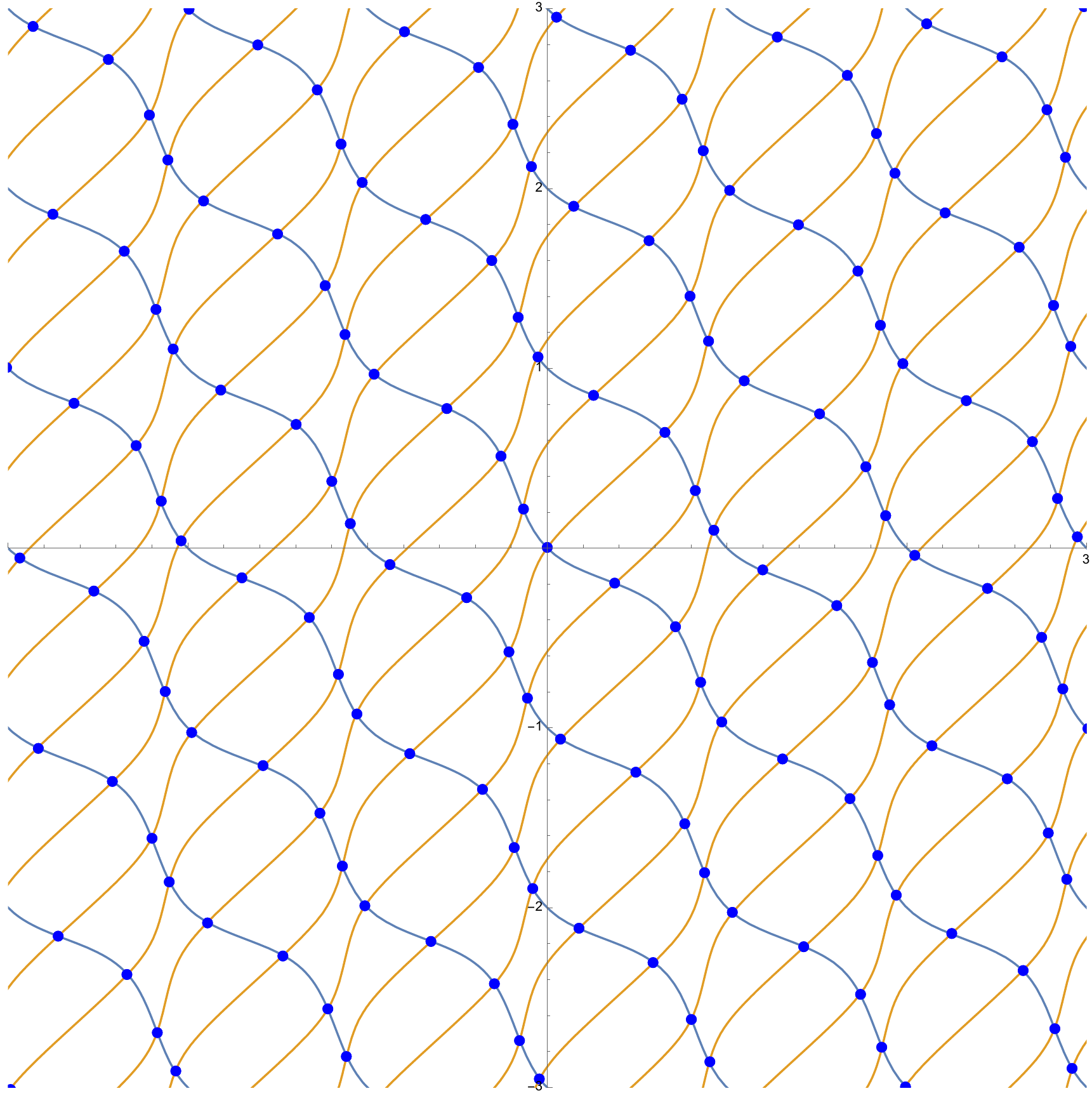}
		\  \includegraphics[height=1.5in]{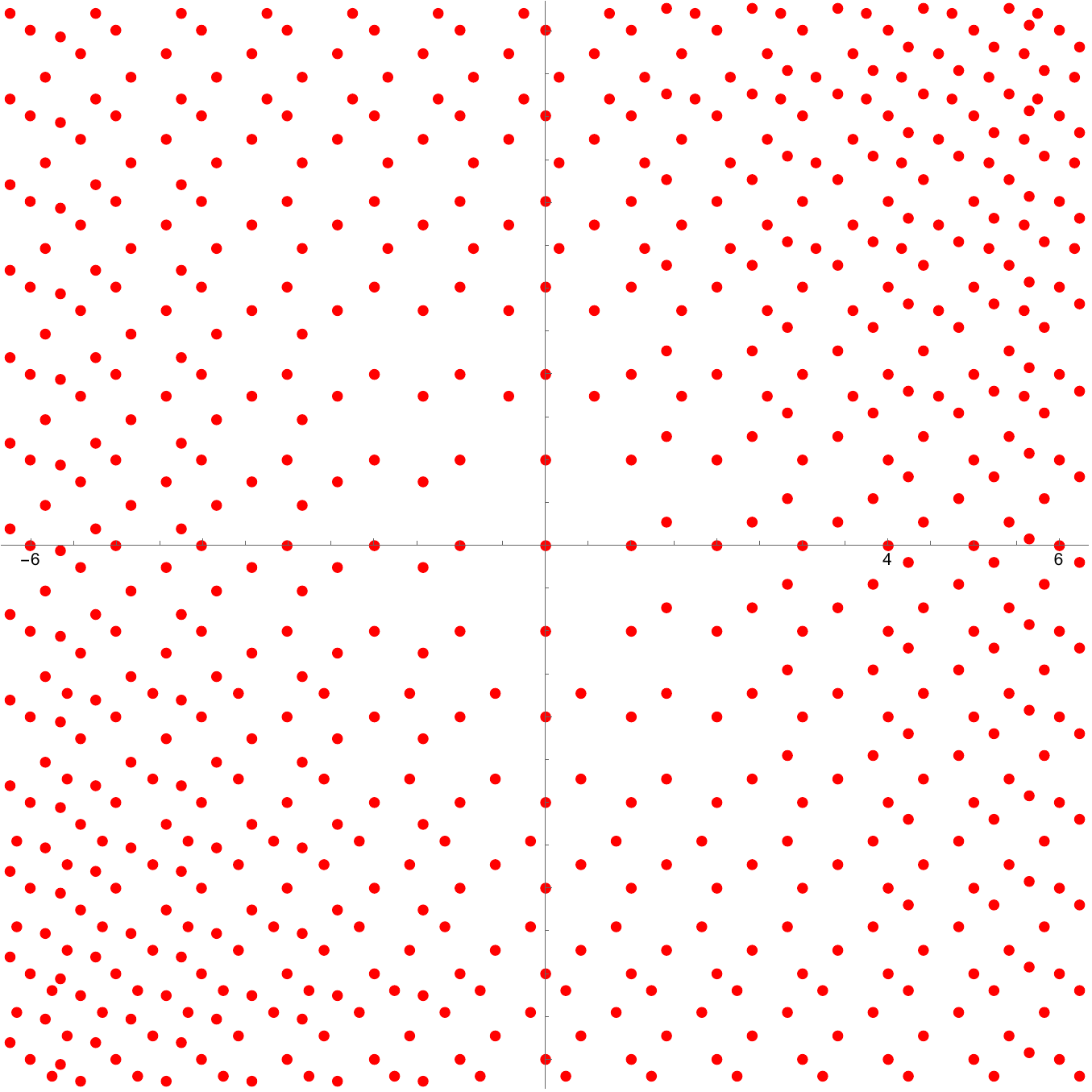}
	\end{center}
	\caption{From \Cref{ex:runningRational}, the curve $\{\theta \mid \exp(2\pi i\theta)\in X \}$, the Fourier quasicrystal $\Lambda$ (blue dots), and a discrete set which contains the spectrum of $\Lambda$ (red dots).} 
	\label{fig:RationalCurveIntro}
\end{figure}

\end{Def}
{ Throughout the paper, the term \emph{Fourier quasicrystal} (FQ) is always understood in the set-theoretic sense of \Cref{def:FQ}, except for the introductory section. In the rest of the introduction, we use the terms \emph{FQ-sets} and \emph{FQ-measures} to make a clear distinction between sets in $\R^d$ that are Fourier quasicrystals in the sense of \Cref{def:FQ} and measures on $\R^d$ that are Fourier quasicrystals.}

\begin{Def}[Delone]
	A set $\Lambda \subseteq \R^d$ is \emph{Delone} if it is uniformly discrete and relatively dense. Recall that $\Lambda$ is uniformly discrete if there exists $r > 0$ such that $|x - x'| \ge r$ for every pair of distinct points $x, x' \in \Lambda$, and relatively dense if there exists $R > 0$ such that the balls of radius $R$ around points of $\Lambda$ cover $\R^d$:
	\begin{equation*}
		\bigcup_{x \in \Lambda} B_R(x) = \R^d.
	\end{equation*}
\end{Def}

Lagarias states in \cite{lagarias2000mathematical}*{Corollary~3.3} (following \cite{cordoba1989dirac}) that if $\Lambda \subseteq \R^d$ is a Delone FQ-set whose Fourier transform is translation bounded, then $\Lambda$ is periodic.  Lagarias conjectured that if an FQ-measure has Delone support and Delone spectrum, then it is periodic. The conjecture was proven by Lev and Olevskii in \cite{lev2015quasicrystals} for $d=1$, and also for $d > 1$ under an additional positivity assumption. Favorov showed that the positivity assumption is needed by constructing a non-positive two-dimensional counterexample \cite{favorov2016fourier}. Since counting measures are positive, the conjecture is true for FQ-sets in arbitrary dimension. {This leads naturally to the following question: 
	\begin{center}
		Are there non-periodic Delone FQ-sets?
\end{center}
We provide a positive answer in every dimension $d$, which also answers questions asked by Meyer in \cite{meyerquestion}*{p.~3158}.}
The first construction of a non-periodic Delone FQ-set was given by Kurasov and Sarnak \cite{kurasovsarnak} for $d=1$. They provided a general construction of FQ-sets in $\R$ using multivariate stable (Lee--Yang) polynomials.  
We call $p\in \C[z_1, \hdots, z_n]$ a Lee--Yang polynomial if $p(z)\neq 0$ whenever $z=(z_1, \hdots, z_n)\in \C^n$ has $|z_i|<1$ for all $i$ or $|z_i|>1$ for all $i$. 
{Lee--Yang polynomials first appeared in the work of Lee and Yang \cite{lee1952statistical} in which they showed that certain univariate polynomials\footnote{{The grand partition functions of two dimensional Ising models}} have all their roots on the circle, by showing that each of those is the restriction $s\mapsto p(s,s,\ldots,s)$ of a Lee--Yang polynomial $p$.
More generally, for every Lee--Yang polynomial $p$ and every} positive vector  $ \ell \in \R_+^n$ the trigonometric polynomial 
\begin{equation*}
f(x) = p(\exp(2\pi i \ell x)) \quad \text{ with } \quad x\in \C
\end{equation*} has all of its roots in $\R$, where $\exp\colon\C^{n}\to (\C^{*})^{n}$ is taken entry-wise. That is, $\exp(2\pi i \ell x)=(e^{2\pi i \ell_{1}x},e^{2\pi i \ell_{2}x},\ldots,e^{2\pi i \ell_{n}x})$.
They show that when $p(z)$ and $\nabla p(z)$ have no common zeros in the torus $\T^n = \{z\in \C^n\mid |z_i|=1 \text{ for all }i\}$, the set $\Lambda = \{x\in \R \mid f(x)=0\}$ is a Delone FQ-set. 
Moreover, under mild conditions on $\ell$ and $p$, the set $\Lambda$ intersects every arithmetic progression in at most finitely many points. Properties of this construction are further investigated in \cite{AlonVinzant}. 

It turns out that every {FQ-set} in $\R$ arises from this construction. Namely, Olevskii and Ulanovskii showed that every FQ-set $\Lambda\subseteq\R$
is the zero set of a real-rooted  trigonometric polynomial \cite{olevskii2020fourier} and  Alon, Cohen and Vinzant showed that every real-rooted trigonometric polynomial is the restriction of a Lee--Yang polynomial
 \cite{AlonCohenVinzant}. Arbitrary one-dimensional {FQ-measures} are less well-understood.
 
The Kurasov--Sarnak construction can be seen geometrically as follows. 
The vector $\ell\in \R_+^n$ defines a map from $\C$ to $\C^n$ given by $x\mapsto \exp(2\pi i \ell x)$ that maps $\R$ into $\T^n$. 
When the entries of $\ell$ are linearly independent over $\Q$, the image of $\R$ under this map 
is dense in $\T^n$. 
The FQ-set $\Lambda$ is the set of $x\in \R$ whose image under this map belongs to the 
algebraic variety $\{z\in \T^n \mid p(z)=0\}$.

In this paper we generalize this geometric construction by considering algebraic varieties of arbitrary 
codimension $d$ and maps $\C^d\to \C^n$ given by $x\mapsto  \exp(2\pi iLx)$, where $L$ is a $n\times d$ matrix with real entries and positive $d\times d$ minors. 
The higher codimensional analogue of the zero set of a Lee--Yang polynomial will be a Lee--Yang variety, defined in \Cref{def:leeyang}.  {The intimate relation between real-rootedness and Lee--Yang polynomials extends to Lee--Yang varieties.
Namely,} for every Lee--Yang variety $X$ of codimension $d$, $\exp(2\pi iLx)\in X$ implies that $x\in \R^d$ and analogously to the Kurasov--Sarnak construction, we 
consider the set $\Lambda = \{x\in \R^d \mid \exp(2\pi iLx)\in X\}$.
Similar notions of ``real-rootedness'' for algebraic varieties have been studied in 
\cites{SV18, KV19, KS20a, KS20b, RVY21}. This existing theory enables us to provide a large collection of Lee--Yang varieties and resulting Delone FQ-sets with various properties in $\R^d$ for arbitrary $d\in\N$. {Our proofs are mainly based on the interplay between real algebraic geometry and discrete geometry, rather than on complex analysis.}

If $\Lambda\subseteq\R$ is an FQ-set, then by \cite{lev2015quasicrystals} it is the zero set of a of a trigonometric polynomial in one complex variable, all of whose zeroes are real.
Similarly, the FQ-sets we construct in this paper are solution sets to systems of multivariate trigonometric equations in several complex variables, all of whose solutions are real. More precisely, if our Lee--Yang variety is defined by equations $p_1(z)=0, \hdots, p_m(z)=0$ where $p_j(z)\in \C[z_1, \hdots, z_n]$, then we obtain the system of trigonometric equations
\[
p_1(\exp(2\pi iLx))=0, \hdots, p_m(\exp(2\pi iLx))=0
\]
all of whose solutions in $\C^d$ lie in $\R^d$. The FQ-set $\Lambda$ is the set of solutions.

\begin{ex}\label{ex:runningRational}
Consider the algebraic curve $X\subseteq \C^3$ defined by equations 
  \begin{align*}
 p_1(z) =    (1 - 2 i) - z_1 - z_2 + (1 + 2 i) z_1z_2 &=0,\\ 
 p_2(z) =  1 - (1 +  i) z_1 - (1-  i) z_3+  z_1z_3&=0.
  \end{align*}
  This curve can be parametrized (up to closure) by the map 
  \begin{equation*}
   t\mapsto \left(\frac{-1 + i + t}{-1 - i + t},  \frac{-i + t}{i + t}, \frac{1 + i + t}{1 - i + t} \right).   
  \end{equation*}
Take the linear map $x=(x_{1},x_{2})\mapsto Lx = (x_1, x_2, -\sqrt{2}x_1 + \sqrt{3}x_2)$, given by a $3\times 2$ matrix $L$ whose $2\times 2$ minors are $1$, $\sqrt{2}$, and $\sqrt{3}$. 
We will see that a compactification of $X$ is a Lee--Yang variety and that $\Lambda=\{x\in\R^2\mid \exp(2\pi iLx)\in X\}$ is a Delone FQ-set.
This set $\Lambda$ consists of the solutions to the two trigonometric equations 
  \begin{align*} 
-4 \sin (\pi \theta_1) \cos (\pi  \theta_2)-4 \sin (\pi  \theta_2) (\sin (\pi  \theta_1)+\cos (\pi \theta_1)) &=0,\\
 2 \sin (\pi  \theta_1) \cos (\pi  \theta_3)-2 \sin (\pi  \theta_3)  (2 \sin (\pi \theta_1)+\cos (\pi \theta_1))&=0.
  \end{align*}
where $(\theta_1, \theta_2, \theta_3)  = (x_1, x_2, -\sqrt{2}x_1 + \sqrt{3}x_2)$.
The left hand side of these equations are $e^{-\pi i (\theta_1+\theta_2)} p_1(\exp({2\pi i \theta}))$ and $e^{-\pi i (\theta_1+\theta_3)} p_2(\exp(2\pi i \theta))$.
The density of $\Lambda$ is
    \begin{equation*}
		\lim_{R\to\infty}\frac{\left|\Lambda\cap B_{R}\right|}{\Vol(B_{R})}= 1+\sqrt{2}+\sqrt{3}.
	\end{equation*}
The spectrum of $\Lambda$, as defined in \Cref{def:FQ}, is contained in the discrete set  
\[\Lambda' = \left\{(k_1-\sqrt{2}k_3, k_2+\sqrt{3}k_3) \mid k\in \Z^3, {\rm sign}(k)\not\in \{(+,-,+), (-,+,-)\}\right\}, \]
where ${\rm sign}(k)$ records the signs in $\{-, 0,+\}$ of the coordinates of $k$. 
As shown in \Cref{ex:PolyComplex}, the number of points in $\Lambda'\cap[-R,R]^2$ is 
$\frac{1}{9}(10 \sqrt{2} + 3\sqrt{3}) R^3 + O(R^2)$ as $R\to \infty$. {We will also see that there is no affine line $W \subsetneq \R^2$ such that $W\cap\Lambda$ contains a set isometric to an FQ-set in $\R$.}
The torus points of the curve $X$, the FQ-set $\Lambda$, and set $\Lambda'$ are shown in \Cref{fig:RationalCurveIntro}.
\end{ex}
Meyer in \cites{Meyer2023-2,Meyer2023} and de Courcy-Ireland and Kurasov \cite{KudCI} have used real-rooted systems of trigonometric polynomials to construct FQ-measures. They obtain weights that are in general not integer valued. In particular, their FQ-measures are not counting measures of subsets of $\R^d$. In a recent preprint \cite{lawton2024fourier}, Lawton and Tsikh construct FQ-sets $\Lambda\subseteq\R^d$ using an approach that is similar to ours in that $\Lambda$ is also the solution set to a system of multivariate trigonometric equations with only real solutions. We prove that every Delone FQ-set arising from their method can also be obtained from our construction (see \Cref{sec:reduct}). On the other hand, their method only applies to systems of trigonometric equations where the number of equations equals the number of variables while our construction is not subject to this restriction.

\subsection{Notations and conventions} Throughout the paper, we use $d,n\in \N$ and $c=n-d\ge 0$ as dimensions.  The Fourier quasicrystals in this paper will be subsets of  $d$-dimensional Euclidean space, constructed using algebraic varieties of dimension $c=n-d\ge 0 $ in an ambient space of dimension $n\ge d$.

We consider $\C^*=\C\setminus\{0\}$ as a subset of the projective line $\pp^1=\C\cup\{\infty\}$. 
For a subset $X\subseteq(\pp^1)^n$ we denote $X(\T)=X\cap\T^n$, where  
	\begin{equation*}
		\T^n =\{z\in(\C^*)^n\mid |z_1|=\cdots=|z_n|=1\}.
	\end{equation*}
We use entrywise functions, for $z=(z_{1},\ldots,z_{n})$
	\[\exp:\C^{n}\to (\C^{*})^{n}\quad \exp(z)=(e^{z_{1}},\ldots,e^{z_n}),\]
	and similarly, $\log|z|=(\log|z_1|,\ldots,\log|z_n|)$, with the convention $\log|0|:=-\infty$ and $\log|\infty|:=\infty$, so that $\log|\cdot|$ is a map from  $(\pp^1)^{n}\to(\R\cup\{\pm\infty\})^{n}$.
	
	We denote $[n]:=\{1,2,\ldots,n\}$ and $\binom{[n]}{d}=\{I\subseteq [n]\mid\  |I|=d\}$ the collection of $d$-tuples of distinct indices in $[n]$.
Given $I\subseteq[n]$, a set $S$ and the product $S^{n}$, the canonical projection $\pi_I\colon S^{n} \to S^I$ is given by $\pi_I(s_1, \hdots, s_n) = (s_i)_{i\in I}$, where $S^I$ denotes the set of tuples indexed by $I$.  

We use $R^{n\times d}$ to denote the set of $n\times d$ matrices with entries in a ring $R$. 
Given a matrix $M\in R^{n\times d}$ (or $R^{d\times n}$) and $I\in\binom{[n]}{d}$, we use $M_I$ to denote the determinant of the $d\times d$ submatrix of $M$ obtained by restricting to the rows (or columns) indexed by $I$. We use $M^t$ to denote the transpose of $M$. By convention, all vectors are assumed to be column vectors. 

Finally, for a function $f$, we use the convention that its Fourier transform is  
	\begin{equation*}
		\hat{f}(y) = \int_{\mathbb{R}^n} f(x)e^{- 2\pi i \langle x,y\rangle}\,dx.
	\end{equation*}

\section{The construction and main results}
Here we present a rather general construction of multidimensional Fourier quasicrystals in detail
and formulate main theorems describing their properties.

\subsection{General construction and main theorem}
Our Fourier quasicrystals will result from the following construction. The input to our construction is:
\begin{itemize}
	\item an algebraic variety $X\subseteq(\pp^1)^n$ of dimension $ c = n-d$, and 
	\item a real matrix $L\in\R^{n\times d} $.
\end{itemize}
We consider the points whose image under the map $x\mapsto \exp(2\pi iLx)$ belongs to $X$:
\begin{equation} \label{lambda}
	\Lambda=\Lambda(X,L): =\{x\in\C^d\mid \exp(2\pi iLx)\in X\}.
\end{equation}
If all $d\times d$ minors of $L$ are positive and $X$ is a strict Lee--Yang variety in the sense of \Cref{def:strictLY} below, then $\Lambda\subseteq\R^d$, and our main result states that $\Lambda$ is a $d$-dimensional Fourier quasicrystal. In \Cref{ssec:lyex} we provide a method for explicit construction of one-dimensional strict Lee--Yang varieties in $(\pp)^{n}$, hence $d=n-1$ dimensional Fourier quasicrystals, for every $n\ge 2$. In \Cref{sec:reduct}, we will see that many algebraic varieties can be transformed to a strict Lee--Yang variety by a change of coordinates. Strict Lee--Yang varieties are defined as follows.

\begin{Def}\label{def:var}
		For a vector $b=(b_1,\ldots,b_n)\in\left(\R\cup\{\pm\infty\}\right)^{n}$ we let $\var(b)$ be the number of sign changes in the sequence $b_1,\ldots,b_n$ after discarding the zeroes.
\end{Def}

\begin{Def}[Strict Lee--Yang variety]\label{def:strictLY}
	An equidimensional\footnote{An algebraic variety is equidimensional if it is a finite union of irreducible algebraic varieties of the same dimension.} closed algebraic subvariety $X$ in $ (\pp^1)^n $, of dimension $ c=n-d\ge 0 $, is called a \emph{strict Lee--Yang variety} if it satisfies the following three conditions:
	\begin{enumerate}
		\item $X$ is invariant under the coordinate-wise involution $z\mapsto 1/\overline{z}$;
		\item for every $z\in X$ we either have $z\in X(\T)=X\cap\T^{n}$ or $\var(\log|z|)\geq d$;
        \item $X(\T)$ is contained in the smooth part of $X$.
	\end{enumerate}
\end{Def}
Now we can formulate our main result:
\begin{thm}[Construction of Delone FQ]\label{thm:main1}
	Let $X\subseteq(\pp^1)^n$ be a strict Lee--Yang variety of dimension $c=n-d$ and let $L\in\R^{n\times d}$ be a real matrix all of whose $d\times d$ minors are positive. Then the set $\Lambda=\Lambda(X,L)$ is real, $\Lambda\subseteq\R^{d}$, it is a Delone set, and it is a Fourier quasicrystal. Its spectrum is contained in the discrete set
	\begin{equation}\label{eq: lambda'}
		 			\Lambda'=\Lambda'(L)=\{L^{t}k\mid k\in\Z^{n},\ \var(k)<d\}.
				\end{equation} 
\end{thm}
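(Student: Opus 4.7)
The plan is to verify, in turn, that $\Lambda\subseteq\R^d$, is uniformly discrete and relatively dense, and satisfies the summation identity of \Cref{def:FQ} with spectrum inside $\Lambda'(L)$. For reality, write $v=\Ima(x)$ and $z=\exp(2\pi iLx)$, so $\log|z|=-2\pi Lv$. A classical sign-variation lemma for totally positive matrices yields $\var(Lv)\le d-1$ for every nonzero $v\in\R^d$ whenever $L$ has all positive $d\times d$ minors; combined with condition~(ii) of \Cref{def:strictLY}, this forces $z\in X(\T)$, i.e.\ $Lv=0$, and full column rank of $L$ then gives $v=0$.

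\emph{Uniform discreteness.} Let $\tilde Y\subseteq\R^n$ denote the $\Z^n$-periodic lift of $Y:=X(\T)$ under $y\mapsto\exp(2\pi iy)$, a smooth $c$-dimensional submanifold by condition~(iii); then $\Lambda=L^{-1}(\tilde Y)$. I would show that $V:=L(\R^d)$ is transverse to $\tilde Y$ at every point: a nonzero $w=Lv\in T_y\tilde Y\cap V$ corresponds, via the complex tangent space $T_zX$ at $z=\exp(2\pi iy)\in X$, to a smooth curve $\gamma(t)$ in $X$ with $\log|\gamma(t)|=tw+O(t^2)$; since $\var(w)\le d-1$, condition~(ii) forces $\gamma(t)\in X(\T)$ for all small $t$, hence $w=0$. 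Transversality plus compactness of $\tilde Y/\Z^n$ yields a uniform positive lower bound on pairwise distances in $\Lambda$.

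\emph{Fourier identity (the main obstacle).} This is where the algebra of Lee--Yang varieties does the real work. I would approximate $\mu_\Lambda=\sum_{x\in\Lambda}\delta_x$ by smooth functions $g_\varepsilon$ on $\T^n$ weakly convergent to the canonical transverse-density measure $\sigma_Y$ on $Y$ (normalized so that the pullback $g_\varepsilon(\exp(2\pi iL\cdot))\,dx$ on $\R^d$ tends weakly to $\mu_\Lambda$). Fourier-expanding $g_\varepsilon(z)=\sum_k \widehat g_\varepsilon(k)\,z^k$ and pairing with Schwartz $\hat f$ yields, in the limit,
\[\sum_{x\in\Lambda}\hat f(x)=\sum_{k\in\Z^n}\widehat\sigma_Y(-k)\,f(L^tk).\]
The hardest step is the algebraic vanishing
\[\widehat\sigma_Y(k)=\int_Y z^{-k}\,d\sigma_Y(z)=0\qquad\text{whenever }\var(k)\ge d.\]
My plan is a multivariate residue / contour deformation extending Kurasov--Sarnak: represent $\widehat\sigma_Y(k)$ as a Grothendieck-type residue over a torus translate $\{z:\log|z|=s\}$, and use condition~(ii) to ensure invariance of the result under deformations of $s$ inside the open region $\{s:\var(s)<d\}$. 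When $\var(k)\ge d$, one can push $s$ to infinity inside this region along a direction where $|z^{-k}|=e^{-\langle k,s\rangle}\to 0$, forcing the residue to vanish.

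\emph{Relative density and growth.} Applied to a positive Schwartz bump $f$ concentrated near $0\in\R^d$, the Fourier identity gives a lower bound on $|\Lambda\cap(y_0+B_R)|$ uniform in $y_0$ once $R$ is large enough that the $k=0$ term $\widehat\sigma_Y(0)f(0)$ dominates the remainder; this is relative density. Finally, $\Lambda'\subseteq L^t\Z^n$ is a union of finitely many sets $L^t(\Z^n\cap K_\sigma)$, where $K_\sigma$ is the sign cone indexed by $\sigma\in\{-,0,+\}^n$ with $\var(\sigma)<d$; each such pointed cone has bounded $L^t$-preimage of $B_R$, giving $|\Lambda'\cap B_R|=O(R^d)$ by a standard lattice-point/volume count.
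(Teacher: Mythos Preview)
Your overall architecture matches the paper's: reality from sign-variation bounds, Delone from transversality plus compactness, the summation formula via smooth approximation, and the key input being the vanishing of $\int_{X(\T)}z^{-k}\,d\sigma$ for $\var(k)\ge d$. But two steps have genuine gaps.

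\textbf{The vanishing argument.} Your proposed contour deformation does not make sense as stated. You want to view $\widehat\sigma_Y(k)$ as an integral over $X\cap\{\log|z|=s\}$ and deform $s$ inside $\{s:\var(s)<d\}$; but by the strict Lee--Yang condition this intersection is \emph{empty} for every nonzero $s$ in that region, so there is nothing to deform to. A Grothendieck-residue reformulation over the full torus $\{\log|z|=s\}$ would require $X$ to be a global complete intersection, which it need not be (the paper gives an explicit non-complete-intersection example). The paper's argument is different and more delicate: it writes $\widehat{m}_I(k)$ as $\int_{X(\T)}z^{-k}\frac{dz_{i_1}\wedge\cdots\wedge dz_{i_c}}{z_{i_1}\cdots z_{i_c}}$, chooses $d+1$ indices $J$ on which the entries of $k$ are nonzero and alternate sign, and uses the strict Lee--Yang condition to show that the integrand extends to a holomorphic $c$-form on a region $B\subseteq X$ obtained by letting one coordinate run into the disk. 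After a resolution of singularities, $X(\T)$ bounds in this region and de Rham/Stokes gives zero. The point is that the deformation happens \emph{inside $X$}, not in the ambient torus.

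\textbf{Growth of $\Lambda'$.} Your claim that $|\Lambda'\cap B_R|=O(R^d)$ is wrong. The preimage $\{y\in\R^n_{\var<d}:L^ty\in B_R\}$ is an $n$-dimensional region (a union of polytopes scaling linearly in $R$), so its lattice-point count is $\sim R^n$, not $R^d$; the paper proves exactly this in \Cref{cor:imagediscrete}. Fortunately $O(R^n)$ is all that is needed for \Cref{def:FQ}(2), but the bound you wrote would be false in general.

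Smaller points: in your transversality step you should use $\varbar$ rather than $\var$ (upper semicontinuity is what lets you pass to the limit along the curve; this is how the paper's \Cref{lem:tangentspace} works), and the paper establishes Delone \emph{before} the summation formula via a direct compactness argument on the fundamental domain, since absolute convergence of $\sum_{x\in\Lambda}\hat f(x)$ is needed to justify the limiting step.
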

The proof of \Cref{thm:main1} is given at the end of \Cref{sec:measuresandrestrictions}. Note that neither \Cref{def:strictLY} nor the positivity of the minors of $L$ are invariant under arbitrary permutations of the coordinates.
\begin{rem}[Real-rooted system of trigonometric equations]
	The statement that $\Lambda(X,L)$ is real is equivalent to the system of trigonometric equations $$ \C^{d}\ni x\mapsto P(\exp(2\pi iLx)) $$ having only real solutions, where $ \C^{n}\ni z\mapsto P(z) $ is a system of (possibly more than $ d $) polynomial equations that determine $ X$. An independent recent work \cite{lawton2024fourier} provided a construction of $d$-dimensional FQs from sufficiently generic real-rooted systems of $d$ trigonometric equations. In \Cref{sec:reduct} we show that every such system is obtained via our construction if the FQ is Delone.
\end{rem}
\subsection{Properties of the constructed Fourier quasicrystals} Similarly to the one-dimensional case \cites{kurasovsarnak,AlonVinzant}, under mild conditions on $X$ and $L$ the set $\Lambda(X,L)$ is far from being periodic. Let $\dim_{\Q}(A)$ of a set $A\subseteq\R^{d}$ denote the dimension, as a rational vector space, of the linear span of its elements over $\Q$. For example, the set $\{1,\pi,2+\pi\}\subseteq \R$ has $\dim_{\Q}=2$. A lattice in $\R^{d}$ has $\dim_{\Q}\le d$, and a set contained in the projection of a lattice in $\R^{d}\times\R^{N}$ to $\R^{d}$ has $\dim_{\Q}\le d+N$.  
\begin{thm}[Quantitative non-periodicity]\label{thm: non-periodicity}
Let $\Lambda=\Lambda(X,L)$ be defined as in \Cref{thm:main1}, and further assume that 
\begin{enumerate}
	\item the $d\times d$ minors of $L$ are $\Q$-linearly independent, and
	\item $X$ is irreducible and $X\cap (\C^{*})^{n}$ is not a coset of an algebraic subtorus\footnote{A coset of a $c$-dimensional algebraic subtorus of $(\C^{*})^{n}$ is a set $\{\exp(x+My)\mid y\in\C^{c}\}$ with $x\in\C^{n}$ and $M\in\Z^{n\times c}$ of full rank.} of $(\C^{*})^{n}$.
\end{enumerate}
Then $\dim_{\Q}(\Lambda)=\infty$, and there is a constant $r=r(X)>0$, that only depends on $n$ and the degree of $X$, such that the following uniform bounds hold 
\[|\Lambda\cap A|<r^{m+1},\ \text{  for every set  }\ A\subseteq\R^{d}\ \text{with}\ \dim_{\Q}(A)=m.\]
In particular, for every $a,b\in\R^{d} $ there are at most $r^{3}$ points $a+jb\in\Lambda$ with $j\in\Z$.
\end{thm}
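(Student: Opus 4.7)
The plan is to combine an injectivity statement for the exponential parametrization $\phi(x) = \exp(2\pi i L x)$ with an effective multiplicative Mordell--Lang-type bound on intersections of $X$ with finite-rank subgroups of $(\C^*)^n$.

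I would first establish the following strong consequence of (1): for every coset $C = z_0 T'$ of any proper algebraic subtorus $T' \subsetneq (\C^*)^n$, the preimage $\phi^{-1}(C) \subseteq \R^d$ contains \emph{at most one} point (so in particular $\phi$ itself is injective). Writing $T' = \{\exp(2\pi i M t) : t \in \R^\ell\}$ with $M\in\Z^{n\times\ell}$, the condition $\phi(x) \in z_0 T'$ becomes the overdetermined linear system $(L,-M)\binom{x}{t} = y_0 + k$ with $k\in\Z^n$. Solving by Cramer's rule on a $(d+\ell)\times(d+\ell)$ submatrix and reading the remaining consistency conditions as a nontrivial $\Z$-linear combination of the $d\times d$ minors of $L$, assumption (1) forces $\Z^n\cap\Ima(L,-M) = M\Z^\ell$; the admissible integer shifts of $k$ then act on $(x,t)$ only through the $t$-coordinate, so $x$ is uniquely determined.

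Next, for $A\subseteq\R^d$ with $\dim_\Q A = m$, set $V = \operatorname{span}_\Q A$ and $\Gamma = \phi(V)$, a torsion-free divisible subgroup of $(\C^*)^n$ of $\Q$-rank $m$. Injectivity of $\phi$ gives $|\Lambda\cap A|\le|\Lambda\cap V|=|\Gamma\cap X|$. A standard stabilizer argument together with (2) rules out $X$ being contained in any proper subtorus coset: if $X\subseteq z_0 T'$, then $T'\cdot X\subseteq z_0 T'$ forces $\dim(T'\cap\operatorname{Stab}(X))\ge c$ and hence $X$ to be itself a coset of $\operatorname{Stab}(X)^0$, contradicting (2). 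Laurent's multiplicative Mordell--Lang theorem then decomposes $\Gamma\cap X$ into a finite union of cosets $\gamma_i\Gamma_i$ whose Zariski closures $\gamma_i H_i\subseteq X$ are proper torus cosets of dimension strictly less than $c$. The preimage bound applied to each $\gamma_i H_i$ gives $|\phi^{-1}(\gamma_i\Gamma_i)\cap V|\le 1$, so $|\Gamma\cap X|$ is at most the number $N(X,\Gamma)$ of maximal cosets in this decomposition.

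The hard part will be the bound $N(X,\Gamma) < r^{m+1}$ with $r$ depending only on $n$ and $\deg X$. For this I would invoke an effective form of the multiplicative Mordell--Lang theorem (R\'emond, building on Evertse--Schlickewei--Schmidt) adapted to divisible subgroups of $\Q$-rank $m$, or alternatively reprove the required bound by a direct Bezout-style argument inside the toric compactification $(\pp^1)^n$. Combined with the previous steps this yields $|\Lambda\cap A| < r^{m+1}$. Finally, $\dim_\Q(\Lambda)=\infty$ follows by contradiction: a finite value $M=\dim_\Q(\Lambda)$ would, via $A = \Lambda$, give $|\Lambda|<r^{M+1}<\infty$, contradicting $\Lambda$ being a Delone set in $\R^d$ with $d\ge 1$. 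The specialization to $a+b\Z$ is the case $\dim_\Q A\le 2$, giving the bound $r^3$.
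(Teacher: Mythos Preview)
Your overall strategy matches the paper's exactly: reduce to an effective multiplicative Mordell--Lang bound (the paper cites Evertse's theorem directly, which already gives the explicit $r$) and then show that each small torus coset inside $X$ can contain the image of at most one point of $\R^d$ under $\phi$. The conclusion $\dim_\Q(\Lambda)=\infty$ and the arithmetic-progression specialization are handled just as you outline.

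Two things in your write-up need fixing. First, your step~1 is stated too strongly and is in fact false: it is \emph{not} true that $\phi^{-1}(z_0T')$ has at most one point for every proper subtorus $T'$. Already for $d=1$, $n=2$, $L=(1,\sqrt{2})^t$ and $T'=\C^*\times\{1\}$ (which has dimension $1=c$), the preimage $\phi^{-1}(T')=\tfrac{1}{\sqrt{2}}\Z$ is infinite. The correct (and only needed) statement is for $\dim T'<c=n-d$, which is exactly when your system $(L,-M)\binom{x}{t}=y_0+k$ is genuinely overdetermined. The paper proves this cleanly via the observation that $v=L(x-x')=Ms+k$ forces $\dim_\Q(v)\le \dim T'+1\le n-d$, contradicting the fact (Cauchy--Binet) that the $\Q$-linear independence of the $d\times d$ minors gives $\dim_\Q(Lu)>n-d$ for every $u\ne0$. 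This is shorter and less delicate than your Cramer's-rule sketch.

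Second, your stabilizer paragraph is both unnecessary and not quite right: from $X\subseteq z_0T'$ you cannot conclude $\dim(T'\cap\operatorname{Stab}(X))\ge c$, and in any case you don't need to rule out $X$ sitting inside a coset. All that's required is that any torus coset \emph{contained in} $X$ has dimension $<c$, and this is immediate from irreducibility of $X$ together with assumption~(2) (if $\gamma_iH_i\subseteq X$ and $\dim H_i=c$ then $\gamma_iH_i=X$, contradicting that $X$ is not a coset). Finally, the ``direct B\'ezout-style'' alternative for the effective bound is not realistic; the $r^{m+1}$ dependence on the rank is a genuinely deep input, and the paper simply quotes Evertse's explicit bound.
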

This statement, including an explicit expression for $r$, is a special case of \Cref{thm: non-periodicity2}. 

\begin{thm}[Properties of our construction]\label{thm:main2} Let $\Lambda=\Lambda(X,L)\subseteq\R^{d}$ and $\Lambda'$ as in \Cref{thm:main1}.
Let $(c_{\xi})_{\xi\in\Lambda'}$ be the coefficients such that
 for every Schwartz function $f\in\mathcal{S}(\R^{d})$, we have $\sum_{x\in\Lambda}\hat{f}(x)=  \sum_{\xi\in\Lambda'}c_{\xi}f(\xi)$.
Denote  the $d\times d$ minors of $L$ by $(L_{I})_{I\in \binom{[n]}{d}}$. Then:
	\begin{enumerate}
		\item The set $\Lambda$ is Bohr almost periodic with density equal to 
		\begin{equation*}
			\lim_{R\to\infty}\frac{\left|\Lambda\cap B_{R}\right|}{\Vol(B_{R})}= c_{0}=\sum_{I\in\binom{[n]}{d}} L_I\cdot d_{I}
		\end{equation*}
		where $d({X})=(d_J)_{J\in\binom{[n]}{d}}$ is the multidegree of the strict Lee--Yang variety ${X}$, and $B_{R}$ is the ball of radius $R$ around the origin.
		\item The set $\Lambda'=\Lambda'(L)$ as in \eqref{eq: lambda'} has rational dimension $\dim_{\Q}(\Lambda')\le n$ and growth rate $ \left|\Lambda'\cap B_{R}\right|=C R^{n}+O(R^{n-1}) $, for $ R \rightarrow \infty $ and some $ C \geq 0 $. If the rows of $L$ are linearly independent over $\Q$, then $C>0$.
		\item There are coefficients $c_{L,k}\in\C$, for $k\in\Z^n$, such that for all $\xi\in\Lambda'$
	\[c_{\xi}=\sum_{k\in\Z^{n},\ L^{t}k=\xi}c_{L,k}\]where almost all summands, and at least those with $\var(k)\geq d$, in the sum on the right-hand side are zero.
  The coefficients $c_{L,k}$ are bounded, $|c_{L,k}|\le c_{0}$, and depend linearly on $(L_{I})_{I\in \binom{[n]}{d}}$. Specifically, there is a collection of measures $\bm_{J}$ on $ X(\T)$, indexed by $J\in\binom{[n]}{n-d}$, such that 
		\[c_{L,k}=\sum_{I\in\binom{[n]}{d}}L_{I}\widehat{\bm}_{[n]\setminus I}(k).\]
	\end{enumerate}
\end{thm}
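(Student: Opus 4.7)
The plan is to derive all three parts from a single pullback / coarea identity that represents the counting measure $\mu_\Lambda$ in terms of measures on the submanifold $X(\T)\subseteq\T^n$. The starting observation is that $X(\T)$ is a smooth compact real $(n-d)$-submanifold of $\T^n$: smoothness along $X(\T)$ is condition~(3) of \Cref{def:strictLY}, while transversality of $X$ with $\T^n$ (so that $X(\T)$ has the expected real codimension $d$) is forced by the sign-variation bound of condition~(2). For each $J\in\binom{[n]}{n-d}$ I would introduce the finite measure
\begin{equation*}
  \bm_J \;:=\; \bigl((2\pi)^{d-n}\, d\theta_{j_1}\wedge\cdots\wedge d\theta_{j_{n-d}}\bigr)\big|_{X(\T)},
\end{equation*}
the restriction to $X(\T)$ of the corresponding invariant $(n-d)$-form on $\T^n$. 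By a Wirtinger-type argument applied to the projection $X\to(\pp^1)^J$, the total mass $|\bm_J|(X(\T))$ equals $d_{[n]\setminus J}$, the component of the multidegree indexed by the complementary $d$-subset.

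\medskip

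Part~(3) is the heart of the argument. The plan is to expand $\sum_{x\in\Lambda}\hat f(x)$ for Schwartz $f$ on $\R^d$ by a coarea computation applied to $\phi(x)=\exp(2\pi iLx)$ at the submanifold $X(\T)$, producing the identity
\begin{equation*}
  \sum_{x\in\Lambda}\hat f(x) \;=\; \int_{X(\T)} (\phi^*\hat f)\,d\nu_L,
  \qquad
  \nu_L \;=\; \sum_{I\in\binom{[n]}{d}} L_I\,\bm_{[n]\setminus I},
\end{equation*}
where the Cauchy--Binet expansion of the $d$-dimensional Jacobian of the map $L$ produces exactly this linear combination. Expanding $\phi^*\hat f$ as a Fourier series on $\T^n$ and exchanging sum and integral rewrites the right-hand side as $\sum_{k\in\Z^n}f(L^tk)\,c_{L,k}$ with $c_{L,k}=\sum_I L_I\,\widehat{\bm}_{[n]\setminus I}(k)$, which yields the asserted decomposition. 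The bound $|c_{L,k}|\le c_0$ then follows from the triangle inequality combined with the identification $c_0=\sum_I L_I\,d_I$. The main obstacle is the vanishing $c_{L,k}=0$ whenever $\var(k)\ge d$: I would prove this by exploiting the Lee--Yang condition~(2) of \Cref{def:strictLY} via a contour shift. Picking a sign vector $s\in\{\pm1\}^n$ compatible with a $k$ of sign-variation $\ge d$, one deforms the cycle $X(\T)$ inside $X$ along the one-parameter flow $z\mapsto z\cdot\exp(ts)$; condition~(2) guarantees that the deformation stays inside $X$ and avoids the coordinate hyperplanes, so passing $t\to+\infty$ forces the relevant Fourier integrals $\widehat{\bm}_{J}(k)$ to cancel in the combination defining $c_{L,k}$.

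\medskip

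Part~(1) then follows at once. Bohr almost periodicity of $\Lambda$ is a general consequence of $\mu_\Lambda$ being a positive FQ-measure (\Cref{thm:main1}), by the standard fact that a positive tempered measure whose Fourier transform is also a tempered measure with discrete support is Bohr almost periodic. The density equals the coefficient $c_0$ of $\delta_0$ in $\hat\mu_\Lambda$; applying part~(3) at $\xi=0$ and using that $L^tk=0$ forces $k=0$ when the rows of $L$ are $\Q$-linearly independent (with a perturbation argument handling the general case), one obtains $c_0=\sum_I L_I\,\widehat{\bm}_{[n]\setminus I}(0)=\sum_I L_I\,d_I$. Part~(2) is purely combinatorial. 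The containment $\Lambda'\subseteq L^t\Z^n$ gives $\dim_\Q(\Lambda')\le n$ immediately. The set $\{k\in\Z^n:\var(k)<d\}$ decomposes as a finite union of full-dimensional rational polyhedral cones $C_\sigma\subseteq\R^n$ indexed by admissible sign patterns, and for each $\sigma$ the lattice-point count $|\{k\in C_\sigma\cap\Z^n:|L^tk|\le R\}|$ is, by a standard Ehrhart-type estimate, $\Vol_n\{k\in C_\sigma:|L^tk|\le R\}+O(R^{n-1})=C_\sigma R^n+O(R^{n-1})$. Summing over $\sigma$ produces $|\Lambda'\cap B_R|=CR^n+O(R^{n-1})$; strict positivity of $C$ under $\Q$-linear independence of the rows of $L$ follows because $L^t$ is then injective on $\Q^n$, so no cone $C_\sigma$ degenerates to lower dimension under $L^t$.
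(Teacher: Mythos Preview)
Your overall architecture matches the paper's: define the measures $\bm_J$ on $X(\T)$, assemble $\nu_L=\sum_I L_I\,\bm_{[n]\setminus I}$ via a Cauchy--Binet/coarea identity, expand in Fourier series to get $c_{L,k}=\widehat{\nu_L}(k)$, and then derive Parts~(1)--(2) from this. That part is essentially correct and close to what the paper does in \Cref{lem: measL} and \Cref{prop: summation formula}. (Minor point: the paper shows that $k=0$ is the only solution of $L^tk=0$ with $\var(k)<d$ directly from the boundedness in \Cref{lem:bounded}, so no $\Q$-linear independence or perturbation is needed for the formula $c_0=\sum_I L_I d_I$.)

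The genuine gap is your contour-shift argument for the vanishing of $c_{L,k}$ when $\var(k)\ge d$. The flow $z\mapsto z\cdot\exp(ts)$ does \emph{not} preserve the variety $X$, so you cannot deform $X(\T)$ ``inside $X$'' this way; condition~(2) of \Cref{def:strictLY} constrains where $X$ sits, not which torus translates of $X(\T)$ lie in $X$. If instead you try the deformation in the ambient $(\C^*)^n$, the $c$-form $z^{-k}\,\omega_I$ with $\omega_I=\tfrac{dz_{i_1}}{z_{i_1}}\wedge\cdots\wedge\tfrac{dz_{i_c}}{z_{i_c}}$ is \emph{not} closed there (compute $d(z^{-k}\omega_I)=-z^{-k}\sum_{j\notin I}k_j\,\tfrac{dz_j}{z_j}\wedge\omega_I\neq0$), so homotopy of cycles in $(\C^*)^n$ does not preserve the integral. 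Indeed, your scheme, taken at face value, would prove $\widehat\bm_J(k)=0$ for \emph{every} nonzero $k$, which is false.

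What the paper actually does (\Cref{thm:intvanish}) is more delicate and uses $\var(k)\ge d$ in an essential way: for each such $k$ one selects $d+1$ coordinates $J$ on which $k$ alternates sign, projects $X$ to a suitable $(\pp^1)^K$ with $K=([n]\setminus J)\cup\{m\}$, and shows via the strict Lee--Yang condition (\Cref{lem:closuresubset}, \Cref{lem:pole1}) that $z^{-k}\omega_I$ is regular on the preimage $B$ of a solid torus $B_K$. One then passes to a resolution of singularities of $X$; on this smooth model the form is a holomorphic \emph{top}-form, hence closed, and $X(\T)=\partial B$ is null-homologous, so the integral vanishes by Stokes/de~Rham. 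In particular each $\widehat\bm_J(k)$ vanishes individually, not merely ``in combination''. Your proposal is missing this construction, and the torus-flow shortcut cannot replace it.
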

The measures $\bm_{J}$ and their Fourier  transform are defined in \Cref{sec:measuresandrestrictions} where we also prove \Cref{thm:main2}, except for the statement that the density of $\Lambda$ is given by $c_0$, which is proven independently in \Cref{thm: HU and AC}. One possibility to construct a multidimensional FQ is by taking the product of one-dimensional FQs, or a rotation and translation thereof.
In \Cref{Sec:non-triv} we give sufficient criteria on $X$ and $L$ that guarantee that $\Lambda(X,L)$ is not of this form.
{More precisely, we prove that, under mild assumptions, our Fourier quasicrystals are genuinely high-dimensional in the sense that they do not intersect any affine line in a set that is isometric to a one dimensional Fourier quasicrystal.} 
\begin{thm}
Let $\Lambda=\Lambda(X,L)\subseteq\R^{d}$ be as in \Cref{thm:main1} and assume in addition that:  
\begin{enumerate}
	\item The variety $X$ is a curve, i.e. $\dim(X)=1$, and so $d=n-1$. 
	\item $X$ is irreducible and $X(\T)=X\cap\T^{n}$ is not contained in a coset of a proper subtorus.
	\item The $d\times d$ minors of $L$ are $\Q$-linearly independent.
\end{enumerate}
Then, there is no affine line $W\subsetneq\R^d$ such that $W\cap\Lambda$ contains a set isometric to a Fourier quasicrystal in $\R$. 
\end{thm}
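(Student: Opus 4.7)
The plan is to argue by contradiction. Suppose $W = a + \R b \subsetneq \R^d$ is an affine line such that $W \cap \Lambda$ contains a set $\Lambda_0$ isometric to a one-dimensional Fourier quasicrystal. Since every $1$-dimensional FQ has positive density, $\Lambda_0$ (and hence $W \cap \Lambda$) is infinite. Set $v = Lb \in \R^n$, $x_0 = \exp(2\pi i La) \in \T^n$, and define the entire map $\psi \colon \C \to (\C^*)^n$ by $\psi(t) = x_0 \exp(2\pi i t v)$. Under $t \mapsto a+tb$, $W \cap \Lambda$ is in bijection with $\psi^{-1}(X) \cap \R$, which is therefore infinite. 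Let $\Gamma = \{k \in \Z^n : k \cdot v = 0\}$ and $T_\C = \{z \in (\C^*)^n : z^k = 1\ \forall k \in \Gamma\}$ be the smallest algebraic subtorus containing the complex one-parameter subgroup $\exp(2\pi i \C v)$; its real part $T = T_\C \cap \T^n$ is the closure of $H^\R := \exp(2\pi i \R v)$ in $\T^n$. Split into cases according to whether $T \subsetneq \T^n$.

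In Case 1 ($T$ proper), the irreducibility of the curve $X$ forces the algebraic intersection $X \cap x_0 T_\C$ to be either finite or equal to $X \cap (\C^*)^n$; the latter would give $X(\T) \subseteq x_0 T_\C \cap \T^n = x_0 T$, a coset of a proper subtorus, contradicting hypothesis (2). Hence $X \cap x_0 T_\C$ is finite. If $H^\R$ is closed in $\T^n$ (equivalently, $\dim T = 1$), then $\psi$ is periodic with some period $\tau > 0$ and $\psi^{-1}(X)$ is a finite union of arithmetic progressions; by \Cref{thm: non-periodicity}, each arithmetic progression meets $\Lambda$ in at most $r(X)^3$ points, so $W \cap \Lambda$ is finite. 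If $H^\R$ is not closed, $\psi$ is injective and $\psi^{-1}$ of a finite set is finite. In either sub-case this contradicts $|\Lambda_0| = \infty$.

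In Case 2 ($T = \T^n$) the components of $v$ are $\Q$-linearly independent, $\psi$ is injective, and $\psi^{-1}(X) \subseteq \C$, being the vanishing set of the entire functions $p_i \circ \psi$ for defining polynomials $p_i$ of $X$, is either all of $\C$ or discrete in $\C$. The first possibility forces $\psi(\C) \subseteq X$; since $\Gamma = 0$, the Zariski closure of $\exp(2\pi i \C v)$ is $(\C^*)^n$, so $X \supseteq (\C^*)^n$, contradicting $\dim X = 1$. In the discrete case, suppose for contradiction that $\psi^{-1}(X) \cap \R$ is infinite. By compactness of $X(\T)$, there are distinct $t_k$ with $\psi(t_k) \to z^* \in X(\T)$; discreteness of $\psi^{-1}(X)$ in $\C$ forces $|t_k| \to \infty$. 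The shifted maps $\psi_k(s) = \psi(s + t_k) = \psi(t_k) \exp(2\pi i s v)$ converge uniformly on compact subsets of $\C$ to $\psi_\infty(s) = z^* \exp(2\pi i s v)$, and each shifted preimage $\psi_k^{-1}(X) = \psi^{-1}(X) - t_k$ contains the translate $\Lambda_0 - t_k$ which inherits the positive density $\rho > 0$ of $\Lambda_0$. A Chabauty-type compactness argument on bounded intervals, combined with the uniform convergence $\psi_k \to \psi_\infty$ and closedness of $X$, yields an infinite set $\Lambda_\infty \subseteq \psi_\infty^{-1}(X) \cap \R$ of positive density in $\R$. Such a set must have an accumulation point in $\C$, so by the analytic identity principle $\psi_\infty^{-1}(X) = \C$. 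Applying the argument of the first possibility to $\psi_\infty$ (with $z^*$ in place of $x_0$) again yields $X \supseteq (\C^*)^n$, the desired contradiction.

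The chief technical obstacle is the Chabauty limit argument in Case 2: one must carefully pass from the uniform positive density of the shifts $\Lambda_0 - t_k$ to an infinite limit set in $\psi_\infty^{-1}(X)$ that accumulates at some finite complex point, so as to invoke analyticity. The key input is that the density $\rho$ is preserved uniformly across translations, so the shifts concentrate enough points on each bounded interval to guarantee accumulation in the limit.
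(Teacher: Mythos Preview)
Your Case~1 argument is sound, and in fact it reproduces (with slightly different packaging) what the paper does to reduce to the situation where the one-parameter group $\exp(2\pi i\R v)$ is dense in $\T^n$. The real problem is Case~2.

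The step ``Such a set must have an accumulation point in $\C$'' is false: a set of positive density in $\R$---for instance $\Z$---need not accumulate anywhere in $\C$. Your Chabauty limit $\Lambda_\infty$, if it exists, will be Delone (as a limit of translates of a Delone FQ), hence uniformly discrete, and so the identity principle does not apply to $\psi_\infty^{-1}(X)$. More fundamentally, the target of Case~2 cannot be reached along these lines: when the entries of $v=Lb$ are $\Q$-linearly independent, the image $\psi(\R)$ is dense in $\T^n$, and a dense line will hit the one-dimensional set $X(\T)$ infinitely often, so $\psi^{-1}(X)\cap\R$ is genuinely infinite. You therefore cannot hope to derive a contradiction from finiteness of $\psi^{-1}(X)\cap\R$.

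The paper's argument in this case is of a different nature and uses structure of the FQ $\Gamma$ that your proposal ignores. Every one-dimensional FQ is of \emph{toral type} (its spectrum lies in a finitely generated group), and every Delone FQ is Bohr almost periodic. Using these facts one factors $\psi$ through an injective homomorphism $\psi_1\colon\R\to\T^l$ with dense image (where $l$ is the rank of the group generated by the spectrum of $\Gamma$ and the rows of $LM$) and then invokes Lawton's structure theorem: the closure of $\psi_1(\Gamma)$ in $\T^l$ is a finite union of $(l-1)$-dimensional subtori. Projecting back to $\T^n$ shows that the Zariski closure of $\psi(\Gamma)$ has dimension at least $n-1\geq 2$. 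But since $X$ is an irreducible curve and $\psi(\Gamma)\subseteq X(\T)$ is infinite, that Zariski closure equals $X(\T)$, which has dimension $1$---the contradiction. The key missing input in your approach is precisely this Lawton-type theorem on the closure of the image of a toral FQ under a dense torus embedding; no amount of compactness or analyticity on the line will substitute for it.
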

In \Cref{sec:concreteexample} we construct explicit examples of Fourier quasicrystals that have all the properties described in the above theorems.

\subsection{Diffraction and hyperuniformity} 
Next, we provide a rather straight forward application, showing that every discrete set $\Lambda\subseteq\R^{d}$ whose counting measure is a Fourier quasicrystal is \emph{hyperuniform}, and in fact \emph{stealthy hyperuniform}. See \cite{torquato2018hyperuniform} for a thorough review of hyperuniform materials and the physical implications of a hyperuniform state of matter. See also \cites{bjorklund2023hyperuniformity,ouguz2017hyperuniformity} where it is shown that crystals (i.e.~lattices) and certain quasicrystals\footnote{not to be confused with Fourier quasicrystals} are stealthy hyperuniform and hyperuniform respectively. What we show next suggests that the physical properties of Fourier quasicrystals should be closer to crystals than to quasicrystals. In the physics literature there are two equivalent definitions of hyperuniform sets, and in \cite{bjorklund2023hyperuniformity}*{Theorem~1.1} this equivalence is proven rigorously. 
\begin{Def}
	A discrete set $ \Lambda\subseteq \R^{d} $ is called \emph{hyperuniform} if one of the following equivalent conditions hold.
	\begin{enumerate}
		\item Hyperuniformity via physical space: Suppose $ \Lambda $ has density $C>0$, and let $ N_{R}(x)=|\Lambda\cap B_{R}(x)| $ for $ x\in\R^{d} $. Then $\Lambda$ is called \emph{hyperuniform} if the variance of $ \frac{N_{R}(x)}{\Vol(B_{R})} $ over $x\in B_{R}(0)$ goes to zero as $ R\to \infty $. That is, 
		\[\lim_{R\to\infty}\frac{1}{\Vol(B_{R})}\int_{x\in B_{R}}\left|\frac{N_{R}(x)}{\Vol(B_{R})}-C\right|^{2}dx=0.\]
		\item Hyperuniformity via Fourier space: Suppose that the following limit of tempered measures exists (in the vague topology) and is unique, 
		\[\gamma=\lim_{R\to\infty}\frac{1}{\Vol(B_{R})}\sum_{x,y\in\Lambda\cap B_{R}}\delta_{x-y}.\]
		It is called the \emph{auto-correlation} of $ \Lambda $. Its distributional Fourier transform $ \widehat{\gamma} $ is called the \emph{diffraction measure} of $ \Lambda $. In such case $ \Lambda $ is hyperuniform if 
		\[\lim_{\epsilon\to 0}\frac{\widehat{\gamma}(B_{\epsilon}\setminus\{0\})}{\epsilon^{d}}=0.\]
		Another related quantity in the physics literature is the \emph{structure factor} $S(\xi)$, which is the density of the diffraction measure $\widehat{\gamma}$, namely $d\widehat{\gamma}=S(\xi)d\xi$, and is well defined at $\xi$ only if $\widehat{\gamma}$ is absolutely continuous in a neighborhood of $\xi$. Hyperuniformity is sometimes defined by $S(\xi)\to 0$ when $0\ne \xi\to 0$.
	\end{enumerate}
	A hyperuniform set $ \Lambda $ is called \emph{stealthy hyperuniform} if $ \{0\} $ is an isolated point in the support of $ \widehat{\gamma} $.
\end{Def}

\begin{thm}\label{thm: HU and AC}
	If a discrete set $\Lambda\subseteq \R^{d} $ is a Fourier quasicrystal, then it is stealthy hyperuniform.
	In more details, suppose that $ \Lambda$ is a Fourier quasicrystal with spectrum $ \Lambda' $ and Fourier coefficients $ (c_{\xi})_{\xi\in\Lambda'} $.
Then:
	\begin{enumerate}
		\item The auto-correlation $ \gamma $ of $ \Lambda $ exists and the diffraction measure is equal to $ \widehat{\gamma}=\sum_{\xi\in\Lambda'}|c_{\xi}|^{2}\delta_{\xi} $. In particular the support of the diffraction measure is the discrete set $\Lambda'$. 
		\item The Fourier coefficient $c_{0}$ is real, positive, and is the density of $\Lambda$, with  	\[\sup_{x\in\R^{d}}\left|N_{R}(x)-c_{0}\Vol(B_{R})\right|=O(R^{d-1}) .\]		
	\end{enumerate} 
\end{thm}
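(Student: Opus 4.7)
The plan is to derive both parts of the theorem from the Poisson-like identity $\sum_{x\in\Lambda}\hat f(x)=\sum_{\xi\in\Lambda'}c_\xi f(\xi)$ of \Cref{def:FQ}, applied to Schwartz approximations of ball indicators. Write $K_r(\xi):=\widehat{\mathbf{1}_{B_r(0)}}(\xi)$; then $K_r(0)=\Vol(B_r)$, and the classical Bessel estimate gives $|K_r(\xi)|\le C\,r^{(d-1)/2}|\xi|^{-(d+1)/2}$.

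For Part (2), I would sandwich $\mathbf{1}_{B_R(x)}$ between two Schwartz functions $g_R^\pm(y):=(\mathbf{1}_{B_{R\pm 1}(x)}\ast\phi)(y)$, where $\phi\in C_c^\infty(\R^d)$ is a fixed mollifier with $\int\phi=1$ and $\supp\phi\subset B_1(0)$. The dual form $\sum_{y\in\Lambda}g(y)=\sum_{\xi\in\Lambda'}c_\xi\,\hat g(-\xi)$ of the FQ identity then gives
\[
\sum_{y\in\Lambda}g_R^\pm(y)=c_0\Vol(B_{R\pm 1})+\sum_{\xi\neq 0}c_\xi\,e^{2\pi i\,\xi\cdot x}K_{R\pm 1}(\xi)\,\hat\phi(-\xi),
\]
whose leading term is $c_0\Vol(B_R)+O(R^{d-1})$. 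Combining the Bessel decay of $K_{R\pm 1}$, the Schwartz decay of $\hat\phi$, and the pointwise bound $|c_\xi|\le C(1+|\xi|)^m$ implied by the partial-sum bound of \Cref{def:FQ}(2), the $\xi\neq 0$ tail is $O(R^{(d-1)/2})$ uniformly in $x$. Sandwiching then yields $|N_R(x)-c_0\Vol(B_R)|=O(R^{d-1})$, identifying $c_0$ as the density. Reality of $c_0$ is immediate from $\overline{c_\xi}=c_{-\xi}$ (since $\mu_\Lambda$ is real), and positivity follows from the density interpretation for any nontrivial $\Lambda$.

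For Part (1), I would pass to the Fourier side. With $\mu_R:=\mathbf{1}_{B_R}\mu_\Lambda$, the finite measure $\gamma_R=\Vol(B_R)^{-1}\mu_R\ast\tilde\mu_R$ has Fourier transform $\hat\gamma_R=\Vol(B_R)^{-1}|\hat\mu_R|^2$. Writing $\hat\mu_R=\sum_{\xi\in\Lambda'}c_\xi K_R(\cdot-\xi)$ (pointwise convergent by the Bessel estimate) and expanding the square, the diagonal part $\sum_\xi|c_\xi|^2\,\Vol(B_R)^{-1}|K_R(\cdot-\xi)|^2$ converges vaguely to $\sum_{\xi\in\Lambda'}|c_\xi|^2\delta_\xi$ via the change of variables $u=R(\eta-\xi)$ together with $\int|K_1|^2=\Vol(B_1)$. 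The off-diagonal pairs $\xi\neq\xi'$ acquire the damping $|K_R(\xi-\xi')|/\Vol(B_R)=O(R^{-(d+1)/2})$; testing against a fixed $\psi\in C_c^\infty$ confines the relevant $(\xi,\xi')$ to a bounded window (by concentration of $K_R$), while discreteness of $\Lambda'$ together with the polynomial bound on $|c_\xi|$ renders the double sum uniformly small. Hence $\hat\gamma_R\to\sum_{\xi\in\Lambda'}|c_\xi|^2\delta_\xi$ vaguely, which by Fourier inversion simultaneously proves existence of $\gamma$ and identifies $\hat\gamma$. Since $\supp\hat\gamma\subseteq\Lambda'$ is locally finite, $\{0\}$ is an isolated point of the support, giving stealthy hyperuniformity.

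The main obstacle will be controlling the off-diagonal double sum in Part (1) from the partial-sum bound of \Cref{def:FQ}(2), which only yields the coarse pointwise bound $|c_\xi|=O((1+|\xi|)^m)$ and not something sharper. I would address this by combining the $O(R^{-(d+1)/2})$ Bessel damping per pair with the Schwartz decay of $\hat\psi$ and the polynomial density bound on $\Lambda'$, and---if needed---replacing the sharp cutoff $\mathbf{1}_{B_R}$ by a smooth one of comparable $L^2$-mass so that the associated kernel inherits arbitrary polynomial decay, making the off-diagonal sum absolutely convergent.
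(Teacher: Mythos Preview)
Your Part~(2) is essentially the paper's argument: mollify the ball indicator, sandwich, and use the Bessel bound on $\widehat{\mathbf 1_{B_R}}$ together with the gap $\min\{|\xi|:0\ne\xi\in\Lambda'\}>0$ and the tempered bound $\sum_{|\xi|\le R}|c_\xi|\le CR^m$ to control the tail. One point to tighten: you infer $c_0>0$ from ``$c_0$ is the density,'' but the statement does not assume $\Lambda$ is Delone or has positive lower density, so this is circular. The paper instead picks $\phi=f\ast\tilde f$ with $f$ smooth, $L^2$-normalized, supported inside $B_{\epsilon/2}$ where $\epsilon$ is the spectral gap, and reads off $c_0=\sum_{x\in\Lambda}|\hat f(x)|^2>0$ directly from the summation formula.

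Your Part~(1) takes a different route from the paper and, as written, has a genuine gap. The identity $\hat\mu_R(\eta)=\sum_{\xi}c_\xi K_R(\eta-\xi)$ does not follow from \Cref{def:FQ}: the summation formula is only available for Schwartz test functions, and $x\mapsto\mathbf 1_{B_R}(x)e^{-2\pi i\eta\cdot x}$ is not Schwartz. The Bessel decay $|K_R(\eta-\xi)|\lesssim R^{(d-1)/2}|\eta-\xi|^{-(d+1)/2}$ is far too slow to make the right-hand side absolutely convergent from the hypothesis $\sum_{|\xi|\le R}|c_\xi|\le CR^m$, so ``pointwise convergent by the Bessel estimate'' is unjustified. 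Your proposed fix of a smooth cutoff does yield a valid identity, but you then have to relate that back to the sharp-cutoff autocorrelation, which reintroduces exactly the boundary terms you were trying to avoid; and the step ``$\hat\gamma_R\to\nu$ vaguely $\Rightarrow$ $\gamma_R\to\gamma$ with $\hat\gamma=\nu$'' still needs a tightness or uniform-local-boundedness argument you have not supplied.

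For comparison, the paper never writes $\hat\mu_R$ as a convolution. It first proves the Weyl-type estimate
\[
\sum_{x\in\Lambda\cap B_R}e^{2\pi i\langle\xi,x\rangle}=\overline{c_\xi}\,\Vol(B_R)+O(R^{d-1})
\]
by the same mollified-indicator argument as in Part~(2), shifted in frequency. Then it applies the summation formula to the Schwartz function $x\mapsto f(y-x)$ to get $\sum_{x\in\Lambda}f(y-x)=\sum_{\xi}c_\xi\hat f(\xi)e^{2\pi i\langle\xi,y\rangle}$, sums over $y\in\Lambda\cap B_R$, swaps the absolutely convergent sums, and inserts the Weyl estimate termwise to obtain $\sum_\xi|c_\xi|^2\hat f(\xi)$. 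The discrepancy between $\sum_{x\in\Lambda}$ and $\sum_{x\in\Lambda\cap B_R}$ is handled separately for $f\in C_c^\infty$ using the counting bound from Part~(2). This stays entirely inside the Schwartz framework and avoids the convergence issues above.
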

The proof is in \Cref{sec: autocorrelation}.  

\subsection{Open questions}
We conclude this section with the following questions:
\begin{enumerate}
	\item \emph{Can all Delone FQ-sets in every dimension be obtained via our construction, namely as $\Lambda(X,L)$ with strict Lee--Yang $X$ and $L$ with positive minors?}
	\item \emph{If $X$ is Lee--Yang but $X(\T)$ is not smooth, is it still true that $\Lambda(X,L)$ is an FQ-set, possibly with multiplicities? If so, can every FQ-set be obtained in this way?}
	\item \emph{If there is a negative answer to question (1) or (2), can every (Delone) FQ-set be obtained from some suitable modification of our construction, namely as $\Lambda(X,L)$ with an algebraic variety $X$ and a matrix $L$?}
	\item \emph{Can all FQ-sets in every dimension be obtained as the common zeros of a real-rooted system of trigonometric equations?}
	\item \emph{Can the zero set of every real-rooted system of trigonometric equations be obtained as $\Lambda(X,L)$ with $X$ Lee--Yang and $L$ with positive minors?}
\end{enumerate}
Positive answers for these questions are known in some special cases. All these questions have positive answers for the one dimensional case. Furthermore, in \Cref{sec:reduct} we show that the answer to the first question is positive for Delone FQ-sets that arise from the construction in \cite{lawton2024fourier}.

\section{The positive Grassmannian}
We denote by ${\rm Gr}(d,n)$ the Grassmannian of $d$-dimensional linear subspaces of $\R^n$. 
Any subspace $V\in {\rm Gr}(d,n)$ can be expressed as the column span of an $n\times d$ matrix $L$, whose columns $v_1, \hdots, v_d$ form a basis of $V$. For $I\subseteq [n]$ with $|I|=d$, let $L_I$ denote the determinant of the $d\times d$ submatrix of $L$ obtained by restricting to rows indexed by $I$.  We use these minors to specify $V$ as follows. 

Let $1 \leq d \leq n$ and take $e_1, \ldots, e_n $ to be the standard basis for $\R^n$.
For each subset $I =\{i_1, \ldots, i_d\}$ of $[n]$ with $i_1 < \cdots < i_d$, 
we denote $e_I=e_{i_1} \wedge \cdots \wedge e_{i_d}$.
The collection of wedge products $\{e_I\mid I \in \binom{[n]}{d}\}$ forms a basis of $\bigwedge^d \R^n$.
For a linear space $V= {\rm span}\{v_1, \hdots, v_d\}$ in $\Gr(d,n)$, 
we can express $v_1 \wedge \hdots \wedge v_d$ as $\sum_{I\in \binom{[n]}{d}} L_{I}e_I$ in $\bigwedge^d\R^n$ where $L$ is the $n\times d$ matrix with columns $v_1, \hdots, v_d$. 
The coefficients, $(L_{I})_{I\in \binom{n}{d}}$, are known as the \emph{Pl\"ucker coordinates} of $V$ and 
are independent of the basis $\{v_1, \hdots, v_d\}$ of $V$ up to global scaling. 
Thus the map sending ${\rm span}\{v_1, \hdots, v_d\} $ to $[v_1 \wedge \hdots \wedge v_d]\in  \pp(\bigwedge^d \R^n)$ is well defined. 
This is the \emph{Pl\"ucker embedding} of the Grassmannian $\Gr(d,n)$ into $ \pp(\bigwedge^d \R^n)\cong \pp^{\binom{n}{d}-1}(\R)$.
Finally, we will define $s(I)\in\{\pm1\}$ such that $e_I\wedge e_{[n]\setminus I}=s(I)\cdot e_{[n]}$.

Let ${\rm Gr}_{+}(d,n)$ resp. ${\rm Gr}_{\geq}(d,n)$ denote the positive and nonnegative Grassmannian, respectively. 
This is the  collection of $d$-dimensional subspaces of $\R^n$ that can be written as the column span of an $n\times d$ matrix all of whose  $d\times d$ minors are strictly positive or nonnegative, respectively. 
In order to formulate a characterization of points lying on an element of the positive Grassmannian we recall and extend \Cref{def:var}.

\begin{Def}\label{def:varbar}
		For a vector $b=(b_1,\ldots,b_n)\in\left(\R\cup\{\pm\infty\}\right)^{n}$ we let $\var(b)$ be the number of sign changes in the sequence $b_1,\ldots,b_n$ after discarding the zeroes. We further let $\varbar(b)$ be the number of sign changes in the sequence $b_1,\ldots,b_n$ where the zeroes are assigned signs that maximize the number of sign changes.  
\end{Def}

\begin{lem}[\cite{Karp}*{Lemma~4.1}]\label{lem:Karp}
  Let $v \in \R^n\setminus \{0\}$.
  \begin{enumerate}
   \item There exists a linear subspace in ${\rm Gr}_{+}(d,n)$ containing $v$ iff $\overline{\var}(v) < d$.
   \item There exists a linear subspace in ${\rm Gr}_{\geq}(d,n)$ containing $v$ iff $\var(v) < d$.
  \end{enumerate}
\end{lem}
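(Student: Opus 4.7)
The plan is to derive both directions from a Pl\"ucker-type Cramer identity relating the coordinates of $v$ to the maximal minors of $L$, supplemented by an explicit construction for the converse.

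For the forward direction ($\Rightarrow$), suppose $v=Lc$ with $L\in\R^{n\times d}$. For any ordered choice $i_0<\cdots<i_d$, I would form the $(d+1)\times(d+1)$ matrix whose first $d$ columns are the restriction of $L$ to these rows and whose last column is $(v_{i_0},\ldots,v_{i_d})^t$. Since $v$ lies in the column span of $L$, this matrix has rank at most $d$, so its determinant vanishes; expanding along the last column gives the identity
\[
\sum_{k=0}^{d} (-1)^k v_{i_k} L_{\{i_0,\ldots,i_d\}\setminus\{i_k\}}=0.
\]
For (1), assuming $\varbar(v)\geq d$, I would fix a sign assignment on the zeros of $v$ realizing $\varbar(v)$ and pick $i_0<\cdots<i_d$ where the resulting sequence strictly alternates (a short bookkeeping argument shows such a choice can be made with at least one $v_{i_k}$ nonzero). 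Terms with $v_{i_k}=0$ drop out, while the remaining values $(-1)^k v_{i_k}$ all agree in sign, so with every $L_{I_k}$ strictly positive the sum is strictly signed, a contradiction. For (2), assuming $\var(v)\geq d$, I would pick $d+1$ nonzero entries of $v$ with strictly alternating signs; then each term has the same sign and the vanishing of the sum forces every $L_{\{i_0,\ldots,i_d\}\setminus\{i_k\}}=0$. Ranging over all alternating $(d+1)$-subsequences of the support of $v$ drives the rank of $L$ below $d$, the desired contradiction. Equivalently, the whole step can be packaged as the classical rectangular variation-diminishing inequalities $\var(Lc)\leq\var(c)\leq d-1$ and (in case (1)) $\varbar(Lc)\leq \var(c)\leq d-1$.

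For the backward direction ($\Leftarrow$), my plan is to construct $L$ explicitly from the sign pattern of $v$. Let $B_1,\ldots,B_{m+1}$ be the maximal blocks on which $v$ has constant nonzero sign; in case (1), I first fix an assignment of signs to the zeros of $v$ achieving $\varbar(v)\leq d-1$ so that these zeros are absorbed into the adjacent blocks, giving $m\leq d-1$. Take the first $m+1$ columns of $L$ to be the characteristic vectors of the blocks $B_j$ scaled by the sign of $v$ on $B_j$, and take the remaining $d-1-m$ columns to be step-function vectors supported on suitable prefixes (or suffixes) of $[n]$, chosen so that the Leibniz expansion of every $d\times d$ minor factors as a product of non-negative (resp.~strictly positive) block-level contributions. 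Since $v$ is then a signed sum of the first $m+1$ columns, it automatically lies in the column span of $L$.

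The main obstacle I expect is the strict case (1) of the converse: one must arrange the padding columns so that no maximal minor of $L$ collapses at a zero entry of $v$, and here the extra slack provided by $\varbar$ rather than $\var$ is essential. If the direct combinatorial construction proves delicate, I would fall back on a continuity argument, perturbing a non-negative $L$ obtained from case (2) to a nearby positive $L$ using the openness of ${\rm Gr}_+(d,n)$ in ${\rm Gr}_\geq(d,n)$ and correcting $v\in V$ along the fiber of subspaces containing $v$.
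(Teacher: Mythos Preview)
The paper does not prove this lemma; it simply cites \cite{Karp}*{Lemma~4.1}, so there is no in-paper argument to compare against. I will assess your attempt on its own.

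Your forward direction via the Cramer--Pl\"ucker identity is the right tool, and for part~(1) it works cleanly once you check (as you indicate) that one of the chosen indices carries a nonzero entry of $v$. For part~(2), however, your claim that ``ranging over all alternating $(d{+}1)$-subsequences of the support of $v$ drives the rank of $L$ below $d$'' is not justified and is not automatic. The identity only kills the minors $L_J$ for $J$ contained in some alternating $(d{+}1)$-subset of $\operatorname{supp}(v)$, and there are typically many $d$-subsets of $[n]$ (even of $\operatorname{supp}(v)$) not of this form; one then needs the Pl\"ucker relations together with nonnegativity, and finally the fact that $v=Lc$ has the prescribed nonzero entries, to reach a contradiction. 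This can be done, but it is exactly the nontrivial content of the Gantmacher--Krein/Schoenberg variation-diminishing theorem you invoke at the end; citing that result here is essentially citing the lemma itself (indeed it is \Cref{thm:GK} in the paper), so the self-contained argument is incomplete as written.

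The backward direction is where your sketch is weakest. For part~(2), the phrase ``maximal blocks on which $v$ has constant nonzero sign'' does not handle zero entries of $v$ cleanly: zeros separating two same-sign runs must be absorbed into one side, and zeros at the ends must be placed somewhere; you need a sign extension $\sigma\in\{\pm1\}^n$ with $\var(\sigma)=\var(v)$ before the block decomposition makes sense, and then you must verify that the resulting $L$ actually has all $d\times d$ minors nonnegative. For part~(1), your proposed fallback (``perturb a nonnegative $L$ to a nearby positive one and correct $v$ along the fiber'') does not work as stated: a generic perturbation of $L$ will not keep $v$ in its column span, and the locus of $d$-planes through a fixed $v$ has positive codimension in $\Gr(d,n)$, so openness of $\Gr_+(d,n)$ alone does not let you land in it. A correct argument for~(1) requires building the totally positive $L$ with $v$ in its span directly (e.g.\ via an explicit planar-network or Vandermonde-type construction adapted to the sign pattern determined by $\varbar$), which is precisely what is done in Karp's paper.
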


\begin{thm}[\cites{GK, KWZ}]\label{thm:GK}
  For a $d$-dimensional linear subspace $V\subseteq \R^n$, we have
  \begin{enumerate}
  \item $V \in \Gr_{\geq}(d,n) \Leftrightarrow \var(v) < d \text{ for all }v \in V \backslash\{0\} \Leftrightarrow \varbar(w) \geq d \text{ for all } w \in V^\perp \backslash\{0\}.$
  \item $V \in \Gr_{+}(d,n) \  \Leftrightarrow \varbar(v) < d \text{ for all }v \in V  \backslash\{0\} \Leftrightarrow \var(w) \geq d \text{ for all } w \in V^\perp \backslash\{0\}.$
  \end{enumerate}
\end{thm}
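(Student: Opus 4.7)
The plan is to derive all four equivalences from \Cref{lem:Karp} together with Plücker duality between $V$ and $V^\perp$. One direction of each equivalence is immediate from \Cref{lem:Karp}: if $V \in \Gr_\geq(d,n)$, then every nonzero $v\in V$ lies in the nonnegative-Grassmannian element $V$ itself, so $\var(v)<d$; similarly $\varbar(v)<d$ whenever $V \in \Gr_+(d,n)$.

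For the converse implication $\var(v)<d\ \forall v\Rightarrow V\in\Gr_\geq$ (and its strict analogue), I would fix a basis matrix $L\in\R^{n\times d}$ of $V$ with Plücker coordinates $L_I$ and exploit the Plücker relation arising from the rank-$d$ augmented matrix $[L\mid v]$: for every $(d+1)$-subset $J=\{j_0<\cdots<j_d\}$ and every $v\in V$,
\[\sum_{k=0}^d (-1)^k v_{j_k}\, L_{J\setminus\{j_k\}} \;=\; 0. \]
If two Plücker coordinates $L_I$ and $L_{I'}$ had strictly opposite signs, one could pick $J$ meeting the symmetric difference of $I$ and $I'$ and produce, by solving the small linear system inside $V$, a vector $v\in V$ whose entries on $J$ strictly alternate in sign in positions where the relevant minors are nonzero, realising $d$ sign changes and contradicting the hypothesis $\var(v)<d$. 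Running this argument backwards, the hypothesis forces all nonzero $L_I$ to share a common sign, so after a single global rescaling $V \in \Gr_\geq(d,n)$. In part (2), the stronger hypothesis $\varbar(v)<d$ additionally rules out the vanishing-coordinate patterns that would arise from a vanishing minor, upgrading the conclusion to $V \in \Gr_+(d,n)$.

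For the characterizations involving $V^\perp$, I would combine the equivalences just established with Plücker duality: the Plücker coordinates of $V^\perp$ equal $s(I)\,L_{[n]\setminus I}$, with the sign $s(I)\in\{\pm 1\}$ from the excerpt. The compatibility with sign variation follows by tracking how the dual Plücker relation for $V^\perp$ interacts with the Plücker relation for $V$: the role of ``strictly fewer than $d$'' on the primal side becomes ``at least $d$'' on the dual side, while the zero-treatment distinction between $\var$ and $\varbar$ explains the swap between parts (1) and (2) when passing to $V^\perp$. Concretely, applying the already-proved equivalences to $V^\perp\in\Gr(n-d,n)$ and re-expressing them in terms of $V$ yields (1c) and (2c).

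The main obstacle is the converse direction in Step~2: constructing an explicit $v\in V$ with $\var(v)\geq d$ from a mixed-sign pattern among the Plücker coordinates of $V$. The standard Gantmacher--Krein argument handles this by induction on $d$ via the Plücker relations; alternatively, one can use the density of $\Gr_+(d,n)$ in $\Gr_\geq(d,n)$ to reduce the nonnegative case to the strict case, where the combinatorics of building such a $v$ by a perturbation is cleaner.
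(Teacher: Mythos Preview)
The paper does not actually prove \Cref{thm:GK}; it is quoted from the references \cites{GK,KWZ} as a classical result. So there is no ``paper's own proof'' to compare against. That said, your proposal has two genuine gaps.

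First, the converse direction in your Step~2 is not an argument but a hope. You assert that from two Pl\"ucker coordinates of opposite sign one can ``solve a small linear system inside $V$'' to produce $v\in V$ whose entries on some $(d+1)$-subset $J$ strictly alternate. But a nonzero vector of $V$ supported on a prescribed $(d+1)$-subset $J$ need not exist at all: the condition $(Lx)_i=0$ for $i\notin J$ imposes $n-d-1$ linear constraints on $x\in\R^d$, which is overdetermined as soon as $n>2d+1$. The actual Gantmacher--Krein/Schoenberg argument is more delicate (typically an induction on $n$ or $d$, or a variation-diminishing argument for sign-regular matrices); you acknowledge this at the end, but then the proposal is really ``cite the classical proof,'' which is exactly what the paper does.

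Second, your duality step is wrong as stated. Pl\"ucker duality sends $V$ with coordinates $(L_I)$ to $V^\perp$ with coordinates $(s(I)\,L_{[n]\setminus I})$, and the signs $s(I)$ are not constant in $I$. Consequently $V\in\Gr_{\geq}(d,n)$ does \emph{not} imply $V^\perp\in\Gr_{\geq}(n-d,n)$, so you cannot simply ``apply the already-proved equivalences to $V^\perp$'' to obtain the $\varbar$-characterization. The equivalence between the $V$- and $V^\perp$-conditions genuinely uses the interplay of $\var$ and $\varbar$ (e.g.\ the fact that orthogonal nonzero vectors cannot both have small sign variation in the appropriate sense), not naked Pl\"ucker duality.
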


\begin{cor}\label{cor:intsgr}
    If $V\in \Gr_{\geq}(d,n)$ and $W\in \Gr_{+}(d,n)$, then $V\cap W^\perp=\{0\}$ and $V^\perp\cap W=\{0\}$.
\end{cor}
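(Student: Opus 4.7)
The plan is to derive both intersection statements directly from \Cref{thm:GK}, which characterizes the (non)negative Grassmannian by upper bounds on $\var$ (respectively $\varbar$) for vectors in the subspace and matching lower bounds for vectors in the orthogonal complement. The proof will be a short argument by contradiction: we pick a nonzero vector in each intersection and derive incompatible constraints on its number of sign changes. No additional ingredients beyond \Cref{thm:GK} will be needed.

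For the first claim, suppose $v \in V \cap W^\perp$ with $v \neq 0$. Since $V \in \Gr_{\geq}(d,n)$, the second equivalence in item (1) of \Cref{thm:GK} gives $\var(v) < d$. On the other hand, since $W \in \Gr_+(d,n)$ and $v \in W^\perp \setminus \{0\}$, the third equivalence in item (2) of \Cref{thm:GK} yields $\var(v) \geq d$. These two inequalities contradict each other, so no such $v$ exists.

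For the second claim, suppose $w \in V^\perp \cap W$ with $w \neq 0$. Since $V \in \Gr_{\geq}(d,n)$ and $w \in V^\perp \setminus\{0\}$, the third equivalence in item (1) of \Cref{thm:GK} gives $\varbar(w) \geq d$. Since $W \in \Gr_+(d,n)$ and $w \in W\setminus\{0\}$, the second equivalence in item (2) of \Cref{thm:GK} gives $\varbar(w) < d$. Again a contradiction, so $V^\perp \cap W = \{0\}$.

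There is no substantive obstacle here; the only subtlety worth flagging is that one must carefully match the two statements of \Cref{thm:GK}, using the $\var$ characterization of $\Gr_{\geq}$ together with the $\var$ characterization of $\Gr_+^\perp$ for the first claim, and the $\varbar$ characterization of $\Gr_{\geq}^\perp$ together with the $\varbar$ characterization of $\Gr_+$ for the second. The asymmetry between $\var$ and $\varbar$ in the two cases is exactly what makes the strict inequality $\var(v) < d$ in one subspace compatible with the non-strict $\var(w) \geq d$ in the complement of the other.
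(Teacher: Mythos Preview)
Your proof is correct and matches the paper's intended argument: the corollary is stated without proof in the paper, being an immediate consequence of \Cref{thm:GK}, and your contradiction argument using the $\var$/$\varbar$ characterizations is precisely the intended derivation.
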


We conclude this section with a statement that will enable us to prove that the support of the Fourier transform of our Fourier quasicrystals is discrete.
For positive integers $d\leq n$, define
\begin{equation}\label{eq:orthants}
\R^{n}_{\var<d} = \{ y\in \R^n \mid \var(y)< d\} \ \ \ \text{ and  } \ \ \ \Z^{n}_{\var<d}= \R^{n}_{\var<d}\cap \Z^n.    
\end{equation}
This is a union of orthants in $\R^n$ and $\Z^n$, respectively. 

\begin{lem} \label{lem:bounded}
Let $L\in \R^{n\times d}$ be a $n\times d$ matrix all of whose $d\times d$ minors are positive. 
Then $\{y\in \R^{n}_{\var<d}\mid L^ty\in [-1,1]^d\}$
is closed, bounded, and has nonempty interior in the Euclidean topology on $\R^n$. 
\end{lem}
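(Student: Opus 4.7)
The plan is to verify the three properties — closedness, boundedness, and nonempty interior — separately, with the compactness argument for boundedness being the main step.

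The key preliminary observation I would establish is that $\var\colon\R^n\to\Z_{\geq 0}$ is lower semicontinuous. Indeed, if $\var(y)=m$ then there exist indices $i_1<\cdots<i_{m+1}$ at which $y$ is nonzero with strictly alternating signs; any sufficiently small Euclidean perturbation keeps those coordinates nonzero with the same signs, so its $\var$ is at least $m$. Consequently $\R^n_{\var<d}=\{\var\leq d-1\}$ is closed, and since the slab $\{y\colon L^ty\in[-1,1]^d\}$ is also closed, the whole set is closed.

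For nonempty interior, I would pick any open orthant of $\R^n$ whose sign pattern has fewer than $d$ sign changes (for instance the all-positive orthant, which has $\var=0<d$ since $d\geq 1$) and take a point $y$ in that orthant close enough to the origin that $L^ty\in(-1,1)^d$. A small Euclidean neighborhood of $y$ then remains in the same open orthant and in the open slab, providing an open subset of the target set.

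Boundedness is the main step, and I would prove it by contradiction. Suppose there exists a sequence $(y_k)$ in the set with $\|y_k\|\to\infty$, and set $w_k:=y_k/\|y_k\|$. Since positive rescaling preserves signs, $\var(w_k)=\var(y_k)<d$ and $\|w_k\|=1$. Extracting a convergent subsequence $w_k\to w$, we get $\|w\|=1$, and dividing $L^ty_k\in[-1,1]^d$ by $\|y_k\|\to\infty$ forces $L^tw=0$. Thus $w$ is a nonzero vector in $V^\perp$, where $V$ is the column span of $L$. Positivity of the $d\times d$ minors gives $V\in\Gr_+(d,n)$, so \Cref{thm:GK}(2) yields $\var(w)\geq d$. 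On the other hand, lower semicontinuity of $\var$ gives $\var(w)\leq\liminf\var(w_k)\leq d-1$, a contradiction.

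The main obstacle is handling the discontinuous function $\var$ inside a limiting argument; the resolution is that $\var$ is lower semicontinuous, positive rescaling preserves $\var$, and \Cref{thm:GK} converts positivity of the minors of $L$ into a sharp lower bound on $\var$ across $V^\perp$ — exactly the obstruction that rules out an unbounded sequence.
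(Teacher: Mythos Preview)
Your proof is correct. The overall structure matches the paper: closedness and nonempty interior are handled the same way (the paper also uses the all-ones direction scaled by a small $\lambda>0$), and for boundedness both arguments produce a nonzero direction in $V^\perp$ and invoke \Cref{thm:GK}(2) to get $\var\geq d$ for the contradiction.

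The execution of the boundedness step differs. You normalize an unbounded sequence and pass to a subsequential limit $w$ on the sphere, then use lower semicontinuity of $\var$ to get $\var(w)<d$, contradicting \Cref{thm:GK}. The paper instead observes that $\R^n_{\var<d}$ is a finite union of closed orthants, so the set in question is a finite union of polyhedra; if unbounded, one of them contains a ray $y+\R_{\geq0}v$, whence $L^tv=0$ and $\var(v)\geq d$ by \Cref{thm:GK}, and then a direct sign argument shows $\var(y+\lambda v)\geq d$ for large $\lambda$. Your sequential argument is a bit more streamlined and avoids polyhedral language, while the paper's orthant decomposition is reused immediately afterwards in \Cref{cor:imagediscrete} to obtain the $\mathrm{vol}\cdot R^n+O(R^{n-1})$ lattice-point count; that is the payoff of setting up the polyhedral picture here.
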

\begin{proof}
First, note that $\R^{n}_{\var<d}$ and $\{y \mid L^ty\in [-1,1]^d\}$ are closed subsets of $\R^n$, so their intersection is also closed. 
To see that its interior is nonempty, consider a point $y = \lambda (1,\hdots, 1)^t$. For sufficiently small $\lambda>0$, 
$L^ty= \lambda L^t(1,\hdots, 1)^t$ belongs to $(-1,1)^d$. Then for every $z$ in a sufficiently small neighborhood of $y$ in $\R^n$, $z\in \R_{>0}^n$ and $L^tz\in(-1,1)^d$. Therefore $y$ belongs to the interior of this set. 

To see that it is bounded, let  $W$ the column span of $L$. By assumption we have $W\in {\rm Gr}_{+}(d,n)$.
Suppose, for the sake of contradiction, that $\{y\in \R^{n}_{\var<d}\mid L^ty\in [-1,1]^d\}$ is unbounded. 
Note that $\R^{n}_{\var<d}$ is a union of orthants in $\R^n$. It follows that 
the set $\{y\in \R^{n}_{\var<d}\mid L^ty\in [-1,1]^d\}$ is a union of finitely many polyhedra. 
Since it is unbounded, it therefore must contain a ray 
$\{y+\lambda v\mid \lambda \in \R_{\geq 0}\}$ for some $y,v\in \R^n$ with $v\neq 0$. 
By assumption, $L^t(y+\lambda v) = L^ty+\lambda L^tv$ belongs to $[-1,1]^d$ for all $\lambda\geq 0$. 
As $[-1,1]^d$ is bounded, we must have $L^tv=0$ and  thus $v\in W^{\perp}\setminus\{0\}$. 
By \Cref{thm:GK}, it follows that $\var(v)\geq d$. 
Then for sufficiently large $\lambda\geq 0$, 
$\var(y+\lambda v) \geq d$. Indeed, since $\var$ does not count zero coordinates towards sign changes, the entries of $y$ can only increase $\var(y+\lambda v)$ for sufficiently large $\lambda$. This contradicts $y+\lambda v\in \R^{n}_{\var<d}$ for all $\lambda \geq 0$.
\end{proof}

\begin{cor}\label{cor:imagediscrete}
Let $L\in \R^{n\times d}$ be an $n\times d$ matrix all of whose $d\times d$ minors are positive. 
The image of $\Z^{n}_{\var<d}$ under the map $k\mapsto L^tk$ is discrete. 
More precisely, for $R>0$, the number points whose image lies in $[-R,R]^d$ is
\[
|\{k\in \Z^{n}_{\var<d} \mid L^tk \in [-R,R]^d\}| = {\rm vol}\cdot R^n + O(R^{n-1})
\]
where ${\rm vol}>0$ is the volume of $\{y\in \R^{n}_{\var<d} \mid L^ty \in [-1,1]^d\}$.
Moreover, if the rows of $L$ are linearly independent over $\Q$, then
\[
|\{L^tk \mid k\in \Z^{n}_{\var<d}\}\cap [-R,R]^d| = {\rm vol}\cdot R^n + O(R^{n-1}).
\]
\end{cor}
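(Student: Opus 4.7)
The plan is to translate the counting into a lattice-point count inside a dilated bounded region of $\R^n$ supplied by \Cref{lem:bounded}, and to handle the ``moreover'' claim by showing that $k \mapsto L^t k$ is injective on $\Z^n$. Set $K := \{y \in \R^n_{\var<d} \mid L^t y \in [-1,1]^d\}$, which by \Cref{lem:bounded} is closed, bounded, and has nonempty interior, so in particular $\mathrm{vol}(K) > 0$. Since $\R^n_{\var<d}$ and $[-R,R]^d$ both scale homogeneously under positive dilations, one has the identity
\[
\{k \in \Z^n_{\var<d} \mid L^t k \in [-R,R]^d\} = \Z^n \cap (R \cdot K).
\]
Discreteness of the image of $\Z^n_{\var<d}$ under $k \mapsto L^t k$ is then immediate: for every $R > 0$, the preimage of $[-R,R]^d$ equals $\Z^n \cap (R \cdot K)$, which is finite because $R \cdot K$ is bounded. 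Hence the image meets every cube $[-R,R]^d$ in finitely many points.

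For the first asymptotic, I would exploit the polytopal structure of $K$ already implicit in the proof of \Cref{lem:bounded}: $\R^n_{\var<d}$ is a finite union of closed coordinate orthants of $\R^n$, and intersecting each such orthant with the bounded polyhedron $\{y \mid L^t y \in [-1,1]^d\}$ yields a bounded polytope. Thus $K$ is a finite union of polytopes in $\R^n$, and its topological boundary is a finite union of pieces of affine hyperplanes, in particular a set of finite $(n-1)$-dimensional Hausdorff measure. The classical lattice-point asymptotic (the Lipschitz principle for sets with Lipschitz boundary) then yields
\[
|\Z^n \cap (R \cdot K)| = \mathrm{vol}(K) \cdot R^n + O(R^{n-1}),
\]
proving the first claim with $\mathrm{vol} = \mathrm{vol}(K)$.

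For the ``moreover'' part, it suffices to verify that $k \mapsto L^t k$ is injective on $\Z^n$ when the rows of $L$ are $\Q$-linearly independent. Denoting the rows by $r_1, \ldots, r_n \in \R^d$, the equation $L^t k = \sum_{i=1}^{n} k_i r_i = 0$ with $k \in \Z^n \subseteq \Q^n$ forces $k = 0$ by $\Q$-linear independence of the $r_i$. Consequently, the number of image points in $[-R,R]^d$ equals the number of preimages, and the second asymptotic follows at once from the first.

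The step I expect to be most delicate is the error term $O(R^{n-1})$: one must verify that the overlaps between the polytopes composing $K$, together with the behaviour near the boundary of $\R^n_{\var<d}$, do not contribute a larger error. However, all these exceptional loci are contained in finite unions of affine subspaces of positive codimension in $\R^n$, hence are absorbed into the standard $O(R^{n-1})$ bound for lattice points in bounded sets with Lipschitz boundary. No further argument should be required.
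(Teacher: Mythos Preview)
Your proposal is correct and follows essentially the same approach as the paper: both identify the preimage with $\Z^n\cap(R\cdot K)$ for the bounded set $K$ from \Cref{lem:bounded}, decompose $K$ into polytopes via the orthants of $\R^n_{\var<d}$, and deduce the $O(R^{n-1})$ error from standard lattice-point counting in dilated polytopes; the injectivity argument for the ``moreover'' part is identical. The only cosmetic difference is that the paper cites a specific result (\cite{BordaThesis}*{Corollary~1.2}) and handles the orthant overlaps by an explicit inclusion--exclusion bound, whereas you invoke the Lipschitz principle directly on the union $K$.
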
 

\begin{proof}
Let $P = \{y\in \R^{n}_{\var<d} \mid L^ty \in [-1,1]^d\}$. By \Cref{lem:bounded}, $P$ is closed, bounded, and has nonempty interior. Note that $L^ty$ belongs to $[-R,R]^d$ if and only if $L^t((1/R)y)$ belongs to $[-1,1]^d$ and $\R^{n}_{\var<d}$ is invariant under positive scaling. 
Thus for every $R>0$,  
\[\{y\in \R^{n}_{\var<d} \mid L^ty \in [-R, R]^d\} = R\cdot P = \{R\cdot y \mid y\in P\}. \]
Since $R\cdot P$ is bounded, the number of integer points it contains is finite, showing that 
 the image of $\Z^{n}_{\var<d}$ under the map $k\mapsto L^tk$ is discrete. 

To get the statement on the asymptotic growth, we write $P$ as a union of 
polyhedra $P_\sigma = \{y\in \R^n_\sigma \mid L^ty \in [-1,1]^d\}$ where $\sigma$ 
runs over all vectors in $\{\pm 1\}^n$ with $\var(\sigma)<d$ and $\R^n_\sigma = \{y\in \R^n: \sigma_jy_j\geq 0 \text{ for }j=1, \hdots, n\}$. 
Since $P_{\sigma}$ is bounded, it is a polytope. 
By \cite{BordaThesis}*{Corollary~1.2}, for every polytope $Q\subseteq \R^n$, 
the number of integer points in $R\cdot Q$ is 
\[
|(R\cdot Q)\cap \Z^n| = {\rm vol}(Q)R^n + O(R^{n-1}).
\]
For every $\sigma\neq \sigma'$, the intersection of $P_{\sigma}$ and $P_{\sigma'}$ is a polytope of dimension $\leq n-1$ and so the number of integer points in $R\cdot(P_{\sigma}\cap P_{\sigma'})$
is $O(R^{n-1})$. It follows that 
\begin{align*}
|(R\cdot P)\cap \Z^n| &= \sum_{\sigma}|(R\cdot P_{\sigma})\cap \Z^n| +O(R^{n-1})\\
& = \sum_{\sigma}{\rm vol}(P_{\sigma})R^n +O(R^{n-1}) ={\rm vol}(P)R^n +O(R^{n-1}).
\end{align*}
Since $P$ has nonempty interior, ${\rm vol} = {\rm vol}(P)>0$. 

Finally, if the rows of $L$ are linearly independent over $\Q$, then the map $k\mapsto L^tk$ is injective on $\Z^n$.
The second equality then follows from the first. 
\end{proof}

\begin{ex}[\Cref{ex:runningRational} continued]\label{ex:PolyComplex}
The $2\times 2$ minors of the matrix $L^t = {\small \begin{pmatrix} 1 & 0 & -\sqrt{2} \\ 0 & 1 & \sqrt{3} \end{pmatrix}}$ are all positive. 
\Cref{lem:bounded} then implies that the set 
$$P = \{y\in \R^3\mid \var(y)<2, (y_1 - \sqrt{2}y_3, y_2 +\sqrt{3}y_3)\in [-1,1]^2\}$$
is closed and bounded. We can write $P$ as a union of its intersections, $P_{\sigma}$, with the orthants $\R^3_\sigma$, where $\sigma$ ranges over the six elements of $\{\pm 1\}^3\backslash \{(1,-1,1), (-1,1,-1)\}$. 
See \Cref{fig:polyComplex}.
The volume of $P$ is $\frac{1}{9}(10 \sqrt{2} + 3\sqrt{3}) \approx 2.1487$, so, by \Cref{cor:imagediscrete}, the number of integer points in  $R\cdot P$ is given by 
$$|(R\cdot P)\cap \Z^3|= \frac{1}{9}(10 \sqrt{2} + 3\sqrt{3}) R^3 + O(R^2).$$ 
Since the rows of $L$ are linearly independent over $\Q$, the map $\Z^3 \to \R^2$ given by $k\mapsto L^tk$ 
is injective. Therefore for $\Lambda' = \{L^tk \mid k\in \Z^{n}_{\var<d}\}$, the number of points in 
$\Lambda'\cap [-R,R]^2$ is also given by $\frac{1}{9}(10 \sqrt{2} + 3\sqrt{3}) R^3 + O(R^2)$.
The set $\Lambda'$ appears on the right in \Cref{fig:RationalCurveIntro}.
\end{ex}

\begin{figure}
\begin{center}
\includegraphics[height=1.5in]{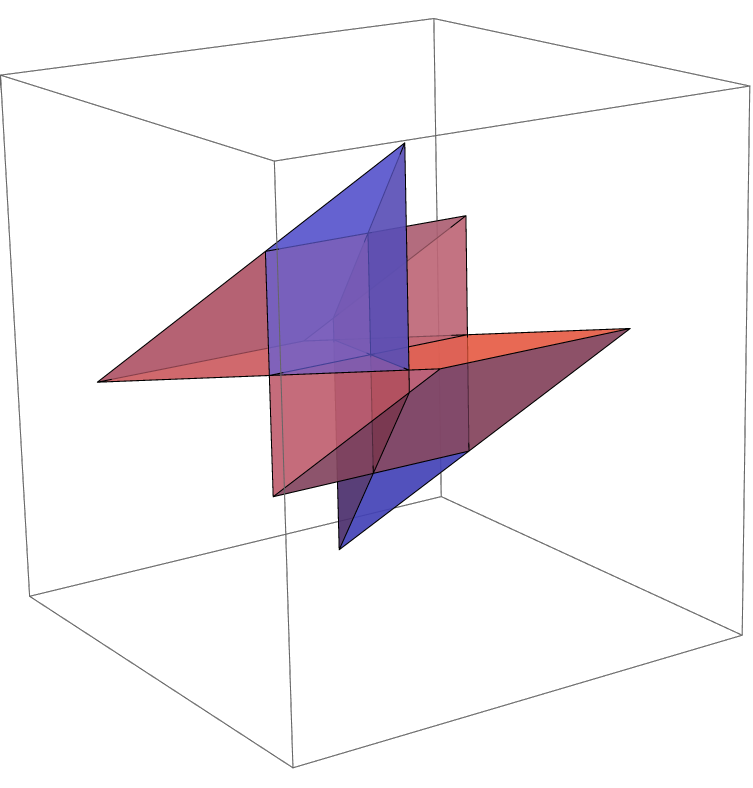} 
\end{center}
\caption{From \Cref{ex:PolyComplex}, the set of $y\in \R^3$ with $\var(y)<2$ and $(y_1 - \sqrt{2}y_3, y_2 +\sqrt{3}y_3)\in [-1,1]^2$. Points with $\var=0$ and $1$ are in blue and red, respectively. 
}
\label{fig:polyComplex}
\end{figure}

\section{Lee--Yang varieties}
In this section we introduce a class of algebraic varieties which we call Lee--Yang varieties. These arise from relaxing condition (2) in \Cref{def:strictLY} of strict Lee--Yang varieties. This section is technical and is partitioned into five subsections. First, we define Lee--Yang varieties and study their connection to real-rootedness. Second, we provide examples of Lee--Yang varieties and explicit construction methods. Third, we discuss the transversality of $x\mapsto(\exp(2\pi i Lx))$ and a Lee--Yang variety, when $L$ has positive minors. Then we focus on strict Lee--Yang varieties and the topology of their intersection with the torus. In the final subsection, we prove that certain integrals over the torus part of a strict Lee--Yang variety vanish. This is crucial for proving that the spectrum of the constructed Fourier quasicrystal $\Lambda(X,L)$ is discrete.

\subsection{Definitions and real-rootedness}  We recall and add to \Cref{def:strictLY}.
\begin{Def}\label{def:leeyang}
	An equidimensional closed subvariety $X$ in $ (\pp^1)^n $ of dimension $ c=n-d $ with $0\le d\le n$ is called a \emph{strict Lee--Yang variety} if it satisfies the following three conditions:
	\begin{enumerate}
		\item[(1)\hspace{2pt}] $X$ is invariant under the coordinate-wise involution $z\mapsto 1/\overline{z}$,
		\item[(2)\hspace{2pt}] for every $z\in X$ we either have $z\in X(\T)=X\cap\T^{n}$ or $\var(\log|z|)\geq d$, and 
        \item[(3)\hspace{2pt}] $X(\T)$ is contained in the smooth part of $X$.
	\end{enumerate}
	We call $X$ a \emph{Lee--Yang variety} if it satisfies the following two conditions:
	\begin{enumerate}
		\item[(1)\hspace{2pt}]  $X$ is invariant under the coordinate-wise involution $z\mapsto 1/\overline{z}$, and
		\item[(2$'$)] for every $z\in X$ we either have $z\in X(\T)$ or $\varbar(\log|z|)\geq d$. 
	\end{enumerate}
\end{Def}

\begin{rem}\label{lem:strictweak}$\,$
\begin{enumerate}
    \item  Every strict Lee--Yang variety is a Lee--Yang variety and one can check that not every Lee--Yang variety is a strict Lee--Yang variety.
    \item Lee--Yang varieties of codimension one are exactly the zeros sets of Lee--Yang polynomials, as considered in \cite{AlonVinzant}*{\S3}, or of stable pairs of polynomials, considered in \cite{kurasovsarnak}*{\S II}.
    \item The smoothness assumption (3) in \Cref{def:leeyang} will ensure that the support of the Fourier quasicrystals, which we will construct, is uniformly discrete, see \Cref{sec:delone}. Property~(2) in \Cref{def:leeyang} makes the definition robust with respect to small perturbations. Indeed, if $z\in(\pp^1)^n$ satisfies $\var(\log|z|)\geq d$, then this is also the case for every $z'$ in a sufficiently small neighbourhood of $z$.
    \item The conclusion of \Cref{thm:main1} is in general false for varieties that satisfy (1), (2) and (3) from \Cref{def:leeyang} but are not equidimensional. Indeed, let $X\subseteq(\pp^1)^n$ be a strict Lee--Yang variety of dimension $c=n-d<n$ and $L\in\R^{n\times d}$ a real matrix all of whose $d\times d$ minors are positive. For every 
           $y\not\in\Lambda(X,L)$
    the closed subset $X\cup\{\exp(2\pi iLy)\}\subseteq(\pp^1)^n$ satisfies (1), (2) and (3) of \Cref{def:leeyang} but the Fourier transform of $\delta_y+\sum_{x\in\Lambda}\delta_x$ is not discrete.
\end{enumerate}
\end{rem}

We now observe some first properties of Lee--Yang varieties.

\begin{lem}\label{lem:translate}
    Let $X\subseteq(\pp^1)^n$ be a closed subvariety. For $a\in\T^n$ we consider the automorphism
    \begin{equation*}
        m_a\colon (\pp^1)^n\to(\pp^1)^n
    \end{equation*}
    which multiplies the $i$-th coordinate by $a_i\in\T$. Then $X$ is a (strict) Lee--Yang variety if and only if $m_a(X)$ is a (strict) Lee--Yang variety.
\end{lem}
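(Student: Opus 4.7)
The plan is to verify directly that each of the three defining conditions of a (strict) Lee--Yang variety is preserved by the automorphism $m_a$, using throughout the defining property $|a_i|=1$ for every $i$. Since $m_a$ is an automorphism of $(\pp^1)^n$ whose inverse $m_{a^{-1}}$ has the same form, it suffices to prove one direction of each equivalence.

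The first key observation is that $|a_iz_i|=|z_i|$ for all $i$, so $\log|m_a(z)| = \log|z|$ as elements of $(\R\cup\{\pm\infty\})^n$. In particular, $\var$ and $\varbar$ of $\log|m_a(z)|$ coincide with those of $\log|z|$, and $m_a$ maps $\T^n$ onto itself bijectively. This immediately transfers conditions (2) and (2$'$) from $X$ to $m_a(X)$: a point $w = m_a(z) \in m_a(X)$ lies in $m_a(X)(\T) = m_a(X(\T))$ if and only if $z \in X(\T)$, and otherwise $\var(\log|w|) = \var(\log|z|) \geq d$ (respectively $\varbar(\log|w|) \geq d$).

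For condition (1), I would show that $m_a$ commutes with the antiholomorphic involution $\tau\colon z\mapsto 1/\overline{z}$. Since $a_i \in \T$ satisfies $\overline{a_i} = a_i^{-1}$, one computes
\begin{equation*}
\tau(m_a(z))_i \;=\; \frac{1}{\overline{a_iz_i}} \;=\; \frac{1}{\overline{a_i}}\cdot\frac{1}{\overline{z_i}} \;=\; a_i\cdot\frac{1}{\overline{z_i}} \;=\; m_a(\tau(z))_i.
\end{equation*}
Hence $m_a(X)$ is $\tau$-invariant if and only if $X$ is. For condition (3), which is needed only for the strict case, I would use that $m_a$ is a biholomorphism of $(\pp^1)^n$: it carries the smooth locus of $X$ bijectively onto the smooth locus of $m_a(X)$, and we have already seen that it sends $X(\T)$ bijectively onto $m_a(X)(\T)$, so one condition transfers to the other.

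I do not foresee any serious obstacle. The only subtlety is the commutation of $m_a$ with $\tau$, where the identity $\overline{a_i} = a_i^{-1}$ --- precisely the defining relation $|a_i| = 1$ --- is used essentially; the rest is a routine check that the torus, the function $\log|\cdot|$, and the smooth locus are all preserved by coordinate-wise multiplication by unimodular constants.
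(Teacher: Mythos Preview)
Your proof is correct and follows the same approach as the paper's: both rest on the two identities $\log|m_a(z)|=\log|z|$ and $1/\overline{m_a(z)}=m_a(1/\overline{z})$ for $a\in\T^n$. The paper's proof is a one-line invocation of these facts, whereas you spell out the verification of each condition (including the smoothness condition~(3), which the paper leaves implicit), but the underlying argument is identical.
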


\begin{proof}
    This follows directly from the definitions because for every $z\in(\pp^1)^n$ one has $\log|z|=\log|m_a(z)|$ and ${1}/{\overline{m_a(z)}}=m_a({1}/{\overline{z}})$.
\end{proof}

\begin{prop}\label{prop:RR} Let $X\subseteq (\pp^1)^n$ be an equidimensional closed subvariety of codimension $d$ and let $L \in \R^{n\times d}$ be a matrix of full rank. Further let $x\in \C^d$ such that the point $\exp(2\pi i  Lx)$ belongs to $X$.
\begin{enumerate}
 \item If $X$ is a Lee--Yang variety and if all $d\times d$ minors of $L$ are positive, then $x\in \R^d$. 
 \item If $X$ is a strict Lee--Yang variety and if all $d\times d$ minors of $L$ are nonnegative, then $x\in \R^d$. 
\end{enumerate}
\end{prop}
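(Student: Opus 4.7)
The plan is to write $x=a+ib$ with $a,b\in\R^d$ and reduce the claim to a sign-variation argument about the vector $Lb$, which then contradicts the assumed positivity properties of the Plücker coordinates of the column span of $L$.

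First I would compute $\log|\exp(2\pi iLx)|$ coordinate-wise. Since $\exp(2\pi iLx)=\exp(2\pi iLa)\cdot\exp(-2\pi Lb)$ and the first factor has unit modulus in every coordinate, we get
\begin{equation*}
\log|\exp(2\pi iLx)|=-2\pi\,Lb\in\R^n.
\end{equation*}
In particular $z:=\exp(2\pi iLx)$ lies in $(\C^{*})^n$ and its log-modulus vector is an element of the column span $V=\mathrm{colspan}(L)\in\Gr(d,n)$, which has full rank since $L$ does.

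Next I would split into the two cases from the Lee--Yang condition (2) or (2$'$). If $z\in X(\T)$, then $|z_j|=1$ for every $j$, hence $Lb=0$; because $L$ has full column rank this forces $b=0$ and thus $x\in\R^d$. Otherwise, in case (1) we have $\overline{\var}(\log|z|)\geq d$, i.e.\ $\overline{\var}(-2\pi Lb)\geq d$, while the hypothesis that every $d\times d$ minor of $L$ is strictly positive means $V\in\Gr_{+}(d,n)$; by \Cref{thm:GK}(2) we have $\overline{\var}(v)<d$ for every nonzero $v\in V$. Therefore we must have $Lb=0$, hence $b=0$. In case (2), we use the strict Lee--Yang condition $\var(\log|z|)\geq d$ together with the hypothesis that every minor is nonnegative, so $V\in\Gr_{\geq}(d,n)$; by \Cref{thm:GK}(1) every nonzero $v\in V$ satisfies $\var(v)<d$, again forcing $Lb=0$ and so $b=0$.

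There is no real obstacle: the argument is just a clean translation of the Lee--Yang condition through the $\log|\cdot|$ map, combined with the Gantmacher--Krein characterization of the positive and nonnegative Grassmannians. The only small point to be careful about is that $z=\exp(2\pi iLx)\in(\C^{*})^n$, so no coordinate of $\log|z|$ is $\pm\infty$ and the definitions of $\var$ and $\overline{\var}$ apply directly to the finite real vector $-2\pi Lb\in V$. Note also that in case (2), one genuinely needs the strict Lee--Yang hypothesis (using $\var$ rather than $\overline{\var}$) to match the $\Gr_{\geq}$ statement of \Cref{thm:GK}(1); this is precisely why the nonnegativity (rather than strict positivity) of the minors suffices in part (2) but not in part (1).
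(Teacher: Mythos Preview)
Your proof is correct and follows essentially the same approach as the paper's own proof: compute $\log|\exp(2\pi iLx)|=-2\pi L\,\Ima(x)$, then use the sign-variation characterization of the (non)negative Grassmannian to force $L\,\Ima(x)=0$. The only cosmetic difference is that the paper cites \Cref{lem:Karp} whereas you cite \Cref{thm:GK}; the relevant implication is the same.
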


\begin{proof}
We prove part (1).
Because $X$ is a Lee--Yang variety, we either have that $\exp(2\pi i  Lx)\in\T^n$, which means $L \Ima(x)=0$, or we have
\begin{equation*}
  d\leq\varbar(\log|\exp(2\pi i  Lx)|)=\varbar(-2\pi L \Ima(x))=\varbar(L \Ima(x)).
\end{equation*}
On the other hand, because the column span of $L$ is in ${\rm Gr}_+(d,n)$, \Cref{lem:Karp} then also implies that $L \Ima(x)=0$.  Since $L$ is full rank, we find that $\Ima(x) = 0$. 
The proof for part (2) is analogous.
\end{proof}

\begin{notation}\label{not:chim}
  For $M=(m_{ij})_{i,j}\in\Z^{m\times n}$ we consider the map 
 \begin{equation*}
  \chi_M\colon(\C^*)^n\to (\C^*)^m   ,\, (t_1,\ldots,t_n)\mapsto (t_1^{m_{i1}}\cdots t_n^{m_{in}})_{i=1,\ldots,m}.
 \end{equation*}
\end{notation}

\begin{lem}\label{cor:torus-fibered}
 Let $X\subseteq(\pp^1)^n$ be a Lee--Yang variety of dimension $c=n-d$ and $M\in \Z^{c\times n}$ be a matrix of rank $c$ whose (right) kernel belongs to ${\rm Gr}_+(d,n)$. For $X^*=X\cap(\C^*)^n$ we have $\chi_M^{-1}(\T^c)\cap X=X^*(\T)$.
\end{lem}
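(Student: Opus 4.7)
The plan is to use \Cref{thm:GK} to characterize the kernel of $M$ via the $\varbar$ function, and then directly apply the Lee--Yang property.

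The inclusion $X^*(\T)\subseteq \chi_M^{-1}(\T^c)\cap X$ is immediate: if $z\in\T^n$ then every monomial $z_1^{m_{i1}}\cdots z_n^{m_{in}}$ has absolute value one, so $\chi_M(z)\in\T^c$, and of course $z\in X$.

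For the reverse inclusion, I would take $z\in X^*\cap \chi_M^{-1}(\T^c)$ and set $b=\log|z|\in\R^n$. Since $\chi_M(z)\in\T^c$, taking $\log|\cdot|$ coordinatewise gives $Mb=0$, i.e.\ $b\in\ker(M)$. The assumption that $\ker(M)\in\Gr_+(d,n)$ combined with \Cref{thm:GK}(2) says that every nonzero vector of $\ker(M)$ satisfies $\varbar<d$. Hence either $b=0$ or $\varbar(b)<d$.

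Now I would invoke the Lee--Yang property: for every $z\in X$, either $z\in X(\T)$ or $\varbar(\log|z|)\geq d$. The latter alternative is incompatible with $b\in\ker(M)\setminus\{0\}$, so we are forced into one of two cases. If $z\in X(\T)$ we are done since $X(\T)=X^*(\T)$. Otherwise $b=0$, which means $|z_i|=1$ for every $i$, so $z\in\T^n$ and again $z\in X^*(\T)$.

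There is no real obstacle here; the statement is essentially the juxtaposition of defining property~(2$'$) of a Lee--Yang variety with the $\varbar$-characterization of $\Gr_+(d,n)$. The only minor point to keep in mind is that the hypothesis $z\in(\C^*)^n$ is what allows us to take coordinatewise logarithms of $|z|$ to obtain a finite real vector in $\ker(M)$; points in $X\setminus X^*$ (with some coordinate equal to $0$ or $\infty$) are excluded from the statement precisely because $\chi_M$ is only defined on $(\C^*)^n$.
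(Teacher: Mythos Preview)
Your proof is correct and follows essentially the same approach as the paper: both arguments show that $\log|z|$ lies in $\ker(M)$ and then use the positive-Grassmannian characterization to force $\varbar(\log|z|)<d$, which the Lee--Yang condition rules out unless $\log|z|=0$. The only cosmetic difference is that the paper cites \Cref{lem:Karp} rather than \Cref{thm:GK} for the needed inequality $\varbar(v)<d$ on $\ker(M)\setminus\{0\}$.
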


\begin{proof}
Since $\chi_M$ takes $\T^n$ to $\T^c$, it is clear that $X^*(\T)\subseteq \chi_M^{-1}(\T^c)$. We now show the other inclusion.  
Let $z = \exp(x+iy) \in X$ with $ x,y\in\R^{n} $. Since $X$ is Lee--Yang, either $x=0$, in which case we are done, or $\varbar(x)\geq d$. If $\chi_M(\exp(x+iy))=\exp(Mx+iMy)\in \T^c$ then $x\in \ker(M)$. 
 \Cref{lem:Karp} implies that $\varbar(v)< d$ for every non-zero $v\in\ker(M)$, so we conclude that $x=0$ and $z\in \T^n$. 
\end{proof}

\begin{lem}\label{cor:torus-fiberednonneg}
 Let $X\subseteq(\pp^1)^n$ be a strict Lee--Yang variety of dimension $c=n-d$. For every $I\in\binom{[n]}{c}$ the projection $\pi_I\colon X\to (\pp^1)^I$ satisfies $\pi_I^{-1}(\T^I)=X(\T)$.
\end{lem}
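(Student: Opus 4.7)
The plan is to prove the non-trivial inclusion $\pi_I^{-1}(\T^I)\cap X\subseteq X(\T)$ by a direct invocation of condition (2) in \Cref{def:strictLY}. The reverse inclusion $X(\T)\subseteq\pi_I^{-1}(\T^I)$ is immediate, since $z\in X(\T)$ means $|z_j|=1$ for all $j\in[n]$ and in particular for all $j\in I$.

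For the main inclusion, I would take an arbitrary $z\in X$ with $\pi_I(z)\in\T^I$. This means $|z_i|=1$, and hence $\log|z_i|=0$, for every $i\in I$. Thus the vector $\log|z|\in(\R\cup\{\pm\infty\})^n$ has at least $c=|I|$ zero coordinates, so at most $n-c=d$ coordinates of $\log|z|$ can be nonzero (possibly $\pm\infty$).

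The key combinatorial observation is then that, because $\var$ counts sign changes only after discarding zero entries, the number of sign changes in $\log|z|$ is bounded by the number of sign changes in a sequence of length at most $d$; such a sequence exhibits at most $d-1$ sign changes. Therefore $\var(\log|z|)\leq d-1<d$. By condition (2) of the strict Lee--Yang property, the alternative $\var(\log|z|)\geq d$ is ruled out, so $z\in X(\T)$.

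There is essentially no obstacle: the entire argument is a bookkeeping exercise counting zero coordinates of $\log|z|$ and combining this with the definition of a strict Lee--Yang variety. The only place one has to be slightly careful is to remember that $\log|0|=-\infty$ and $\log|\infty|=+\infty$ are treated as signed (nonzero) entries in the definition of $\var$, so coordinates $z_j\in\{0,\infty\}$ for $j\notin I$ still count toward the bound but can contribute at most $d-1$ sign changes, which is the only bound we need.
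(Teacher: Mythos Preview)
Your proof is correct and takes essentially the same approach as the paper: both argue that if $\pi_I(z)\in\T^I$ then $\log|z|$ has at least $|I|=n-d$ zero entries, hence $\var(\log|z|)\le d-1<d$, and condition~(2) of the strict Lee--Yang definition forces $z\in X(\T)$. The paper's version is slightly terser, but the logical content is identical.
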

\begin{proof}
 Since $\pi_I$ takes $\T^n$ to $\T^I$, it is clear that $X(\T)\subseteq \pi_I^{-1}(\T^I)$. We now show the other inclusion. Let $z\in X$ such that $\pi_I(z)\in\T^I$ . This means that $\log|z_i|=0$ for all $i\in I$. Because $|I|=n-d$, this implies that $\var(\log|z|)< d$. Property (2) in \Cref{def:leeyang} then implies that $z\in X(\T)$. 
\end{proof}

\begin{prop}\label{prop:smoothdense}
    Let $X\subseteq(\pp^1)^n$ be a Lee--Yang variety of dimension $c=n-d$ and denote by $X_{\rm sm}$ the smooth part of $X$. The closure of $X_{\rm sm}(\T)$ is $X(\T)$.
\end{prop}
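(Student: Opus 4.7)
The plan is, for any $z_0\in X(\T)$, to produce a sequence in $X_{\rm sm}(\T)$ converging to $z_0$. Let $Y\subseteq X$ be an irreducible component containing $z_0$; by equidimensionality, $\dim Y=c=n-d$. Choose an integer matrix $M\in\Z^{c\times n}$ of rank $c$ with $\ker M\in\Gr_+(d,n)$ (such $M$ exist because $\Gr_+(d,n)$ is a non-empty open cone in $\Gr(d,n)$ and therefore contains rational points), and set $\pi:=\chi_M|_Y\colon Y\to(\pp^1)^c$. Applying \Cref{cor:torus-fibered} to $X$ yields $\chi_M^{-1}(\T^c)\cap X=X^*(\T)$.

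The key geometric observation is that the fiber $\pi^{-1}(\pi(z_0))$ is finite: it is contained in $X^*(\T)\subseteq\T^n$, and a positive-dimensional complex subvariety of $\T^n$ would have, at any smooth point, a non-zero complex tangent subspace inside the totally real tangent space of $\T^n$, which is impossible. Upper semi-continuity of fiber dimension then forces $\pi$ to be generically finite; a dimension count plus properness makes it dominant and surjective. Combining finiteness of the fiber at $z_0$ with properness, $\pi$ is finite on some neighborhood of $z_0$, and therefore open there by the open mapping theorem for finite holomorphic maps between irreducible analytic spaces of equal pure dimension.

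With openness of $\pi$ at $z_0$ in hand, the plan is to approximate by density in $\T^c$. Let
\begin{equation*}
B:=\pi\bigl(\{z\in Y_{\rm sm}:d\pi_z\text{ is not an isomorphism}\}\bigr)\cup\pi(Y_{\rm sing})\cup\bigcup_{Y'\neq Y}\pi(Y\cap Y'),
\end{equation*}
where $Y'$ ranges over the remaining irreducible components of $X$. Each summand is a proper closed subvariety of $(\pp^1)^c$ of complex dimension at most $c-1$, and no such subvariety can contain the totally real manifold $\T^c$ of real dimension $c$; hence $\T^c\setminus B$ is dense in $\T^c$. For any $p\in\T^c\setminus B$, every preimage under $\pi$ is a smooth point of $Y$ not lying on any other component of $X$, so a smooth point of $X$; and by \Cref{cor:torus-fibered} it belongs to $X^*(\T)$, hence to $X_{\rm sm}(\T)$. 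Picking $p_k\in\T^c\setminus B$ with $p_k\to\pi(z_0)$ and lifting through openness to $z_k\in Y$ with $\pi(z_k)=p_k$ and $z_k\to z_0$ produces the desired approximating sequence.

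I expect the main obstacle to be the finite-fiber step: it is precisely where the Lee--Yang hypothesis is indispensable, since the equality $\chi_M^{-1}(\T^c)\cap X=X^*(\T)$ of \Cref{cor:torus-fibered} is what confines the fiber through $z_0$ to a totally real set and thus forces its finiteness. Once local openness of $\pi$ is secured, the rest of the argument (choice of $M$, controlling $B$, and passing from density in $\T^c$ back to $X(\T)$) amounts to standard bookkeeping.
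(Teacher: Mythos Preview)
Your overall strategy matches the paper's: reduce to a finite-fibered map onto $(\pp^1)^c$ (or $(\C^*)^c$) using \Cref{cor:torus-fibered}, then lift a generic approximating sequence in $\T^c$ back to $X_{\rm sm}(\T)$. The gap is in your repeated appeal to properness. The monomial map $\chi_M\colon(\C^*)^n\to(\C^*)^c$ for a rectangular $M\in\Z^{c\times n}$ does \emph{not} extend to a morphism $(\pp^1)^n\to(\pp^1)^c$ (monomials with mixed signs, or even with several positive exponents, are indeterminate at points with a coordinate equal to $0$ or $\infty$). Hence your $\pi=\chi_M|_Y$ is only defined on $Y\cap(\C^*)^n$, which is not proper over the base; the sentences ``a dimension count plus properness makes it dominant and surjective'' and ``combining finiteness of the fiber at $z_0$ with properness, $\pi$ is finite on some neighborhood'' are unjustified as written.

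The paper avoids this by first applying an \emph{invertible} monomial change of coordinates $\chi_M$ with $M\in\GL_n(\Z)$ (so $\chi_M$ is an automorphism of $(\C^*)^n$), taking the closure $X_1'\subseteq(\pp^1)^n$, and only then using the coordinate projection $(\pp^1)^n\to(\pp^1)^c$, which \emph{is} everywhere defined and proper; properness is then legitimately available for surjectivity and for the path-lifting argument. Your argument can alternatively be repaired without properness: since $z_0$ is an isolated point of its fiber and $Y\cap(\C^*)^n$ is irreducible of pure dimension $c$ mapping to the smooth $c$-dimensional target, Remmert's open mapping theorem (or the local finite mapping theorem) gives that $\pi$ is open at $z_0$, which is all you actually use in the final lifting step. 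Either fix closes the gap; as stated, though, the properness claims are incorrect.
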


\begin{proof}
    Let $M\in\GL_n(\Z)$ such that the right kernel of the matrix consisting of the first $c$ rows of $M$ is in $\Gr_{+}(d,n)$. Let $X_0'$ denote the image of $X\cap(\C^*)^n$ under the map $\chi_M$. Because $M$ is invertible over $\Z$, the map $\chi_M$ is an isomorphism. Therefore, it suffices to show that the closure of $(X_0')_{\rm sm}(\T)$ is $X_0'(\T)$. Letting $X_1'$ be the closure of $X_0'$ in $(\pp^1)^n$, we have that $X_1'(\T)=X_0'(\T)$. Thus, it suffices to show that the closure of $(X_1')_{\rm sm}(\T)$ is $X_1'(\T)$. By \Cref{cor:torus-fibered} and our choice of $M$ the projection
    \begin{equation*}
        \pi\colon X_1'\to(\pp^1)^c
    \end{equation*}
    onto the first $c$ coordinates satisfies $\pi^{-1}(\T^c)=X_1'(\T)$. Let $Z\subsetneq(\pp^1)^c$ be the branch locus of $\pi$. Note that $Z$ contains in particular the image of the singular locus of $X_1'$ under $\pi$.  Let $x\in X_1'(\T)$ and $X_2'$ an irreducible component of $X_1'$ containing $x$. It suffices to show that $x$ is in the closure of $X_2'(\T)\setminus\pi^{-1}(Z)$. Let $f$ be the restriction of $\pi$ to $X_2'$. We have that $f^{-1}(\T^c)=X_2'(\T)$. This implies that elements of $\T^c$ have finite fiber under $f$. Therefore, the map $f$ is generically finite. Because, by equidimensionality, $\dim(X_2')=c$ and because $f$ is proper, this implies that $f$ is surjective. By semicontinuity of fiber dimensions, there exists an open subset $U\subseteq(\pp^1)^c$ which contains $\T^c$ such that every element of $U$ has finite fiber under $f$. Letting $V=f^{-1}(U)$ it suffices to show that $x$ is in the closure of $V(\T)\setminus\pi^{-1}(Z)$. The restriction of $f$ to $V$ is finite because it is proper with finite fibers. This implies that every semialgebraic continuous path $\alpha\colon(0,1)\to U(\T)$ with $\lim_{t\to0}\alpha(t)=f(x)$ lifts to $\deg(f)$ distinct semialgebraic paths in $V(\T)$. Because there are, counted with multiplicity, at most $\deg(f)$ many preimages of $f(x)$ under $f$, at least one of these lifts must converge to $x$. This shows the claim.
\end{proof}

\subsection{Examples}\label{ssec:lyex}
We first consider zero dimensional (strict) Lee--Yang varieties.

\begin{lem}\label{lem:zerodim}
    Let $X\subseteq(\pp^1)^n$ be a finite subset. The following are equivalent:
    \begin{enumerate}
        \item $X\subseteq\T^n$,
        \item $X$ is a Lee--Yang variety,
        \item $X$ is a strict Lee--Yang variety.
    \end{enumerate}
\end{lem}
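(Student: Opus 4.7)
Since $X$ is finite, its dimension is $c=0$, so the codimension is $d=n$. I will verify the cycle $(3)\Rightarrow(2)\Rightarrow(1)\Rightarrow(3)$. The first implication is immediate from \Cref{lem:strictweak}(1).

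The heart of the argument is the implication $(2)\Rightarrow(1)$, and it relies on a single observation: for any vector $b\in(\R\cup\{\pm\infty\})^n$, the quantity $\varbar(b)$ is at most $n-1$. Indeed, even after assigning signs to the zero entries (and interpreting $\pm\infty$ as $\pm$), there are at most $n$ signed entries, hence at most $n-1$ consecutive sign changes between them. In particular $\varbar(\log|z|)\leq n-1<n=d$ for every $z\in X$, so property (2$'$) of \Cref{def:leeyang} forces $z\in X(\T)$. This gives $X\subseteq\T^n$.

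For $(1)\Rightarrow(3)$, suppose $X\subseteq\T^n$. Then for every $z\in X$ and every coordinate we have $|z_i|=1$, so $1/\overline{z_i}=z_i$, and the coordinate-wise involution fixes $X$; this verifies property (1) of \Cref{def:strictLY}. Property (2) is trivial, because $X=X(\T)$ makes the disjunction ``$z\in X(\T)$ or $\var(\log|z|)\geq d$'' hold on the left alternative for every $z\in X$. Property (3) is automatic: a zero-dimensional (reduced) closed subscheme of $(\pp^1)^n$ is smooth everywhere, since each of its points is a reduced closed point of a smooth ambient variety.

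There is essentially no obstacle here; the only minor point to be careful about is the convention $\log|0|=-\infty$ and $\log|\infty|=+\infty$ (used to keep $\var$ and $\varbar$ well defined on $(\pp^1)^n$), and the observation that the numerical coincidence $d=n$ is precisely what makes the bound $\varbar\leq n-1$ sufficient to rule out the second alternative in condition (2$'$).
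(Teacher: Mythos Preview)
Your proof is correct and follows essentially the same approach as the paper: the key observation is that $\var(b),\varbar(b)\leq n-1<n=d$ for any vector $b$ of length $n$, which forces the second alternative in conditions (2) and (2$'$) to be vacuous. The paper states this in a single sentence, while you have spelled out the full cycle of implications and verified each condition explicitly.
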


\begin{proof}
    This immediately follows from the fact that $\var(a),\varbar(a)<n$ for every $a$ of length $n$.
\end{proof}
Next we provide an explicit way of constructing strict Lee--Yang varieties of dimension one and arbitrary high codimension.

\begin{Def}
  A \emph{separating curve} is a smooth irreducible real projective curve $X$ such that $X(\C)\setminus X(\R)$ has two connected components.
\end{Def}

\begin{ex}
  For example $\pp^1$ is a separating curve, the two connected components of $\pp^1(\C)\setminus \pp^1(\R)$ being the (open) upper and lower half-plane $H_+$ and $H_-$. More generally, if the smooth irreducible real projective curve $X$ has genus $g$ and $X(\R)$ has $g+1$ connected components, then $X$ is separating.
\end{ex}

Consider a separating curve $X$ and denote the two connected components of $X(\C)\setminus X(\R)$ by $X_+$ and $X_-$. We will explain how to embed $X$ to $(\pp^1)^n$ as a strict Lee--Yang variety. We say that a non-constant real rational function $f\colon X\to\pp^1$ is \emph{separating} if $f^{-1}(H_+)=X_+$. In this case one also has $f^{-1}(H_-)=X_-$ and $f^{-1}(\R\cup\{\infty\})=X(\R)$. It was shown by Ahlfors \cite{ahlfors}*{\S4.2} that every separating curve admits a separating rational function. Several methods for constructing separating functions are provided in \cites{gabard,KS20b}.

\begin{prop}\label{prop:leeyangconstr}
 Let $f_1,\ldots,f_n$ be separating real rational functions on the separating curve $X$. Let $\psi\colon X\to(\pp^1)^n$ be the morphism
 \begin{equation*}
     f=(f_1,-f_2,f_3,\ldots,(-1)^{n-1} f_n)\colon X\to(\pp^1)^n
 \end{equation*}
 composed with the coordinate-wise M\"obius transformation $z\mapsto\frac{z+i}{z-i}$. If the restriction of $f$ to $X(\R)$ is injective, then the image $\tilde{X}\subseteq(\pp^1)^n$ of $\psi$ is a one-dimensional strict Lee--Yang variety.
\end{prop}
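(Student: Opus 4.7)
The plan is to verify the three conditions in \Cref{def:leeyang} for a strict Lee--Yang variety directly. First I would record the relevant properties of the M\"obius transformation $\mu(z) = (z+i)/(z-i)$: it carries $\R \cup \{\infty\}$ bijectively onto $\T$, sends $H_+$ into $\{|z|>1\}$ and $H_-$ into $\{|z|<1\}$, and satisfies $\mu(\bar z) = 1/\overline{\mu(z)}$. The alternating signs $(-1)^{i-1}$ in $f$ are designed precisely so that, for $p \in X_+$, each $f_i(p) \in H_+$ and hence $|\psi_i(p)| > 1$ exactly when $i$ is odd, producing an alternating sign pattern $+,-,+,\dots$ in $\log|\psi(p)|$; for $p \in X_-$ the symmetric argument gives the pattern $-,+,-,\dots$. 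In either case $\var(\log|\psi(p)|) = n-1 = d$, which establishes condition (2), noting also that for $p \in X(\R)$ every $f_i(p) \in \R \cup \{\infty\}$ so $\psi(p) \in \T^n$.

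The involution symmetry in condition (1) follows because $f$ is real: for $p \in X$ one has $f_i(\bar p) = \overline{f_i(p)}$, hence $\psi_i(\bar p) = \mu((-1)^{i-1}\overline{f_i(p)}) = 1/\overline{\psi_i(p)}$, so $\tilde X$ is invariant under $z \mapsto 1/\bar z$. Equidimensionality is immediate because $\tilde X$ is the image of the irreducible curve $X$ under the morphism $\psi$, which is non-constant (e.g.\ injective on $X(\R)$), hence $\tilde X$ is irreducible of dimension one.

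The substantive work is condition (3): smoothness of $\tilde X$ at every point of $\tilde X(\T)$. Because $\mu^{-1}(\T) = \R \cup \{\infty\}$ and each separating $f_i$ satisfies $f_i^{-1}(\R \cup \{\infty\}) = X(\R)$, we obtain $\psi^{-1}(\T^n) = X(\R)$, and hence $\tilde X(\T) = \psi(X(\R))$. For $p_0 \in X(\R)$ the fiber $\psi^{-1}(\psi(p_0))$ lies in $X(\R)$, where by hypothesis $f|_{X(\R)}$, and hence $\psi|_{X(\R)}$, is injective; thus $\psi^{-1}(\psi(p_0)) = \{p_0\}$. Once we also know $d\psi|_{p_0}$ is injective, properness of $\psi$ together with the singleton fiber shows that $\psi$ is an analytic isomorphism from a neighborhood of $p_0$ onto a neighborhood of $\psi(p_0)$ in $\tilde X$, so $\tilde X$ is smooth there.

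The crucial remaining input, and the main obstacle of the proof, is the lemma that a separating function $f_i$ has no critical points on $X(\R)$. To prove this I would work locally at $p_0 \in X(\R)$ in a coordinate $t$ with conjugation $t \mapsto \bar t$, so that $X_+$ is locally $\{\Ima(t) > 0\}$, and pick a real local coordinate at $f_i(p_0)$ in which $H_+$ is again the upper half-plane; then $f_i$ is given by a power series with real coefficients. If $f_i$ were ramified of order $k \geq 2$ at $p_0$, the local preimage of $H_+$ would consist of $k$ open sectors of angular width $\pi/k$, which cannot equal the single half-plane $X_+$, contradicting $f_i^{-1}(H_+) = X_+$. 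Hence $df_i|_{p_0} \neq 0$, which gives $d\psi|_{p_0}$ injective since $\mu$ is a local diffeomorphism on $\pp^1$. Everything else reduces to a direct verification from the definitions.
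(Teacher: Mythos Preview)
Your proof is correct and follows essentially the same approach as the paper's: verify conditions (1)--(3) directly, reduce (3) to showing that $\psi$ (equivalently $f$) is an injective immersion on $X(\R)$, and conclude by the fact that a separating function is unramified at every real point. The only difference is that the paper cites \cite{KS20a}*{Theorem~2.19} for this last fact, whereas you supply the standard local sector-counting argument yourself; both are valid and yield the same proof structure.
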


\begin{proof}
 The M\"obius transformation $z\mapsto\frac{z+i}{z-i}$ maps the lower open half-plane to the open unit disc. Because $X$ is separating, so in particular a real curve, and the $f_i$ are real rational functions, this implies that $\tilde{X}$ is is invariant under the coordinate-wise involution $z\mapsto 1/\overline{z}$. Next we show that for every $z\in\tilde{X}\setminus\tilde{X}(\T)$ we have $\var(\log|z|)\geq n-1$. Let $z=\psi(x)$ for $x\in X$. Without loss of generality, assume that $x\in X_+$. Then $\Ima(f_i(x))>0$ for all $i=1,\ldots,n$ because $f_i$ is separating. This shows that
 \begin{equation*}
     \var (\Ima(f_1(x),-f_2(x),f_3(x),\ldots,(-1)^{n-1} f_n(x)))=n-1
 \end{equation*}
 which implies $\var(\log|z|)= n-1$. It remains to show that $\tilde{X}(\T)$ is contained in the smooth part of $\tilde{X}$. Letting $X'$ be the image of $f$, this is equivalent to $X'(\R)$ being contained in the smooth part of $X'$. We first observe that, because $f_1$ is separating, the preimage of $X'(\R)$ under $f$ is $X(\R)$. Therefore, because $X$ is smooth and $f$ is injective on $X(\R)$, it suffices to show that the differential of $f$ at every point $x\in X(\R)$ is injective. This follows because the differential of $f_1=\pi_1\circ f$ is injective at every point $x\in X(\R)$ by \cite{KS20a}*{Theorem~2.19}.
\end{proof}

\begin{ex}\label{ex:rational}
 Let $X=\pp^1$ and $f_1,\ldots,f_n$ be some real univariate polynomials of degree one with positive leading coefficients. These are clearly separating. Then the map $\psi$ from \Cref{prop:leeyangconstr} is an embedding and therefore $\tilde{X}$ is a strict Lee--Yang variety in $(\pp^1)^n$ of codimension $n-1$. The (closure in $(\pp^1)^3$ of the) curve considered in \Cref{ex:runningRational} is of this form and thus a strict Lee--Yang variety.
\end{ex}

In order to construct separating functions on curves other than $\pp^1$, the following criterion turns out to be convenient.

\begin{lem}[\cite{KS20b}*{Lemma~2.10}]\label{lem:interlcrit}
    Let $f$ be a real non-constant rational function on the separating curve $X$. Then $f$ or $-f$ is separating if and only if every connected component of 
    \begin{equation*}
        X(\R)\setminus\{P\mid f(P)=0\}
    \end{equation*}
    contains exactly one pole of $f$. In other words, on each connected component of $X(\R)$ zeros and poles of $f$ \emph{interlace}.
\end{lem}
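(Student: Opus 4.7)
The plan is to reduce the lemma to a single topological condition, then verify both implications by a combination of local analysis and a degree count.

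\textbf{Reduction to the preimage of $\pp^1(\R)$.} I would first prove the auxiliary equivalence: $f$ or $-f$ is separating if and only if $S:=f^{-1}(\pp^1(\R)) = X(\R)$. The ``only if'' direction is immediate from the definition. For ``if'', note that if $S=X(\R)$ then $f$ maps $X(\C)\setminus X(\R) = X_+\sqcup X_-$ into $\pp^1(\C)\setminus\pp^1(\R) = H_+\sqcup H_-$; since $X_+$ is connected and the sign of $\Ima(f)$ is locally constant off $S$, either $f(X_+)\subseteq H_+$ (so $f$ is separating, with $f(X_-)\subseteq H_-$ by conjugation symmetry) or $f(X_+)\subseteq H_-$ (so $-f$ is separating). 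It therefore suffices to show $S=X(\R)$ if and only if the interlacing condition holds.

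\textbf{Forward direction: separating implies interlacing.} Assume $f$ is separating. The key local fact I would establish is that $f$ is unramified at every $P\in X(\R)$. Indeed, in a real-analytic local coordinate $z$ centered at $P$ with $X_+$ given by $\Ima(z)>0$, a ramification index $e\geq 2$ would give $f(z)-f(P)\sim c(z-P)^e$ with $c\in\R^*$; this folds the upper half-disk alternately into $H_+$ and $H_-$ along its $e$ angular sectors, contradicting $f(X_+)\subseteq H_+$. Consequently, $f$ restricts to a local diffeomorphism from each component $C$ of $X(\R)$ to $\pp^1(\R)\cong S^1$, hence to a covering map of some degree $n_C$. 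Because the proper holomorphic map $f\colon X_+\to H_+$ is orientation-preserving, its boundary restriction preserves the induced boundary orientations, so $n_C>0$. A degree-$n_C$ covering of $S^1$ hits every regular value $n_C$ times in cyclic order; hence preimages of $0$ and $\infty$ alternate along $C$, which is exactly the interlacing condition.

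\textbf{Converse direction: interlacing implies $f$ or $-f$ is separating.} By the reduction, it suffices to show $S=X(\R)$. I would do this in two steps. First, an analogous local analysis shows that at any $P\in X(\R)$ with $f(P)\in\pp^1(\R)$ and ramification index $e\geq 2$, the set $S$ consists locally of $e$ smooth arcs through $P$, namely the $e$ preimages of the real axis under $w\mapsto w^e$. For $e\geq 2$ this produces arcs of $S$ crossing $X(\R)$ transversally at $P$; the interlacing condition on the components of $X(\R)\setminus f^{-1}(0)$ near $P$ is then violated, a contradiction. Hence all real zeros and poles are simple and $X(\R)$ is a union of smooth components of $S$. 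Second, each real component $C$ of $X(\R)$ has a well-defined degree $d_C$ as a map $C\to\pp^1(\R)$, and interlacing forces $|d_C|$ to equal the number of zeros of $f$ on $C$. Summing and comparing with $\deg(f)$ via the fact that the boundary degree of a holomorphic covering equals its degree gives $\sum_C|d_C|=\deg(f)$, so $f$ has no non-real zeros and $S$ has no components disjoint from $X(\R)$, yielding $S=X(\R)$.

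\textbf{Main obstacle.} The subtle part is the converse, specifically ruling out extra components of $S$ inside $X_+\sqcup X_-$. A priori, a conjugate pair of circles in $X_+\cup X_-$ along which $f$ takes real values could exist and would contribute non-real zeros and poles of $f$. The core of the argument is to extract from the interlacing condition enough control on $\deg(f|_{X(\R)})$ that, compared with $\deg(f)$ via an orientation-coherent boundary-degree computation, all zeros and poles of $f$ are forced to be real and to sit on $X(\R)$ in the interlaced pattern. This is in the spirit of Ahlfors' original degree-theoretic analysis of separating functions.
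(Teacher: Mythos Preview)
The paper does not prove this lemma: it is quoted verbatim from \cite{KS20b}*{Lemma~2.10} and used as a black box in \Cref{ex:elliptic} and \Cref{ex:notcompleteintersection}. So there is no proof in the paper to compare against, and any argument you give goes beyond what the paper does.

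On the merits of your sketch: the reduction to $f^{-1}(\pp^1(\R))=X(\R)$ and the forward implication are fine. The converse, however, has a genuine gap. Your claim that a real ramification point of $f$ ``violates the interlacing condition near $P$'' is not justified, and in fact the statement is false if ``exactly one pole'' is read set-theoretically. Take $X=\pp^1$ and $f(x)=x^2/(x-1)^2$: the only zero is $0$, so $\pp^1(\R)\setminus\{0\}$ is a single arc containing the single pole point $1$, yet $f^{-1}(\pp^1(\R))$ also contains the circle $|z-\tfrac12|=\tfrac12$, so neither $f$ nor $-f$ is separating. The lemma therefore has to be read with multiplicities (a double pole counts as two poles), and your local-arcs argument does not establish this; you would instead need to argue directly that interlacing with multiplicity forces every real zero and pole to be simple.

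Even granting simplicity, your final degree step is circular as written. You invoke ``the boundary degree of a holomorphic covering equals its degree'' to get $\sum_C |d_C|=\deg(f)$, but the only holomorphic covering in sight whose boundary is $f|_{X(\R)}\colon X(\R)\to\pp^1(\R)$ would be $f|_{X_+}\colon X_+\to H_+$, and that \emph{is} the conclusion you are trying to prove. What you actually need is an a priori identity relating $\sum_C d_C$ (signed, with the boundary orientation of $X_+$) to the difference between the number of preimages of a point of $H_+$ lying in $X_+$ and in $X_-$; together with the computation $|d_C|=n_C$ from interlacing this forces all zeros to be real. This argument-principle step is the missing idea.
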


Finally, we note that products of strict Lee--Yang varieties are Lee--Yang varieties.

\begin{ex}\label{ex:product}
If for each $j=1, \hdots, r$, $X_j \subseteq (\pp^1)^{n_j}$ is a strict Lee--Yang variety of codimension $1\leq d_j < n_j$, 
then the product $X=X_1\times \cdots \times X_r  \subseteq (\pp^1)^{n_1+\hdots +n_r}$ is a Lee--Yang variety of codimension $d = \sum_{i=1}^r d_j$. 
To see this, note that for any point $z = (z_j)_j\in X$ where $z_j\in X_j$, $\varbar(\log|z_j|)\geq d_j$. 
Therefore $\varbar(\log|z|)\geq \sum_{j=1}^rd_j =d$. The torus part $X(\T)$ is contained in the smooth part of $X$ but in general $X$ will not satisfy part (2) of \Cref{def:leeyang}.
\end{ex}

We will see more examples in \Cref{sec:concreteexample}.

\subsection{Transversality}\label{sec:trans}
We discuss some transversality properties of Lee--Yang varieties.
Throughout this subsection we fix a Lee--Yang variety $X\subseteq(\pp^1)^n$ of dimension $c=n-d$ and a matrix $L\in\R^{n\times d}$  with positive $d\times d$ minors. 
Further let
\begin{equation*}
    Y=\{z\in\C^{n}\mid \exp(2\pi iz)\in X\}
\end{equation*}
and $Y(\R)=Y\cap\R^n$.  We further let $X_{\rm sm}$ be the smooth part of $X$ and $Y_{\rm sm}$ the set of points $z\in Y$ such that $\exp(2\pi iz)\in X_{\rm sm}$. The following Lemma is a straightforward consequence of \Cref{prop:RR}.

\begin{lem}\label{lem:realtang}
    If $x\in\C^d$ such that $Lx\in Y$, then $x\in\R^d$.
\end{lem}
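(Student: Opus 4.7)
The plan is to apply \Cref{prop:RR}(1) essentially verbatim, so the proof should be very short.

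First I would unpack the definition: the hypothesis $Lx \in Y$ means exactly $\exp(2\pi i Lx) \in X$. This is the hypothesis required by \Cref{prop:RR}.

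Next I would verify the technical assumption on $L$ needed for \Cref{prop:RR}, namely that $L$ has full rank $d$. This is immediate since the positivity of all $d \times d$ minors of $L$ implies in particular that at least one such minor is nonzero, so $L$ has rank $d$.

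Finally, with $X$ a Lee--Yang variety and $L \in \R^{n \times d}$ of full rank with all positive $d \times d$ minors, \Cref{prop:RR}(1) yields $x \in \R^d$. There is no real obstacle here; the lemma is literally a rephrasing of \Cref{prop:RR}(1) in the language of $Y$, and is included presumably to fix notation for later transversality arguments in this subsection.
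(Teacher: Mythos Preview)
Your proposal is correct and matches the paper's proof essentially verbatim: unpack $Lx\in Y$ as $\exp(2\pi i Lx)\in X$ and invoke \Cref{prop:RR}(1). The only addition you make is the explicit remark that positivity of the minors forces $L$ to have full rank, which the paper leaves implicit.
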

\begin{proof}
    The assumption $Lx\in Y$ implies that $\exp(2\pi i Lx)\in X$. Now the claim follows from \Cref{prop:RR}.
\end{proof}

\begin{lem}\label{lem:tangentspace}
    Let $x\in Y_{\rm sm}(\R)$. The following holds true:
    \begin{enumerate}
        \item The tangent space $T_xY\subseteq\C^n$ is the $\C$-span of $T_xY(\R)$.
        \item The orthogonal complement of $T_xY(\R)$ is in ${\rm Gr}_{\geq}(d,n)$.
    \end{enumerate}
\end{lem}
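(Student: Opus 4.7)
The plan is to exploit the invariance of $X$ under $z \mapsto 1/\overline{z}$, which translates via $z = \exp(2\pi iw)$ into invariance of $Y$ under complex conjugation $w \mapsto \overline{w}$ on $\C^n$. For part (1), since $x \in Y_{\rm sm}(\R)$ is fixed by conjugation and $Y$ is smooth at $x$, the complex tangent space $T_xY \subseteq \C^n$ is preserved by the differential of conjugation. A standard decomposition then yields $T_xY = V \oplus iV$ with $V := T_xY \cap \R^n$ a real subspace of dimension $c = \dim_\C T_xY$. Identifying $V$ with $T_xY(\R)$ uses the standard fact that the fixed locus of an anti-holomorphic involution on a complex manifold is a real-analytic submanifold of half the real dimension: $Y(\R)$ is smooth at $x$ of real dimension $c$, with tangent space exactly $V$. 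Alternatively, one can produce the real curves directly by applying the conjugation symmetry to the holomorphic curves in $Y$ realizing each $v \in V$.

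For part (2), I would probe $T_xY(\R)$ by imaginary perturbations. Fix any nonzero $w \in T_xY(\R)$. Since $T_xY$ is a complex subspace, $iw \in T_xY$, and smoothness of $Y$ at $x$ yields an analytic curve $x(t) \in Y$ with $x(0) = x$ and $x'(0) = iw$. Then
\[
\log|\exp(2\pi i x(t))| \;=\; -2\pi\, \Ima(x(t)) \;=\; -2\pi t w + O(t^2),
\]
which is nonzero for all small $t > 0$ because $w \neq 0$. Hence $\exp(2\pi i x(t)) \in X \setminus X(\T)$, so the Lee--Yang condition (2$'$) forces $\varbar(-2\pi t w + O(t^2)) \geq d$ for all sufficiently small $t > 0$.

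To finish, one must pass from $\varbar$ of the perturbation back to $\varbar(w)$. For $t > 0$ small enough, each nonzero coordinate of $w$ dominates the $O(t^2)$ error in its slot, and all nonzero coordinates flip sign simultaneously under the scalar $-2\pi t$ (which does not affect $\varbar$), while each zero coordinate of $w$ is replaced by a small real number of some definite sign. Since $\varbar(w)$ is by definition the maximum sign-change count over all sign assignments to the zero entries of $w$, the vector $-2\pi t w + O(t^2)$ realizes one particular such assignment, yielding $\varbar(-2\pi t w + O(t^2)) \leq \varbar(w)$. Combining the two inequalities gives $\varbar(w) \geq d$ for every $w \in T_xY(\R) \setminus \{0\}$, and \Cref{thm:GK}(1) translates this into $(T_xY(\R))^\perp \in \Gr_\geq(d,n)$, completing part (2). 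The main obstacle in this proof is precisely this last perturbation inequality for $\varbar$: one must be careful that the $O(t^2)$ error cannot artificially inflate the sign-change count beyond what the optimum in $\varbar(w)$ already records; all other steps are formal consequences of smoothness, conjugation-invariance, and the basic Lee--Yang axiom.
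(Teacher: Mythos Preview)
Your proof is correct and follows essentially the same approach as the paper. Both arguments use conjugation-invariance of $Y$ for part~(1), and for part~(2) both pick $w\in T_xY(\R)\setminus\{0\}$, use $iw\in T_xY$ to build a curve in $Y$ with imaginary part $\sim tw$, invoke the Lee--Yang axiom to get $\varbar$ of the perturbation $\geq d$, and then pass to the limit via upper semicontinuity of $\varbar$; your ``one particular sign assignment'' remark is exactly this upper semicontinuity, so the ``main obstacle'' you flag is not an obstacle at all.
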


\begin{proof}
 Because $X$ is invariant under the coordinate-wise involution $z\mapsto 1/\overline{z}$, it follows that $Y$ is invariant under complex conjugation. This implies part (1). Now we prove part (2). Consider a non-zero tangent vector $w\in T_{x}Y(\R)$. By \Cref{thm:GK} we have to show that $\varbar(w)\geq d$. By part (1) we have $\xi=i\cdot w\in T_x Y$.  Let $ \gamma\colon[0,1]\to Y$ be a smooth path with $\gamma(0)=x$ and $\gamma'(0)=\xi$. For small $t\in(0,1]$ we have by the definition of Lee--Yang varieties that
  \begin{equation*}
   \varbar(\Ima[\frac{1}{t}\left(\gamma(t)-\gamma(0)\right)])=\varbar(\Ima[\gamma(t)])\geq d    
  \end{equation*}
  since $\Ima[\gamma(0)]=0 $ and $\exp(2\pi i\gamma(t))\in X$. Thus 
  \begin{equation*}
   \varbar(w)=\varbar(\Ima[\xi])\geq d   
  \end{equation*}
  since $\gamma'(0)=\xi$ and $\varbar$ is upper-semicontinuous. 
\end{proof}

\begin{lem}\label{lem:strictcovering}
    Let $X$ be a strict Lee--Yang variety, let $x\in X(\T)$ and $I\in\binom{[n]}{c}$. Then the projection map
            $\pi_I\colon X\to(\pp^1)^I$
    is unramified at $x$.
\end{lem}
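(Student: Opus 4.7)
The plan is to reduce the unramifiedness of $\pi_I$ at $x$ to an injectivity statement for a real linear projection between tangent spaces, and then derive a contradiction from condition~(2) of \Cref{def:leeyang} by probing the variety with small holomorphic discs going off the real locus. Since $X$ is smooth of complex dimension $c$ at each $x \in X(\T)$ by condition~(3) of \Cref{def:leeyang}, the map $\pi_I$ is unramified at $x$ if and only if the $\C$-linear differential $d\pi_I \colon T_xX \to T_{\pi_I(x)}(\pp^1)^I$ is an isomorphism, which, the source and target both having complex dimension $c$, is equivalent to injectivity. Pulling back through the chart $\exp(2\pi i\cdot)$ and letting $y \in \R^n$ be a lift of $x$, this becomes injectivity of the $\C$-linear projection $T_yY \to \C^I$, $\xi \mapsto \xi|_I$, where $Y = \{z \in \C^n : \exp(2\pi i z) \in X\}$. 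By \Cref{lem:tangentspace}(1) we have $T_yY = \C \cdot T_yY(\R)$, so this is further equivalent to injectivity of the $\R$-linear map $T_yY(\R) \to \R^I$, $v \mapsto v|_I$, between real vector spaces of equal dimension $c$.

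Assume for contradiction that there exists $\alpha \in T_yY(\R) \setminus \{0\}$ with $\alpha|_I = 0$. Since $Y$ is smooth at $y$ and $i\alpha \in T_yY$, I can choose a holomorphic disc $\gamma \colon D \to Y$ with $\gamma(0) = y$ and $\gamma'(0) = i\alpha$; for real $t$ small and nonzero, the imaginary part $\Ima\gamma(t) = t\alpha + O(t^2)$ is nonzero, so $\gamma(t) \notin \R^n$ and $\exp(2\pi i\gamma(t)) \in X \setminus X(\T)$. Condition~(2) of \Cref{def:leeyang} in its strict form then forces
\[
\var(\Ima\gamma(t)) \;=\; \var(\log|\exp(2\pi i\gamma(t))|) \;\geq\; d
\]
for all sufficiently small $t \neq 0$.

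The last step extracts a contradiction from a sign-pattern analysis. Writing $\Ima\gamma(t) = t\alpha + t^2 \Ima u + O(t^3)$ with $u = \tfrac12 \gamma''(0)$, at positions $j \in S := \{j : \alpha_j \neq 0\} \subseteq [n]\setminus I$ the leading sign of the $j$-th entry is $\operatorname{sign}(t)\operatorname{sign}(\alpha_j)$ and flips with the sign of $t$, whereas at positions $j$ in $Z := [n] \setminus S \supseteq I$ the sign equals $\operatorname{sign}(\Ima u_j)$ and is insensitive to the sign of $t$. As $\gamma$ varies among holomorphic discs with tangent $i\alpha$, the second-order coefficient $u$ sweeps an affine translate of $T_yY$, and hence $\Ima u|_Z$ sweeps an affine subspace of $\R^Z$ of dimension at most $c - 1$, the missing dimension corresponding to $\alpha$ which lies in the kernel of the restriction $T_yY(\R) \to \R^Z$. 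Combining the inequality $\var \geq d$ for both signs of $t$ and for every admissible $u$ yields a system of constraints on these sign patterns.

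I expect the main obstacle to be the combinatorial step of showing that these constraints are incompatible, given the support restriction $\alpha|_I = 0$ and $|S| \leq d$. In the codimension-one case $c = 1$ the $t \leftrightarrow -t$ flip alone suffices: if $\sigma^\pm$ denote the sign patterns for $t$ positive resp.\ negative, then a pair-by-pair count shows $\var(\sigma^+) + \var(\sigma^-) \leq 2(n-1) - b$, where $b \geq 1$ counts adjacent pairs with one index in $S$ and one in $Z$, which contradicts the strict Lee--Yang lower bound $2d = 2(n-1)$. For general $c$ one must additionally exploit the freedom in $\Ima u|_Z$ and, by sending $u$ to infinity along the admissible affine subspace, produce an extreme configuration whose total sign variation is strictly less than $d$, again contradicting the strict Lee--Yang bound.
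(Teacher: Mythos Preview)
Your reduction to injectivity of the real linear map $T_yY(\R)\to\R^I$ is correct and uses \Cref{lem:tangentspace} appropriately. However, the subsequent sign-pattern argument has a genuine gap, and it is not merely the combinatorial bookkeeping you flag at the end.

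Already in the $c=1$ case your claim that ``at positions $j\in Z$ the sign equals $\operatorname{sign}(\Ima u_j)$ and is insensitive to the sign of $t$'' is unjustified: if $(\Ima u)_j=0$ the sign is governed by a higher-order term $t^{m_j}$, and nothing forces $m_j$ to be even. When $m_j$ is odd, the entry at position $j$ flips sign with $t$ exactly as the entries at positions in $S$ do, and your pair-count $\var(\sigma^+)+\var(\sigma^-)\le 2(n-1)-b$ collapses. For $c=1$ the disc $\gamma$ is unique up to reparametrization, so you cannot simply perturb $u$ to dodge this. For $c\ge2$ you only sketch how the extra freedom in $\Ima u|_Z$ should be used; producing ``an extreme configuration whose total sign variation is strictly less than $d$'' is precisely the hard step, and it is not carried out.

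The paper's proof avoids this local expansion altogether. It applies a coordinate-wise M\"obius transformation sending $\T$ to $\R\cup\{\infty\}$, so that $\pi_I$ becomes a projection $f\colon\tilde X\to(\pp^1)^I$ of real varieties. The strict Lee--Yang condition enters only through \Cref{cor:torus-fiberednonneg}, which says $\pi_I^{-1}(\T^I)=X(\T)$; after the M\"obius change this reads $f^{-1}\bigl((\R\cup\{\infty\})^I\bigr)=\tilde X(\R)$, i.e.\ $f$ is \emph{real-fibered}. One then invokes the general fact \cite{KS20a}*{Theorem~2.19} that a real-fibered morphism is unramified at every smooth real point. The heuristic behind that theorem is exactly the obstruction your expansion is reaching for---if $f$ were ramified at $x$, nearby real values would acquire non-real preimages near $x$---but it is established cleanly via local normal forms rather than by tracking sign variations of Taylor coefficients.
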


\begin{proof}
    Let $\psi_1\colon(\pp^1)^n\to(\pp^1)^n$ and $\psi_2\colon(\pp^1)^I\to(\pp^1)^I$ be the coordinate-wise M\"obius transformations $z\mapsto\frac{z+i}{z-i}$. Let $\tilde{X}=\psi_1^{-1}(X)$ and $\tilde{x}=\psi_1^{-1}(x)$. We consider the morphism
    \begin{equation*}
        f=\psi_2^{-1}\circ\pi_I\circ\psi_1\colon \tilde{X}\to(\pp^1)^I.
    \end{equation*}
    Note that $f$ is just the projection from $\tilde{X}$ onto the coordinates from $I$. Because $\psi_1$ and $\psi_2$ are isomorphisms, it suffices to show that $f$ is unramified at $\tilde{x}$. Part (1) of \Cref{def:leeyang} implies that $\tilde{X}$ is invariant under $z\mapsto\overline{z}$, namely a real variety with $\tilde{X}(\R)=\psi_{1}^{-1}(X(\T))$. Here $\tilde{X}(\R)$ denotes the set of fixed points under $z\mapsto\overline{z}$, i.e. $\tilde{X}(\R)=\tilde{X}\cap(\R\cup\{\infty\})^n$. Furthermore, the point $\tilde{x}\in\tilde{X}(\R)$ is a smooth point of $\tilde{X}$. Now let $y\in \tilde{X}$ such that $f(y)$ is real. Then $\pi_I(\psi_1(y))\in\T^I$ and \Cref{cor:torus-fiberednonneg} implies that $\psi_1(y)\in X(\T)$ which in turn shows that $y\in\tilde{X}(\R)$. Thus $f$ is \emph{real-fibered} in the sense of \cite{KS20a}*{Definition~2.1} and \cite{KS20a}*{Theorem~2.19} implies that $f$ is unramified at $\tilde{x}$.
\end{proof}

\begin{cor}\label{cor:strictcovering}
    Let $X$ be a strict Lee--Yang variety and $I\in\binom{[n]}{c}$. Then the projection map
            $\pi_I\colon Y(\R)\to\R^I$
    is everywhere unramified.
\end{cor}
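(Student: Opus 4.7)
The plan is to transfer the unramifiedness of $\pi_I\colon X\to(\pp^1)^I$ at points of $X(\T)$, supplied by \Cref{lem:strictcovering}, to $Y(\R)$ via the exponential covering $\Phi\colon\C^n\to(\C^*)^n$, $z\mapsto\exp(2\pi i z)$. The key observation is that $\Phi$ is a local biholomorphism everywhere on $\C^n$, since its Jacobian is the diagonal matrix with nonzero entries $2\pi i\exp(2\pi i z_j)$.

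First I would fix $z\in Y(\R)$ and set $x=\Phi(z)\in X(\T)$. By property (3) of \Cref{def:leeyang}, $x$ lies in the smooth locus of $X$, so the local biholomorphism $\Phi$ identifies a neighborhood of $z$ in $Y$ with a neighborhood of $x$ in $X_{\rm sm}$; in particular, $Y$ is smooth at $z$. The coordinate projections intertwine the exponentials: if $\Phi_I\colon\C^I\to(\C^*)^I$ denotes the analogous exponential in the $I$-coordinates, then $\pi_I\circ\Phi=\Phi_I\circ\pi_I$.

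Taking differentials at $z$ produces a commutative square whose horizontal arrows are the $\C$-linear isomorphisms $d\Phi|_{T_zY}\colon T_zY\to T_xX$ and $d\Phi_I\colon\C^I\to\C^I$. Therefore $d\pi_I|_{T_zY}$ is injective if and only if $d\pi_I|_{T_xX}$ is, and \Cref{lem:strictcovering} guarantees the latter. To pass to the real tangent space I would invoke \Cref{lem:tangentspace}(1), which says that $T_zY$ is the $\C$-span of $T_zY(\R)$. Since $d\pi_I$ is $\C$-linear and injective on $T_zY$, its restriction to the real subspace $T_zY(\R)$ is also injective, yielding that $\pi_I\colon Y(\R)\to\R^I$ is unramified at $z$. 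Since $z$ was arbitrary, this completes the argument. No substantive obstacle is anticipated; the only care point is ensuring that $\Phi$ restricts to a local biholomorphism of analytic varieties near $z$, which is exactly where the smoothness clause (3) of \Cref{def:leeyang} is used.
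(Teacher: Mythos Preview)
Your proposal is correct and matches the paper's intended approach: the corollary is stated without proof precisely because it follows from \Cref{lem:strictcovering} by pulling back along the local biholomorphism $\Phi(z)=\exp(2\pi i z)$, exactly as you describe. The invocation of \Cref{lem:tangentspace}(1) is harmless but unnecessary, since injectivity of the $\C$-linear differential on $T_zY$ already implies injectivity on the real subspace $T_zY(\R)$.
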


\begin{rem}\label{rem:definemI}
    In the case that $X$ is a strict Lee--Yang variety of dimension $c$, we can use \Cref{cor:strictcovering} to define an orientation on the smooth manifold $Y(\R)$. To this end, for $I\in\binom{[n]}{c}$ let $\pi_I\colon Y(\R)\to\R^I \cong\R^c$ be the projection onto the coordinates from $I$. By \Cref{cor:strictcovering} each $\pi_I$ is unramified. This implies in particular that for every subset $I=\{i_1,\ldots,i_c\}$ with $1\leq i_1<\cdots<i_c\leq n$ the top-dimensional differential form
    \begin{equation*}
       s(I)\cdot dx_I=s(I)\cdot dx_{i_1}\wedge\cdots\wedge dx_{i_c}
    \end{equation*}
    is nowhere vanishing on $Y(\R)$ and thus defines an orientation on $Y(\R)$. We claim that all these orientations agree. To see this, consider the tangent space of $Y(\R)$ at some point. We can write this tangent space as the range of an $n\times c$ matrix $M$ such the determinant of the first $c$ rows is positive. The differential form $dx_I$ evaluates at this basis to the determinant $M_I$ of the submatrix of $M$ whose rows are indexed by $I$. Because the determinant of the first $c$ rows is positive, the columns of $M$ form a reference frame for the orientation induced by $dx_{[c]}$. Because the orthogonal complement of the range of $M$ is in $\Gr_{\geq}(n-c,n)$ by \Cref{lem:tangentspace}, it follows that the sign of $M_I$ is $s(I)$ which implies the claim. 
    In the same way we can define an orientation on $X(\T)$. The measures on $Y(\R)$ and $X(\T)$ defined by these orientations and the differential forms $dx_I$ will play an important role in \Cref{sec:measuresandrestrictions}.
\end{rem}

Let $ z\cdot z':=(z_{1}z_{1}',\ldots,z_{n}z_{n}') $ denote point-wise multiplication in $ (\C^*)^{n}$.

\begin{thm}\label{thm:transv1}
Let $X\subseteq(\pp^1)^n$ be a  Lee--Yang variety of dimension $c=n-d$ and let $L\in\R^{n\times d}$ be a matrix with positive $d\times d$ minors. 
 We denote by $X_{\rm sm}$ the smooth part of $X$.
  For every $z_{0}\in\T^{n}$ the map $x\mapsto z_{0}\cdot\exp(2\pi iLx)$ is transverse to $X_{\rm sm}$ as a map from $\C^{d}$ to $\C^n$, and is transverse to $X_{\rm sm}(\T)$ as a map from $\R^{d}$ to $\T^{n}$.
\end{thm}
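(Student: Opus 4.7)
The plan is to reduce the statement, via \Cref{lem:translate}, to the case $z_0=1$, and then work in logarithmic coordinates using the variety $Y=\{w\in\C^n\mid\exp(2\pi iw)\in X\}$ from \Cref{sec:trans}. Since translating $X$ by $z_0^{-1}$ preserves all hypotheses of \Cref{def:leeyang}, we may replace $X$ by $m_{z_0^{-1}}(X)$ and study the map $\phi(x)=\exp(2\pi iLx)$.

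Factor $\phi=p\circ L$ where $p\colon\C^n\to(\C^*)^n$, $w\mapsto\exp(2\pi iw)$, is a local biholomorphism. Suppose $\phi(x)\in X_{\rm sm}$. Then $w:=Lx\in Y_{\rm sm}$, and \Cref{lem:realtang} forces $x\in\R^d$, hence $w\in Y_{\rm sm}(\R)$. The differential $d_wp$ is a linear isomorphism $\C^n\to T_z\C^n$ that identifies $T_wY_{\rm sm}$ with $T_zX_{\rm sm}$, where $z=\phi(x)$; its restriction to $\R^n=T_w\R^n$ identifies $T_wY_{\rm sm}(\R)$ with $T_zX_{\rm sm}(\T)$. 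Consequently, transversality of $\phi$ to $X_{\rm sm}$ at $x$ is equivalent to the complex identity
\[
L(\C^d)+T_wY_{\rm sm}=\C^n,
\]
and transversality of $\phi|_{\R^d}$ to $X_{\rm sm}(\T)$ at $x$ is equivalent to the real identity
\[
L(\R^d)+T_wY_{\rm sm}(\R)=\R^n.
\]

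The heart of the proof is the real identity, from which the complex one follows by complexification using \Cref{lem:tangentspace}(1), which says $T_wY_{\rm sm}=\C\cdot T_wY_{\rm sm}(\R)$. To establish the real identity, set $V:=T_wY_{\rm sm}(\R)^\perp$ and $W:=L(\R^d)$. By \Cref{lem:tangentspace}(2), $V\in\Gr_{\geq}(d,n)$, and by hypothesis on the minors of $L$, $W\in\Gr_+(d,n)$. Applying \Cref{cor:intsgr} gives $V^\perp\cap W=T_wY_{\rm sm}(\R)\cap L(\R^d)=\{0\}$. Since $\dim T_wY_{\rm sm}(\R)=c$ and $\dim L(\R^d)=d$ with $c+d=n$, a dimension count then yields $T_wY_{\rm sm}(\R)+L(\R^d)=\R^n$, as required.

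There is no substantive obstacle: the argument is essentially linear-algebraic bookkeeping of tangent-space identifications, once one has \Cref{lem:tangentspace} and \Cref{cor:intsgr} in hand. The only mild subtlety is that transversality is formally an assertion about the complex tangent spaces in $\C^n$, but \Cref{prop:RR} forces every relevant base point $x$ to be real, and then \Cref{lem:tangentspace}(1) ensures that both sides of the complex identity are the complexifications of their real counterparts, so the complex statement genuinely reduces to the real one handled above.
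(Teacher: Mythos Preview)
Your proof is correct and follows essentially the same route as the paper: reduce to $z_0=1$ via \Cref{lem:translate}, pass to logarithmic coordinates via the local biholomorphism $p$, invoke \Cref{lem:realtang} to force the base point into $Y_{\rm sm}(\R)$, use \Cref{lem:tangentspace}(1) to reduce the complex transversality to the real one, and then combine \Cref{lem:tangentspace}(2) with \Cref{cor:intsgr} and a dimension count to obtain $T_wY(\R)+L\R^d=\R^n$. The paper's argument is the same, only slightly terser about the tangent-space identifications under $p$.
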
 

\begin{proof}
  Since $z_{0}^{-1}\cdot X=\{z_{0}^{-1}\cdot z\mid z\in X\} $ is a Lee--Yang variety of the same codimension as $X$ by \Cref{lem:translate}, it is enough to only consider $z_{0}=(1,1,\ldots,1)$. For the first claim we need to show that if $x=Ly\in Y$ for some $y\in\C^{d}$ such that $\exp(2\pi ix)$ is a smooth point of $X$, then the tangent spaces $T_{x}Y$ and $T_{x}L\C^{d}=L\C^{d}$ add up to $\C^{n}$. \Cref{lem:realtang} implies that $y\in\R^d$ and thus $x\in Y(\R)$. Thus by part (1) of \Cref{lem:tangentspace} it suffices to show that  $T_xY(\R)+L\R^d=\R^n$ which then also implies the second claim. For dimension reasons, this is equivalent to showing that $T_{x}Y(\R)\cap L\R^{d}=\{0\}$. But this follows now directly from \Cref{cor:intsgr} and part (2) of \Cref{lem:tangentspace}.  
\end{proof}

For the rest of this subsection we assume that $X(\T)$ is contained in the smooth part of $X$. Then
\Cref{thm:transv1} implies in particular that for all $y\in\R^n$ the set
\begin{equation*}
    \Lambda_y=\{x\in\R^{d}\mid \exp(2\pi i(Lx+y))\in X \}
\end{equation*}
is discrete. The last goal of this subsection is to bound the Hausdorff distance of $\Lambda_y$ to $\Lambda_0$ in terms of $y$. 
To this end let $V=L\R^d\subseteq\R^n$ and $V^\perp\subseteq\R^n$ its orthogonal complement. We denote by $\pi_1\colon\R^n\to V$ and $\pi_2\colon\R^n\to V^\perp$ the orthogonal projections.

\begin{lem}\label{lem:inversecont}
    Let $K\subseteq V^\perp$ be simply connected and let $Y_K=\pi_2^{-1}(K)\cap Y(\R)$.
    \begin{enumerate}
        \item The set $Y_K$ has countably many connected components $(Y_i)_{i\in\N}$.
        \item The restriction $\pi_2|_{Y_i}\colon Y_i\to K$ is a homeomorphism for every $i\in\N$.
        \item The inverse $\psi_i\colon K\to Y_i$ of $\pi_2|_{Y_i}\colon Y_i\to K$ is continuously differentiable for every $i\in\N$.
    \end{enumerate}
\end{lem}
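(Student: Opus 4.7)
The plan is to show that $\pi_2|_{Y(\R)}\colon Y(\R)\to V^\perp$ is a covering map, after which standard covering theory over the simply connected base $K$ immediately yields all three claims.

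I first verify that $\pi_2|_{Y(\R)}$ is a local diffeomorphism. Condition~(3) of \Cref{def:leeyang} puts $X(\T)$ in the smooth part of $X$, so $Y(\R)$, being the preimage of $X(\T)$ under $x\mapsto\exp(2\pi ix)$, is a smooth real submanifold of $\R^n$ of dimension $c=n-d$ (using \Cref{lem:tangentspace}(1)). \Cref{thm:transv1} gives $T_zY(\R)+V=\R^n$, so by dimension $T_zY(\R)\oplus V=\R^n$. Since $\ker\pi_2=V$, the restriction $\pi_2|_{T_zY(\R)}$ is a linear isomorphism onto $V^\perp$, and the inverse function theorem supplies smooth local inverses of $\pi_2$ near every point of $Y(\R)$.

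To globalize, I use complete flows. For each $w\in V^\perp$, the smooth vector field $X_w(z):=(\pi_2|_{T_zY(\R)})^{-1}(w)$ on $Y(\R)$ is $\Z^n$-invariant (translations preserve $Y(\R)$ and its tangent subspaces), hence descends to a smooth vector field on $Y(\R)/\Z^n$. The exponential map identifies $Y(\R)/\Z^n$ with the compact set $X(\T)\subseteq\T^n$, so the descended field is automatically complete, and lifting along the covering $Y(\R)\to Y(\R)/\Z^n$ shows $X_w$ is complete on $Y(\R)$. Its flow $\phi^w_t$ satisfies $\pi_2(\phi^w_t(z))=\pi_2(z)+tw$, giving all-time lifts of straight-line paths in $V^\perp$. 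Concatenating such lifts for polygonal paths and then approximating arbitrary continuous paths (using the local diffeomorphism property to control limits) verifies the path-lifting property, and hence $\pi_2|_{Y(\R)}$ is a covering map.

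Since $K\subseteq V^\perp$ is simply connected, the restricted covering $Y_K\to K$ is trivial: $Y_K=\bigsqcup_iY_i$ with each $\pi_2|_{Y_i}$ a homeomorphism onto $K$. The index set is at most countable because $Y(\R)\subseteq\R^n$ is second countable and the fibers of $\pi_2$ are discrete, proving (1) and (2). Each $\psi_i$ agrees locally with the smooth inverse of $\pi_2$ from the first step and is therefore $C^\infty$, giving (3). The main obstacle is the completeness of the fields $X_w$: the projection $\pi_2|_{Y(\R)}$ is not proper in general (fibers can be infinite), so properness arguments fail and one must instead invoke the Lee--Yang/compactness structure via $Y(\R)/\Z^n\cong X(\T)$.
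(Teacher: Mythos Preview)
Your overall strategy coincides with the paper's: show that $\pi_2|_{Y(\R)}$ is a covering map, then invoke the simply connected base $K$. The paper's proof is very terse at the key step, simply asserting that transversality makes $\pi_2$ a covering; you are right to recognize that transversality alone only yields a local diffeomorphism, and some further input is needed. Your use of the $\Z^n$-periodicity to descend the lifted constant vector fields $X_w$ to the compact manifold $X(\T)$, and thereby deduce completeness of $X_w$ on $Y(\R)$, is a clean and correct way to supply this missing ingredient.

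One step in your write-up should be tightened. Having lifts of straight-line (and hence polygonal) paths does not immediately give path-lifting for arbitrary continuous paths by ``approximation'': limits of lifts of approximating paths need not converge without a uniform control you have not established. Instead, use the complete flows directly to exhibit evenly covered neighbourhoods. For any convex open $U\subseteq V^\perp$ and any $p\in U$, smooth dependence of ODE solutions on parameters makes $w\mapsto\phi^{w}_1(z)$ a smooth section of $\pi_2$ over $U$ for each $z\in\pi_2^{-1}(p)$; these sections are pairwise disjoint and their images partition $\pi_2^{-1}(U)$, so $U$ is evenly covered. This replaces the approximation argument and finishes the proof that $\pi_2|_{Y(\R)}$ is a covering. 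The remainder of your argument (countability via second countability of $Y(\R)$, and smoothness of $\psi_i$ via the inverse function theorem) is fine and matches the paper.
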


\begin{proof} By transversality, see
    \Cref{cor:intsgr} and part (2) of \Cref{lem:tangentspace}, the projection $ \pi_{2}\colon Y(\R)\to V^{\perp} $ is a covering map and using that $ Y(\R) $ is an analytic variety, each fiber, $ \pi_{2}^{-1}(w) $ with $ w\in V^{\perp} $, is a zero dimensional semianalytic set, hence discrete and countable. Therefore, $ Y_{K} $ has countably many connected components $ (Y_{j})_{j\in\N} $,  each homeomorphic to $ K $, proving (1) and (2). The inverse function theorem further implies part (3).
\end{proof}

\begin{thm}\label{lem: transversality}
Let $X\subseteq(\pp^1)^n$ be a  Lee--Yang variety of dimension $c=n-d$ such that $X(\T)$ is contained in the smooth part of $X$. Let $L\in\R^{n\times d}$ be a matrix with positive $d\times d$ minors. 
    There are 
    differentiable functions
    \begin{equation*}
        \varphi_i\colon \left[-\frac{1}{2},\frac{1}{2}\right]^n\to\R^d
    \end{equation*}
    for $i\in\N$ such that $\Lambda_y=\{\varphi_i(y)\mid i\in\N\}$ for all $y\in\left[-\frac{1}{2},\frac{1}{2}\right]^n$. Moreover, there exists a positive constant $C>0$ such that each $\varphi_i$ is $C$-Lipschitz continuous.
\end{thm}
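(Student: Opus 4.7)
The plan is to construct each $\varphi_i$ by following a connected sheet of $Y(\R)$ above $V^{\perp}$. Set $K:=\pi_2([-\tfrac{1}{2},\tfrac{1}{2}]^n)\subseteq V^{\perp}$; this is a zonotope, hence compact and simply connected, so \Cref{lem:inversecont} applies to $K$ and yields countably many sheets $(Y_i)_{i\in\N}$ of $Y_K=\pi_2^{-1}(K)\cap Y(\R)$ together with $C^1$-inverses $\psi_i\colon K\to Y_i$ of $\pi_2|_{Y_i}$. For $y\in[-\tfrac{1}{2},\tfrac{1}{2}]^n$, the point $\psi_i(\pi_2(y))-y$ has vanishing $V^{\perp}$-component, so it lies in $V=L\R^d$. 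Since $L$ is injective (its $d\times d$ minors are nonzero), it has a unique linear left inverse $L^{-1}\colon V\to\R^d$, and I would define
\[\varphi_i(y):=L^{-1}\bigl(\psi_i(\pi_2(y))-y\bigr).\]
Then $L\varphi_i(y)+y=\psi_i(\pi_2(y))\in Y(\R)$, which shows $\varphi_i(y)\in\Lambda_y$. Conversely, any $x\in\Lambda_y$ satisfies $Lx+y\in Y(\R)\cap\pi_2^{-1}(\pi_2(y))\subseteq Y_K$, so $Lx+y$ lies in exactly one sheet $Y_i$, equals $\psi_i(\pi_2(y))$ there, and hence $x=\varphi_i(y)$. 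This proves the set equality $\Lambda_y=\{\varphi_i(y)\mid i\in\N\}$.

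\textbf{Uniform Lipschitz bound.} Each $\varphi_i$ is $C^1$ as a composition of $C^1$ maps, with derivative
\[D\varphi_i(y)=L^{-1}\circ\bigl(D\psi_i(\pi_2(y))\circ\pi_2-\mathrm{id}_{\R^n}\bigr),\]
so it suffices to bound $\|D\psi_i\|$ uniformly in $i$ and in $w\in K$. Since $\pi_2\circ\psi_i=\mathrm{id}_K$, the derivative $D\psi_i(w)$ at $x=\psi_i(w)$ is the inverse of the isomorphism $\pi_2|_{T_xY(\R)}\colon T_xY(\R)\to V^{\perp}$, which is indeed an isomorphism by transversality: $T_xY(\R)\cap V=\{0\}$ by \Cref{lem:tangentspace}(2) combined with \Cref{cor:intsgr}, followed by a dimension count using $\dim T_xY(\R)=c=\dim V^{\perp}$.

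\textbf{Periodicity for uniformity.} The decisive ingredient is $\Z^n$-periodicity: because $\exp(2\pi i(z+k))=\exp(2\pi iz)$ for every $k\in\Z^n$, we have $Y(\R)+\Z^n=Y(\R)$ and $T_{x+k}Y(\R)=T_xY(\R)$. Under the standing assumption that $X(\T)$ is contained in the smooth part of $X$, the manifold $Y(\R)$ is everywhere smooth, so $x\mapsto\|(\pi_2|_{T_xY(\R)})^{-1}\|$ is a continuous, $\Z^n$-periodic function on $Y(\R)$. Via the covering $\exp(2\pi i\cdot)$ it descends to a continuous function on the compact set $X(\T)$, hence attains a finite maximum $C_0$. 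Then $\|D\psi_i(w)\|\leq C_0$ for all $i$ and $w\in K$, and setting $C:=\|L^{-1}\|(C_0\|\pi_2\|+1)$ gives $\|D\varphi_i(y)\|\leq C$ throughout the (convex) cube; the mean value inequality then yields the desired $C$-Lipschitz continuity.

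\textbf{Main obstacle.} The construction of the $\varphi_i$ and their individual $C^1$-regularity is essentially a direct packaging of \Cref{lem:inversecont}. The only genuinely non-trivial point is the \emph{uniformity} of the Lipschitz constant over the countably many sheets $Y_i$, and this is handled precisely by the $\Z^n$-periodicity of $Y(\R)$ together with the compactness of $X(\T)$, which promote a pointwise bound into a global one.
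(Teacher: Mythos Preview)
Your construction of the $\varphi_i$ is identical to the paper's (modulo the size of the box projected to form $K$), and your proof that $\Lambda_y=\{\varphi_i(y)\mid i\in\N\}$ is correct.

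The genuine difference lies in how the uniform Lipschitz bound is obtained. The paper first enlarges to $K=\pi_2([-N,N]^n)$ for some large $N$, extracts an individual Lipschitz constant $c_i$ for each $\varphi_i$ from compactness of $[-N,N]^n$, then observes that only finitely many sheets $Y_1,\ldots,Y_m$ meet the unit cube and that every other sheet, restricted over $\pi_2([-\tfrac12,\tfrac12]^n)$, is an integer translate of one of these; the uniform constant is then $\max_{i\leq m}c_i$. Your route is more direct: you note that $D\psi_i(w)=(\pi_2|_{T_xY(\R)})^{-1}$ depends only on the tangent space at $x=\psi_i(w)$, that $x\mapsto T_xY(\R)$ is $\Z^n$-periodic and hence descends to the compact manifold $X(\T)$, and that the operator norm of the inverse projection is continuous on the relevant open subset of the Grassmannian; compactness of $X(\T)$ then gives a single global bound $C_0$. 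This avoids the bookkeeping with the enlarged cube and the sheet-translation argument, and yields the uniform constant in one stroke. Both approaches hinge on the same two facts---transversality (\Cref{lem:tangentspace} plus \Cref{cor:intsgr}) and $\Z^n$-periodicity of $Y(\R)$---but yours packages them more economically.
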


\begin{proof}
    Let $f\colon V\to\R^d$ be the inverse of $\R^d\to V,\,x\mapsto Lx$. Let $K=\pi_2([-N,N]^n)$ for some large enough $N\in\N$ and use the notation from \Cref{lem:inversecont}. By construction we have that $\psi_i(\pi_2(y))-y\in V$ for all $y\in[-N,N]^n$. Thus the map
    \begin{equation*}
        \varphi_i\colon [-N,N]^n\to\R^d,\, y\mapsto f(\psi_i(\pi_2(y))-y)
    \end{equation*}
    is well-defined. By construction these maps are continuously differentiable and satisfy $\Lambda_y=\{\varphi_i(y)\mid i\in\N\}$ for all $y\in\left[-\frac{1}{2},\frac{1}{2}\right]^n$. Since $[-N,N]^n$ is compact, this implies that each $\varphi_i$ is $c_i$-Lipschitz continuous for some $c_i\geq0$. 
    In order to find a uniform Lipschitz constant, we let $m\in\N$ such that every $i\in\N$ with $Y_i\cap\left[-\frac{1}{2},\frac{1}{2}\right]^n\neq\emptyset$ satisfies $i\leq m$. Then for every $j\in\N$ there exists $k_j\in\Z^n$ and $i\leq m$ such that 
    \begin{equation*}
     Y_j\cap\pi_2^{-1}(\pi_2(\left[-\frac{1}{2},\frac{1}{2}\right]^n))\subseteq k_j+ Y_i.
    \end{equation*}
    This implies that the Lipschitz constant of $\varphi_j|_{\left[-\frac{1}{2},\frac{1}{2}\right]^n}$ can be bounded by the Lipschitz constant $c_i$ of $\varphi_i$. In particular, we can choose $C=\max_{i=1}^m(c_i)$.
\end{proof}

\begin{cor}\label{cor:hddist}
    There exists a positive constant $C>0$ such that 		for every $ y\in\left[-\frac{1}{2},\frac{1}{2}\right]^n$ the Hausdorff distance of $\Lambda_0$ and $\Lambda_y$ can be bounded as follows
 		\[\mathrm{dist}(\Lambda_0,\Lambda_{y})\le C\|y\|.\]
\end{cor}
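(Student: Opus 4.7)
The plan is to derive the corollary directly from the uniform Lipschitz parametrization produced by \Cref{lem: transversality}. That theorem gives us a countable family of $C$-Lipschitz maps $\varphi_i\colon[-1/2,1/2]^n\to\R^d$ (with one \emph{uniform} constant $C$) such that $\Lambda_y=\{\varphi_i(y)\mid i\in\N\}$ for every $y\in[-1/2,1/2]^n$. Once these parametrizations are in hand, the corollary is essentially a one-line consequence of the definition of Hausdorff distance.

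Concretely, I would argue as follows. Fix $y\in[-1/2,1/2]^n$. Recall that the Hausdorff distance is
\begin{equation*}
\mathrm{dist}(\Lambda_0,\Lambda_y)=\max\!\left(\sup_{a\in\Lambda_0}\inf_{b\in\Lambda_y}\|a-b\|,\ \sup_{b\in\Lambda_y}\inf_{a\in\Lambda_0}\|a-b\|\right).
\end{equation*}
Given any $a\in\Lambda_0$, write $a=\varphi_i(0)$ for some $i\in\N$ and take $b:=\varphi_i(y)\in\Lambda_y$; by the uniform Lipschitz bound,
\begin{equation*}
\|a-b\|=\|\varphi_i(0)-\varphi_i(y)\|\le C\|y\|.
\end{equation*}
Thus $\sup_{a\in\Lambda_0}\inf_{b\in\Lambda_y}\|a-b\|\le C\|y\|$. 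Symmetrically, given any $b=\varphi_i(y)\in\Lambda_y$, choose $a:=\varphi_i(0)\in\Lambda_0$ to obtain the same bound, giving $\sup_{b\in\Lambda_y}\inf_{a\in\Lambda_0}\|a-b\|\le C\|y\|$. Combining the two yields $\mathrm{dist}(\Lambda_0,\Lambda_y)\le C\|y\|$.

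There is essentially no obstacle here; all the work has been done in \Cref{lem: transversality}. The only subtlety worth flagging is that one must use the \emph{same} $C$ for both directions of the Hausdorff bound, which is exactly what the uniformity of the Lipschitz constant across all $i\in\N$ guarantees. The constant $C$ in the corollary can simply be taken to be the one produced by \Cref{lem: transversality}.
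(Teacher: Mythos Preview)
Your proof is correct and follows exactly the paper's approach: the paper simply states that the corollary ``follows directly from \Cref{lem: transversality},'' and you have spelled out precisely how the uniform Lipschitz parametrization $\Lambda_y=\{\varphi_i(y)\mid i\in\N\}$ yields the Hausdorff bound.
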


\begin{proof}
    This follows directly from \Cref{lem: transversality}.
\end{proof}
\subsection{Topology of strict Lee--Yang varieties}
 In this short subsection we determine the topology of strict Lee--Yang varieties. We will not make use of this later on.
 For $I\subseteq[n]$ we denote by 
 \begin{equation*}
  \pi_I\colon (\pp^1)^n\to(\pp^1)^{I}\cong (\pp^1)^{|I|}   
 \end{equation*}
the projection onto the coordinates from $I$.

\begin{Def}
 Let $X\subseteq(\pp^1)^n$ be a closed subvariety of pure dimension $c$. The \emph{multidegree} $d(X)$ of $X$ is the tuple $(d_{[n]\setminus I})_{I\in\binom{[n]}{c}}$ of nonnegative integers such that $d_{[n]\setminus I}$ is the cardinality of the fiber of a generic point in $\pp^c$ under the map $\pi_{I}\colon X\to\pp^c$. 
\end{Def}

\begin{ex}
 If $X\subseteq(\pp^1)^n$ is a hypersurface defined by an $n$-variate polynomial having degree $d_i$ in the $i$-th variable, then the multidegree of $X$ is $(d_1,\ldots,d_n)$.
\end{ex}

For the rest of the subsection we fix a strict Lee--Yang variety $X\subseteq(\pp^1)^n$ of dimension $c$ with multidegree $(d_{[n]\setminus I})_{I\in\binom{[n]}{c}}$.

\begin{lem}\label{lem:coveringdIsheets}
For every $I\in\binom{[n]}{c}$ the projection $\pi_{I}\colon X(\T)\to\T^c$ is a covering map with $d_{[n]\setminus I}$ sheets. In particular, we have $d_{[n]\setminus I}>0$.
\end{lem}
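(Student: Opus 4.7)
The plan is to show that the restriction $\pi_I\colon X(\T)\to\T^c$ is a proper local diffeomorphism, hence a covering map of the connected base $\T^c$, and then to identify its number of sheets with the multidegree by comparing a generic torus fiber with a generic algebraic fiber.

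For the covering map claim, I would first observe that $\pi_I\colon X\to(\pp^1)^c$ is proper since $X$ is projective. By Corollary \ref{cor:torus-fiberednonneg}, the preimage of $\T^c$ under $\pi_I$ is exactly $X(\T)$, so the restriction $\pi_I|_{X(\T)}\colon X(\T)\to\T^c$ remains proper as a map of Hausdorff spaces. Condition (3) of Definition \ref{def:leeyang} guarantees that $X(\T)$ lies in the smooth locus of $X$, and Lemma \ref{lem:strictcovering} shows that $\pi_I$ is unramified at every point of $X(\T)$. Together these make $\pi_I|_{X(\T)}$ a local diffeomorphism between smooth real manifolds, and a proper local diffeomorphism onto a connected target is automatically a covering map with a (locally, hence globally) constant number of sheets $m$.

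For the sheet count, I would use the algebraic definition of the multidegree. There is a nonempty Zariski open subset $U\subseteq(\pp^1)^c$ over which every fiber of $\pi_I\colon X\to(\pp^1)^c$ has cardinality exactly $d_{[n]\setminus I}$. Since $\T^c$ is Zariski dense in $(\pp^1)^c$, the intersection $U\cap\T^c$ is nonempty (in fact dense in $\T^c$), and for any $t\in U\cap\T^c$ Corollary \ref{cor:torus-fiberednonneg} forces $\pi_I^{-1}(t)\subseteq X(\T)$. By constancy of the fiber size in a covering, this yields $m=d_{[n]\setminus I}$.

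The delicate point, and the main obstacle, is the positivity statement $d_{[n]\setminus I}>0$. By the sheet count this is equivalent to $X(\T)\neq\emptyset$, or, in algebraic terms, to $\pi_I\colon X\to(\pp^1)^c$ being dominant. I would deduce this from the remaining hypotheses by an amoeba argument: property (2) of Definition \ref{def:leeyang} forces the image $\log|X|$ to miss the region $\{\varbar<d\}\setminus\{0\}$, while property (1) imposes a central reflection symmetry on this image. Combined with properness of $\log|\cdot|$ on the nonempty projective variety $X$, these constraints should force the amoeba to contain the origin, i.e.\ $X(\T)\neq\emptyset$. This amoeba-geometric step is where the genuine content of the positivity lies.
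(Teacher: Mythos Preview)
Your argument for the covering property (proper local diffeomorphism onto a connected base) and for the sheet count (via Zariski density of $\T^c$ and Corollary~\ref{cor:torus-fiberednonneg}) is exactly the paper's, just spelled out in more detail; the paper simply cites Lemma~\ref{lem:strictcovering} and Corollary~\ref{cor:torus-fiberednonneg}. You are also right that the positivity $d_{[n]\setminus I}>0$ is equivalent to $X(\T)\neq\emptyset$, and that the paper does not argue this point beyond asserting that it follows from the covering property.

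Where your proposal falls short is in treating this as ``the main obstacle'' requiring an amoeba argument; your sketch there is too vague to constitute a proof, and it is in any case unnecessary. A much shorter route uses only what you have already established. Since $X$ is a nonempty equidimensional closed subvariety of $(\pp^1)^n$, its degree under the Segre embedding is a positive integer, and that degree is a positive multiple of $\sum_{J} d_J$; hence $d_{J_0}>0$ for at least one $J_0\in\binom{[n]}{d}$. Your own sheet-count identity, applied with $I_0=[n]\setminus J_0$, then says that the covering $\pi_{I_0}\colon X(\T)\to\T^{I_0}$ has $d_{J_0}>0$ sheets, so $X(\T)\neq\emptyset$. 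Now for every $I$ the proper local diffeomorphism $\pi_I\colon X(\T)\to\T^I$ has nonempty source and connected target, hence is surjective, and therefore $d_{[n]\setminus I}>0$. No separate amoeba input is needed.
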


\begin{proof}
 It was shown in \Cref{lem:strictcovering} that $\pi_I$ is everywhere unramified and thus it is a covering map. This implies in particular that $d_I>0$. Moreover, by \Cref{cor:torus-fiberednonneg} the number of sheets equals $d_I$.
\end{proof}

By \Cref{rem:definemI} and \Cref{lem:coveringdIsheets} we can choose an orientation on $X(\T)$ with
 \begin{equation}\label{eq:degint}
 d_{[n]\setminus I}=\frac{s(I)}{(2\pi i)^c}\int_{X(\T)}\frac{dz_{i_{1}}\wedge\cdots\wedge dz_{i_{c}}}{z_{i_{1}}\cdots z_{i_{c}}}
\end{equation}
for all $I=\{i_1,\ldots,i_c\}\subseteq [n]$ with $i_1<\cdots<i_c$.
Using this we can determine the topology of $X(\T)$ as a subset of $\T^n$. Namely, \Cref{lem:coveringdIsheets} implies that each connected component of $X(\T)$ is a  compact connected covering space of $\T^c$ and thus itself homeomorphic to $\T^c$.
 In order to the determine the homology class of $X(\T)$ in $H_c(\T^n)$ note that the differential $c$-forms
  \begin{equation*}
 \frac{s(I)}{(2\pi i)^c}\cdot\frac{dz_{i_{1}}\wedge\cdots\wedge dz_{i_{c}}}{z_{i_{1}}\cdots z_{i_{c}}}
\end{equation*}
are a basis of $H^c(\T^n)=\Hom(H_c(\T^n),\Z)$. Thus letting $(\beta_I)_{I\in \binom{[n]}{c}}\subseteq H_c(\T^n)$ be the dual basis, we obtain
\begin{equation*}
 [X(\T)]=\sum_{I\in \binom{[n]}{c}} d_{[n]\setminus I}\cdot \beta_I.
\end{equation*}

\subsection{The complex geometry of strict Lee--Yang varieties}
The goal of this section is to prove that certain integrals on strict Lee--Yang varieties vanish, that will play a role when computing Fourier transformations. Namely, we will show that for every strict Lee--Yang variety $X\subseteq(\pp^1)^n$ of dimension $c=n-d$
\begin{equation}\label{eq:intvanishes}
     \int_{X(\T)}z^{-k}\cdot\frac{dz_{i_1}\wedge\cdots\wedge dz_{i_{c}}}{z_{i_1}\cdots z_{i_{c}}}=0,
\end{equation}
whenever $k\in\Z^n$ with $\var(k)\geq d$ and $i_1,\ldots,i_c\in[n]$.
We start with the instructive example when the strict Lee--Yang is isomorphic to $\pp^1$ and embedded to $(\pp^1)^n$ as in \Cref{prop:leeyangconstr}.

\begin{ex}\label{ex:vanishingp1}
    Let $f_1,\ldots,f_n\colon\pp^1\to\pp^1$ be some real rational functions such that $f_j^{-1}(H_+)=H_+$ for $j=1,\ldots,n$ where $H_+\subseteq\C$ is the upper half-plane. In this case one also has $f_j^{-1}(H_-)=H_-$ and $f_j^{-1}(\R\cup\{\infty\})=\R\cup\{\infty\}$. We further let $g_j$ be the rational function obtained by post-composing $(-1)^{j+1}\cdot f_j$ by the M\"obius transformation $z\mapsto\frac{z+i}{z-i}$ which maps the lower half-plane to the open unit disc. This implies for even $j$ that $g_j$ has all its poles in the upper half-plane and all zeros in the lower half-plane, and vice versa for odd $j$.
    Let $X$ be the image of the map
    \begin{equation*}
        \pp^1\to(\pp^1)^n,\, z\mapsto (g_1(z),\ldots,g_n(z)).
    \end{equation*}
    In this situation \Cref{eq:intvanishes} is of the form
    \begin{equation}\label{eq:exint}
     \int_{X(\T)}z_1^{k_1}\cdots z_n^{k_n} \cdot\frac{dz_{j}}{z_{j}}=0
 \end{equation}
 for all $k\in\Z^n$ with $\var(k)=n-1$ and all $j=1,\ldots,n$. Here $z_1,\ldots,z_n$ are the coordinates on $(\pp^1)^n$. If we replace $k$ by $-k$, then the resulting integral will be the complex conjugate of the original integral. Therefore, without loss of generality, we can assume that $k_1\leq0$.
 The integral in \Cref{eq:exint} can be expressed as
    \begin{equation*}
     \int_{\R}g_1^{k_1}\cdots g_n^{k_n} \cdot\frac{dg_{j}}{g_{j}}.
 \end{equation*}
 The condition $\var(k)=n-1$ implies that $(-1)^ik_i>0$ for $i=1,\ldots,n$.
 Therefore, by our assumption on the location of the zeros and poles of $g_i$, the expression
 \begin{equation*}
     g_1^{k_1}\cdots g_n^{k_n} \cdot\frac{dg_{j}}{g_{j}}
 \end{equation*}
 has no pole in the closed lower half-plane which shows that the integral in question vanishes by Cauchy's integral theorem. We note that even when $\var(k)<n-1$, the integral can be efficiently computed using Cauchy's integral formula.
\end{ex}

The proof for the general case of a strict Lee--Yang variety $X\subseteq(\pp^1)^n$ is more technical but it follows the same line of thoughts as \Cref{ex:vanishingp1}. Instead of Cauchy's integral theorem we will use de Rham's theorem. To this end, we will realize $X(\T)$ as a submanifold of a smooth complex variety $\tilde{X}$ such that the fundamental class of $X(\T)$ is trivial in homology of $\tilde{X}$. We will show, by making use of $\var(\log|z|)\geq d$ for all $z\in X\setminus X(\T)$, that the integrands in question are top-dimensional regular differential forms on $\tilde{X}$ and therefore closed.

First, we need some preparations from algebraic topology and algebraic geometry. We will use basic notions and results from these areas. As references we recommend \cites{dold,hatcher,prasolov} and \cites{hartshorne,harris,mumford} respectively.
We start with a lemma from algebraic topology.

\begin{prop}\label{prop:boundary}
 Let $f\colon X\to Y$ be a proper (in the classical topology) and surjective morphism of smooth irreducible quasi-projective varieties of the same dimension $n=\dim(X)=\dim(Y)$. Further let $\cN\subseteq Y$ be a compact connected oriented smooth submanifold of dimension $k$ which is disjoint from the branch locus of $f$. Then we have
 \begin{enumerate}
     \item The preimage $\cM=f^{-1}(\cN)$ is a disjoint union of finitely many compact connected smooth submanifolds of dimension $k$.
     \item The restriction $f|_{\cM}\colon \cM\to \cN$ is a covering map. In particular $\cM$ has an orientation induced by the one of $\cN$.
     \item If $[\cN]=0$ in $H_k(Y,\C)$, then $[\cM]=0$ in $H_k(X,\C)$.
 \end{enumerate}
\end{prop}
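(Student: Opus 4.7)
My plan is to establish parts (1) and (2) by basic differential topology, and then reduce (3) to Poincar\'e duality applied to a pullback of Thom classes.

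First, for parts (1) and (2), I would observe that since $\cN$ is disjoint from the branch locus of $f$, the morphism $f$ is unramified (equivalently, \'etale in characteristic zero) on some open neighborhood of $\cM = f^{-1}(\cN)$. On real points this makes $f$ a local diffeomorphism between smooth manifolds of the same dimension near $\cM$. Since $f$ is proper and $\cN$ is compact, $\cM$ is compact. A proper local diffeomorphism is a finite covering map, so $f|_{\cM}\colon \cM \to \cN$ is a finite covering of the connected manifold $\cN$ with some constant degree. Consequently $\cM$ decomposes as a disjoint union of finitely many compact connected smooth $k$-dimensional submanifolds of $X$, and the orientation on $\cN$ lifts uniquely through $f$ to each connected component of $\cM$, giving $\cM$ its asserted orientation.

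Next, for part (3), I would invoke Poincar\'e duality. Since $Y$ is a smooth complex quasi-projective variety of complex dimension $n$, it is a canonically oriented real $2n$-manifold, and Poincar\'e duality provides an isomorphism $H_k(Y;\C) \cong H^{2n-k}_c(Y;\C)$ between singular homology and compactly supported cohomology. Write $\eta_{\cN} \in H^{2n-k}_c(Y;\C)$ for the image of $[\cN]$; it is represented by a Thom form of the normal bundle of $\cN$, compactly supported in any prescribed tubular neighborhood of $\cN$. Similarly, let $\eta_{\cM} \in H^{2n-k}_c(X;\C)$ be the Poincar\'e dual of $[\cM]$. I would then choose a tubular neighborhood $U$ of $\cN$ disjoint from the branch locus and set $V = f^{-1}(U)$; by the already established parts (1)--(2), $f|_V\colon V \to U$ is a finite \'etale covering, in particular an orientation-preserving local diffeomorphism. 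The pullback of a Thom form representing $\eta_\cN$ is then a Thom form for $\cM$, which yields $f^* \eta_\cN = \eta_\cM$ in $H^{2n-k}_c(X;\C)$. Since $f$ is proper, $f^*$ is well defined on compactly supported cohomology, and the hypothesis $[\cN] = 0$ forces $\eta_{\cN} = 0$, hence $\eta_{\cM} = f^*\eta_{\cN} = 0$, giving $[\cM] = 0$ in $H_k(X;\C)$.

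The hard part will be pinning down the identification $f^* \eta_\cN = \eta_\cM$. This reduces to a local statement: under an orientation-preserving local diffeomorphism between oriented manifolds, the pullback of a Thom form of an oriented submanifold is a Thom form of its preimage, equipped with the induced orientation. A minor but necessary bookkeeping point is that the representative of $\eta_\cN$ can be chosen compactly supported in $U$ and the pullback then has compact support in $V$ because the closure of $U$ can be taken compact and $f$ is proper. Once these compatibilities are fixed, the remainder of the argument is purely formal.
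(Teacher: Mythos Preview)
Your proposal is correct and follows essentially the same strategy as the paper: both arguments pass to Poincar\'e duals in compactly supported cohomology and show that the dual of $[\cM]$ equals $f^*$ of the dual of $[\cN]$. The only difference is in how this last identity is verified---the paper pairs against homology classes and uses $\alpha(\varphi)=\beta(f_*\varphi)$ via intersection numbers computed in the covering $f|_V\colon V\to U$, whereas you pull back an explicit Thom form through the same covering; these are equivalent standard viewpoints.
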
 

\begin{proof}
 First we note that, since $f$ is proper, the set $\cM=f^{-1}(\cN)$ is compact. Let $U\subseteq Y$ the complement of the branch locus. Then, letting $V=f^{-1}(U)$, we have that $f|_V\colon V\to U$ is a covering map. Since $\cN\subseteq U$, the restriction $f_{\cM}\colon \cM\to \cN$ is also a covering map. This implies $(1)$ and $(2)$. In order to prove $(3)$ we consider the Poincar\'e duality isomorphisms for cohomology with compact support $D_X\colon H^{n-k}_c(X,\C)\to H_{k}(X,\C)$ and $D_Y\colon H^{n-k}_c(Y,\C)\to H_{k}(Y,\C)$. We let $\alpha\in H^{n-k}_c(X,\C)$ and $\beta\in H^{n-k}_c(Y,\C)$ be the preimages of the fundamental classes of $\cM$ and $\cN$, respectively. By assumption we have $\beta=0$ and we need to show that $\alpha=0$.  To this end let $\varphi\in H_{n-k}(X,\C)$. We first note that $\alpha(\varphi)$ is the intersection number of $[\cM]$ with $\varphi$. Now since $\cN\subseteq U$ and because $f|_V\colon V\to U$ is a covering map, we can compute the intersection number of $[\cM]$ with $\varphi$ as the intersection number of $[\cN]$ with $f_*\varphi$. This shows that $\alpha(\varphi)=\beta(f_*\varphi)$ for all $\varphi\in H_{n-k}(X,\C)$. Thus $\alpha=f^*\beta=0$ and therefore $[\cM]=0$.
\end{proof}

For the rest of the subsection let $X\subseteq(\pp^1)^n$ be a strict Lee--Yang variety of dimension $c<n$ and let $k\in\Z^n$ such that $\var(k)\geq d=n-c$. We can then choose 
some subset  $J\subseteq[n]$ of $|J|=d+1$ indices on which the entries of $k$ are nonzero and alternate signs. 
That is, if the elements of $J$ are $1\leq j_0<\cdots< j_{d}\leq n$, then
\begin{equation*}
 \var(k_{j_0},\ldots,k_{j_{d}})=d.
\end{equation*}
Let $K=([n]\setminus J)\cup\{m\} $ for some $m\in J$ with $k_{m}<0$. 
(To make the choice of $K$ depend only on the vector $k$, we can choose $J$ to be the lexicographically smallest subset 
among all valid choices and $m$ to be the minimal $m\in J$ with $k_m<0$.)
Finally, let 
\begin{equation}\label{eq:pidefi}
 \pi\colon X\to(\pp^1)^{K}  \cong (\pp^1)^c
\end{equation}
 the projection on the coordinates indexed by $K$. 
 By \Cref{lem:strictcovering} the map $\pi$ is unramified at $X(\T)$.
We denote
\begin{equation*}
 B_{{K}}=\{z\in(\pp^1)^{{K}}\mid |z_i|=1\textrm{ for }i\neq m\textrm{ and }|z_m|\leq1\}.
\end{equation*}
We further denote $B=\pi^{-1}({B_{K}})\subseteq {X}$. 
\begin{lem}\label{lem:closuresubset}
 For $z\in B$ either $z\in \T^n$ or $\textnormal{sgn}(\log|z_j|)=\textnormal{sgn}(k_j)$ for all $j\in J$ and $|z_i|=1$ for all $i\not\in J$. 
\end{lem}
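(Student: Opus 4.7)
The plan is to unpack the definitions of $B_K$, $B$, and $\var$, then apply condition~(2) of \Cref{def:leeyang} together with a sign-counting argument forced by $|J|=d+1$.

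First, I would observe that for $z\in B=\pi^{-1}(B_K)$, the definition of $B_K$ yields $|z_i|=1$ for all $i\in K\setminus\{m\}=[n]\setminus J$ and $|z_m|\le 1$. In particular, $\log|z_i|=0$ for every $i\notin J$ and $\log|z_m|\le 0$. This already establishes the second half of the desired conclusion, namely $|z_i|=1$ for all $i\notin J$; it remains to handle the entries indexed by $J$.

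Next, since $z\in B\subseteq X$, condition~(2) of \Cref{def:leeyang} gives the dichotomy: either $z\in X(\T)=\T^n$, which is the first alternative in the claim, or $\var(\log|z|)\geq d$. Assume the latter. Because $\var$ discards zero entries and $\log|z_i|=0$ for all $i\notin J$, one has
\[
\var((\log|z_j|)_{j\in J}) \;=\; \var(\log|z|) \;\geq\; d.
\]
Since $|J|=d+1$, a sequence of length $d+1$ admits at most $d$ sign changes, so equality must hold. This forces all $d+1$ entries $\log|z_j|$ for $j\in J$ to be nonzero and to strictly alternate in sign as $j$ runs through $J$ in increasing order.

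Finally, from $k_m<0$ (built into the choice of $m$) combined with $\log|z_m|\le 0$ and $\log|z_m|\ne 0$ from the previous step, we obtain $\operatorname{sgn}(\log|z_m|)=\operatorname{sgn}(k_m)=-1$. Both $(\log|z_j|)_{j\in J}$ and $(k_j)_{j\in J}$ are length-$(d+1)$ sequences of nonzero entries that strictly alternate in sign (the latter by the very choice of $J$) and they share a common sign at the position corresponding to $m$; therefore their sign patterns coincide at every index of $J$, giving $\operatorname{sgn}(\log|z_j|)=\operatorname{sgn}(k_j)$ for all $j\in J$. The main content of the argument is the counting step made possible by the tight choice $|J|=d+1$; there is no serious obstacle beyond bookkeeping the indices with zero entries of $\log|z|$.
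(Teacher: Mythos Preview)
Your proof is correct and follows essentially the same approach as the paper's: both use that $|z_i|=1$ for $i\notin J$ and $|z_m|\le 1$ by definition of $B_K$, then invoke condition~(2) of the strict Lee--Yang definition together with the counting constraint $|J|=d+1$ to force strict sign alternation on the $J$-entries, and finally anchor the pattern at the index $m$. The only cosmetic difference is that the paper first splits into the cases $|z_m|=1$ and $|z_m|<1$, whereas you use the dichotomy from condition~(2) directly and deduce $|z_m|<1$ as a consequence.
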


\begin{proof}
Let $z\in B$.  By construction, $|z_i|=1$ for all $j \not\in J$ and $|z_m|\leq 1$. 
 If $|z_m|=1$, then $\pi(z)\in \T^K$ and so $z\in \T^n$. 
 If $|z_m|<1$, then the vector $\log|z|$ is nonzero. Since $X$ is a strict Lee--Yang variety and $z\in X$, 
 we have $\var(\log(|z|))\geq d$.  Since $\log|z_j|=0$ for all $j\not\in J$ and $|J|=d+1$, 
 we see that the entries $\log|z_j|$ for $j\in J$ must be nonzero and alternate in sign in order to achieve $\var(\log|z|)\geq d$. 
Since the entries $(k_j)_{j\in J}$ also alternate sign and $\textnormal{sgn}(\log|z_m|)=\textnormal{sgn}(k_m)$, 
we see that the signs of $\log|z_j|$ and $k_j$ must agree for all $j\in J$. 
\end{proof} 

\begin{lem}\label{lem:pole1}
 For all $i\in[n]$ the rational differential $z_i^{-k_i}\cdot\frac{dz_i}{z_i}$ is regular on $B$.
\end{lem}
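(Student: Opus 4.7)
The plan is to identify the potential poles of the rational differential $\omega_i := z_i^{-k_i}\cdot\frac{dz_i}{z_i} = z_i^{-k_i-1}\, dz_i$ on $(\pp^1)^n$ and then show, case by case, that no point of $B$ lies on the pole locus. Since $\omega_i$ depends only on the $i$-th coordinate, its pole locus is contained in $\{z_i=0\}\cup\{z_i=\infty\}$. A direct check using the local coordinate $w_i = 1/z_i$ at infinity (where $\omega_i$ becomes $-w_i^{k_i-1}\,dw_i$) shows that $\omega_i$ has a pole at $z_i=0$ exactly when $k_i\geq 0$ and at $z_i=\infty$ exactly when $k_i\leq 0$.

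Next I would feed in \Cref{lem:closuresubset}: for every $z\in B$, either $z\in\T^n$ (so $|z_i|=1$ for all $i$, immediately avoiding both $\{z_i=0\}$ and $\{z_i=\infty\}$) or else $|z_i|=1$ for every $i\notin J$ and $\operatorname{sgn}(\log|z_i|)=\operatorname{sgn}(k_i)$ for every $i\in J$. Assuming $z\in B\setminus\T^n$, I would split into three cases. For $i\notin J$ the equality $|z_i|=1$ already excludes both $z_i=0$ and $z_i=\infty$. For $i=m$ the definition of $B_K$ forces $|z_m|\leq 1$, so $z_m\neq\infty$, and since $k_m<0$ the only possible pole is at $\infty$. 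For $i\in J\setminus\{m\}$ the entry $k_i$ is nonzero by the alternating-sign construction of $J$, and \Cref{lem:closuresubset} yields $|z_i|>1$ when $k_i>0$ and $|z_i|<1$ when $k_i<0$, which rules out $z_i=0$ and $z_i=\infty$ respectively, matching exactly the pole locus computed in the first step.

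I do not anticipate a real obstacle; the lemma is essentially bookkeeping that matches the sign data carried by $k$ on $J$ against the pole structure of $\omega_i$. The one subtlety worth emphasizing is that the alternating sign condition on $J$ forces each $k_i$ with $i\in J$ to be \emph{strictly} positive or negative, so that the sign equality of \Cref{lem:closuresubset} yields the strict inequality $|z_i|\neq 1$ needed to stay away from $0$ or $\infty$; and the choice of $m$ with $k_m<0$ is precisely what aligns with the inequality $|z_m|\leq 1$ baked into the definition of $B_K$.
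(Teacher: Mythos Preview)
Your proposal is correct and follows essentially the same approach as the paper: identify the pole locus of $z_i^{-k_i}\frac{dz_i}{z_i}$ in terms of the sign of $k_i$, then invoke \Cref{lem:closuresubset} to see that points of $B$ avoid it. The only difference is organizational: the paper cases on the sign of $k_i$ (three cases $k_i<0$, $k_i>0$, $k_i=0$), whereas you case on the position of $i$ relative to $J$ and $m$; both unpack the same information.
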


\begin{proof}
We go through the three cases $k_i<0$, $k_i>0$ and $k_i=0$.

If $k_i<0$, then this differential is regular except when $z_i=\infty$. In this case, by \Cref{lem:closuresubset}, we have $|z_i|\leq1$ for all $z\in B$ which thus shows that our differential is regular on $B$. 

If $k_i>0$, then our differential is regular except when $z_i=0$. In this case, by \Cref{lem:closuresubset}, we have $|z_i|\geq1$ for all $z\in B$ which thus shows that our differential is regular on $B$.

If $k_i=0$, then our differential is regular except when $z_i=0$ or $z_i=\infty$. In this case we have $i\not\in J$, so by \Cref{lem:closuresubset}, we have $|z_i|=1$ for all $z\in B$ which thus shows that our differential is regular on $B$.
\end{proof}

\begin{thm}\label{thm:intvanish}
    Let $X\subseteq(\pp^1)^n$ be a strict Lee--Yang variety of dimension $c=n-d$. Let $I=\{i_1,\ldots,i_c\}\subseteq[n]$ such that $1\leq i_1<\cdots<i_c\leq n$. Then for all $k\in\Z^n$ with $\var(k)\geq d$ we have 
    \begin{equation*}
     \int_{X(\T)}z^{-k}\cdot\frac{dz_{i_1}\wedge\cdots\wedge dz_{i_{c}}}{z_{i_1}\cdots z_{i_{c}}}=0.
 \end{equation*}
\end{thm}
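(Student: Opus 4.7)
The plan is to realize $X(\T)$ as a cycle that is null-homologous in a smooth ambient space, show the integrand extends there to a closed holomorphic $c$-form, and conclude via de Rham. First I would pass to a smooth model: let $\phi\colon \tilde X\to X$ be a resolution of singularities, which is an isomorphism over the smooth locus $X_{\rm sm}\supseteq X(\T)$. Using the set-up from just before the statement---with $J\subseteq[n]$ of size $d+1$ on which $(k_j)_{j\in J}$ alternates in sign, $m\in J$ with $k_m<0$, $K=([n]\setminus J)\cup\{m\}$, the compact $B_K\subseteq(\pp^1)^K$, and $B=\pi^{-1}(B_K)$---the set $B_K$ is a real $(c+1)$-chain in $(\pp^1)^K$ with $\partial B_K=\T^K$, so $[\T^K]=0$ in $H_c((\pp^1)^K,\C)$. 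The composition $f:=\pi\circ\phi\colon\tilde X\to(\pp^1)^K$ is a proper surjective morphism of smooth varieties of the same dimension $c$. By \Cref{lem:strictcovering}, $\pi$ is unramified on $X(\T)$, and since $\phi$ is an isomorphism there, $\T^K$ avoids the branch locus of $f$; \Cref{cor:torus-fiberednonneg} further identifies $f^{-1}(\T^K)=X(\T)$ inside $\tilde X$. Applying \Cref{prop:boundary} to $f$ with $\cN=\T^K$ therefore lifts $[\T^K]=0$ to $[X(\T)]=0$ in $H_c(\tilde X,\C)$.

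Next I would verify that the integrand $\omega:=z^{-k}\cdot \frac{dz_{i_1}\wedge\cdots\wedge dz_{i_c}}{z_{i_1}\cdots z_{i_c}}$ is holomorphic on $B$. \Cref{lem:closuresubset} forces every $z\in B$ to satisfy $|z_i|=1$ for $i\notin J$, $|z_j|\geq 1$ when $k_j>0$, and $|z_j|\leq 1$ when $k_j<0$, so each monomial factor $z_i^{-k_i}$ has no pole on $B$; combined with \Cref{lem:pole1} for the logarithmic factors $dz_{i_j}/z_{i_j}$ with $i_j\in I$, the form $\omega$ is holomorphic on $B$ viewed inside $(\pp^1)^n$. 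Its pullback $\phi^*\omega$ is then holomorphic on $\phi^{-1}(B\cap X_{\rm sm})$ and extends across any codimension-$\geq 2$ exceptional locus via Riemann's extension theorem on the smooth $\tilde X$. Because it has top holomorphic degree on the complex $c$-manifold $\tilde X$, $\phi^*\omega$ is automatically $d$-closed.

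Pairing the null class $[X(\T)]=0\in H_c(\tilde X,\C)$ with the cohomology class of the closed $c$-form $\phi^*\omega$ gives $\int_{X(\T)}\phi^*\omega=0$ by de Rham, and since $\phi|_{X(\T)}$ is an isomorphism, this equals $\int_{X(\T)}\omega=0$.

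\textbf{Main obstacle.} The technical heart is the regularity argument of the second paragraph. The hypothesis $\var(k)\geq d$ is used only through the existence of the alternating subsequence $(k_j)_{j\in J}$ of length $d+1$, which, combined with the defining inequality $\var(\log|z|)\geq d$ for $z\in X\setminus X(\T)$ from \Cref{def:leeyang}, is exactly what \Cref{lem:closuresubset} needs to pin down the moduli of the coordinates of $z\in B$ sharply enough that $\omega$ avoids all of its poles there. If $\var(k)<d$ the statement genuinely fails (e.g.\ for $k=0$ the integral equals a nonzero multidegree of $X$), so any successful proof must invoke the hypothesis at precisely this step. A secondary subtlety is verifying that $\phi^*\omega$ extends holomorphically across the exceptional locus of $\phi$ that may sit above singularities of $X$ in $B\setminus X(\T)$; this is dispatched by the Riemann extension theorem on the smooth model $\tilde X$.
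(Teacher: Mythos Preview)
Your overall strategy---resolve singularities, use \Cref{prop:boundary} to kill the homology class of $X(\T)$, check the integrand is a closed holomorphic top-form, and invoke de Rham---is exactly the paper's. But there is a real gap in how you align the homology vanishing with the domain of the form. You prove $[X(\T)]=0$ in $H_c(\tilde X,\C)$ by applying \Cref{prop:boundary} to $f\colon\tilde X\to(\pp^1)^K$. However $\phi^*\omega$ is \emph{not} a closed form on all of $\tilde X$: the rational form $z^{-k}\omega_I$ has a nontrivial polar divisor on $X$ (away from $B$), and its pullback has poles on $\tilde X$. There is thus no class in $H^c(\tilde X,\C)$ to pair against, and knowing $[X(\T)]=0$ in $H_c(\tilde X,\C)$ says nothing about $\int_{X(\T)}\omega$. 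De Rham requires the null-homology to hold inside an open set on which the form is globally regular.

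The paper repairs this by shrinking both source and target before invoking \Cref{prop:boundary}. Let $Z\subseteq X$ be the polar locus of $z^{-k}\omega_I$ and put $Y=(\pp^1)^K\setminus\pi(Z)$; the whole point of \Cref{lem:pole1} is that $B_K\subseteq Y$, so $\T^K=\partial B_K$ is null-homologous already in $Y$, not merely in $(\pp^1)^K$. Then \Cref{prop:boundary} is applied to $\tilde Y:=(\pi\circ\phi)^{-1}(Y)\to Y$, yielding $[X(\T)]=0$ in $H_c(\tilde Y,\C)$, and on $\tilde Y$ the pulled-back form is holomorphic by construction. Two smaller remarks: your Riemann-extension step is unnecessary (the form is the pullback of an ambient form regular in a neighbourhood of $B$ in $(\pp^1)^n$, so it is automatically regular on all of $\phi^{-1}(B)$, exceptional locus included) and also would not work as stated, since exceptional loci of resolutions are typically divisors, hence of codimension one.
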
 

\begin{proof}
    We denote $\omega_I=\frac{dz_{i_1}\wedge\cdots\wedge dz_{i_{c}}}{z_{i_1}\cdots z_{i_{c}}}$. Let $\rho\colon \tilde{X}\to X$ be a resolution of singularities. This means that $\tilde{X}$ a smooth variety and $\rho$ is a proper birational surjective morphism which restricts to an isomorphism on $\rho^{-1}({X}_{\rm sm})$ where ${X}_{\rm sm}$ is the smooth part of $X$. Such a resolution of singularities exists by \cite{hironaka}*{Main Theorem I}. Now let $U\subseteq X$ be the maximal open subset on which the rational differential $c$-form $z^{-k}\omega_I$ is regular and let $Z={X}\setminus U$ its complement. Let $Y=(\pp^1)^{K}\setminus \pi(Z)$ where $\pi$ is the map from \Cref{eq:pidefi}. By \Cref{lem:pole1} we have 
    \begin{equation*}
     \T^{K}\subseteq{B_{K}}\subseteq Y   
    \end{equation*}
    and letting $\tilde{Y}=(\pi\circ \rho)^{-1}(Y)\subseteq\tilde{X}$ the $c$-form $z^{-k}\omega_I$ is regular on $\tilde{Y}$. We let $f\colon \tilde{Y}\to Y$ the restriction of $\pi\circ \rho$ to $\tilde{Y}$. Since $X(\T)$ is contained in the smooth part of $X$ and because $X(\T)=\pi^{-1}(\T^{K})$, we can identify $X(\T)$ with the subset $f^{-1}(\T^{K})$ of $\tilde{Y}$. Thus we have to prove
    \begin{equation*}
        \int_{f^{-1}(\T^{K})}z^{-k}\omega_I=0.
    \end{equation*}
    Because $z^{-k}\omega_I$ is a holomorphic $c$-form on $\tilde{Y}$ and $\dim(\tilde{Y})=c$, it is closed. On the other hand, since $\T^{K}$ is the boundary of ${B_{K}}\subseteq Y$, we have that $\T^{K}$ is homologous to zero in $Y$ and thus by \Cref{prop:boundary} also $f^{-1}(\T^{K})$ is homologous to zero in $\tilde{Y}$. Therefore, it follows from Stokes' direction of de Rham's theorem that the integral in question is zero.
\end{proof}

\section{Delone sets}\label{sec:delone}
We remind the reader that a set $A\subseteq\R^d$ is called \emph{uniformly discrete} if there exists $r>0$ such that $ \|x-x'\|\ge r $ for every pair of distinct points in $A$. It is called \emph{relatively dense} if there exists $R>0$ such that $B_{R}(x)\cap A\ne\emptyset$ for every $x\in\R^{d}$. Finally, one says that $A$ is \emph{Delone} if it is both both relatively dense and uniformly discrete. The goal of this section is to prove the following theorem.
\begin{thm}\label{thm:deloneiff} 
	Let $ X \subseteq (\mathbb{P}^1)^n $ be a Lee--Yang variety of dimension $ c = n - d $. Let $ X_{\rm sm} $ denote its smooth part, and let $ L \in \mathbb{R}^{n \times d} $ with range $ V = L\mathbb{R}^{d} \in \Gr_{+}(d,n) $. Then the following statements hold:
	
	\begin{enumerate}
		\item If $X(\T) \subseteq X_{\rm sm}$, then $\Lambda(X,L)$ is Delone.
		\item Conversely, if $x \mapsto \exp(2\pi iLx)$ has a dense image in $\T^n$, then $\Lambda(X,L)$ being Delone implies $X(\T) \subseteq X_{\rm sm}$.
	\end{enumerate}
	
\end{thm}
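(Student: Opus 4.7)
My plan is to prove the two implications of \Cref{thm:deloneiff} separately, leveraging the transversality results from \Cref{sec:trans} together with the covering-map structure of \Cref{lem:inversecont}.

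For part~(1), I would first show uniform discreteness by contradiction. Suppose distinct sequences $x_k\ne x_k'$ in $\Lambda(X,L)$ satisfy $\|x_k-x_k'\|\to 0$. Passing to a subsequence, the images $z_k=\exp(2\pi iLx_k)$ converge to some $z^*\in X(\T)\subseteq X_{\rm sm}$. By \Cref{lem:translate}, $(z^*)^{-1}\cdot X$ is again a strict Lee--Yang variety with $1$ in its smooth part, so \Cref{thm:transv1} gives transversality of $x\mapsto\exp(2\pi iLx)$ to $(z^*)^{-1}\cdot X$ at $x=0$; the implicit function theorem then yields isolation of $0$ in the preimage of $(z^*)^{-1}\cdot X(\T)$. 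The Lipschitz control of the parametric family $\Lambda_y$ from \Cref{lem: transversality} and \Cref{cor:hddist} transports this isolation to nearby parameter values, contradicting the existence of two distinct preimages of $z_k^{-1}\cdot X(\T)$ near $0$ for large $k$. For relative density, I use the translation equivariance $\Lambda_y=\Lambda(X,L)-x_0$ whenever $y\equiv -Lx_0\bmod\Z^n$: this reduces the problem to finding, uniformly in $y\in[-\tfrac12,\tfrac12]^n$, an element of $\Lambda_y$ in a ball of fixed radius. The Lipschitz functions $\varphi_i$ from \Cref{lem: transversality} propagate any single bounded element of $\Lambda(X,L)$ throughout the parameter space, so the task reduces to showing $\Lambda(X,L)\ne\emptyset$. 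For this I argue that the projection $\pi_2\colon Y(\R)\to V^\perp$ from \Cref{lem:inversecont} is a local diffeomorphism whose image is open and $\pi_2(\Z^n)$-invariant; since $\pi_2(\Z^n)$ has rank at least $c=\dim V^\perp$, this forces the image to be all of $V^\perp$ in both the lattice and dense cases, so $0$ lies in the image and $V\cap Y(\R)\ne\emptyset$.

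For part~(2) I argue the contrapositive. Assume $z_0\in X(\T)\cap X_{\rm sing}$ and that $\Lambda(X,L)$ is Delone. By \Cref{prop:smoothdense}, $z_0$ is a limit of smooth torus points $z_k^\sharp\in X_{\rm sm}(\T)$. Using the density hypothesis together with transversality at each $z_k^\sharp$ from \Cref{thm:transv1}, I can produce $x_k^\sharp\in\Lambda(X,L)$ with $\exp(2\pi iLx_k^\sharp)\to z_0$. The failure of the dimension count at $z_0$---the Zariski tangent space of $X$ at $z_0$ exceeds $c$, or equivalently multiple analytic branches of $X$ pass through $z_0$---means the transverse intersection structure of \Cref{thm:transv1} collapses at preimages of $z_0$. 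Consequently the $x_k^\sharp$ must cluster together or against a preimage of $z_0$, breaking the uniform-discreteness part of Delone.

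The main obstacle I anticipate is in part~(2), where the argument must distinguish between two qualitatively different kinds of singularity---ordinary nodes where multiple transverse branches meet (and individually give isolated preimages by \Cref{thm:transv1} applied branch-by-branch) versus more complicated singularities where the Zariski tangent space itself is too large---and produce a uniform accumulation of $\Lambda(X,L)$-points in either case. A secondary difficulty is in part~(1), in handling the case where $\pi_2(\Z^n)\subset V^\perp$ is merely dense rather than a full-rank lattice; ruling out unbounded gaps in $\Lambda(X,L)$ then requires a careful compactness argument on $X(\T)$ combined with the uniform Lipschitz bounds from \Cref{lem: transversality}.
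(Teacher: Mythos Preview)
Your part~(1) is close in spirit to the paper's argument---both exploit transversality and compactness of $[0,1]^n$---and your uniform-discreteness argument is essentially equivalent to theirs. For relative density the paper takes a different route to nonemptiness: rather than your open-image argument, it first picks a \emph{rational} $\tilde{L}\in\Q^{n\times d}$ with range in $\Gr_+(d,n)$, observes that $\tilde{\Lambda}_y$ contains a lattice whenever $y\in Y(\R)$ by periodicity, and then deforms to arbitrary $(y,L)$ using transversality and connectedness of $\R^n\times\Gr_+(d,n)$. Your approach can be completed but needs two ingredients you omit: that $X(\T)\neq\emptyset$, and---when $\pi_2(\Z^n)$ is a lattice rather than dense---compactness of $X(\T)$ to see that the image is also closed in $V^\perp/\pi_2(\Z^n)$. (Your ``secondary difficulty'' is inverted: the dense case is immediate; the lattice case is where extra care is needed.)

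The obstacle you anticipate in part~(2) is genuine, and your sketch does not overcome it. First, ``Zariski tangent space exceeds $c$'' is not equivalent to ``multiple analytic branches''---a cusp has the former but not the latter---so your case split is not exhaustive. More seriously, since $z_0$ need not lie in the image of $x\mapsto\exp(2\pi iLx)$, there is no point of $\Lambda$ for putative branches to cluster toward, and producing a single sequence $x_k^\sharp\in\Lambda$ with images approaching $z_0$ gives no contradiction by itself.

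The paper's route is structurally different. From Delone plus density it first extracts a statement about all of $Y(\R)$, not just the fiber over $0$: there exists a neighborhood $B\subset V$ of the origin with $(y+B)\cap Y(\R)=\{y\}$ for every $y\in Y(\R)$ (\Cref{prop:distanceforsmooth}). This is proved by approximating any two $V$-nearby points of $Y(\R)$ by $\Z^n$-translates landing in $V\cap Y(\R)=L\Lambda$, contradicting uniform discreteness---here density is used essentially. Combined with real-fiberedness of the \emph{complex} projection $\pi_2\colon Y\to V_\C^\perp$ (\Cref{lem:analrealfib}) and a local projection theorem for analytic varieties (\Cref{lem:analproj}, \Cref{prop:locallsurj}), one obtains \Cref{cor:localsheet}: near any $y_0\in Y(\R)$ the map $\pi_2$ on $U\cap Y$ is surjective with $|\psi^{-1}(x)|=1$ for all real $x$. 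The contradiction with a singular $y_0$ then comes from a multiplicity argument: slice by a generic complex line $G\subset V_\C^\perp$ through $\pi_2(y_0)$ to get a curve $Y'$ with $y_0$ an isolated singularity; the finite flat map $\psi|_{Y'}\colon Y'\to G$ has constant fiber degree, equal to $1$ at nearby smooth real points (cardinality one by \Cref{cor:localsheet}, unramified by \Cref{lem:projtrans}) but exceeding $1$ at the singular $y_0$. This flatness/multiplicity step handles unibranch and multibranch singularities uniformly and is precisely the ingredient your outline lacks.
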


\begin{rem} Notice that if the $d\times d $ minors of $L$ are positive then 
 $V\in \Gr_{+}(d,n)$ holds (by definition), and if these minors are $\Q$-linearly independent then the assumption that the image of $x\mapsto \exp(2\pi iLx)$ is dense in $\T^{n}$ holds as well, by \Cref{lem:assL}. Finally $X(\T) \subseteq X_{\rm sm}$ holds when $X$ is a strict Lee--Yang variety. 
\end{rem}

Throughout this section we assume 
\begin{itemize}
	\item The variety $X\subseteq(\pp^1)^n$ is a Lee--Yang variety of dimension $c=n-d$.
	\item The real matrix $L\in\R^{n\times d}$ has range $V=L\R^{d}\subseteq\R^n$ that lies in $\Gr_{+}(d,n)$.  
\end{itemize} 

Furthermore, as in \Cref{sec:trans}, we let
\begin{equation*}
    Y=\{z\in\C^{n}\mid \exp(2\pi iz)\in X\},
\end{equation*}
and $Y(\R)=Y\cap\R^n$. Note that $Y(\R)$ is a $c$-dimensional real analytic, $\Z^n$-periodic subvariety of $\R^{n}$, and $\Lambda(X,L)$, as defined in \Cref{lambda}, can be written as   
\begin{equation*}
	\Lambda(X,L)=\{x\in\R^d\mid Lx\in Y(\R)\}.
\end{equation*}
\begin{rem}\label{rem:iso}
	Notice that $x\mapsto Lx$ is an isomorphism between $\R^{d}$ and $V$ that sends $\Lambda(X,L)$ to $V\cap Y(\R)$. 
\end{rem}
We prove the two parts of \Cref{thm:deloneiff} in two different subsections.
\subsection{Proof of \Cref{thm:deloneiff}, part (1)} 
Assuming that $X(\T)\subseteq X_{\rm sm} $, we prove that $\Lambda(X,L)$ is Delone. First, we show that $\Lambda_{y}$ is infinite for every $ y\in\R^n$, where
    \begin{equation*}
    \Lambda_y=\{x\in\R^{d}\mid \exp(2\pi i(Lx+y))\in X \}=\{x\in\R^{d}\mid Lx+y\in Y(\R) \}.
    \end{equation*}
    Suppose that $\tilde{L}\in\Q^{n\times d}$ has range $ \tilde{V}=\tilde{L}\R^{d}\in\text{Gr}_{+}(d,n)$, and let $y\in Y(\R)$. Then $\tilde{\Lambda}_{y}=\{x\in\R^{d}\mid \tilde{L}x+y\in Y(\R) \}$ contains the lattice $m\Z^{n}$, where $m$ is the smallest common denominator of the $L_{ij}$ since $ Y(\R)$ is $\Z^{n}$ periodic. Therefore, $ \tilde{V}\cap (Y(\R)-y)=\tilde{L}\tilde{\Lambda}_{y}$ is infinite. Since $X(\T)\subseteq X_{\rm sm}$, \Cref{thm:transv1} says that the map $x\mapsto \exp(2\pi i(Lx+y))$ is transversal to $X$ for every $y\in\R^{n}$ and every $L$ in the open set of matrices in $ \R^{n\times d} $ with positive $ d\times d $ minors. This means that $(Y(\R)-y)$ is a smooth manifold\footnote{in fact a real analytic manifold} that intersects $\tilde{V}$ transversally for every choice of $(y,\tilde{V})\in\R^{n}\times\text{Gr}_{+}(d,n) $. Since $\R^{n}\times\text{Gr}_{+}(d,n)$ is connected, and $\tilde{V}\cap (Y(\R)-y) $ is infinite for some choice of $(y,\tilde{V})$, it follows that $(\tilde{V}-y)\cap Y(\R) $ is infinite for all $(y,\tilde{V})\in\R^{n}\times\text{Gr}_{+}(d,n) $. In particular, $V\cap (Y(\R)-y)$ is infinite for all $y$, and it is also discrete because $\dim(Y(\R))+\dim(V-y)=\dim(\R^{n})$. Since $x\mapsto Lx$ is an isomorphism that sends $\Lambda_{y}$ to $V\cap(Y(\R)-y)$, then $\Lambda_{y}$ is discrete and infinite. 
    
    We now prove that $\Lambda=\Lambda_{0}$ is Delone, by introducing two functions, which are continuous due to the transversality:
	\begin{align*}
		r_{0}\colon[0,1]^{n}\to\R_{\ge 0},&\quad  r_{0}(y) =\min\{\|x\|\mid x\in\Lambda_{y}\}=\mathrm{dist}(0,\Lambda_{y}),\\
		r_{1}\colon Y\cap[0,1]^n\to\R_{+}, &\quad r_{1}(y)  =\min\{\|x\|\mid x\in\Lambda_{y}\setminus\{0\}\}=\mathrm{dist}(0,\Lambda_{y}\setminus\{0\}).
	\end{align*}
	Notice that $ r_{1} $ is positive and $ r_{0} $ is non-negative and is not the zero function. Since $ [0,1]^{n} $ is compact, this means that $r_{1}$ has a positive minimum and $r_{0}$ has a positive maximum, say  $ R_{min},R_{max} $ such that $ r_{0}(y)\le R_{max}  $ for all $ y\in[0,1] $ and $ r_{1}(y)\ge R_{min}  $ for all $ y\in Y\cap[0,1]^n $. Also notice that for every $x\in\R^{d}$, there exists $y_{x}\in[0,1]^{n}$ that satisfies $y_{x}-Lx=0\mod{1}$ such that $\Lambda-x=\Lambda_{y_{x}}$. This means that for all $ x\in\R^{d} $, the distance between $x$ and $\Lambda$ is bounded by
	\[\mathrm{dist}(x,\Lambda)=\mathrm{dist}(0,\Lambda_{y_{x}})=r_{0}(y_{x})\le R_{\max}.\]
	We conclude that $\Lambda$ is relatively dense. To see that it is also uniformly discrete, let $x$ and $x'$ be two distinct points in $\Lambda$, so that 
	\[\|x-x'\|\le \mathrm{dist}(x,\Lambda\setminus\{x\})=\mathrm{dist}(0,\Lambda_{y_{x}}\setminus\{0\})=r_{1}(y_{x}),\]
	where $x\in\Lambda$ implies that $y_{x}\in Y\cap [0,1]^{n} $. Hence $\Lambda$ is uniformly discrete.
	\qed
	
\subsection{Proof of \Cref{thm:deloneiff}, part (2)}
From here on we assume, in addition to the assumptions made at the beginnign of this section, that $x\mapsto \exp(2\pi i Lx)$ has dense image in $\T^{n}$, and that $\Lambda(X,L)$ is Delone. We need to show that $X\subseteq X_{\rm sm}$. We first prove a series of necessary lemmas. 

We denote the orthogonal complement of $V$ in $\R^n$ by $V^\perp$, and define $\pi_1,\pi_2$ to be the projections from $\R^n=V\oplus V^\perp$ onto $V$ and $V^\perp$ respectively. We will also denote by $\pi_1,\pi_2$ the projections from $\C^n$ onto the $\C$-spans $V_\C$ and $V_\C^\perp$ of $V$ and $V^\perp$ respectively.

\begin{lem}\label{lem:analrealfib}
    If $x\in Y$ and $\pi_2(x)$ is real, then $x\in Y(\R)$.
\end{lem}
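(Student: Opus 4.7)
Write $x = a + ib$ with $a,b \in \R^n$. I plan to show that the hypotheses force $b = 0$, so that $x = a \in \R^n$ and hence $x \in Y(\R)$.

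First, unpack the two hypotheses. Since $\pi_2$ is $\R$-linear and extends $\C$-linearly to $\C^n$, one has $\pi_2(x) = \pi_2(a) + i\pi_2(b)$, so ``$\pi_2(x)$ is real'' is equivalent to $\pi_2(b) = 0$, i.e.\ $b \in V$. The condition $x \in Y$ means $\exp(2\pi i x) \in X$; writing out $|\exp(2\pi i x_j)| = e^{-2\pi b_j}$ coordinate-wise gives
\[
\log|\exp(2\pi i x)| = -2\pi b.
\]

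Now apply the Lee--Yang dichotomy (condition (2$'$) of \Cref{def:leeyang}) to $z = \exp(2\pi i x) \in X$. Either $z \in X(\T)$, in which case $\log|z| = 0$ and so $b = 0$ directly; or $\varbar(\log|z|) \geq d$, i.e.\ $\varbar(-2\pi b) = \varbar(b) \geq d$. In the second case, combining $b \in V$ with $V \in \Gr_+(d,n)$ and applying \Cref{thm:GK}(2), which asserts $\varbar(v) < d$ for every nonzero $v \in V$, forces $b = 0$.

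In either case $b = 0$, so $x = a \in \R^n$ and $x \in Y(\R)$. The only substantive input is the Lee--Yang property of $X$ together with the positivity characterization of $\Gr_+(d,n)$ via $\varbar$; there is no real obstacle here, as this lemma is essentially the content of \Cref{prop:RR} specialized to the case where the imaginary part is constrained to lie in $V$.
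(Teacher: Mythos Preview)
Your proof is correct and follows essentially the same idea as the paper's proof, just more directly. The paper decomposes $x = Ly + x_2$ with $x_2 = \pi_2(x)$ real, translates $X$ by $\exp(-2\pi i x_2)$ using \Cref{lem:translate}, and then invokes \Cref{prop:RR} to conclude $y\in\R^d$; you instead decompose $x = a + ib$ into real and imaginary parts, observe that $b\in V$, and apply the Lee--Yang dichotomy together with \Cref{thm:GK} directly to $\exp(2\pi i x)$. Since the proof of \Cref{prop:RR} itself unwinds to exactly this computation, your version simply inlines that argument and avoids the translation lemma.
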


\begin{proof}
	We can write $x=Ly+x_{2}$ in the $\C=V_\C\oplus V_\C^\perp$ decomposition, with $y\in\C^d$ and $x_{2}=\pi_2(x)$ which is real by assumption. It suffices to show that $y$ is also real. Let $a=\exp(-2\pi i x_2)$ and $m_{a}\colon\T^{n}\to\T^{n}$ defined by $m_{a}(\exp(2\pi i z)):=\exp(2\pi i (z-x_{2}))$. Then $m_a(X)$ is a Lee--Yang variety by \Cref{lem:translate} and $x\in Y $ is equivalent to $\exp(2\pi ix)\in X$, which implies $\exp(2\pi i Ly)\in m_a(X)$, and so $y\in\R^d$ by \Cref{prop:RR}.
\end{proof}

\begin{lem}\label{lem:projtrans}
    For $y\in Y_{\rm sm}(\R)$ we have $T_yY(\R)\cap V=\{0\}$ and $T_yY\cap V_\C=\{0\}$.
\end{lem}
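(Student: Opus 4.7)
The plan is to derive both claims from the two parts of \Cref{lem:tangentspace}, combined with \Cref{cor:intsgr}. For the first claim, I note that $T_yY(\R)$ is a real subspace of $\R^n$ of dimension $c=n-d$, so its orthogonal complement $(T_yY(\R))^\perp$ is a $d$-dimensional subspace of $\R^n$. By part (2) of \Cref{lem:tangentspace} this orthogonal complement lies in $\Gr_{\geq}(d,n)$. Since $V=L\R^d \in \Gr_+(d,n)$ by hypothesis, \Cref{cor:intsgr} applied with the $\Gr_{\geq}$-space $(T_yY(\R))^\perp$ and the $\Gr_+$-space $V$ yields $((T_yY(\R))^\perp)^\perp \cap V = \{0\}$, i.e.\ $T_yY(\R) \cap V = \{0\}$.

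For the second claim, the idea is to reduce the complex intersection to the real one using part (1) of \Cref{lem:tangentspace}. Since $T_yY$ is the $\C$-span of $T_yY(\R)$, every element of $T_yY$ can be written as $a+ib$ with $a,b\in T_yY(\R)$. Similarly, $V_\C = V \oplus iV$ as a real vector space. If $z \in T_yY \cap V_\C$, writing $z=a+ib = v+iw$ with $a,b \in T_yY(\R)$ and $v,w \in V$, then by the uniqueness of the real and imaginary parts we have $a=v$ and $b=w$, so both $a$ and $b$ lie in $T_yY(\R) \cap V$, which equals $\{0\}$ by the first claim. Hence $z=0$.

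There is essentially no obstacle here: the two pieces of \Cref{lem:tangentspace} have already done the hard work of identifying the tangent space as a complexification of a real space whose orthogonal complement sits in the nonnegative Grassmannian, and \Cref{cor:intsgr} is precisely the linear-algebraic statement needed. The only care required is to match up the real dimensions correctly so that \Cref{cor:intsgr} applies and to handle the complexification step cleanly by decomposing into real and imaginary parts.
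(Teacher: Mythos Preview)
Your proof is correct and follows essentially the same approach as the paper: invoke part (2) of \Cref{lem:tangentspace} together with \Cref{cor:intsgr} for the real statement, then deduce the complex statement from part (1) of \Cref{lem:tangentspace}. You have simply spelled out the complexification step (decomposing into real and imaginary parts) more explicitly than the paper does.
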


\begin{proof}
    By part (2) of \Cref{lem:tangentspace} the orthogonal complement of $T_yY(\R)$ is in $\Gr_{\geq}(d,n)$. Thus by \Cref{cor:intsgr} we have $T_yY(\R)\cap V=\{0\}$. The second claim follows then from part (1) of \Cref{lem:tangentspace}.
\end{proof}

\begin{cor}\label{cor:contsopen}
    Let $y\in Y(\R)$ and $U\subseteq\C^n$ be an open neighbourhood of $y$. Then $\pi_2(U\cap Y)$ contains an open subset of $V_\C^\perp$.
\end{cor}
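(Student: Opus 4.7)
The plan is to find a smooth real point of $Y$ inside $U$ and show that the holomorphic map $\pi_2|_Y$ is a local biholomorphism at such a point, which immediately gives an open set in the image.

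The first step is to verify that $Y_{\rm sm}(\R)$ is dense in $Y(\R)$. By \Cref{prop:smoothdense}, the smooth part $X_{\rm sm}(\T)$ is dense in $X(\T)$. By the definition recalled before \Cref{lem:realtang}, we have $Y_{\rm sm}(\R) = \{z \in \R^n \mid \exp(2\pi i z) \in X_{\rm sm}\}$ and $Y(\R) = \{z \in \R^n \mid \exp(2\pi i z) \in X(\T)\}$. Since $\exp(2\pi i \cdot)\colon \R^n \to \T^n$ is a surjective local diffeomorphism, density of $X_{\rm sm}(\T)$ in $X(\T)$ transfers to density of $Y_{\rm sm}(\R)$ in $Y(\R)$. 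In particular, the open neighborhood $U$ of $y$ meets $Y_{\rm sm}(\R)$; pick $y' \in U \cap Y_{\rm sm}(\R)$.

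The second step is the inverse function theorem. Since $y' \in Y_{\rm sm}$, the variety $Y$ is a complex manifold of dimension $c=n-d$ in a neighborhood of $y'$, so $\pi_2|_Y$ is a holomorphic map from (a neighborhood of $y'$ in) $Y$ to $V_\C^\perp$. Its differential at $y'$ is the restriction of the $\C$-linear projection $\pi_2$ to $T_{y'}Y$, whose kernel equals $T_{y'}Y \cap \ker(\pi_2) = T_{y'}Y \cap V_\C$. By \Cref{lem:projtrans} this intersection is $\{0\}$, so the differential is an injective $\C$-linear map between two complex vector spaces of the same dimension $c = n-d = \dim_\C V_\C^\perp$, hence a bijection. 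The holomorphic inverse function theorem therefore yields an open neighborhood $W \subseteq U \cap Y$ of $y'$ such that $\pi_2(W)$ is open in $V_\C^\perp$. Since $\pi_2(W) \subseteq \pi_2(U \cap Y)$, we are done.

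The only non-routine ingredient is the density of $Y_{\rm sm}(\R)$ in $Y(\R)$, which however comes essentially for free from \Cref{prop:smoothdense}; the rest is a direct application of the holomorphic inverse function theorem to the transversality statement of \Cref{lem:projtrans}.
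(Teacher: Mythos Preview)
Your proof is correct and follows essentially the same approach as the paper: find a point $y'\in U\cap Y_{\rm sm}(\R)$ via \Cref{prop:smoothdense}, then apply \Cref{lem:projtrans} together with the inverse function theorem at $y'$. You simply spell out in more detail why density transfers along the local diffeomorphism $\exp(2\pi i\,\cdot)$ and why the differential of $\pi_2|_Y$ at $y'$ is bijective.
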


\begin{proof}
    By \Cref{prop:smoothdense}, $Y(\R)$ is the closure of $ Y_{\rm sm}(\R)$, so $Y_{\rm sm}(\R)\cap U\ne\emptyset$ and so there is a point $y'\in Y_{\rm sm}(\R)\cap U$. Applying \Cref{lem:projtrans} to $y'$ allows us to use the inverse function theorem and conclude that $\pi_{2}(O\cap Y)$ is open for some small enough $O\subseteq U$ neighborhood of $y'$.
\end{proof}

\begin{lem}\label{lem:analproj}
    Let $U\subseteq\C^{a+b}$ be a connected open subset and $Z$ an analytic subvariety of $U$. Let $\pi\colon\C^{a+b}\to\C^a$ be the projection onto the first $a$ coordinates and assume
    \begin{equation}\label{eq:realfib}
        \forall z\in Z\colon \pi(z)\in\R^a\Rightarrow z\in\R^{a+b}.
    \end{equation}
    Then, every point $y\in Z\cap\R^{a+b}$ has a connected open neighborhood $U'\subseteq U$ such that $\pi(Z\cap U')$ is an analytic subvariety of the open set $\pi(U')\subseteq\C^{a}$.
\end{lem}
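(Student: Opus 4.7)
The plan is to combine a fiber-dimension argument with Remmert's proper mapping theorem, with the real-fibered hypothesis \eqref{eq:realfib} entering through the dimension of the fiber.

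The heart of the proof is the claim that the fiber $F:=Z\cap\pi^{-1}(\pi(y))$ has dimension zero at $y$. For any $z\in F$ we have $\pi(z)=\pi(y)\in\R^{a}$, and hypothesis \eqref{eq:realfib} then forces $z\in\R^{a+b}$. Hence $F$ is contained in the real affine subspace $\{\pi(y)\}\times\R^{b}$ of the complex fiber $\pi^{-1}(\pi(y))\cong\C^{b}$. A complex analytic subvariety of an open subset of $\C^{b}$ that is contained in $\R^{b}$ must be zero-dimensional: at any smooth point its holomorphic tangent space would be a complex linear subspace of $\R^{b}\subseteq\C^{b}$, but $\R^{b}\cap i\R^{b}=\{0\}$, so this tangent space is trivial. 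Consequently $y$ is an isolated point of $F$.

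With $y$ isolated in $F$, the proof is completed by a standard local properness argument. Choose a splitting of coordinates $\C^{a+b}\cong\C^{a}\times\C^{b}$ so that $\pi$ is the projection onto the first factor, and choose a polydisc $P=P_{1}\times P_{2}$ centered at $y$ with $\overline{P}\subseteq U$ small enough that $Z\cap(\{\pi(y)\}\times\overline{P_{2}})=\{y\}$. Then $M:=Z\cap(\overline{P_{1}}\times\partial P_{2})$ is compact and contains no point whose first coordinate is $\pi(y)$, so $\pi(M)\subseteq\C^{a}$ is a compact set that does not contain $\pi(y)$. Pick an open polydisc $V_{0}\subseteq P_{1}\setminus\pi(M)$ around $\pi(y)$ and set $U':=V_{0}\times P_{2}$, which is a connected open neighborhood of $y$ with $\pi(U')=V_{0}$. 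The construction ensures that $Z\cap U'$ has no limit points in $V_{0}\times\partial P_{2}$, and a routine sequential compactness argument then shows that $\pi\colon Z\cap U'\to V_{0}$ is proper. Applying Remmert's proper mapping theorem yields that $\pi(Z\cap U')$ is an analytic subvariety of $\pi(U')=V_{0}$, as desired.

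The main obstacle is the fiber-dimension step: it is the one place where hypothesis \eqref{eq:realfib} is used, and it is what distinguishes this situation from a generic holomorphic projection, whose image would in general only be semianalytic rather than analytic. Once zero-dimensionality of $F$ at $y$ is in hand, the arrangement of a polydisc so that $\pi|_{Z\cap U'}$ is proper, and the subsequent invocation of Remmert's theorem, are standard in the local structure theory of holomorphic maps.
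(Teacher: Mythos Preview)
Your proof is correct and takes a genuinely different route from the paper's. The paper proceeds by induction on $b$: it projects off one coordinate at a time, using the real-fibered hypothesis to see that the one-dimensional complex fiber (in the last coordinate direction) meets $Z$ discretely, then applies the elementary projection theorem for analytic varieties (as in Rudin's \emph{Function Theory in Polydiscs}, 14.2.4) to push $Z$ down to an analytic variety in $\C^{a+b-1}$, verifies that the real-fibered hypothesis persists for the image, and invokes the induction hypothesis.

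You instead handle all $b$ coordinates at once: the real-fibered hypothesis forces the entire fiber $Z\cap\pi^{-1}(\pi(y))$ into $\R^b$, hence to be zero-dimensional, and then a single application of Remmert's proper mapping theorem finishes the job. Your argument is more direct and avoids the induction bookkeeping (in particular the verification that \eqref{eq:realfib} descends to the intermediate projections). The paper's approach has the modest advantage of only invoking the one-variable projection theorem, which is slightly more elementary than the full Remmert theorem, but since Remmert is entirely standard this is a minor point. Both proofs use the hypothesis in the same essential way---to force discreteness of the fiber over $\pi(y)$---so the conceptual core is identical.
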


\begin{proof}
    We prove the claim by induction on $b$. For $b=0$ the claim is trivial. Thus let $b>0$ and assume that it is true for $b-1$. Let $\pi'\colon\C^{a+b}\to\C^{a+b-1}$ be the projection onto the first $a+b-1$ coordinates and $\pi''\colon\C^{a+b-1}\to\C^{a}$ be the projection onto the first $a$ coordinates so that $\pi=\pi''\circ\pi'$. 
    
    Now let $y\in Z\cap\R^{a+b}$ be an arbitrary point. We first prove that there exists an open polydisc $U_0\subseteq\C^{a+b}$ around $y$ such that $Z':=\pi'(U_0\cap Z)$ is an analytic subvariety of $\pi'(U_0)\subseteq\C^{a+b-1}$.
    To this end, observe that $y$ lies in the one-dimensional fiber 
    \begin{equation*}
    	G=\{(\pi'(y),\lambda)\in U\mid \lambda\in\C\}.
    \end{equation*}
    The intersection $G\cap Z$ is a complex subvariety of complex dimension at most one, so its dimension over the reals is either $0$ or $2$. Since $y$ is real, \eqref{eq:realfib} implies that
    \begin{equation*}
        G\cap Z\subseteq\{(\pi'(y),\lambda)\in U\mid \lambda\in\R\}.
    \end{equation*}
    In particular, this shows that $G\cap Z$ is at most one-dimensional over $\R$ which means it must be a zero dimensional analytic variety, i.e. a discrete set points. 
    By the projection theorem for analytic varieties, see e.g. \cite{rudin}*{14.2.4}, there exists an open polydisc $U_0\subseteq\C^{a+b}$ around $y$ such that $Z':=\pi'(U_0\cap Z)$ is an analytic subvariety of the open polydisc $\pi'(U_0)\subseteq\C^{a+b-1}$. 

    Next, we will apply the induction hypothesis to the projection $\pi''$ and the analytic subvariety $Z'$ of the connected open subset $\pi'(U_0)$ of $\C^{a+b-1}$. To this end, we need to show that $\pi''$ satisfies (\ref{eq:realfib}). Thus let $z'\in Z'$ such that $\pi''(z')\in\R^a$. There exists $z\in U_0\cap Z$ such that $z'=\pi'(z)$. Because $\pi$ satisfies (\ref{eq:realfib}) and since $\pi(z)=\pi''(z')\in\R^a$, we have $z\in\R^{a+b}$ and thus $z'=\pi'(z)\in\R^{a+b-1}$. Therefore, by induction hypothesis there exists a connected open neighborhood $U_1\subseteq \pi'(U_0)$, containing $\pi'(y)$, such that $Z''=\pi''(Z'\cap U_1)$ is an analytic subvariety of the open set $\pi''(U_1)\subseteq\C^{a}$. Finally, we consider the open neighborhood $U'=(\pi')^{-1}(U_1)\cap U_0\subseteq\C^{a+b}$ of $y$. Because $U_0$ is a polydisc, this set is connected. We have
    \begin{equation*}
        \pi(Z\cap U')=\pi''(\pi'(Z\cap (\pi')^{-1}(U_1)\cap U_0))=\pi''(Z'\cap U_1)=Z''
    \end{equation*}
    which is an analytic subvariety of $\pi''(U_1)=\pi(U')$. This proves the claim.
\end{proof}

\begin{prop}\label{prop:locallsurj}
    Let $y\in Y(\R)$.
    \begin{enumerate}
        \item If $U\subseteq\C^n$ is an open neighbourhood of $y$, then there exists an open neighbourhood $U'\subseteq U$ of $y$ such that $\pi_2(Y\cap U')=\pi_2(U')$.
        \item If $U\subseteq\R^n$ is an open neighbourhood of $y$, then there exists an open neighbourhood $U'\subseteq U$ of $y$ such that $\pi_2(Y(\R)\cap U')=\pi_2(U')$.
    \end{enumerate}
\end{prop}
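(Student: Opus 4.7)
The plan is to derive both parts from \Cref{lem:analrealfib}, \Cref{cor:contsopen}, and \Cref{lem:analproj}, with part~(2) reducing to part~(1) by complexification. For part~(1), I will first choose linear coordinates on $\C^n$ adapted to the orthogonal decomposition $\C^n = V_\C^\perp \oplus V_\C$, placing the first $c = n - d$ coordinates on $V_\C^\perp$ and the last $d$ coordinates on $V_\C$, so that $\pi_2$ becomes literally the projection onto the first $c$ coordinates and the setup of \Cref{lem:analproj} applies with $a = c$, $b = d$. The hypothesis that real fibers force real points, equation~(5.1) of \Cref{lem:analproj}, is then exactly the content of \Cref{lem:analrealfib} applied to the analytic subvariety $Z = Y \cap U$.

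Applying \Cref{lem:analproj} yields a connected open neighbourhood $U' \subseteq U$ of $y$ such that $\pi_2(Y \cap U')$ is an analytic subvariety of the open set $\pi_2(U') \subseteq V_\C^\perp$. Since $\pi_2$ is an open map and $U'$ is connected, $\pi_2(U')$ is a connected open subset of $V_\C^\perp$. The key move now is to apply \Cref{cor:contsopen} not to the original $U$ but to the shrunken neighbourhood $U'$ itself: this is legitimate since $U'$ is still an open neighbourhood of $y\in Y(\R)$, and it gives that $\pi_2(Y \cap U')$ contains a nonempty open subset of $V_\C^\perp$, which automatically lies inside $\pi_2(U')$. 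A proper analytic subvariety of a connected complex manifold is nowhere dense, so an analytic subvariety containing a nonempty open set must be the whole manifold; therefore $\pi_2(Y \cap U') = \pi_2(U')$, as required.

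For part~(2), I will deduce the real statement from the complex one by complexification. Given an open neighbourhood $U \subseteq \R^n$ of $y$, set $\tilde{U} = \{\, z \in \C^n : \operatorname{Re}(z) \in U\,\}$, which is open in $\C^n$ with $\tilde{U} \cap \R^n = U$. Applying part~(1) to $\tilde{U}$ produces an open neighbourhood $\tilde{U}' \subseteq \tilde{U}$ of $y$ with $\pi_2(Y \cap \tilde{U}') = \pi_2(\tilde{U}')$. Set $U' := \tilde{U}' \cap \R^n \subseteq U$. The inclusion $\pi_2(Y(\R) \cap U') \subseteq \pi_2(U')$ is immediate. For the reverse, take $w \in \pi_2(U')$; then $w$ is real and lies in $\pi_2(\tilde{U}') = \pi_2(Y \cap \tilde{U}')$, so there exists $z \in Y \cap \tilde{U}'$ with $\pi_2(z) = w$. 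Since $w$ is real, \Cref{lem:analrealfib} forces $z \in \R^n$, whence $z \in \tilde{U}' \cap \R^n = U'$ and $w \in \pi_2(Y(\R) \cap U')$.

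The main obstacle is that \Cref{lem:analproj} alone only produces an analytic subvariety inside $\pi_2(U')$, not surjectivity onto it; upgrading this to equality is exactly where \Cref{cor:contsopen} is needed, guaranteeing that the analytic subvariety has nonempty interior, which together with connectedness of $\pi_2(U')$ forces it to fill the whole image.
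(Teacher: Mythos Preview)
Your proof is correct and follows essentially the same approach as the paper: apply \Cref{lem:analproj} (with hypothesis supplied by \Cref{lem:analrealfib}) to obtain $\pi_2(Y\cap U')$ as an analytic subvariety of $\pi_2(U')$, then use \Cref{cor:contsopen} to see it contains an open set and hence equals all of the connected open set $\pi_2(U')$; part~(2) is then reduced to part~(1) via \Cref{lem:analrealfib}. The only point you glossed over is that \Cref{lem:analproj} requires the ambient open set to be connected, so you should first shrink $U$ to a connected neighbourhood of $y$ before applying it---the paper does this explicitly, but it is trivial.
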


\begin{proof}
    Let $U_1\subseteq\C^n$ be a connected open subset containing $y$ and such that $U_1\subseteq U$.
    By \Cref{lem:analrealfib} and \Cref{lem:analproj} there exists a connected open subset $U'\subseteq U_1$ with $y\in U'$ such that $\pi_2(Y\cap U')$ is an analytic subvariety of $\pi_2(U')$. On the other hand, by \Cref{cor:contsopen}, the set  $\pi_2(Y\cap U')$ contains an open subset of $V_\C^\perp$. This shows that $\pi_2(Y\cap U')=\pi_2(U')$. Part (2) follows from part (1) and \Cref{lem:analrealfib}.
\end{proof}

\begin{prop}\label{prop:distanceforsmooth}
    There exists an open neighbourhood $B\subseteq V$ of the origin such that for every $y\in Y(\R)$ we have $(y+B)\cap Y(\R) = \{y\}$.
\end{prop}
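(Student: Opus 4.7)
I'll argue by contradiction. Suppose no such neighborhood $B$ exists; then there are sequences $y_n\in Y(\R)$ and $v_n\in V\setminus\{0\}$ with $\|v_n\|\to 0$ and $y_n+v_n\in Y(\R)$. By the $\Z^n$-periodicity of $Y(\R)$, I translate each $y_n$ by a suitable lattice vector so that $y_n\in[0,1]^n$, and by compactness extract a subsequence with $y_n\to y\in Y(\R)$; consequently $y_n+v_n\to y$ as well. Let $r>0$ denote the uniform discreteness constant of the Delone set $\Lambda_0=V\cap Y(\R)$, which exists by hypothesis since $\Lambda(X,L)$ is isomorphic to $\Lambda_0$ via $L$.

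I would then apply \Cref{prop:locallsurj}(2) inside an open ball of radius less than $r/2$ around $y$ to obtain a small open neighborhood $U'\subseteq\R^n$ of $y$ satisfying $\pi_2(Y(\R)\cap U')=\pi_2(U')$ and $\mathrm{diam}(U')<r$. For $n$ sufficiently large, both $y_n$ and $y_n+v_n$ lie in $U'$ and are two distinct preimages of $w_n:=\pi_2(y_n)$ under $\pi_2|_{Y(\R)\cap U'}$.

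The heart of the argument will be the following open-fiber persistence claim: for some fixed large $n$, there exists an open neighborhood $W\subseteq\pi_2(U')$ of $w_n$ such that every $w\in W$ admits at least two distinct real preimages in $Y(\R)\cap U'$ under $\pi_2$. I plan to prove it by passing to the complex analytic setting via \Cref{prop:locallsurj}(1). After shrinking to a compact sub-neighborhood, the complex analytic map $\pi_2\colon Y\cap U'_\C\to\pi_2(U'_\C)$ is proper between spaces of pure complex dimension $c$, hence generically finite of some degree $d_0$. Its fiber over $w_n$ already contains $y_n$ and $y_n+v_n$, so $d_0\geq 2$, and the locus in $\pi_2(U'_\C)$ where fibers achieve maximal cardinality is the complement of a proper complex analytic subset, hence dense, and its real trace is open and dense in $V^\perp\cap\pi_2(U'_\C)$. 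The two complex preimages near $y_n$ and $y_n+v_n$ vary continuously as $w$ moves near $w_n$; for real $w$, the conjugation invariance of $Y$ from \Cref{def:leeyang}(1) together with the fact that these preimages stay close to two distinct real points forces them to remain real, producing the desired $W$.

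Granted $W$, density of $\pi_2(\Z^n)$ in $V^\perp$ — an equivalent reformulation of the hypothesis that $x\mapsto\exp(2\pi iLx)$ has dense image in $\T^n$ — produces $k\in\Z^n$ with $\pi_2(k)\in W$. The two distinct real preimages $a,b\in Y(\R)\cap U'$ of $\pi_2(k)$ then satisfy $a-k,b-k\in V\cap Y(\R)=\Lambda_0$ by $\Z^n$-periodicity, are distinct, and
\[\|(a-k)-(b-k)\|=\|a-b\|\leq\mathrm{diam}(U')<r,\]
contradicting the uniform discreteness of $\Lambda_0$. The hard part, I expect, will be rigorously justifying the persistence of two distinct real preimages through possibly singular points of $Y(\R)$, which is precisely what the detour through the complex analytic setting buys.
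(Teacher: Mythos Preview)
Your overall strategy matches the paper's: assume the conclusion fails, locate two points of $Y(\R)$ on the same $V$-fiber, use density of $\pi_2(\Z^n)$ in $V^\perp$ to transport this configuration to a lattice fiber, and contradict the uniform discreteness of $\Lambda$. The difference is in how you produce the open set $W$ over which two distinct real preimages persist.

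You pass to a limit point $y$, apply \Cref{prop:locallsurj} once at $y$, and then attempt a degree-theoretic persistence argument in one neighbourhood. The paper avoids this entirely: it fixes a single pair $y_1\neq y_2$ in $Y(\R)$ with $y_2-y_1\in B$, chooses \emph{disjoint} neighbourhoods $U_1,U_2$ of $y_1,y_2$ with $\pi_1(U_i)\subseteq y_1+B$, and applies \Cref{prop:locallsurj} \emph{separately at each} $y_i$ to obtain $U_i'\subseteq U_i$ with $\pi_2(Y(\R)\cap U_i')=\pi_2(U_i')$. Since $\pi_2(y_1)=\pi_2(y_2)$ lies in both open sets $\pi_2(U_1')$ and $\pi_2(U_2')$, their intersection $W$ is open and nonempty, and every $w\in W$ has one preimage in $U_1'\cap Y(\R)$ and another in $U_2'\cap Y(\R)$; these are distinct because $U_1'\cap U_2'=\emptyset$. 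This delivers your persistence claim for free, with no complex analytic detour and no worry about singular points.

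Your proposed route through a proper holomorphic map of degree $d_0\geq2$ is plausible in spirit but would need work to be rigorous: restricting to a compact sub-neighbourhood does not by itself give a proper map to an open target, and you would have to exclude preimages escaping the chosen domain as $w$ varies near $w_n$. The two-neighbourhood trick sidesteps all of this. Once you adopt it, the passage to a limit is unnecessary as well: any single offending pair $(y_1,y_2)$ already suffices, and the paper simply sets $B=L\,B_{r/2}(0)$ from the outset and argues directly.
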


\begin{proof}
    Because $\Lambda$ is Delone, there exists $r>0$ such that every pair of distinct points in $\Lambda$ has distance at least $r$. Let $B=LB_{r/2}(0)$ be the image under $L$ of the open ball of radius $\frac{r}{2}$ around the origin in $\R^d$. 
    We claim that $B$ has the desired properties. Assume for the sake of a contradiction that there exist distinct $y_1,y_2\in Y(\R)$ such that
     \begin{equation*}
        y_2\in (y_1+B).
    \end{equation*}    
    Let $U_1,U_2\subseteq\R^n$ be open neighbourhoods of $y_1$ and $y_2$ such that $U_1\cap U_2=\emptyset$ and $\pi_1(U_i)\subseteq y_1+B$. By \Cref{prop:locallsurj} there exists an open neighbourhood $U_i'\subseteq U_i$ of $y_i$ such that $\pi_2(Y(\R)\cap U_i')=\pi_2(U_i')$ for $i=1,2$. Consider the open subset $W=\pi_2(U_1')\cap\pi_2(U_2')$ of $V^\perp$. Because $\pi_2(y_1)=\pi_2(y_2)\in W$ the set $W_i=\pi_2^{-1}(W)\cap U_i'$ is an open neighbourhood of $y_i$ in $\R^n$ for $i=1,2$. We have $\pi_2(W_i\cap Y(\R))=W$.
    The assumption on $L$ implies that $V+\Z^n$ is dense in $\R^n$. Then
    \begin{equation*}
        \pi_2(V+\Z^n)=\pi_2(\Z^n)
    \end{equation*}
    is dense in $V^\perp$. Thus there exists $m\in\Z^n$ such that $\pi_2(m)\in W$ and we obtain $y_i'\in W_i\cap Y(\R)$ such that $\pi_2(y_i')=\pi_2(m)$, i.e. $y_i'-m\in V$, for $i=1,2$. We can therefore write $y_i'-m=Lx_i$ for some $x_i\in\R^d$. We have
    \begin{equation*}
        \exp(2\pi iLx_i)=\exp(2\pi i y_i'-2\pi i m)=\exp(2\pi i y_i')\in X(\T).
    \end{equation*}
    Thus $x_1,x_2\in\Lambda$. Finally, we have
    \begin{equation*}
        L(x_1-x_2)=y_1'-y_2'=\pi_1(y_1')-\pi_1(y_2').
    \end{equation*}
    By our choice of the $U_i$ and since $W_i\subseteq U_i$, we have $y_1'\neq y_2'$, which implies $x_1\neq x_2$, and 
    \begin{equation*}
     \pi_1(y_1')-\pi_1(y_2')\in B-B\subseteq 2B   
    \end{equation*}
    which implies that $\|x_1-x_2\|<r$ contradicting our choice of $r$.
\end{proof}

\begin{cor}\label{cor:localsheet}
    For every $y\in Y(\R)$ we can find an open neighbourhood $U\subseteq\C^n$ of $y$ such that 
    \begin{equation*}
        \psi\colon U\cap Y\to\pi_2(U),\, y\mapsto\pi_2(y)
    \end{equation*}
    is surjective and $|\psi^{-1}(x)|=1$ for every real $x\in V^\perp$. 
\end{cor}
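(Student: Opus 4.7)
The strategy is to combine three results already established in the excerpt: Proposition \ref{prop:locallsurj}(1) supplies the surjectivity of $\psi$, while Lemma \ref{lem:analrealfib} and Proposition \ref{prop:distanceforsmooth} together force uniqueness on real fibers. The only mildly delicate point is to shrink $U$ in the right order, so that both conclusions are simultaneously available.

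First, I would fix the open neighborhood $B \subseteq V$ of the origin provided by Proposition \ref{prop:distanceforsmooth}, and observe that $\pi_1 \colon \C^n \to V_\C$ is continuous with $\pi_1(0) = 0 \in B$. Hence I can choose an open neighborhood $U_0 \subseteq \C^n$ of $y$ small enough that $\pi_1(U_0 - U_0) \subseteq B$ (for instance, $U_0 \subseteq y + \pi_1^{-1}(\tfrac{1}{2}B) \cap \pi_2^{-1}(W_0)$ for any bounded open $W_0 \subseteq V_\C^\perp$ centered at $\pi_2(y)$). Then I apply Proposition \ref{prop:locallsurj}(1) inside $U_0$ to extract a further open neighborhood $U \subseteq U_0$ of $y$ satisfying $\pi_2(Y \cap U) = \pi_2(U)$. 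This is exactly the surjectivity statement for $\psi$.

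For the uniqueness claim, suppose $x \in V^\perp \cap \pi_2(U)$ is real and $y_1, y_2 \in U \cap Y$ both satisfy $\pi_2(y_i) = x$. Since $x$ is real, Lemma \ref{lem:analrealfib} forces $y_1, y_2 \in Y(\R)$. Then $\pi_2(y_1 - y_2) = 0$ and both vectors are real, so
\[
y_1 - y_2 \in V_\C \cap \R^n = V.
\]
Moreover, because $U \subseteq U_0$, we have $y_1 - y_2 = \pi_1(y_1 - y_2) \in \pi_1(U_0 - U_0) \subseteq B$. Thus $y_2 \in y_1 + B$, and Proposition \ref{prop:distanceforsmooth} immediately yields $y_1 = y_2$. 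Together with surjectivity, this gives $|\psi^{-1}(x)| = 1$ for every real $x \in V^\perp \cap \pi_2(U)$.

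I do not anticipate a genuine obstacle here: the real content has already been absorbed into Propositions \ref{prop:locallsurj} and \ref{prop:distanceforsmooth} and Lemma \ref{lem:analrealfib}. The only thing to watch is that the neighborhood produced by Proposition \ref{prop:locallsurj} must be chosen \emph{after} the diameter shrinking, since the diameter condition is preserved under passing to smaller neighborhoods while the surjectivity condition $\pi_2(Y \cap U) = \pi_2(U)$ is not automatically preserved under shrinking.
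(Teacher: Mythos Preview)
Your proof is correct and follows essentially the same route as the paper: fix the set $B$ from \Cref{prop:distanceforsmooth}, first shrink the neighborhood so that $\pi_1$-differences lie in $B$, then invoke \Cref{prop:locallsurj} to obtain surjectivity, and use \Cref{lem:analrealfib} to reduce the uniqueness of real fibers to the real case handled by \Cref{prop:distanceforsmooth}. Your version is in fact a bit more careful than the paper's, since you make explicit the point about the order of shrinking.
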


\begin{proof}
    Let $B\subseteq V$ as in \Cref{prop:distanceforsmooth} and let $U_1\subseteq\C^n$ a neighbourhood of $y$ such that $\pi_1(U)\subseteq y+\frac{1}{2}B$. Then every $x\in V^\perp$ has at most one preimage in $U\cap Y(\R)$ under $\pi_2$. By \Cref{lem:analrealfib} we even have that such $x\in V^\perp$ has at most one preimage in $U\cap Y$ under $\pi_2$. Using \Cref{prop:locallsurj} we can achieve surjectivity by further shrinking $U$.
\end{proof}
We are now in position to prove part (2) of \Cref{thm:deloneiff}. 
\begin{proof}[Proof of \Cref{thm:deloneiff} part (2)]
    Assume that $\Lambda(X,L)$ is Delone. We need to show that $Y(\R)\subseteq Y_{\rm sm}$. Thus let $y\in Y(\R)$ and assume that $y\not\in Y_{\rm sm}$. By \Cref{cor:localsheet} we can find an open neighbourhood $U\subseteq\C^n$ of $y$ such that 
    \begin{equation*}
        \psi\colon U\cap Y\to\pi_2(U),\, y\mapsto\pi_2(y)
    \end{equation*}
    is surjective and $|\psi^{-1}(x)|=1$ for every real $x\in V^\perp$. Let $v\in V^\perp$ be a real nonzero vector which is not contained in the image of the tangent cone at $y$ of the singular locus of $Y$. This ensures that $y$ is an isolated singularity of $Y'=\psi^{-1}(G)$ where $G=\pi_2(y)+\C v$. Thus by \Cref{lem:projtrans} the preimage under $\psi|_{Y'}$ of every real point on $G$ in a neighbourhood of $y$, except for $y$, has cardinality one even when counted with multiplicities. Moreover, because $y$ a singular point of $Y'$, the preimage of $y$ under $\psi|_{Y'}$ has higher multiplicity. On the other hand, the map $\psi|_{Y'}$ is finite and flat at $y$. Indeed, it is finite by \cite{fischer}*{Lemma in \S3.2} and it is flat because the local ring of $Y'$ at $y$ is a finitely generated torsion-free module over the local ring of $G$ at $\pi_2(y)$ which is a principal ideal domain, hence the former is free over the latter. Thus by \cite{fischer}*{Corollary to Proposition 3.13} the fiber cardinality (counting multiplicities) of $\psi|_{Y'}$ is constant in a neighbourhood of $\psi(y)$ contradicting our observation above. 
\end{proof}
\section{Non-periodicity and almost periodicity}
In this section we will prove that, under certain conditions, our construction results in Bohr almost periodic sets that do not contain any periodic set.

\begin{Def}
	A Delone set $A\subseteq\R^{d}$ is called \emph{Bohr almost periodic}
	if for every $ \epsilon>0 $ there is a relatively dense set $ T(\epsilon)\subseteq\R^{d} $ such that for every $ \tau\in T(\epsilon) $, the Hausdorff distance between $ A $ and $ A+\tau=\{x+\tau:x\in A\} $ is at most $ \epsilon $:
	\begin{equation*}
		\mathrm{dist}_{H}(A,A+\tau)\le\epsilon .  
	\end{equation*}
\end{Def} 

\begin{thm}\label{lem:almostperiodic}
    Let $X\subseteq(\pp^1)^n$ be a strict Lee--Yang variety of dimension $c=n-d<n$ and $L$ be a real $n\times d$ matrix all of whose $d\times d$ minors are positive.
	The set $\Lambda(X,L)$ is Bohr almost periodic. 
\end{thm}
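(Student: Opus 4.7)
The plan is to combine the uniform Lipschitz bound from \Cref{cor:hddist} with a Kronecker-type argument on the group homomorphism $\R^{d} \to \T^n$ induced by $L$. Recall that $\Lambda$ is Delone by \Cref{thm:deloneiff}(1) since, as $X$ is a strict Lee--Yang variety, $X(\T)$ is contained in the smooth part of $X$.

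The first step is a change-of-variables identity. For $\tau \in \R^{d}$, by the definition of $\Lambda = \{x\in\R^d \mid Lx\in Y(\R)\}$ together with the $\Z^n$-periodicity of $Y(\R)$, we have
\begin{equation*}
    \Lambda + \tau \;=\; \{x\in\R^d\mid L(x-\tau)\in Y(\R)\} \;=\; \{x\in\R^d\mid Lx + (k-L\tau)\in Y(\R)\} \;=\; \Lambda_{k - L\tau}
\end{equation*}
for every $k\in\Z^n$, where $\Lambda_y = \{x\in\R^d \mid Lx + y \in Y(\R)\}$ is as in \Cref{sec:trans}. For any $\tau$, I would choose $k\in\Z^n$ with $k - L\tau \in [-\tfrac{1}{2},\tfrac{1}{2}]^n$. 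Then \Cref{cor:hddist} supplies a constant $C>0$, independent of $\tau$, with
\begin{equation*}
    \mathrm{dist}_H(\Lambda, \Lambda + \tau) \;=\; \mathrm{dist}_H(\Lambda_0, \Lambda_{k-L\tau}) \;\leq\; C\,\|k - L\tau\|.
\end{equation*}
Consequently, it suffices to produce, for each $\epsilon>0$, a relatively dense set $T(\epsilon)\subseteq\R^d$ of $\tau$'s for which $L\tau$ is within $\epsilon/C$ of $\Z^n$.

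For the second step, consider the continuous group homomorphism $\phi\colon \R^d \to \T^n$, $\tau\mapsto L\tau \bmod \Z^n$. Its image is a connected subgroup of $\T^n$, whose closure $H$ is a subtorus, hence compact. Define
\begin{equation*}
    T(\epsilon) \;=\; \{\tau\in\R^d\mid \mathrm{dist}(L\tau, \Z^n) \leq \epsilon/C\} \;=\; \phi^{-1}(U)
\end{equation*}
where $U\subseteq \T^n$ is the image in $\T^n$ of the $\epsilon/C$-ball around the origin. I would show $T(\epsilon)$ is relatively dense by a standard compactness argument: pick an open neighborhood $W$ of $0$ in $H$ with $W - W \subseteq U\cap H$, cover the compact group $H$ by finitely many translates $h_1 + W, \dots, h_N + W$, and for each $i$ use the density of $\phi(\R^d)$ in $H$ to find $\tau_i\in\R^d$ with $\phi(\tau_i)\in h_i + W$. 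Then for any $\tau\in\R^d$, $\phi(\tau)$ lies in some $h_i + W$, so $\phi(\tau - \tau_i)\in W - W\subseteq U$, i.e.\ $\tau - \tau_i \in T(\epsilon)$. Hence $\R^d = \bigcup_{i=1}^N (T(\epsilon) + \tau_i)$, proving $T(\epsilon)$ is relatively dense.

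Combining both steps, for every $\tau\in T(\epsilon)$ there is $k\in\Z^n$ (which may without loss of generality be chosen so that $\|k - L\tau\|\leq \epsilon/C$ is at most $1/2$, after shrinking $\epsilon$) with $\mathrm{dist}_H(\Lambda, \Lambda+\tau)\leq C\cdot(\epsilon/C) = \epsilon$, verifying the definition of Bohr almost periodicity. The only substantive content is the Lipschitz bound in \Cref{cor:hddist}; the Kronecker argument is purely topological and is the main structural point to state carefully, since the image of $\phi$ need not be closed when the entries of $L$ are irrational. No obstacle beyond this combination is anticipated.
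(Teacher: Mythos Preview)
Your proof is correct and follows essentially the same approach as the paper: both use \Cref{cor:hddist} to reduce Bohr almost periodicity to showing that $T(\epsilon)=\{\tau\in\R^d\mid \mathrm{dist}(L\tau,\Z^n)<\epsilon/C\}$ is relatively dense. The only difference is in how relative density of $T(\epsilon)$ is established: the paper decomposes columnwise and applies Kronecker's theorem to each column $L_j$ of $L$ separately to get a product set $T_1(\delta)\times\cdots\times T_d(\delta)\subseteq T(\epsilon)$, whereas you use a direct compactness argument on the closure of the image of $\phi\colon\R^d\to\T^n$; your argument is slightly more abstract but avoids the need for an explicit reference.
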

\begin{proof}
	By \Cref{cor:hddist} there exists a positive constant $C>0$ such that for every $ y\in\left[-\frac{1}{2},\frac{1}{2}\right]^n$ the Hausdorff distance of $\Lambda$ and \begin{equation*}
		\Lambda_y=\{x\in\R^{d}\mid \exp(2\pi i(Lx+y))\in X \}
	\end{equation*}
	can be bounded by $\mathrm{dist}(\Lambda_0,\Lambda_{y})\le C\|y\|$.
	Given $ \epsilon>0 $ small enough let \begin{equation*}
		T(\epsilon)=\{\tau\in\R^{d}\mid \mathrm{dist}(L\tau,\Z^{n})<\frac{\epsilon}{C}\}                          .
	\end{equation*}
	Then for each $ \tau\in T(\epsilon) $ there is $ y\in\left[-\frac{1}{2},\frac{1}{2}\right]^n $
    such that $\exp(2\pi iy)=\exp(2\pi iL\tau)$ and $ C\|y\|\le \epsilon $. Since 
	\[\Lambda+\tau=\{x+\tau\mid\exp(2\pi iLx)\in X(\T) \}=\{x\mid\exp(2\pi iLx-2\pi iL\tau)\in X(\T) \}=\Lambda_{-y},\] our choice of $C$ gives $ \mathrm{dist}_{H}(\Lambda,\Lambda+\tau)\le \epsilon $. To see that $T(\epsilon)$ is relatively dense we will show that it contains a product $ T_{1}(\delta)\times T_{2}(\delta)\times\ldots\times T_{d}(\delta) $ such that each $ T_{j}(\delta)\subseteq\R $ is relatively dense. Let $ L_{1},\ldots,L_{d} $ be the columns of $ L $ and let $ \delta>0 $ such that $ \mathrm{dist}(L\tau,\Z^{n})<\frac{\epsilon}{C} $ whenever $ \mathrm{dist}(\tau_{j}L_{j},\Z^{n})<\delta $ for all $ j $. Define $ T_{j}(\delta)\subseteq\R $ as the set of $ \tau_{j} $ for which $ \mathrm{dist}(\tau_{j}L_{j},\Z^{n})<\delta $. The fact that $ T_{j}(\delta)$ is a relatively dense set follows from Kronecker's Theorem, see \cite{almostperiodic}*{Corollary on page 34} for example.
\end{proof}

In order to prove non-periodicity statements, we shall assume that the $d\times d$ minors of $L$ are $\Q$-linearly independent. 
We record some immediate consequences of this assumption.
\begin{lem}\label{lem:assL}
	If $L\in\R^{n\times d}$, with $n>d\ge 1$, such that the $d\times d$ minors of $L$ are $\Q$-linearly independent, then the following holds:
	\begin{enumerate}
		\item For every non-zero $v\in L\R^d $ we have
		\begin{equation*}
			\dim_{\Q}(v):=\dim(\mathrm{span}_{\Q}\{v_{1},\ldots,v_{n}\})> n-d.
		\end{equation*}
		Equivalently, $ \det(AL)\ne 0 $ for every matrix $ A\in \Z^{d\times n}$ of rank $ d $.
		\item The map $\R^d\to\T^n,\, x\mapsto\exp(2\pi i Lx)$ is injective with dense image.
		\item The map $\Z^n\to\R^d,\, k\mapsto L^tk$ is injective with dense image.
	\end{enumerate}
\end{lem}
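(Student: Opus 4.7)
\emph{Proof plan.} My plan is to first prove part (1) by a direct Cauchy--Binet computation and then derive parts (2) and (3) from it, invoking a single Cramer's-rule identity for the injectivity in (3). For the equivalence in (1): if $v=Lx\in L\R^d\setminus\{0\}$ has $\dim_\Q(v)\le n-d$, its entries satisfy at least $d$ independent $\Q$-linear relations, which after clearing denominators yield a matrix $A\in\Z^{d\times n}$ of rank $d$ with $ALx=0$; conversely, since by hypothesis all minors $L_I$ are nonzero the matrix $L$ has full column rank, so any nonzero $x$ with $ALx=0$ produces $v=Lx\ne 0$ in the kernel of $A$, forcing $\dim_\Q(v)\le n-d$. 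I would then prove $\det(AL)\ne 0$ directly via Cauchy--Binet,
\[
\det(AL)=\sum_{I\in\binom{[n]}{d}}A_I\, L_I,
\]
a $\Q$-linear combination of the $L_I$ with integer coefficients $A_I$ that are not all zero (since $\mathrm{rank}\,A=d$); by the $\Q$-linear independence of the minors, the sum is nonzero.

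For part (2), injectivity is immediate from (1): $\exp(2\pi i Lx)=\exp(2\pi i Lx')$ with $x\ne x'$ would give $v=L(x-x')\in L\R^d\cap(\Z^n\setminus\{0\})$ with $\dim_\Q(v)\le 1\le n-d$, contradicting (1). Density then follows from Pontryagin duality on $\T^n$: the closure of the image equals $\T^n$ iff the only character $k\in\Z^n$ annihilating the image is $k=0$, and the annihilating condition is $\langle L^t k,x\rangle\in\Z$ for all $x\in\R^d$, i.e.\ $L^tk=0$ --- precisely the injectivity statement of (3). Similarly, density in (3) follows from (1) via Pontryagin duality on $\R^d$: $\overline{L^t\Z^n}=\R^d$ iff $\{y\in\R^d:Ly\in\Z^n\}=\{0\}$, and any nonzero $y$ with $Ly\in\Z^n$ would satisfy $\dim_\Q(Ly)\le 1$, contradicting $\dim_\Q(Ly)>n-d\ge 1$ from (1).

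It remains to prove the injectivity in (3), which does not follow from (1) and is the step requiring a new input. Suppose $L^tk=0$ for some $k\in\Z^n$, i.e.\ $\sum_j k_j\ell_j=0$ for the rows $\ell_j$ of $L$. Fix any $I\in\binom{[n]}{d}$; since $L_I\ne 0$, the rows $\{\ell_i\}_{i\in I}$ form an $\R$-basis of $\R^d$, and Cramer's rule expresses each $\ell_j$ for $j\notin I$ as $\sum_{i\in I}\pm\bigl(L_{(I\setminus\{i\})\cup\{j\}}/L_I\bigr)\ell_i$. Substituting and reading off coefficients in this basis yields, for each $i\in I$,
\[
k_i\, L_I+\sum_{j\notin I}(\pm k_j)\,L_{(I\setminus\{i\})\cup\{j\}}=0,
\]
a $\Z$-linear relation among distinct $d\times d$ minors of $L$. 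By $\Q$-linear independence of the minors, $k_i=0$ and $k_j=0$ for every $j\notin I$; letting $i$ range through $I$ gives $k=0$.

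The main obstacle is organizational rather than technical: one has to notice that density in (2) cannot be deduced from (1) alone but requires the injectivity in (3), which is the one step that uses the Cramer's-rule identity rather than Cauchy--Binet. Aside from the sign bookkeeping in that identity, everything reduces to the single fact that a nontrivial $\Z$-linear combination of the distinct $d\times d$ minors of $L$ is nonzero.
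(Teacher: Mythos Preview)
Your proof is correct and follows essentially the same architecture as the paper: Cauchy--Binet for part~(1), injectivity of (2) from $\dim_\Q(Lx)\le 1$, and Pontryagin duality to pair density in (2) with injectivity in (3) and vice versa. The one difference is your treatment of injectivity in~(3). You claim this ``does not follow from (1) and is the step requiring a new input,'' and then prove it via a Cramer's-rule identity. In fact it follows immediately from the equivalent formulation you already established in~(1): if $L^tk=0$ for some nonzero $k\in\Z^n$, complete $k^t$ to the rows of a rank-$d$ matrix $A\in\Z^{d\times n}$; then $AL$ has a zero row, so $\det(AL)=0$, contradicting~(1). This is exactly what the paper does. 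Your Cramer's-rule argument is valid---the relation $k_i L_I+\sum_{j\notin I}(\pm k_j)L_{(I\setminus\{i\})\cup\{j\}}=0$ is a genuine integer relation among distinct minors---but it is an unnecessary detour; what it ultimately reproves is that a nonzero integer vector in $\ker L^t$ forces a nontrivial $\Z$-relation among the $L_I$, which is already the content of Cauchy--Binet applied to any rank-$d$ completion.
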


\begin{proof}
    For part (1), assume, by means of contradiction, that there exists $y\in\R^{d}$ such that $v=Ly\ne0$ and $ \dim_{\Q}(v)\le n-d $. This means the entries of $ v$ satisfy at least $d$ linear relations $\langle A_{j},v\rangle=0$ with $A_{j}\in\Z^{n}$ and $(A_{1},\ldots,A_{d})$ linearly independent. Taking the matrix $ A\in\Z^{d\times n} $ whose rows are these $A_{j}$'s provides an integer matrix of rank $d$ such that $ Av=ALy=0 $. 
	In particular, we have $ \det(AL) = \sum_{I}A_IL_I=0 $, summing over  $I\in\binom{[n]}{d}$, by Cauchy-Binet. Since $A$ has full rank, at least one minor is non-zero $ A_{I}\ne 0$, so this is a $\Q$-linear relation on the minors $L_I$, contradicting the assumption.
	
	For parts (2) and (3), the map in (2) is a continuous group homomorphism whose kernel consists of all $x\in\R^d$ such that $v=Lx\in\Z^n$. Because $\dim_{\Q}(v)=1\leq n-d$, part (1) shows that $v=Lx=0$ and thus $x=0$ since $L$ has full rank. This proves that the map in (2) is injective. By Pontryagin duality this is equivalent to the map in (3) having dense image, see e.g. \cite{stroppel}*{Proposition~23.2}. The map in (3) being injective directly follows from $\det(L^tA^t)= \det(AL)\ne 0 $ for every matrix $ A\in \Z^{d\times n} $ of rank $ d $. As before, this is equivalent to the map in (2) having dense image.
\end{proof}

\begin{rem}\label{rem:asslcoord}
	Let $S\in\R^{d\times d}$ be an invertible matrix.
	The matrix $L\in\R^{n\times d}$ has $\Q$-linearly independent $d\times d$ minors if and only if $LS$ does.
\end{rem}

\begin{rem}
	There exist explicit positive parametrizations of the positive Grassmannian $\Gr_+(d,n)$. 
	See e.g. \cite{Postnikov}. For example,
	 $\Gr_+(2,4)$ can be parameterized by $(w,x,y,z)\in\R_+^4$, defining the corresponding point in $\Gr_+(2,4)$ to be the range of  
	$L=L(w,x,y,z):= \begin{pmatrix} 1 & 0& -wy  & -w  \\ 0 & 1 & wxy + z & wx  \end{pmatrix} $.
	The $2\times 2$ minors $(L_{I})_{I\in\binom{[4]}{2}}$ are
	\[L_{\{1,2\}}=1,\ L_{\{1,3\}}=wxy + z,\ L_{\{1,4\}}=wx,\ L_{\{2,3\}}=wy,\ L_{\{2,4\}}=w, \ L_{\{3,4\}}=wz .\]
In particular, the minors are positive if and only if $(w,x,y,z)\in\R_+^4$, and if we take $w,x,y,z$ to be algebraically independent over $\Q$, or even just square-roots of distinct primes, then these minors will $\Q$-linearly independent. 
\end{rem}

\begin{Def}[Algebraic-torus cosets]
	We say that a subgroup $H\subseteq(\C^{*})^{n}$ is an \emph{algebraic subtorus of dimension $ m\le n $} if there is an $n\times m$ matrix with integer coefficients $A\in \Z^{n\times m}$ of rank $m $ such that
	\begin{equation*}
		H=\{\exp(2\pi iAy)\mid y\in\C^{m}\}.
	\end{equation*}
	Given a point $ z=\exp(2\pi ix)\in(\C^*)^n $ the \emph{algebraic-torus coset} $ zH $ is given by
	\begin{equation*}
		zH=\{\exp(2\pi i(x+Ay)\mid y\in\C^{d}\}.
	\end{equation*}
\end{Def}
\begin{thm}\label{thm: non-periodicity2}
	Let $X\subseteq (\C^{*})^{n} $ be an algebraic variety of dimension $c=n-d $, of total degree $D$, and assume that none of the irreducible components of $X$ is an algebraic-torus coset in $(\C^{*})^{n}$. Let $L\in\R^{n\times d}$ with $\Q$-linearly independent $d\times d$ minors. Then 
	$\Lambda(X,L)$
	intersects every $ m $-dimensional $ \Q $-vector space $W\subseteq\R^d$ in at most $r^{m+1}$ points, with $r=e^{(6D\binom{n+D}{D})^{(5D\binom{n+D}{D})}}$    
\end{thm}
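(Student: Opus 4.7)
The plan is to reduce the bound on $|\Lambda(X,L)\cap W|$ to the intersection of $X$ with a finite-rank subgroup of $(\C^*)^n$ and then invoke an explicit uniform bound from arithmetic geometry (the effective Mordell--Lang theorem for algebraic tori, in the lineage of Evertse--Schlickewei--Schmidt, R\'emond, and Amoroso--Viada). Fix a $\Q$-basis $w_1,\ldots,w_m$ of $W$ and set $\zeta_i:=\exp(2\pi iLw_i)\in\T^n$. Any $x=\sum q_iw_i\in W$ with $q_i\in\Q$ satisfies $\exp(2\pi iLx)=\prod_i\zeta_i^{q_i}$, so it lies in the divisible closure $\Gamma\subseteq(\C^*)^n$ of $\langle\zeta_1,\ldots,\zeta_m\rangle$, a subgroup of rank $\dim_\Q(\Gamma\otimes\Q)\le m$. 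By \Cref{lem:assL}(2), the homomorphism $x\mapsto\exp(2\pi iLx)$ is injective on $\R^d$, so distinct points of $\Lambda(X,L)\cap W$ produce distinct points of $X\cap\Gamma$, giving $|\Lambda(X,L)\cap W|\le|X\cap\Gamma|$.

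The next step is to appeal to a quantitative Mordell--Lang-type statement of the form: if $X\subseteq(\C^*)^n$ has degree $D$ and no irreducible component of $X$ is an algebraic-torus coset, then for every subgroup $\Gamma\subseteq(\C^*)^n$ of finite rank,
\[
|X\cap\Gamma|\;\le\; C(n,D)^{\mathrm{rank}(\Gamma)+1}
\]
for an effective constant $C(n,D)$. Taking $C(n,D)$ equal to the quantity $r=e^{(6D\binom{n+D}{D})^{(5D\binom{n+D}{D})}}$ in the statement and applying the bound to our $\Gamma$ of rank $\le m$ yields $|\Lambda(X,L)\cap W|\le r^{m+1}$.

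The main obstacle, and the heaviest technical step, is the uniform bound itself in the stated explicit form; a second subtlety is that the hypothesis forbids torus cosets only among the \emph{irreducible components} of $X$, leaving open the possibility that $X$ contains proper positive-dimensional torus cosets $gH$ which a priori contribute infinite families to $X\cap\Gamma$. The $\Q$-linear independence of the $d\times d$ minors of $L$ is exactly what rules out this pathology. Indeed, any positive-dimensional torus coset $gH\subseteq X$ has $\dim H\le\dim X=n-d$, and pulling back the defining equations of $gH$ along $x\mapsto\exp(2\pi iLx)$ produces an affine congruence system $ALx\equiv b\pmod{\Z^{n-\dim H}}$, where the rows of the integer matrix $A$ form a basis of the sublattice of $\Z^n$ defining $H$. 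Any $u\in\R^d$ with $ALu=0$ would satisfy $Lu\in\ker A$, which is a $\Q$-rational subspace of $\R^n$ of $\R$-dimension $\dim H\le n-d$; this forces $\dim_\Q(Lu)\le\dim H\le n-d$, contradicting \Cref{lem:assL}(1) unless $u=0$. Hence the preimage of $gH$ in $\R^d$ is discrete, and the finitely many such cosets (controlled in terms of $n$ and $D$) integrate into the Mordell--Lang estimate to produce the exponential constant $r$ as stated.
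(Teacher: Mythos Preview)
Your outline follows the same route as the paper: send $W$ into the division group $\Gamma=\bar G$ of a rank $\le m$ subgroup of $(\C^*)^n$, invoke Evertse's effective Lang conjecture, and then count. The gap is in how you count.

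Evertse's theorem does \emph{not} assert $|X\cap\Gamma|\le r^{m+1}$; it only says $X\cap\Gamma$ is contained in at most $r^{m+1}$ torus cosets $z_jH_j\subseteq X$. Under the stated hypotheses $|X\cap\Gamma|$ can be infinite: since $1\in G$, the division group $\Gamma$ contains all of $\mu_\infty^n$, and if $X$ contains a positive-dimensional coset $gH$ with $g\in\T^n$ (nothing forbids this when $\dim X\ge 2$), then $gH\cap\mu_\infty^n$ is already infinite. So the inequality $|\Lambda(X,L)\cap W|\le|X\cap\Gamma|$ is correct but useless on its own, and the ``$|X\cap\Gamma|\le C(n,D)^{\mathrm{rank}+1}$'' statement you cite is false at this level of generality.

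Your final paragraph tries to repair this, but it only establishes that the preimage of each $gH$ in $\R^d$ is \emph{discrete} (via injectivity of $AL$), which still allows infinitely many preimages. What you need, and what the paper proves as \Cref{lem: algebraic-torus cosets}, is that each such coset has \emph{at most one} preimage in $\R^d$. The argument uses two points you did not use: first, since no irreducible component of $X$ is a coset, each $z_jH_j\subseteq X$ has $\dim H_j<n-d$ strictly; second, if $x\neq x'$ are two preimages then $L(x-x')=Bs+k$ with $B\in\Z^{n\times\dim H_j}$, $s\in\R^{\dim H_j}$ and $k\in\Z^n$, whence
\[
\dim_{\Q}(L(x-x'))\le \dim_{\Q}(Bs)+\dim_{\Q}(k)\le \dim H_j+1\le n-d,
\]
contradicting \Cref{lem:assL}(1). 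Your linear-algebra step ignored the integer vector $k$, which is why you only obtained discreteness. Once you have ``at most one preimage per coset'', the bound $|\Lambda(X,L)\cap W|\le r^{m+1}$ follows directly from the number of cosets in Evertse's theorem, with no need to ever bound $|X\cap\Gamma|$.
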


\Cref{thm: non-periodicity2} is a consequence of \cite{Evertse00}*{Theorem~1.2}, which is Evertse's proof of a conjecture by Lang, commonly known as Lang's $G_{m}$ conjecture:   
\begin{thm}[\cite{Evertse00}*{Theorem~1.2}]\label{thm:langgm}
	Let $V\subseteq (\C^{*})^n$ be an algebraic variety of total degree $ D $ and let $ G $ be a multiplicative subgroup of $ (\C^{*})^n $ of finite rank $ m $. Let $ \overline{G}:=\{z\in(\C^{*})^n\mid \exists j\in\N:~z^{j}\in G\} $ be its division group. Then $ \overline{G}\cap V $ is contained in a union of at most $ R $ algebraic-torus cosets $ z_{j}H_{j}\subseteq V $ for $ R\le e^{(m+1)(6D\binom{n+D}{D})^{(5D\binom{n+D}{D})}} $. 
\end{thm}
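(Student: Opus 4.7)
The plan is to prove this by induction on $\dim V$, reducing the general case to an explicit quantitative bound on non-degenerate solutions of $S$-unit equations coming from the Evertse--Schlickewei--Schmidt circle of results. First I would reduce to the irreducible case: since total degrees add over irreducible components and a coset-covering can be assembled componentwise, I may assume $V$ is irreducible. If $V$ itself equals a torus coset $zH$, the conclusion is immediate with $R=1$; otherwise I proceed to the main step.

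The heart of the argument is to exhibit a proper Zariski-closed subset $V' \subsetneq V$ of explicitly bounded degree such that every point of $\overline{G}\cap V$ either lies in $V'$ or belongs to one of a controlled number of torus cosets contained in $V$. To do this, choose a Laurent polynomial $f = \sum_{\alpha \in S} c_\alpha z^\alpha$ with $|S| \le \binom{n+D}{D}$ that vanishes on $V$ but not identically. Any $z \in \overline{G}\cap V$ yields the relation $\sum_{\alpha \in S} c_\alpha z^\alpha = 0$, whose summands lie in the division group of the rank-$(m+|S|)$ multiplicative subgroup generated by $\overline{G}$ and the $c_\alpha$. Partition $S$ into minimal non-empty blocks $T_1, \ldots, T_k$ on which the corresponding subsum $\sum_{\alpha \in T_j} c_\alpha z^\alpha$ vanishes but no proper subsubsum does. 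The \emph{non-degenerate} case $k=1$ is exactly a non-degenerate $|S|$-term unit equation, and the Evertse--Schlickewei bound asserts that the projective solutions in our rank-$(m+|S|)$ group number at most $\exp((6|S|)^{3|S|}(m+|S|+1))$; each such projective class corresponds to a single algebraic-torus coset lying inside $V$, contributing to the cosets in the statement.

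The remaining points — those with $k \ge 2$ — lie in the proper subvariety $V' \subsetneq V$ cut out by $V$ together with the finitely many equations $\sum_{\alpha \in T} c_\alpha z^\alpha = 0$ as $T$ ranges over proper non-empty subsets of $S$. Each such equation has degree $\le D$, so a B\'ezout-type estimate in $(\C^*)^n$ bounds the degree and the number of irreducible components of $V'$ polynomially in $D$ and $\binom{n+D}{D}$. Applying the induction hypothesis to each irreducible component of $V'$ (of strictly smaller dimension) and summing with the non-degenerate contribution gives a recursion $R(\dim V, D) \le \exp\!\big((6|S|)^{3|S|}(m+|S|+1)\big) + \sum_j R(\dim V - 1, \deg V'_j)$; unwinding the recursion over the at most $n$ dimension drops produces an iterated exponential of the required shape, with $|S| \le \binom{n+D}{D}$ accounting for the base $6D\binom{n+D}{D}$ and the number of induction layers producing the tower exponent $5D\binom{n+D}{D}$.

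The main obstacle is the non-degenerate unit equation bound itself: its proof routes through the quantitative Schmidt subspace theorem, using the gap principle and geometry-of-numbers estimates on heights and norms of solutions, and this is what generates the tower of exponentials in the final constant $r$. A secondary — but very delicate — difficulty is the bookkeeping: the partition-by-minimal-vanishing-subsums must be organized and the B\'ezout bounds on $\deg V'$ sharpened so that the recursion collapses exactly to $e^{(m+1)(6D\binom{n+D}{D})^{(5D\binom{n+D}{D})}}$ rather than to a weaker iterated-exponential bound with larger constants. Since the statement in the excerpt is precisely \cite{Evertse00}*{Theorem~1.2}, the natural proof is simply to carry out Evertse's argument, of which the above is a skeleton.
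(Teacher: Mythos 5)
This statement is not proved in the paper at all: it is Evertse's Theorem~1.2 from \cite{Evertse00}, quoted verbatim as an external input and used as a black box in the proof of \Cref{thm: non-periodicity2}. So there is no internal argument to compare against, and the appropriate ``proof'' in the context of this paper is simply the citation. Your sketch is therefore answering a different question, namely how Evertse's theorem itself is proved; on that question it correctly identifies the standard strategy (choose a Laurent polynomial vanishing on $V$, view its monomials evaluated at a point of $\overline{G}$ as a unit equation over a finite-rank group, split into minimal vanishing subsums, apply the quantitative bound for non-degenerate solutions, and induct on a proper subvariety containing the degenerate solutions).

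As a proof, however, the sketch has genuine gaps beyond the admitted reliance on the Evertse--Schlickewei--Schmidt bound. First, the claim that each projective class of non-degenerate solutions ``corresponds to a single algebraic-torus coset lying inside $V$'' is exactly where the difficulty sits: a class of solutions with all ratios $z^{\alpha}/z^{\beta}$ ($\alpha,\beta\in S$) fixed is a coset of the subtorus $\{t\mid t^{\alpha-\beta}=1\}$ contained in the hypersurface $f=0$, but not automatically in $V$ when $V$ has higher codimension; the theorem requires $z_jH_j\subseteq V$, and arranging this (by intersecting with $V$ and re-entering the induction, or by a separate argument on cosets maximal in $V$) is a nontrivial part of Evertse's proof that your outline skips. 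Second, the recursion you set up is not shown to close: you would need explicit control of the number and degrees of the components of $V'$ at every level, and it is not clear the iterated bound collapses to $e^{(m+1)(6D\binom{n+D}{D})^{(5D\binom{n+D}{D})}}$ rather than to a genuinely larger tower; the rank of the relevant group also grows (from $m$ to $m+|S|$) in your reduction, which must be tracked to recover the factor $(m+1)$ in the exponent. For the purposes of this paper, none of this needs to be done --- the result should be, and is, cited.
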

We will need the following lemma.
\begin{lem}\label{lem: algebraic-torus cosets}
	If $L\in\R^{n\times d}$ has $\Q$-linearly independent $d\times d$ minors, and $ zH $ is an algebraic-torus coset of dimension $ m<n-d $, then there is at most one point $ y\in\R^{d} $ with $ \exp(2\pi iLy)\in zH $.
\end{lem}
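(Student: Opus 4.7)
The plan is to reduce the statement to the linear-algebraic bound on $\dim_\Q$ given by \Cref{lem:assL}(1). Write $H=\{\exp(2\pi iAw)\mid w\in\C^m\}$ for some $A\in\Z^{n\times m}$ of rank $m$ and $z=\exp(2\pi ix_0)$ for some $x_0\in\C^n$. Suppose $y_1,y_2\in\R^d$ both satisfy $\exp(2\pi iLy_j)\in zH$; setting $y=y_1-y_2$, this gives vectors $w\in\C^m$ and $k\in\Z^n$ with
\begin{equation*}
    Ly=Aw+k.
\end{equation*}
The goal is to show $y=0$.

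Since $Ly-k$ is real and $A$ has real (integer) entries of rank $m$, taking imaginary parts of this equation forces $A\Ima(w)=0$, hence $\Ima(w)=0$, so in fact $w\in\R^m$. At this point the key observation is that every entry of $v:=Ly$ is a $\Q$-linear combination of $w_1,\ldots,w_m$ and $1$, because the entries of $A$ and of $k$ are integers. Therefore
\begin{equation*}
    \dim_\Q(v)\leq \dim_\Q\mathrm{span}_\Q\{w_1,\ldots,w_m,1\}\leq m+1.
\end{equation*}

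Now suppose, for contradiction, that $y\neq 0$. Since $L$ has a nonzero $d\times d$ minor, $L$ is injective on $\R^d$ and so $v=Ly$ is a nonzero element of $L\R^d$. Then by \Cref{lem:assL}(1),
\begin{equation*}
    \dim_\Q(v)>n-d.
\end{equation*}
Combined with the previous upper bound this yields $m+1>n-d$, contradicting the assumption $m<n-d$. Hence $y=0$, i.e.\ $y_1=y_2$, which proves the claim.

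The argument is short and the only step that requires any care is the passage from $w\in\C^m$ to $w\in\R^m$ by taking imaginary parts; after that, the contradiction is immediate from combining \Cref{lem:assL}(1) with the trivial upper bound on $\dim_\Q(v)$ coming from the integrality of $A$ and $k$. No further results from the paper are needed.
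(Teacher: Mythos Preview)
Your proof is correct and follows essentially the same approach as the paper's own proof: reduce to the difference $y=y_1-y_2$, use the relation $Ly=Aw+k$ with $A\in\Z^{n\times m}$ and $k\in\Z^n$, bound $\dim_\Q(Ly)\le m+1$, and conclude by contradiction with \Cref{lem:assL}(1). Your write-up is in fact slightly more careful, since you justify explicitly why $w$ may be taken real (the paper states this without comment).
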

\begin{proof}
	Assume that there are two points, $ y\ne y'$, such that both $ \exp(2\pi iLy)\in zH $ and $ \exp(2\pi iLy')\in zH $, then $ \exp(2\pi iL(y-y'))\in H $, that is, 
	\begin{equation*}
		v=L(y-y')=Au+2\pi k 
	\end{equation*}
	for some $ u\in\R^m $ and $ k\in\Z^{n} $, where $ A\in \Z^{n\times m}$. It follows that 
	\begin{equation*}
		\dim_{\Q}(v)\le\dim_{\Q}(Au)+1\le\dim_{\Q}(u)+1\le m+1\le n-d
	\end{equation*}
	in contradiction to part (1) of \Cref{lem:assL}.
\end{proof}
We can now prove \Cref{thm: non-periodicity2}.
\begin{proof}[Proof of \Cref{thm: non-periodicity2}]
 Let $\Lambda=\Lambda(X,L)$.
	Let $ W=\mathrm{span}_{\Q}(y_{1},\ldots,y_{m})\subseteq\R^{d}  $ be an $ m $-dimensional $ \Q $-linear subspace, then $ \exp(2\pi iW)=\bar{G} $ is the division group of the group $ G\subseteq (\C^{*})^{n}  $ generated by $ \exp(2\pi iy_{1}),\ldots,\exp(2\pi iy_{m}) $. According to Lang's $G_{m}$ conjecture (i.e. \Cref{thm:langgm}), there are $R=R(W,X)$ algebraic-torus cosets in $X$, say $ z_{j}H_{j}\subseteq X $, such that $ y\in W\cap\Lambda \Rightarrow \exp(2\pi iy)\in\cup_{j=1}^{R} z_{j}H_{j}  $. That is $W\cap\Lambda\subseteq\cup_{j=1}^{M}\Lambda_{j}$ where $\Lambda_{j}$ is the set of points $y\in\Lambda $ with $ \exp(2\pi iy)\in z_{j}H_{j}  $. By assumption, $ X $ has dimension $n-d$ and no irreducible component which is an algebraic-torus cosets, so the algebraic-torus cosets contained in $X$ must have lower dimension, $ \dim(z_{j}H_{j})<n-d $ for all $ j $. It follows from \Cref{lem: algebraic-torus cosets} that each $\Lambda_{j}$ contains at most one point, so $ W\cap\Lambda $ contains at most $ R $ points. \Cref{thm:langgm} also provides a bound on $R$ in terms of $m$ and $D$ the total degree of $X$, $R\le r^{m+1} $, with $ r=e^{(6D\binom{n+D}{D})^{(5D\binom{n+D}{D})}}$.
\end{proof}

\begin{cor}[Non-periodicity] \label{cor-np}
    Let $X\subseteq(\pp^1)^n$ be a strict Lee--Yang variety of dimension $c=n-d<n$ such that no irreducible component of $X$ is an algebraic-torus coset.
    Let $L$ be a real $n\times d$ matrix all of whose $d\times d$ minors are positive and linearly independent over $\Q$. Then the following holds:
    \begin{enumerate}
        \item $\Lambda(X,L)\subseteq\R^{d}$ intersects every lattice in at most $r^{d+1}$ points, and every discrete periodic set in finitely many points.
        \item $\Lambda(X,L)\subseteq\R^{d}$ intersects every set obtained by the ``cut and project'' method from a lattice in $\R^{d}\times\R^{N}$, for some $N\in\N$, in at most $r^{d+N+1}$ points.
    \end{enumerate}
    Here $r$ is the constant introduced in \Cref{thm: non-periodicity2}.
\end{cor}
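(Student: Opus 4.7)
The plan is to derive both parts of the corollary directly from \Cref{thm: non-periodicity2}, which already gives the bound $|\Lambda(X,L)\cap W|\leq r^{m+1}$ for every $m$-dimensional $\Q$-linear subspace $W\subseteq\R^d$. All that remains in each case is to enclose the given discrete set in a $\Q$-vector subspace of $\R^d$ of explicit (and sufficiently small) $\Q$-dimension. Since $\exp(2\pi iLx)\in(\C^*)^n$, we may replace $X$ by $X\cap(\C^*)^n$ without changing $\Lambda(X,L)$, so the hypotheses of \Cref{thm: non-periodicity2} are in force.

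For part (1), let $\Lambda_0\subseteq\R^d$ be a lattice. A $\Z$-basis of $\Lambda_0$ is $\R$-linearly independent and hence also $\Q$-linearly independent, so $W:=\mathrm{span}_\Q(\Lambda_0)\subseteq\R^d$ has $\dim_\Q W\leq d$. \Cref{thm: non-periodicity2} then gives $|\Lambda(X,L)\cap\Lambda_0|\leq|\Lambda(X,L)\cap W|\leq r^{d+1}$, which is the first assertion. For a general discrete periodic set $S\subseteq\R^d$, let $\Lambda_0$ be a lattice of periods of $S$. Since $S$ is discrete in $\R^d$ and invariant under $\Lambda_0$, the image of $S$ in the compact torus $\R^d/\Lambda_0$ is a finite set, so $S=\bigsqcup_{i=1}^k(a_i+\Lambda_0)$ for finitely many cosets. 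Enclosing $S$ in the finite-dimensional $\Q$-subspace $W=\mathrm{span}_\Q(\{a_1,\ldots,a_k\}\cup\Lambda_0)$ of $\R^d$ and applying \Cref{thm: non-periodicity2} yields $|\Lambda(X,L)\cap S|\leq r^{k+d+1}<\infty$.

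For part (2), a set obtained by the cut-and-project construction from a lattice $\widetilde\Lambda\subseteq\R^d\times\R^N$ and a window $K\subseteq\R^N$ is by definition contained in the projection $\pi_1(\widetilde\Lambda)\subseteq\R^d$ onto the first factor. Since $\widetilde\Lambda$ has $\Z$-rank at most $d+N$, so does its image $\pi_1(\widetilde\Lambda)$, and therefore the $\Q$-span $W=\mathrm{span}_\Q(\pi_1(\widetilde\Lambda))\subseteq\R^d$ satisfies $\dim_\Q W\leq d+N$ by the same $\R$-independent-implies-$\Q$-independent observation used in part (1). \Cref{thm: non-periodicity2} then yields the bound $r^{(d+N)+1}$.

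I do not anticipate a serious obstacle: the corollary is essentially a repackaging of the single $\Q$-linear statement of \Cref{thm: non-periodicity2} into three standard geometric languages. The only genuine verification is the decomposition of a discrete periodic set into finitely many cosets of its period lattice, which is immediate from discreteness of $S$ and compactness of $\R^d/\Lambda_0$.
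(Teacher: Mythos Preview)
Your argument is correct and follows essentially the same route as the paper: both proofs reduce directly to \Cref{thm: non-periodicity2} by bounding the $\Q$-dimension of the span of the relevant set (lattice, coset union, or projected lattice). One small remark: in part~(2) the phrase ``by the same $\R$-independent-implies-$\Q$-independent observation'' is misplaced, since the projected generators of $\pi_1(\widetilde\Lambda)$ are $d+N$ vectors in $\R^d$ and hence typically $\R$-dependent; but the conclusion $\dim_\Q W\leq d+N$ holds for the trivial reason that any $d+N$ vectors span a $\Q$-space of dimension at most $d+N$.
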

\begin{proof}
	The $\Q$-span of a lattice in $\R^{d}$ has dimension $d$ as a $\Q$-vector space. Thus by \Cref{thm: non-periodicity2} it can intersect $\Lambda(X,L)$ in at most $r^{d+1}$ points. By the same argument $\Lambda(X,L)$ intersects the translate of a lattice in at most $r^{d+2}$ points. 
    Because every discrete periodic set is a finite union of translates of lattices, this proves part (1). 
	
	For the same reason, every projection of a lattice from $\R^{d+N}$ to $\R^{d}$ will have $\dim_{\Q}\le N+d$. Now recall that a set $ A\subseteq\R^{d} $ is called a ``cut and project'' set, if there is a lattice $\cL\subseteq \R^{d}\times\R^{N}$ for some $N\in\N$ and a bounded set $O\subseteq\R^{N} $ with non-empty interior such that $A=\{x\in\R^{d}\mid\exists y\in O\ \text{s.t.}\  (x,y)\in\cL\}$. This construction implies that $ \dim_{\Q}(A)\le \dim_{\Q}(\cL)\le d+N $.  Now the claim follows from \Cref{thm: non-periodicity2}.
\end{proof}

\section{Proof of \Cref{thm:main1} and \Cref{thm:main2}}\label{sec:measuresandrestrictions}
In this section, we will prove \Cref{thm:main1} and \Cref{thm:main2}. We fix a strict Lee--Yang variety $X\subseteq(\pp^1)^n$ of dimension $c=n-d$ and multidegree $d({X})=(d_J)_{J\in\binom{[n]}{d}}$ and a matrix $L\in\R^{n\times d}$ with positive $d\times d$ minors. We let $\Lambda=\Lambda(X,L)$. As in \Cref{sec:trans} we consider
\begin{equation*}
    \logX=\{y\in\R^{n}\mid \exp(2\pi iy)\in X\}
\end{equation*}
which is a $\Z^{n}$-periodic real analytic manifold of dimension $c=n-d$.  
\begin{Def}($\bm_{I}$)
	For any $I=\{i_{1},\ldots,i_{c}\}$, $1\leq i_1<\cdots<i_c\leq n$, define the (periodic) measure $\bm_{I}$ on $\logX$ given by the $c$-dimensional volume in the $I$ coordinates $d\bm_{I}(x)=|dx_{i_{1}}\wedge\ldots\wedge dx_{i_{c}}|$, and its Fourier transform $\widehat{\bm}_{I}\colon\Z^{n}\to\C$,
	\[\widehat{\bm}_{I}(k)=\int_{\logX\cap[0,1]^{n}}e^{-2\pi i\langle x,k\rangle}d\bm_{I}(x).\]
\end{Def}
By \Cref{rem:definemI} there is an orientation of $\logX$ such that the measure $\bm_I$ can be computed by integrating against the differential form $s(I)\cdot dx_{i_1}\wedge\cdots\wedge dx_{i_c}$. Moreover, the Fourier transform $\widehat{\bm}_{I}(k)$ can be expressed as
\begin{equation}\label{eq:ftonx}
    \widehat{\bm}_{I}(k)=\int_{\logX\cap[0,1]^{n}}e^{-2\pi i\langle x,k\rangle}d\bm_{I}(x)=\frac{s(I)}{(2\pi i)^c} \int_{X(\T)}z^{-k}\frac{dz_{i_{1}}\wedge\ldots\wedge dz_{i_{c}}}{z_{i_{1}}\cdot\ldots\cdot z_{i_{c}}}
\end{equation}
with the orientation on $X(\T)$ as in \Cref{rem:definemI}.
\begin{lem}\label{lem: FT of MI}
	The support of $\widehat\bm_{I}$ is contained in $\Z^{n}_{\var<d}=\{k\in\Z^{n}\mid \var(k)<d\}$, and
$\widehat{\bm}_{I}(0)=d_{[n]\setminus I}$. 
\end{lem}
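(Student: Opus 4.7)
The plan is to reduce both claims to two results already established in the paper: the vanishing theorem \Cref{thm:intvanish} and the integral formula \eqref{eq:degint} for the multidegree. The key observation is that, by \eqref{eq:ftonx}, the Fourier coefficient $\widehat{\bm}_I(k)$ is, up to the sign $s(I)/(2\pi i)^c$, exactly the type of integral these two results are about.

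For the support statement, I would take $k\in\Z^n$ with $\var(k)\geq d$ and apply \Cref{thm:intvanish} to the index set $I=\{i_1,\ldots,i_c\}$. That theorem gives
\[
\int_{X(\T)} z^{-k}\cdot \frac{dz_{i_1}\wedge\cdots\wedge dz_{i_c}}{z_{i_1}\cdots z_{i_c}}=0.
\]
Plugging this into \eqref{eq:ftonx} yields $\widehat{\bm}_I(k)=0$, so the support of $\widehat{\bm}_I$ is contained in $\Z^n_{\var<d}$.

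For the value at zero, I would specialize \eqref{eq:ftonx} to $k=0$, giving
\[
\widehat{\bm}_I(0)=\frac{s(I)}{(2\pi i)^c}\int_{X(\T)}\frac{dz_{i_1}\wedge\cdots\wedge dz_{i_c}}{z_{i_1}\cdots z_{i_c}},
\]
which is precisely $d_{[n]\setminus I}$ by the identity \eqref{eq:degint} derived from \Cref{lem:coveringdIsheets}.

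There is essentially no obstacle here — both halves are immediate consequences of results already proven. The only thing to double-check is that the orientation on $X(\T)$ used in \eqref{eq:ftonx} (coming from \Cref{rem:definemI}) matches the one used in \eqref{eq:degint}, which it does since both are defined in the same way in \Cref{rem:definemI}. So the proof is a two-line invocation of the prior machinery.
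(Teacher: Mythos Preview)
Your proposal is correct and essentially identical to the paper's own proof, which is the one-liner: ``By \Cref{eq:ftonx} the claim follows from \Cref{thm:intvanish} and \Cref{eq:degint}.'' You have simply unpacked this sentence, and your additional remark about orientation consistency is a reasonable sanity check but not strictly needed since both \eqref{eq:ftonx} and \eqref{eq:degint} explicitly reference the orientation from \Cref{rem:definemI}.
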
 
\begin{proof}
 By \Cref{eq:ftonx} the claim follows from \Cref{thm:intvanish} and \Cref{eq:degint}. 
\end{proof}

\begin{Def}
	For every Borel set $A\subseteq \logX$ we define:
	\begin{equation*}
		\bm_L(A):=\lim_{\epsilon\to 0}\frac{1}{\epsilon^d}\Vol_{n}\left(A+\epsilon L Q\right)
	\end{equation*}
	where $ \mathrm{Vol}_{n} $ is the Lebesgue measure in $ \R^{n} $, $Q=[0,1]^d\subseteq\R^{d}$  and $A+\epsilon LQ $ is the Minkowski sum of $A$ and $\epsilon LQ $ which is the re-scaled image of $Q$ under $L$. 
\end{Def}
\begin{lem}\label{lem: measL}
	The measure $\bm_{L}$ depends linearly on the $d\times d$ minors of $L$,
	\[\bm_{L}=\sum_{I\in\binom{[n]}{ d}}L_{I}\cdot\bm_{[n]\setminus I}.\]
	In particular, $\widehat{\bm}_{L}(k)=\sum_{I\in{\binom{[n]}{d}}}L_{I}\cdot\widehat{\bm}_{[n]\setminus I}(k)$ is supported inside $\Z^{n}_{\var<d}$, namely $ \widehat{\bm}_{L}(k)=0$ when $\var(k)\ge d$. Moreover, it is uniformly bounded by the zero coefficient 
	\[|\widehat{\bm}_{L}(k)|\le\widehat{\bm}_{L}(0)=\sum_{I\in{\binom{[n]}{d}}}L_{I}\cdot d_{I}.\]
\end{lem}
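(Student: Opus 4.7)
The plan is to establish the measure-theoretic identity $\bm_L=\sum_{J\in\binom{[n]}{d}}L_J\,\bm_{[n]\setminus J}$ first; the Fourier-side claims follow routinely from it.

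The heart of the argument is a local volume computation. Fix a point $y\in Y_{\rm sm}(\R)$ and parametrize a small patch $A\subseteq Y(\R)$ near $y$ by $\phi\colon U\to\R^n$ on an open set $U\subseteq\R^c$, writing $J_I(u)$ for the $c\times c$ minor of the Jacobian $D\phi(u)$ on rows $I\in\binom{[n]}{c}$. Consider the map
\[
\Phi_\epsilon\colon U\times Q\to\R^n,\qquad (u,t)\mapsto\phi(u)+\epsilon Lt.
\]
Transversality of $T_y Y(\R)$ and $L\R^d$, supplied by \Cref{cor:intsgr} together with part~(2) of \Cref{lem:tangentspace}, ensures that $\Phi_\epsilon$ is a diffeomorphism onto its image for all sufficiently small $\epsilon>0$, and that this image differs from $A+\epsilon LQ$ by a set of $n$-volume $o(\epsilon^d)$. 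The Jacobian determinant of $\Phi_\epsilon$ at $(u,t)$ is $\epsilon^d\det[D\phi(u)\mid L]$, and Laplace expansion of this $n\times n$ determinant along the first $c$ columns gives
\[
\det[D\phi(u)\mid L]=\sum_{I\in\binom{[n]}{c}}s(I)\,J_I(u)\,L_{[n]\setminus I}.
\]
By the orientation of $Y(\R)$ fixed in \Cref{rem:definemI}, each product $s(I)J_I(u)$ equals $|J_I(u)|\geq 0$, and each $L_{[n]\setminus I}$ is positive by hypothesis, so all summands are nonnegative and the expression equals $|\det[D\phi(u)\mid L]|$. Integrating over $U$ and dividing by $\epsilon^d$, the change-of-variables formula yields $\bm_L(A)=\sum_{I\in\binom{[n]}{c}}L_{[n]\setminus I}\,\bm_I(A)$; after reindexing $J=[n]\setminus I$ this is the claimed identity for small patches.

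To upgrade from patches to arbitrary Borel subsets of $Y(\R)\cap[0,1]^n$, I would use a partition of unity on this compact set subordinate to a cover by such coordinate patches in $Y_{\rm sm}(\R)$, invoking \Cref{prop:smoothdense} to conclude that the singular locus of $Y(\R)$ has $c$-dimensional measure zero and therefore contributes nothing to either side.

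The remaining claims are short. Linearity gives $\widehat{\bm}_L(k)=\sum_{I\in\binom{[n]}{d}}L_I\,\widehat{\bm}_{[n]\setminus I}(k)$ at once, and \Cref{lem: FT of MI} supplies both the support statement $\mathrm{supp}(\widehat{\bm}_L)\subseteq\Z^n_{\var<d}$ and the identity $\widehat{\bm}_{[n]\setminus I}(0)=d_I$, hence $\widehat{\bm}_L(0)=\sum_I L_I d_I$. Since $\bm_L$ is by construction nonnegative (a limit of nonnegative Lebesgue volumes), the uniform bound $|\widehat{\bm}_L(k)|\leq\bm_L(\logX\cap[0,1]^n)=\widehat{\bm}_L(0)$ is immediate.

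The main technical obstacle lies in the local volume computation, specifically in verifying that $A+\epsilon LQ$ and the diffeomorphic image $\Phi_\epsilon(U\times Q)$ differ by sets of $n$-volume $o(\epsilon^d)$ uniformly as patches are pieced together and as one approaches the singular locus. Once transversality and compactness are in hand this is routine, but some bookkeeping is needed to rule out a leading-order boundary contribution.
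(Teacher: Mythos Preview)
Your proposal is correct and follows essentially the same approach as the paper: parametrize a patch of $Y(\R)$, write $A+\epsilon LQ$ as the image of $(u,t)\mapsto\phi(u)+\epsilon Lt$, compute the Jacobian by Laplace expansion along the first $c$ columns, and use the orientation from \Cref{rem:definemI} to identify each summand as $L_{[n]\setminus I}\,d\bm_I$. Two minor simplifications are available: since $X$ is assumed to be a \emph{strict} Lee--Yang variety in this section, $Y(\R)$ is entirely smooth and no appeal to \Cref{prop:smoothdense} is needed; and $\Phi_\epsilon(U\times Q)$ equals $A+\epsilon LQ$ exactly as a set, so there is no $o(\epsilon^d)$ discrepancy to track---only injectivity of $\Phi_\epsilon$ for small $\epsilon$, which transversality provides.
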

\begin{proof}
	If we assume that $\bm_{L}=\sum_{I\in\binom{[n]}{d}}L_{I}\bm_{[n]\setminus I}$ then we can use
	the triangle inequality to get $|\bm_{[n]\setminus I}(k)|\le \int_{\logX\cap[0,1]^{n}}d\bm_{[n]\setminus I}=\widehat{\bm}_{[n]\setminus I}(0)=d_{I}  $ where we used \Cref{lem: FT of MI} in the last equality, from which we get 
	\[|\widehat{\bm}_{L}(k)|\le\widehat{\bm}_{L}(0)=\sum_{I\in{\binom{[n]}{d}}}L_{I}d_{I}.\]
	To show that $\bm_{L}=\sum_{I\in\binom{[n]}{ d}}L_{I}\bm_{[n]\setminus I}$, we need to calculate the density $ d\bm_{L}(x_{0}) $ at every $ x_{0}\in\logX $. Given $ I=\{i_{1},\ldots,i_{c}\} $ with $ 0<i_{1}<\ldots<i_{c}\le n $ consider the $ c $-form $ dx_{I}=dx_{i_{1}}\wedge\ldots,dx_{i_{c}}$ and the sign $ s(I)=(-1)^{\sum_{i\in I}i-\sum_{i\notin I}i} $ so that 
	\[d\bm_{I}(x)=s(I)dx_{I}.\]
	Let $A$ an open neighborhood of $x_0\in\logX$ such that there exists an open subset $O\subseteq\R^c$ and a smooth map
	\begin{equation*}
		\varphi\colon O\to \logX
	\end{equation*}
	parametrizing $A$ such that $\varphi(0)=x_0$. We denote the standard coordinates on $\R^c$ by $y=(y_{1},\ldots,y_c)$. The tangent space $T_{x_{0}}\logX$ is transversal to the image of $L$ by \Cref{lem:tangentspace}. Thus, after shrinking $A$ if necessary we can parameterize $A+\epsilon LQ$ by 
	\begin{equation*}
		\psi\colon O\times(\epsilon Q)\to A+\epsilon LQ,\, (y,z)\mapsto\varphi(y)+Lz
	\end{equation*}
	where $z=(z_{1},\ldots,z_{d})$ are the standard coordinates on $\R^d$. Using the notations $ dy=dy_{1}\wedge\ldots\wedge dy_{c} $ and $ dz=dz_{1}\wedge\ldots\wedge dz_{d} $ we can calculate the Jacobian of $\psi$,  
	\begin{align*}
		\det(D\psi(y,z))dy\wedge dz = & \det\begin{pmatrix}
			D\varphi(y) & L
		\end{pmatrix}dy\wedge dz\\
	=& \sum_{I\in{ \binom{[n]}{c}}}s([n]\setminus I) L_{[n]\setminus I}\det\left(\frac{\partial (\varphi_{i_{1}},\ldots,\varphi_{i_{c}}) }{\partial(y_{1},\ldots,y_{c})}\right)dy\wedge dz\\
	 =& \sum_{I\in{ \binom{[n]}{c}}}s([n]\setminus I) L_{[n]\setminus I}dx_{I}\wedge dz,
	\end{align*}
where in the second line we use the Laplace expansion along the rows corresponding to $[c]$, and in the last line we used the chain rule for forms. Since the Jacobian is independent of $ z $, by taking $ \epsilon>0 $ small enough so that $ \psi $ is injective we get 
	\begin{align*}
		\epsilon^{-d}\Vol_{n}(A+\epsilon LQ) & =\epsilon^{-d}\int_{z\in \epsilon Q}\int_{x\in A}\sum_{I\in{ \binom{[n]}{c}}}s([n]\setminus I) L_{[n]\setminus I}dx_{I}\wedge dz\\
		= & \int_{x\in A}\sum_{I\in{ \binom{[n]}{c}}}s([n]\setminus I) L_{[n]\setminus I}dx_{I}= \sum_{I\in{ \binom{[n]}{c}}} L_{[n]\setminus I}\bm_{I}(A).		\qedhere
	\end{align*}
\end{proof}  

Let us now prove that the needed summation formula holds, from which we will be able to conclude \Cref{thm:main1} and \Cref{thm:main2}.
\begin{prop}[Summation formula]\label{prop: summation formula} 
	For every $f\in\mathcal{S}(\R^{d})$, 
	\[\sum_{x\in\Lambda}\hat{f}(x)=\sum_{k\in\Z^{n}_{\var<d}}\widehat{\bm}_{L}(k)f(L^{t}k).\]
\end{prop}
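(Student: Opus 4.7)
I will apply Poisson summation to the $\Z^n$-periodic measure $\bm_L$ on $\R^n$, test it against a family of Schwartz functions $g_\epsilon$ that concentrate near $V=L\R^d$ in the direction of the orthogonal complement $V^\perp$, and extract the desired identity by letting $\epsilon\to 0$. Since $\bm_L$ is a locally finite positive $\Z^n$-periodic measure, standard Poisson summation on $\T^n$ (expand $\sum_{k\in\Z^n}g(\,\cdot\,+k)$ in Fourier series and integrate against $\bm_L$ restricted to a fundamental domain) gives, for any $g\in\mathcal{S}(\R^n)$,
\[
\int_{\R^n} g\,d\bm_L \;=\; \sum_{k\in\Z^n}\hat g(-k)\,\widehat{\bm}_L(k)\;=\;\sum_{k\in\Z^n_{\var<d}}\hat g(-k)\,\widehat{\bm}_L(k),
\]
where the restriction of the sum uses $\supp(\widehat{\bm}_L)\subseteq\Z^n_{\var<d}$ from \Cref{lem: measL}.

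\textbf{Construction of $g_\epsilon$.} Let $\pi_V,\pi_{V^\perp}$ be the orthogonal projections onto $V$ and $V^\perp$, and $L^+\colon V\to\R^d$ the inverse of $L\colon\R^d\to V$. Fix a nonnegative $\psi\in C_c^\infty(V^\perp)$ with $\int\psi\,ds=1$ and compact support in the unit ball, and set $\psi_\epsilon(s)=\epsilon^{d-n}\psi(s/\epsilon)$. Define
\[
g_\epsilon(\xi)\;:=\;\hat f\bigl(L^+\pi_V(\xi)\bigr)\,\psi_\epsilon\bigl(\pi_{V^\perp}(\xi)\bigr),\qquad \xi\in\R^n,
\]
which is Schwartz. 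Using the linear isomorphism $\Phi(x,s)=Lx+s\colon\R^d\times V^\perp\to\R^n$ with Jacobian $\sqrt{\det(L^tL)}$ and Fourier inversion $\hat{\hat f}(x)=f(-x)$, a direct computation gives
\[
\hat g_\epsilon(\eta)\;=\;\sqrt{\det(L^tL)}\cdot f(-L^t\eta)\cdot \hat\psi\bigl(\epsilon\,\pi_{V^\perp}(\eta)\bigr).
\]
For each $k\in\Z^n_{\var<d}$, $\hat g_\epsilon(-k)\to\sqrt{\det(L^tL)}\,f(L^tk)$ as $\epsilon\to 0$ and is uniformly bounded by $\sqrt{\det(L^tL)}\|\hat\psi\|_\infty|f(L^tk)|$. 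Since $|\widehat{\bm}_L(k)|\leq\widehat{\bm}_L(0)$ by \Cref{lem: measL}, the polynomial-growth count $|\{L^tk:k\in\Z^n_{\var<d}\}\cap[-R,R]^d|=O(R^n)$ of \Cref{cor:imagediscrete} combined with the Schwartz decay of $f$ furnishes an integrable dominating sum, so dominated convergence yields
\[
\lim_{\epsilon\to 0}\int g_\epsilon\,d\bm_L \;=\; \sqrt{\det(L^tL)}\sum_{k\in\Z^n_{\var<d}} f(L^tk)\,\widehat{\bm}_L(k).
\]

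\textbf{Direct evaluation of the limit.} On the other hand, a direct computation from the definition of $\bm_L$ shows that the pullback $\Phi^*\bm_L$ equals $\sqrt{\det(L^tL)}\,ds$ along each local graph-sheet of $\Phi^{-1}(Y(\R))$. Since $(g_\epsilon\circ\Phi)(x,s)=\hat f(x)\psi_\epsilon(s)$ is supported in $\R^d\times\{|s|\leq\epsilon\}$, \Cref{lem: transversality} parametrizes the sheets of $\Phi^{-1}(Y(\R))$ intersecting this support by uniformly $C$-Lipschitz maps $\varphi_i$ whose values at $s=0$ enumerate $\Lambda$ bijectively. This gives
\[
\int g_\epsilon\,d\bm_L \;=\; \sqrt{\det(L^tL)}\sum_i \int_{V^\perp}\hat f(\varphi_i(s))\,\psi_\epsilon(s)\,ds,
\]
and each inner integral tends to $\hat f(\varphi_i(0))$ as $\epsilon\to 0$. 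The uniform Lipschitz bound and the Schwartz decay of $\hat f$ give $|\hat f(\varphi_i(s))|\leq C_N(1+|\varphi_i(0)|)^{-N}$ on $\supp(\psi_\epsilon)$; together with the Delone property of $\Lambda$ (which yields $\sum_{x\in\Lambda}(1+|x|)^{-N}<\infty$ for $N>d$), dominated convergence gives
\[
\lim_{\epsilon\to 0}\int g_\epsilon\,d\bm_L \;=\; \sqrt{\det(L^tL)}\sum_{x\in\Lambda}\hat f(x).
\]
Equating the two limits and canceling $\sqrt{\det(L^tL)}$ yields the claimed summation formula. The main subtlety is this last step: one must verify that for small $\epsilon$ the Lipschitz sheets from \Cref{lem: transversality} account for every sheet of $\Phi^{-1}(Y(\R))$ intersecting $\supp(g_\epsilon\circ\Phi)$ and pair bijectively with $\Lambda$, so that the sum over sheets is uniformly dominated.
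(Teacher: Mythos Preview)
Your argument is correct and provides a clean alternative to the paper's proof. Both proofs pass through an $\epsilon$-family and take a limit, but they concentrate in complementary directions: the paper mollifies $\bm_L$ by a smooth $\Z^n$-periodic density supported in a tube around $Y(\R)$ in the $V$-direction, so that $g_\epsilon(Lx)$ approximates the counting measure of $\Lambda$ and one can read off the Fourier series of $g_\epsilon$ termwise; you instead keep $\bm_L$ intact and choose non-periodic Schwartz test functions $g_\epsilon$ supported in a tube around $V$ in the $V^\perp$-direction, applying Poisson summation for periodic measures against Schwartz functions. Your route makes the Fourier side pleasantly explicit (the transform of $g_\epsilon$ factors), at the cost of having to evaluate $\int g_\epsilon\,d\bm_L$ directly via the sheet structure over $V^\perp$.

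The one place where you should add a line is the assertion that $\Phi^*\bm_L=\sqrt{\det(L^tL)}\,ds$ on each sheet. This follows by the same computation as in the proof of \Cref{lem: measL}: parametrize a sheet by $s\mapsto L\varphi_i(s)+s$, form the $n\times n$ matrix whose columns are the $c$ tangent vectors together with the columns of $L$, and use column operations to eliminate the $L\partial_s\varphi_i$ terms; the resulting determinant is that of an orthonormal basis of $V^\perp$ adjoined to the columns of $L$, which has absolute value $\sqrt{\det(L^tL)}$. With that in hand, your identification of the sheets with $\Lambda$ via \Cref{lem:inversecont} and the uniform Lipschitz bound from \Cref{lem: transversality} (restricted to $V^\perp$, where they coincide with $L^+\pi_V\psi_i$) justify the dominated-convergence step on the left, and the argument is complete.
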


\begin{proof}
\Cref{thm:deloneiff} shows that $\Lambda$ is Delone, so the left sum is absolutely converging. Using \Cref{cor:imagediscrete}, \Cref{lem:bounded}, and $|\widehat{\bm}_{L}(k)|\le \widehat{\bm}_{L}(0)$, the right sum is also absolutely converging\footnote{We elaborate on that and show that $\sum_{k\in\Z^{n}_{\var<d};|L^{t}k|<R}|\widehat{\bm}_{L}(k)|=O(R^{n})$ in the proof of \Cref{thm:main2} which follows}. To prove the summation formula we approximate $ \bm_{L} $ and the counting measure of $ \Lambda $ with smooth compactly supported functions. We consider $ \bm_{L} $ as a singular measure on $ \R^{n}$ that is supported on $\logX$, i.e. $ \bm_{L}(A):=\bm_{L}(A\cap\logX) $ for every Borel set $A\subseteq\R^{n}$. For small enough $ \epsilon>0 $ let $ g_{\epsilon}\colon\R^{n}\to [0,\epsilon^{-d}] $ be a smooth $ \Z^{n} $-periodic function supported in an $ \epsilon^2 $ neighborhood of $ Y(\R)+\epsilon LQ $ that is equal to $ \epsilon^{-d} $ on $Y(\R)+\epsilon LQ $. Then $ g_{\epsilon}d\Vol_{n} $ converges in the vague topology to $ \bm_{L} $, i.e. for every continuous and compactly supported $ h\colon\R^{n}\to \R $,
	 \[\lim_{\epsilon\to 0} \int_{\R^{n}} h g_{\epsilon}d\Vol_{n}=\int_{\logX}hd\bm_{L}.  \]
	Since $ g_{\epsilon} $ is $\Z^{n}$-periodic it has a Fourier series $ g_{\epsilon}(y)=\sum_{k\in\Z^{n}}\hat{g}_{\epsilon}(k)e^{2\pi i\langle k,y\rangle} $ with $ \hat{g}_{\epsilon}(k)=\int_{[0,1]^{n}}e^{-2\pi i\langle x, k\rangle}g_{\epsilon}(x)d\Vol_{n}$, and since $g_{\epsilon}$ is smooth, $\hat{g}_{\epsilon}(k)$ is fast decaying as $|k|\to\infty$ for fixed $\epsilon$. On the other hand, for every fixed $k\in\Z^{n}$,
 \begin{equation}\label{eq:equationtowhichmariohasacomment}
	\lim_{\epsilon\to 0}\hat{g}_{\epsilon}(k)=\widehat{\bm}_{L}(k).    
	\end{equation}
	Similarly, the function $ x\mapsto g_{\epsilon}(Lx) $ is supported in an $ O(\epsilon^{2}) $ neighborhood of $ \Lambda+\epsilon Q $ and is equal to $ \epsilon^{-d}=\Vol_{d}(\epsilon Q) $ on $ \Lambda+\epsilon Q $. So for every $ f\in\mathcal{S}(\R^{d}) $, 
	\[\lim_{\epsilon\to 0} \int_{\R^{d}} f(x) g_{\epsilon}(Lx)dx=\sum_{x\in\Lambda}f(x).  \]
	The Fourier series of $ x\mapsto g_{\epsilon}(Lx) $ is induced from that of $g_{\epsilon}$, 
	\[g_{\epsilon}(Lx)=\sum_{k\in\Z^{n}}\hat{g}_{\epsilon}(k)e^{2\pi i\langle k,Lx\rangle}=\sum_{k\in\Z^{n}}\hat{g}_{\epsilon}(k)e^{2\pi i\langle L^{t}k,x\rangle}. \]
	For fixed $ \epsilon>0 $ and for every $ f\in\cS(\R^{d}) $ we have 
 \begin{align*}
 	\int_{\R^{d}} \hat{f}(x) g_{\epsilon}(Lx)dx= & \int_{\R^{d}} \hat{f}(x)\sum_{k\in\Z^{n}}\hat{g}_{\epsilon}(k)e^{2\pi i\langle L^{t}k,x\rangle}dx\\
 	= & \sum_{k\in\Z^{n}}\hat{g}_{\epsilon}(k)\int_{\R^{d}} \hat{f}(x)e^{2\pi i\langle L^{t}k,x\rangle}dx\\
 	 = & \sum_{k\in\Z^{n}}\hat{g}_{\epsilon}(k)f(L^{t}k),
 \end{align*}
where we used that $ f\in\cS(\R^{d}) $ and $ \hat{g}_{\epsilon}(k) $ is fast decaying in $k$ to interchange summation and integration. Using the dominant converging theorem, and $ \lim_{\epsilon\to 0}\hat{g}_{\epsilon}(k)=\widehat{\bm}_{L}(k) $, we can take the limit, 
	\[\sum_{k\in\Z^{n}}\widehat{\bm}_{L}(k)f(L^{t}k)=\lim_{\epsilon\to 0}\sum_{k\in\Z^{n}}\hat{g}_{\epsilon}(k)f(L^{t}k)=\lim_{\epsilon\to 0}\int_{\R^{d}} \hat{f}(x) g_{\epsilon}(Lx)dx= \sum_{x\in\Lambda}\hat{f}(x). \qedhere  \]
\end{proof}
\begin{rem}
	It is worth noting that $ \widehat{\bm}_{L} $ is in $\ell_{\infty}(\Z^{n})$ but, as seen in the proof above, not in $ \ell_2(\Z^{n}) $. By Parseval's theorem, $ \sum_{k\in\Z^{n}}|\hat{g}_{\epsilon}(k)|^{2}=\int_{[0,1]^{n}}|g_{\epsilon}|^{2}d\Vol_{n}\approx \epsilon^{-d}\widehat{\bm}_{L}(0)\to \infty$ as $\epsilon\to 0$. 
	
\end{rem}
We are now in position to prove \Cref{thm:main1} and \Cref{thm:main2}

\begin{proof}[Proof of \Cref{thm:main1}]
    The set $\Lambda$ is real by \Cref{prop:RR}. The set $\Lambda$ is Delone by \Cref{thm:deloneiff}. \Cref{cor:imagediscrete} states that the set 
    	\begin{equation}\label{eq:lprime}
		\Lambda'=\{L^{t}k\mid k\in\Z^{n},\ \var(k)<d\}
		\end{equation} 
  is discrete. Moreover, \Cref{cor:imagediscrete} shows that, for every $\xi\in\Lambda'$, the set of all $k\in\Z^{n}_{\var<d}$ with $L^{t}k=\xi$ is finite. Therefore, we can define the coefficient
  \begin{equation}\label{eq:cxi}
      c_{\xi}=\sum_{k\in\Z^{n}_{\var<d},\ L^{t}k=\xi}\widehat{\bm}_{L}(k)
  \end{equation}
   for every $\xi\in\Lambda'$. Now \Cref{prop: summation formula} implies that  for every Schwartz function $f\in\mathcal{S}(\R^{d})$, we have
	\begin{equation*}
		\sum_{x\in\Lambda}\hat{f}(x)=  \sum_{\xi\in\Lambda'}c_{\xi}f(\xi).
	\end{equation*}
  This proves that $\Lambda$ has property (1) of \Cref{def:FQ} of a Fourier quasicrystal.
  It remains to prove the polynomial growth bounds from part (2) of \Cref{def:FQ}. Because $\Lambda$ is Delone, we have $|\Lambda\cap B_{R}(0)|=O(R^{d})$. 
  Furthermore, using that $|\widehat{\bm}_{L}(k)|\le\int|e^{-2\pi i\langle x,k\rangle}|d\bm_{L}(x)=\widehat{\bm}_{L}(0)$, \Cref{cor:imagediscrete} gives 
	 \begin{equation*}
	 \sum_{\xi\in\Lambda'\cap B_R(0)}|c_{\xi}|\le \widehat{\bm}_{L}(0)\cdot|\{k\in\Z^{n}_{\var<d}\mid |L^{t}k|<R\}| =O(R^{n}).   
	 \end{equation*}
  This proves that $\Lambda$ is a Fourier quasicrystal with spectrum contained in $\Lambda'$.
\end{proof}

\begin{proof}[Proof of \Cref{thm:main2}]
    The set $\Lambda$ is Bohr almost periodic by \Cref{lem:almostperiodic}.
    The set $\Lambda'\subseteq\R^d$ and the coefficients $c_\xi\in\C$ for $\xi\in\Lambda'$ are given as above in \Cref{eq:lprime} and \Cref{eq:cxi}, respectively.
    \Cref{lem:bounded} implies that $k=0$ is the only solution to $L^{t}k=0$ with $k\in\Z^{n}_{\var<d} $, so $c_{0}=\widehat{\bm}_{L}(0)$. Therefore, \Cref{lem: measL} shows that
    \begin{equation*}
        c_0=\sum_{I\in{\binom{[n]}{d}}}L_{I}\cdot d_{I}.
    \end{equation*}
     The remaining statement of part (1) in \Cref{thm:main2}, namely that $c_0$ is the density of $\Lambda$, will be shown in \Cref{thm:autocorr} using that $\Lambda$ is a Fourier quasicrystal by \Cref{thm:main1}. For part (2) note that by definition of $\Lambda'$ it is clear that $\dim_{\Q}(\Lambda')\leq n$. The statement on its growth rate follows from \Cref{cor:imagediscrete}. For part (3) we let $c_{L,k}=\widehat{\bm}_{L}(k)$ for every $k\in\Z^n$. \Cref{eq:cxi} and \Cref{lem: measL} imply the formula for $c_{\xi}$; the sum in question being finite was already observed in the proof of \Cref{thm:main1} above. The stated dependence of $c_{L,k}$ on $L$ also follows \Cref{lem: measL}. Finally
     \begin{equation*}
         |c_{L,k}|=|\widehat{\bm}_{L}(k)|\le\int|e^{-2\pi i\langle x,k\rangle}|d\bm_{L}(x)=\widehat{\bm}_{L}(0)=c_0
     \end{equation*}
     gives the desired bound.
\end{proof}

\section{Reduction to strict Lee--Yang varieties}\label{sec:reduct}
{This section is motivated by \cite{AlonCohenVinzant}, whose main result can be stated as follows: If $X\subseteq(\C^*)^n$ is an algebraic hypersurface, i.e. of pure codimension one, and $\ell\in\R^n$ such that $\Lambda(X,\ell)$ is real, then there is a Lee--Yang hypersurface $\tilde{X}$ and a vector $\tilde{\ell}$ with positive entries, such that $\Lambda(X,\ell)=\Lambda(\tilde{X},\tilde{\ell}) $. 
The main result of this section, \Cref{thm:coordchange}, is of similar flavor. Namely, it gives sufficient criteria on an algebraic variety $X$ and a matrix $L$ under which there is a strict Lee--Yang variety $\tilde{X}$ and a matrix $\tilde{L}$ with positive maximal minors such that $\Lambda(X,L)=\Lambda(\tilde{X},\tilde{L})$. Before we formulate the precise statement, we prove some preparatory lemmas.}

\begin{lem}\label{lem:maptoopen}
 Let $U\subseteq{\Gr}(d,n)$ be a nonempty open subset. 
 Let $L\in\R^{n\times d}$ be a real $n\times d$ matrix whose range is in $U$,  $L\R^{d}\in U$. 
 There exists a non-singular $n\times n$ matrix $S$ with entries in $\Z$ and the following two properties:
 \begin{enumerate}
     \item The range of $SL$ is in $\Gr_{+}(d,n)$.
     \item The range of $S^{-1}L'$ is in $U$ for every $n\times d$ matrix $L'$ whose range is in $\Gr_{\geq}(d,n)$.
 \end{enumerate}
\end{lem}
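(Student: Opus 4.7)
The plan is to first construct a real matrix $M\in\GL_n(\R)$ satisfying the analogues of (1) and (2) (with $M$ and $M^{-1}$ in place of $S$ and $S^{-1}$), and then obtain $S$ by a density-plus-scaling argument.

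To build $M$, I would fix any $W_0\in\Gr_+(d,n)$ and set $W_0':=W_0^\perp\in\Gr(n-d,n)$. By \Cref{cor:intsgr}, $V\cap W_0'=\{0\}$ for every $V\in\Gr_{\geq}(d,n)$; in particular $\R^n=W_0\oplus W_0'$. Let $A$ be the projection of $\R^n$ onto $W_0$ along this decomposition. Since $A^2=A$, one has $\exp(tA)=I+(e^t-1)A$, which acts as multiplication by $e^t$ on $W_0$ and as the identity on $W_0'$. For any $V\in\Gr(d,n)$ transverse to $W_0'$, writing $V$ as the graph of a linear map $f\colon W_0\to W_0'$, a direct computation yields $\exp(tA)V=\{u+e^{-t}f(u):u\in W_0\}$, which converges to $W_0$ as $t\to\infty$. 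Since $\Gr_{\geq}(d,n)$ is compact, this convergence is uniform on $\Gr_{\geq}(d,n)$, so for any prescribed open neighborhood $N$ of $W_0$ one may pick $t>0$ with $\exp(tA)\Gr_{\geq}(d,n)\subseteq N$.

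Next, choose $T\in\GL_n(\R)$ with $TW_0=L\R^d$, which exists because $\GL_n(\R)$ acts transitively on $\Gr(d,n)$, and set $M:=\exp(-tA)T^{-1}$, so that $M^{-1}=T\exp(tA)$. Since $W_0$ is an invariant subspace of $A$, we get $ML\R^d=\exp(-tA)W_0=W_0\in\Gr_+(d,n)$, yielding the analogue of (1). For (2), note that $M^{-1}\Gr_{\geq}(d,n)=T\bigl(\exp(tA)\Gr_{\geq}(d,n)\bigr)\subseteq T\cdot N$, and since $T$ is a homeomorphism of $\Gr(d,n)$ with $TW_0=L\R^d\in U$, shrinking $N$ in advance ensures $T\cdot N\subseteq U$.

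It remains to pass from the real matrix $M$ to an integer matrix $S$. Both conditions on $M$ are open: (1) because the $d\times d$ minors of $ML$ depend continuously on $M$ and are strictly positive, and (2) because the map $(M,V)\mapsto M^{-1}V$ is continuous, $\Gr_{\geq}(d,n)$ is compact and $U$ is open. Using density of $\GL_n(\Q)$ in $\GL_n(\R)$, I can replace $M$ by some $M'\in\GL_n(\Q)$ still satisfying both conditions. Finally, letting $N_0$ be a positive integer common denominator of the entries of $M'$, the matrix $S:=N_0M'\in\Mat_n(\Z)$ is non-singular; since scaling by $N_0>0$ does not change the column spans of $SL=N_0M'L$ or $S^{-1}L'=N_0^{-1}(M')^{-1}L'$, and multiplies the $d\times d$ minors of $SL$ by $N_0^d>0$, both properties are inherited by $S$. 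The main technical step is the construction of $\exp(tA)$, which contracts $\Gr_{\geq}(d,n)$ to $W_0$; this relies crucially on \Cref{cor:intsgr} to guarantee that $\Gr_{\geq}(d,n)$ is disjoint from the complementary subspace $W_0'$ of the attractor.
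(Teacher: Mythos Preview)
Your argument is correct and takes a genuinely different route from the paper's.

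The paper works in Pl\"ucker coordinates: after normalising $L\R^d$ to $\mathrm{span}(e_1,\ldots,e_d)$ and shrinking $U$ to a ball in the chart $\{\lambda_{[d]}\neq 0\}$, it notes that the preimage of $U$ in $\wedge^d\R^n$ is an open convex cone $C$, then writes down an explicit matrix (built from a totally positive $d\times n$ matrix and a small diagonal block) whose induced action on $\wedge^d\R^n$ sends each basis element $e_I$ into $C$; convexity then forces all of $\Gr_{\geq}(d,n)$ into $U$. Your route is dynamical: you fix $W_0\in\Gr_+(d,n)$, use \Cref{cor:intsgr} to see that $W_0^\perp$ is transverse to every $V\in\Gr_{\geq}(d,n)$, and run the one-parameter group generated by the projection onto $W_0$ along $W_0^\perp$; in the graphing chart over $W_0$ this acts as $f_V\mapsto e^{-t}f_V$, so compactness of $\Gr_{\geq}(d,n)$ (which gives a uniform bound on $\|f_V\|$) yields uniform contraction to $W_0$, and a conjugation by $T$ moves the contracted image into $U$. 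The paper's argument avoids \Cref{cor:intsgr}, using instead the Pl\"ucker description of $\Gr_{\geq}(d,n)$ directly; yours is shorter and more conceptual, trading the explicit matrix construction and the convexity step for a clean contraction. Both close the same way (rational approximation, then clearing denominators).
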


\begin{proof}
 If a matrix $S$ satisfies (1) and (2), then every multiple of $S$ by a non-zero scalar also satisfies (1) and (2). Thus, it suffices to find such a matrix $S$ with rational entries. Furthermore, since (1) and (2) are open conditions on $S$, it even suffices to find such a matrix $S$ with entries in $\R$.
 Throughout the proof we will consider $\Gr(d,n)$ as a subset of $\pp(\wedge^d\R^n)$ via the Pl\"ucker embedding. Since $\GL_n(\R)$ acts transitively on $\Gr(d,n)$, we can assume without loss of generality that the range of $L$ is spanned by $e_1,\ldots,e_d$. We observe that it suffices to prove the claim for any open neighbourhood of the range of $L$ that is contained in $U$. Hence without loss of generality, we can assume that $U$ is contained in the open affine chart of $\pp(\wedge^d\R^n)$ where the Pl\"ucker coordinate corresponding to $[d]$ is not zero. This chart is naturally homeomorphic to $\R^N$ where $N=\binom{n}{d}-1$ and the range of $L$ corresponds to the origin of $\R^N$. Thus by further shrinking $U$ we can assume that $U$ is the intersection of $\Gr(d,n)$ with an open $\epsilon$-ball $B$ around the origin in this chart. Under the natural projection
 \begin{equation*}
     \wedge^d\R^n\setminus\{0\}\to\pp(\wedge^d\R^n)
 \end{equation*}
the ball $B$ is the image of the open convex cone
\begin{equation*}
    C=\left\{\sum_{I\in\binom{[n]}{d}}\lambda_I e_I\mid \sum_{I\neq[d]}\lambda_I^2<\epsilon\cdot\lambda_{[d]}^2,\, \lambda_{[d]}>0\right\}\subseteq\wedge^d\R^n
\end{equation*}
where 
$e_I=e_{i_1}\wedge\cdots\wedge e_{i_d}$ if $I=\{i_1,\ldots,i_d\}$ and $1\leq i_1<\cdots<i_d\leq n$.
 
Let $A$ be a $d\times n$ matrix all of whose $d\times d$ minors are positive. Let $A_{1}$ and $A_{2}$ be the $d\times(n-d)$ and $d\times d$ matrices such that $A=\left(\begin{array}{@{}c|c@{}}A_{1} & A_{2}\end{array}\right)$, and define the $n\times n$ matrix $T_{\nu}$, in the block decomposition $\R^n=\R^{n-d}\times\R^{d}$, as follows  
		\[
		T_{\nu}=\left(\begin{array}{@{}c|c@{}}
			A_{1}
			& A_{2} \\
			\hline
			0_{n-d\times d} &
			\nu\cdot I_{n-d}
		\end{array}\right),
		\]
 where $I_{n-d}$ is the $(n-d)\times(n-d)$ identity matrix and $0_{n-d\times d}$ the $n-d\times d$ zero matrix. Clearly, for $\nu\neq0$, the $n\times n$ matrix $T_\nu$ is invertible. On the other hand, for every $I\in\binom{[n]}{d}$ the linear subspace spanned by all $T_0(e_i)$ for $i\in I$ is just the range of $L$ and therefore in $B$ for every $I\in\binom{[n]}{d}$. Hence for small enough $\nu>0$ the linear subspace spanned by all $T_\nu(e_i)$ for $i\in I$ is also in $B$. By our positivity assumption on $A$, we even have that the induced linear map 
 \begin{equation*}
  \wedge^dT_\nu\colon\wedge^d\R^n\to\wedge^d\R^n  
 \end{equation*}
 sends $e_I$ to $C$ for all $I\in\binom{[n]}{d}$. As this is an open condition, it is also satisfied by
	\[
S=\left(\begin{array}{@{}c|c@{}}
	A_{1}
	& A_{2} \\
	\hline
	R &
	\nu\cdot I_{n-d}
\end{array}\right),
\]
 where $R$ is an $(n-d)\times d$ matrix sufficiently close to the zero matrix so that the span of the first $d$ columns of $S$ is in $\Gr_{+}(d,n)$. Such a matrix exists because the closure of $\Gr_{+}(d,n)$ is $\Gr_{\geq}(d,n)$ \cite{Karp}*{Theorem~3.15(ii)}. 
 Since $C$ is a convex cone, every nontrivial nonnegative linear combination of the $e_I$ is also mapped to $C$ by $\wedge^dS$. This implies in particular that $\Gr_{\geq}(d,n)$ is mapped to a subset of $U$. Hence the range of $S L$ is in $\Gr_{+}(d,n)$ and $S^{-1}L'\in U$ for every $d\times n$ matrix whose range is in $\Gr_{\geq}(d,n)$.
\end{proof}

We apply such coordinate changes to the \emph{amoeba} of the variety $X$. Formally, 
the \emph{amoeba} of an algebraic variety $X\subseteq(\C^*)^n$, denoted $\cA(X)$, is the image of $X$ under the map $(\C^*)^n \to \R^n$ taking $(z_1, \hdots, z_n)$ to $(\log|z_1|, \hdots, \log|z_n|)$. 
That is, $\cA(X)=\left\{a\in\R^{n}\mid \exists b \in\R^{n}\ \text{s.t.}\ \exp(a+ib)\in X\right\}$. See  \cites{GKZ,MikhalkinAmoeba} for more detail.

\begin{lem}\label{lem:amoeba1}
    Let $X\subseteq(\C^*)^n$ be a closed equidimensional algebraic subvariety of dimension $c=n-d$ which is invariant under the coordinate-wise involution $z\mapsto1/\bar{z}$ and $\cA(X)\subseteq\R^n$ be its amoeba. If $S\in\Z^{n\times n}$ is a nonsingular integer valued $n\times n$ matrix, then the set $X'=\{\exp(2\pi i Sx)\mid x\in\C^{n}, \exp(2\pi i x)\in X\}$ is a closed equidimensional algebraic subvariety of $(\C^*)^n$ of dimension $c=n-d$ which is invariant under the coordinate-wise involution $z\mapsto1/\bar{z}$. In particular, we have $\cA(X')=S\cA(X)$.
\end{lem}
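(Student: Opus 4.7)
\medskip

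\noindent\textbf{Proof proposal.} The plan is to recognize $X'$ as the image of $X$ under the isogeny $\chi_S\colon(\C^*)^n\to(\C^*)^n$ from \Cref{not:chim}, and then read off each of the claimed properties from standard properties of finite surjective morphisms of algebraic tori.

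First I would observe that the formula $\exp(2\pi i Sx)_j=\prod_i \exp(2\pi i x_i)^{S_{ji}}$ shows that the map $x\mapsto\exp(2\pi i Sx)$ factors through $\exp(2\pi i\,\cdot\,)$, so that $X'=\chi_S(X)$. Here $\chi_S$ is well defined on $(\C^*)^n$ because $S$ has integer entries, and it is a finite surjective morphism of algebraic groups because $\det(S)\neq 0$. (Indeed, $\chi_S$ is dual to the injection $\Z^n\to\Z^n$ given by $S$, and its kernel is the finite group of characters of $\Z^n/S\Z^n$.) Since $\chi_S$ is a finite morphism, in particular proper, the image $X'=\chi_S(X)$ is closed in $(\C^*)^n$, and the finiteness of the fibers forces $\dim\chi_S(Z)=\dim Z$ for every irreducible subvariety $Z\subseteq X$. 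Applying this to the irreducible components of $X$, all of which have dimension $c$, we find that $X'$ is a finite union of irreducible closed subvarieties of dimension $c$, hence equidimensional of dimension $c$.

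Next I would verify the invariance under $z\mapsto 1/\overline{z}$. Because $\chi_S$ is a morphism of abelian groups defined over $\Q$, it commutes with both coordinate-wise complex conjugation and coordinate-wise inversion, so $1/\overline{\chi_S(z)}=\chi_S(1/\overline{z})$ for every $z\in(\C^*)^n$. Thus if $z\in X$ then $1/\overline{z}\in X$ by the assumed invariance of $X$, and applying $\chi_S$ yields $1/\overline{\chi_S(z)}\in X'$; this proves the invariance of $X'$.

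Finally, for the amoeba identity I would compute $\log|\chi_S(z)_j|=\sum_i S_{ji}\log|z_i|$ for every $z\in(\C^*)^n$, i.e., $\log|\chi_S(z)|=S\log|z|$. Applying this to $z\in X$ gives $\mathcal{A}(X')\supseteq S\mathcal{A}(X)$, and using $X'=\chi_S(X)$ conversely gives $\mathcal{A}(X')\subseteq S\mathcal{A}(X)$, so $\mathcal{A}(X')=S\mathcal{A}(X)$. The only substantive step is the identification $X'=\chi_S(X)$ together with the properness/finiteness of $\chi_S$; the other items are direct consequences of the functoriality of $\chi_S$ with respect to conjugation, inversion, and the map $\log|\cdot|$, and I do not anticipate any real obstacle.
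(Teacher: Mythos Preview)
Your proposal is correct and follows essentially the same route as the paper: identify $X'$ as $\chi_S(X)=f_S(X)$, use finiteness of this monomial map to obtain closedness and equidimensionality, and then verify the involution-invariance and the amoeba identity $\log|\chi_S(z)|=S\log|z|$ by direct computation. The only cosmetic difference is that the paper proves finiteness of $f_S$ via the Smith normal form factorization $S=RDT$, while you invoke duality of tori (injectivity of $S$ on the character lattice and finiteness of the cokernel); both are standard and yield the same conclusion.
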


\begin{proof}
    Let $S=(s_{ij})_{ij}$ with $s_{ij}\in\Z$. We consider the morphism
    \begin{equation*}
        f_S\colon(\C^*)^n\to(\C^*)^n,\, (z_1,\ldots,z_n)\mapsto(z_1^{s_{11}}\cdots z_n^{s_{1n}},\ldots,z_1^{s_{n1}}\cdots z_n^{s_{nn}}).
    \end{equation*}
    The set $X'=f_S(X)$ is invariant under the coordinate-wise involution $z\mapsto1/\bar{z}$ and  clearly satisfies $\exp(2\pi i x)\in X \Leftrightarrow \exp(2\pi i Sx)\in X'$ for all $x\in\C^n$. Hence it suffices to show that $f_S$ is a finite map. We can write $S$ in Smith normal form $S=RDT$ where $R,T$ are invertible integer matrices and $D$ is a diagonal integer matrix. The morphisms $f_R$ and $f_T$ are automorphisms of $(\C^*)^n$ so in particular finite. The morphism $f_D$ can be decomposed into morphisms that take the $m$-th power of the $i$-th entry and leave the remaining entries unchanged. These are clearly finite. Finally, the statement that $\cA(X')=S\cA(X)$ follows directly from $\exp(2\pi i x)\in X \Leftrightarrow \exp(2\pi i Sx)\in X'$ by taking the logarithm of the absolute value of both sides.
\end{proof}

\begin{lem}\label{lem:amoeba2}
     Let $X\subseteq(\C^*)^n$ be a closed algebraic subvariety of dimension $c=n-d$ which is invariant under the coordinate-wise involution $z\mapsto1/\bar{z}$ and $\cA(X)\subseteq\R^n$ be its amoeba. 
     If there exists an open subset $U\subseteq\Gr(d,n)$ such that
     \begin{enumerate}
     	\item[(i)] $\Gr_{\geq}(d,n)\subseteq U$, and
     	\item[(ii)] $\cA(X)\cap V=\{0\}$ for every $V\in U$,
     \end{enumerate} 
     then the closure $\bar{X}$ of $X$ in $(\pp^1)^n$ satisfies conditions (1) and (2) in \Cref{def:leeyang}. (Condition (3), that $X(\T)$ is smooth, may not be satisfied.)
\end{lem}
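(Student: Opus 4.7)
The plan is to verify the two conditions of \Cref{def:leeyang} separately. Condition (1) follows from a topological observation, while condition (2$'$) requires a two-step analysis handling points inside $X$ and points in the boundary $\bar{X}\setminus X$ via a limiting argument.

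For condition (1), the coordinate-wise involution $\iota\colon z\mapsto 1/\bar{z}$ extends continuously from $(\C^*)^n$ to $(\pp^1)^n$ by interchanging $0$ and $\infty$, and is a homeomorphism. Since $X$ is $\iota$-invariant by hypothesis, so is its closure $\bar{X}$ in $(\pp^1)^n$.

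For condition (2$'$), I would first handle points $z\in X\subseteq(\C^*)^n$, so that $\log|z|\in\R^n$. Suppose $z\notin X(\T)$; then $\log|z|\neq 0$. I would like to conclude $\varbar(\log|z|)\geq d$. Arguing by contrapositive, if $\varbar(\log|z|)<d$, then by \Cref{lem:Karp} there exists a subspace $V\in\Gr_{+}(d,n)$ with $\log|z|\in V$. Since $\Gr_{+}(d,n)\subseteq\Gr_{\geq}(d,n)\subseteq U$, the hypothesis forces $\log|z|\in\cA(X)\cap V=\{0\}$, contradicting $\log|z|\neq 0$. Hence either $z\in X(\T)\subseteq\bar{X}(\T)$ or $\varbar(\log|z|)\geq d$, as required.

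Next, for boundary points $z\in\bar{X}\setminus X$, I would choose a sequence $z^{(k)}\in X$ converging to $z$ in $(\pp^1)^n$, so that $\log|z^{(k)}|\to\log|z|$ coordinate-wise in $(\R\cup\{\pm\infty\})^n$ (the coordinates of $z$ equal to $0$ or $\infty$ contributing $-\infty$ or $+\infty$ to $\log|z|$). If $z\in\bar{X}(\T)$ we are done; otherwise some coordinate $|z_i|\neq 1$, and by continuity $|z^{(k)}_i|\neq 1$ for large $k$, so $z^{(k)}\in X\setminus\T^n$. By the first step, $\varbar(\log|z^{(k)}|)\geq d$ for all sufficiently large $k$. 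I would then invoke upper semi-continuity of $\varbar$ on $(\R\cup\{\pm\infty\})^n$ to conclude $\varbar(\log|z|)\geq d$.

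The one nontrivial technical point is the upper semi-continuity of $\varbar$, which the paper already uses tacitly (cf.\ the proof of \Cref{lem:tangentspace}). I would state and briefly justify it: given $b^{(k)}\to b$ with $\varbar(b^{(k)})\geq d$, pass to a subsequence along which each coordinate $b^{(k)}_i$ carries a consistent sign $\sigma_i\in\{+,-\}$ (either the sign of the nonzero limit, or a choice of sign for zero/oscillating terms) realizing at least $d$ sign changes. For any coordinate with $b_i\neq 0$ (including $\pm\infty$), $\sigma_i$ necessarily agrees with $\mathrm{sgn}(b_i)$ for large $k$; for coordinates with $b_i=0$, we are free to assign the sign $\sigma_i$ when computing $\varbar(b)$. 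Thus $\varbar(b)\geq d$. This is the main obstacle, but it is a short combinatorial argument rather than a deep difficulty; once it is in place, the two conditions of \Cref{def:leeyang} follow immediately from \Cref{lem:Karp} and the hypothesis $\cA(X)\cap V=\{0\}$ for $V\in U$.
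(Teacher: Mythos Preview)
Your argument proves condition~(2$'$) of \Cref{def:leeyang}, not condition~(2). The lemma as stated asserts the stronger conclusion that every $z\in\bar{X}\setminus\bar{X}(\T)$ satisfies $\var(\log|z|)\geq d$, whereas you only show $\varbar(\log|z|)\geq d$. Since $\var\leq\varbar$ always, your inequality is weaker; and condition~(2) is what is actually used downstream in \Cref{thm:coordchange} to conclude that the resulting variety is a \emph{strict} Lee--Yang variety.

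For points $z\in X$ this is an easy fix: invoke part~(2) of \Cref{lem:Karp} instead of part~(1). If $\var(\log|z|)<d$, then there is $V\in\Gr_{\geq}(d,n)\subseteq U$ containing $\log|z|$, and hypothesis~(ii) gives the contradiction.

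For boundary points $z\in\bar{X}\setminus X$, however, your limiting strategy cannot be repaired by the same substitution. The function $\var$ is \emph{not} upper semi-continuous (e.g.\ $(1,-1/k,1)\to(1,0,1)$ drops from $\var=2$ to $\var=0$), so knowing $\var(\log|z^{(k)}|)\geq d$ along a sequence does not transfer to the limit. This is exactly where the openness of $U$ enters, which your proof never uses. The paper's argument normalizes $a_i=\log|z_i|/\|\log|z_i|\|$, passes to a subsequential limit $a\in\mathbb{S}^{n-1}$, observes $\var(a)\leq\var(\log|z|)<d$ because nonzero entries of $a$ share signs with those of $\log|z|$, and then uses \Cref{lem:Karp} to place $a$ in some $V\in\Gr_{\geq}(d,n)\subseteq U$. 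The crucial step is to perturb $V$ to nearby subspaces $V_i\ni\log|z_i|$ converging to $V$ in $\Gr(d,n)$; openness of $U$ forces $V_i\in U$ for some $i$, contradicting~(ii) since $0\neq\log|z_i|\in V_i\cap\cA(X)$.
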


\begin{proof}
	We need to show that $\bar{X}$ is invariant under the involution $z\mapsto1/\bar{z}$ (condition (1) in \Cref{def:leeyang}) and that every point $z\in\bar{X} $, which is not in $\T^{n}$, has $\var(\log |z|)\ge d$ (condition (2) in \Cref{def:leeyang}). 
	
	For condition (1), by assumption, $X$ is invariant under the involution, so it also takes its closure $\bar{X}$ to itself, as needed. For condition (2), \Cref{lem:Karp} implies that every $0\neq v\in\R^n$ with  $\var(v) < d$ is contained in a linear subspace $V\in\Gr_{\geq}(d,n)$, so (i) implies $V\in U$ and we conclude that $v\notin \cA(X)$ from (ii). This means that every $z\in X\setminus\T^{n}$ has $\var(\log|z|)\ge d$, and we are left with showing that this is also the case for all $z\in \bar{X}\setminus X$.

    Assume that this is not the case, i.e.  there exists $z\in \bar{X}\setminus X$ with $\var(\log|z|)<d$, so there is a sequence $(z_i)_{i\in\N}\subseteq X\setminus \T^{n}$ converging to $z$. Consider the sequence $(a_i)_{i\in\N}$ of points on the sphere 
    \begin{equation*}
        a_i=\frac{\log|z_i|}{\|(\log|z_i|)\|}\in \mathbb{S}^{n-1}.
    \end{equation*}
    After passing to a subsequence if necessary, the sequence converges to the limit point $\lim_{i\to\infty}a_{i}=a\in\mathbb{S}^{n-1} $. Since every non-zero entry of $a$ has the same sign as the corresponding entry of $\log|z|$, we obtain 
    \begin{equation*}
        \var(a)\leq\var(\log|z|)<d.
    \end{equation*}
    By \Cref{lem:Karp} there exists $V\in\Gr_{\geq}(d,n)$ which contains $a$, and $V\in U $ by (i).
    We complete $a$ by $b_1,\ldots,b_{d-1}$ to a basis of $V$, and define $V_{i}=\mathrm{span}(a_{i},b_{1},\ldots,b_{d-1})$ for all $i\in\N$, so that $V_{i}\in \Gr(d,n)$ for large enough $i$ and so that the sequence $(V_{i})_{i\in\N}$ converges (as points in $\Gr(d,n)$) to $V\in U$. Therefore $V_{i}\in U$ for some $i\in\N$. By construction, $\log|z_i|\ne 0$ and $\log|z_i|\in V_i\cap\cA(X)$, in contradiction to (ii). We conclude that that $\bar{X}$ satisfies (2).
\end{proof}

\begin{thm}\label{thm:coordchange}
    Let $X\subseteq(\C^*)^n$ be a closed equidimensional subvariety of dimension $c=n-d$ which is invariant under the coordinate-wise involution $z\mapsto1/\bar{z}$. Let $L$ be an $n\times d$ matrix such that the following holds:
    \begin{enumerate}
        \item There is an open subset $U\subseteq{\Gr}(d,n)$ which contains $L\R^{d}$, the range of $L$, such that $\cA(X)\cap V=\{0\}$ for every $V\in U$.
    \item The set
        $\Lambda(X,L)$
    is Delone.
    \end{enumerate}
     Then there is a strict Lee--Yang variety $\tilde{X}\subseteq(\pp^1)^n$ and an $n\times d$ matrix $\tilde{L}$ whose range is in $\Gr_{+}(d,n)$ such that 
    $\Lambda(X,L)=\Lambda(\tilde{X},\tilde{L})$.
    In particular, the Delone set $\Lambda(X,L)$ is a Fourier quasicrystal.
\end{thm}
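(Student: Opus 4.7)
The plan is to construct $(\tilde X,\tilde L)$ via a monomial change of coordinates on $(\C^*)^n$ and then invoke the amoeba lemmas together with \Cref{thm:main1}.

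First I would apply \Cref{lem:maptoopen} to the open set $U\subseteq\Gr(d,n)$ containing $L\R^{d}$, obtaining a nonsingular integer matrix $S\in\Z^{n\times n}$ such that (i) the range of $SL$ lies in $\Gr_{+}(d,n)$ and (ii) $S^{-1}V\in U$ for every $V\in\Gr_{\geq}(d,n)$. I set $\tilde L:=SL$, whose $d\times d$ minors are then all positive. Define $\tilde X':=f_S(X)\subseteq(\C^*)^n$, where $f_S$ is the monomial map from \Cref{not:chim}, and let $\tilde X$ be the closure of $\tilde X'$ in $(\pp^1)^n$. By \Cref{lem:amoeba1}, $\tilde X'$ is closed, equidimensional of dimension $c$, invariant under $z\mapsto 1/\overline{z}$, and has amoeba $\cA(\tilde X')=S\cA(X)$.

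Next I would verify the hypotheses of \Cref{lem:amoeba2} for $\tilde X'$ with the open set $\tilde U:=SU\subseteq\Gr(d,n)$. Property (ii) of $S$ implies $\Gr_{\geq}(d,n)\subseteq\tilde U$, and for every $V=SV'\in\tilde U$ one computes $\cA(\tilde X')\cap V=S(\cA(X)\cap V')=\{0\}$ by hypothesis (1) of the theorem. Thus \Cref{lem:amoeba2} gives that $\tilde X$ satisfies conditions (1) and (2) of \Cref{def:leeyang}, making it at least a Lee--Yang variety.

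The hard part will be to simultaneously establish the equality $\Lambda(X,L)=\Lambda(\tilde X,\tilde L)$ and the strict Lee--Yang smoothness condition $\tilde X(\T)\subseteq\tilde X_{\rm sm}$. The inclusion $\Lambda(X,L)\subseteq\Lambda(\tilde X,\tilde L)$ is immediate from $\exp(2\pi i SLx)=f_S(\exp(2\pi i Lx))$, which sends $X$ into $\tilde X'\subseteq\tilde X$. The reverse inclusion would follow if $f_S$ were bijective on $(\C^*)^n$, i.e.\ if $S\in\GL_n(\Z)$; otherwise $f_S^{-1}(\tilde X')$ is the union of $|\det S|$ translates of $X$ by the finite kernel of $f_S$, each of which potentially contributes extra points to $\Lambda(\tilde X,\tilde L)$. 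My strategy is to refine \Cref{lem:maptoopen} so that $S$ can additionally be chosen unimodular, exploiting the freedom in the block construction $T_\nu$ appearing in its proof to rescale the parameters $A$ and $\nu$ so that $\det S=\pm 1$ while preserving the open conditions. With a unimodular $S$, $f_S$ is an isomorphism of tori, so $\tilde X\cap(\C^*)^n\cong X$ via $f_S$; the equality $\Lambda(X,L)=\Lambda(\tilde X,\tilde L)$ then holds, and the smoothness condition (3) is deduced from the Delone hypothesis via \Cref{thm:deloneiff}(2) (the density assumption for $x\mapsto\exp(2\pi i\tilde Lx)$ transferring through $f_S$, or arranged by a final small perturbation using \Cref{lem:assL} if necessary). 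The ``In particular'' conclusion that $\Lambda(X,L)$ is a Fourier quasicrystal is then immediate from \Cref{thm:main1} applied to $(\tilde X,\tilde L)$.
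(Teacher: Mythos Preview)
Your overall outline tracks the paper's proof closely: apply \Cref{lem:maptoopen} to obtain $S$, push $X$ forward via the monomial map using \Cref{lem:amoeba1}, verify conditions (1) and (2) of \Cref{def:leeyang} for the closure via \Cref{lem:amoeba2}, set $\tilde L=SL$, and finally upgrade to a \emph{strict} Lee--Yang variety by invoking the Delone hypothesis through \Cref{thm:deloneiff}(2).

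There are two substantive differences. First, the paper opens with a reduction you omit: by passing to a suitable algebraic subtorus of $(\C^*)^n$ and intersecting $X$ with it, one may assume the rows of $L$ are $\Q$-linearly independent. This is precisely what guarantees that $x\mapsto\exp(2\pi i\tilde L x)$ has dense image in $\T^n$, so that \Cref{thm:deloneiff}(2) is applicable at the end. Your suggestion of handling this ``by a final small perturbation using \Cref{lem:assL}'' does not work here, since perturbing $\tilde L$ changes the set $\Lambda(\tilde X,\tilde L)$ you are trying to identify with $\Lambda(X,L)$.

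Second, for the equality $\Lambda(X,L)=\Lambda(\tilde X,\tilde L)$ the paper does not attempt to make $S$ unimodular. It simply invokes the biconditional $\exp(2\pi ix)\in X\Leftrightarrow\exp(2\pi iSx)\in X'$ asserted inside the proof of \Cref{lem:amoeba1} and records the equality ``by construction.'' Your alternative route---refining \Cref{lem:maptoopen} so that $S\in\GL_n(\Z)$---has a real obstruction: that lemma produces a real matrix satisfying open conditions, approximates it by a rational one, and then clears denominators, which forces $|\det S|$ to be large rather than~$1$. The block parameters $A,\nu,R$ cannot simply be rescaled to hit $\det S=\pm1$ while remaining integral and keeping $\nu,R$ in the required small range (the smallness of $\nu$ depends on the open set $U$), and there is no density statement for $\GL_n(\Z)$ in $\PGL_n(\R)$ to fall back on. So this step, as you have sketched it, is a genuine gap; if you want to avoid citing the biconditional from \Cref{lem:amoeba1}, a different mechanism is needed.
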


\begin{proof}
    First we note that after replacing $(\C^*)^n$ by a suitable algebraic subtorus and $X$ with the intersection of $X$ with this subtorus, we can assume without loss of generality that the rows of $L$ are linearly independent over $\Q$.
    By \Cref{lem:maptoopen} there exists a non-singular $n\times n$ matrix $S$ with entries in $\Z$ such that the range of $SL$ is in $\Gr_{+}(d,n)$ and the range of $S^{-1}L'$ is in $U$ for every $n\times d$ matrix $L'$ whose range is in $\Gr_{\geq}(d,n)$. Let $U'$ be the image of $U$ under the automorphisms $\Gr(d,n)\to\Gr(d,n)$ induced by $S$. Then $U'$ is an open subset of $\Gr(d,n)$ which contains $\Gr_{\geq}(d,n)$ such that the range of $S^{-1}L'$ is in $U$ for every $n\times d$ matrix whose range is in $U'$. The latter means that elements of $U'$ intersect $S\cA(X)$ only in the origin. By \Cref{lem:amoeba1} there exists a closed algebraic subvariety $X'\subseteq(\C^*)^n$ of codimension $d$ which is invariant under the coordinate-wise involution $z\mapsto1/\bar{z}$ such that $\exp(2\pi i x)\in X \Leftrightarrow \exp(2\pi i Sx)\in X'$ for all $x\in\C^n$ and whose amoeba is $\cA(X')=S\cA(X)$. By \Cref{lem:amoeba2} the closure $\tilde{X}$ of $X'$ in $(\pp^1)^n$ satisfies conditions (1) and (2) in \Cref{def:leeyang}. By construction we have
    $\Lambda(X,L)=\Lambda(\tilde{X},\tilde{L})$
    where $\tilde{L}=SL$.
    Finally, because $\Lambda$ is a Delone set, the torus part of $\tilde{X}$ is smooth by \Cref{thm:deloneiff} which implies the claim.
\end{proof}

\begin{cor}\label{cor:leeyangfqc}
    Let $X\subseteq(\pp^1)^n$ be a Lee--Yang variety of codimension $d$ such that $X(\T)$ is contained in the smooth part of $X$. Further let $L\in\R^{n\times d}$ be a real matrix all of whose $d\times d$ minors are positive.  Then 
 $\Lambda(X,L)$
is a Fourier quasicrystal and a Delone set.
\end{cor}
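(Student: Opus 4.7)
The plan is to deduce the corollary from \Cref{thm:coordchange} and \Cref{thm:deloneiff}. The corollary differs from \Cref{thm:main1} only in that $X$ is Lee--Yang rather than strict Lee--Yang, so the strategy is to use the coordinate-change machinery of \Cref{thm:coordchange} to replace $X$ by a strict Lee--Yang variety while preserving $\Lambda(X,L)$.

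First I would observe that $\Lambda(X,L)$ is Delone: the range of $L$ lies in $\Gr_{+}(d,n)$ by positivity of its $d\times d$ minors, and $X(\T)$ is contained in $X_{\rm sm}$ by hypothesis, so this is exactly part (1) of \Cref{thm:deloneiff}. Next, to fit the setup of \Cref{thm:coordchange}, I would pass to $X^* = X\cap(\C^*)^n$, a closed equidimensional subvariety of $(\C^*)^n$ of dimension $c$ (irreducible components of $X$ lying entirely in the boundary $(\pp^1)^n\setminus(\C^*)^n$ simply disappear, the others restrict to dense open subsets of the same dimension). The involution-invariance of $X^*$ is inherited from $X$, and since $\exp(2\pi iLx)\in(\C^*)^n$ always, we have $\Lambda(X,L)=\Lambda(X^*,L)$, which remains Delone.

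The key step is to check the amoeba condition. Take $U=\Gr_{+}(d,n)$, which is open in $\Gr(d,n)$ and contains $L\R^d$. For any nonzero $v\in\cA(X^*)$, we may write $v=\log|z|$ with $z\in X\setminus X(\T)$, so property (2$'$) of the Lee--Yang definition gives $\varbar(v)\geq d$. On the other hand, \Cref{thm:GK}(2) tells us that every $V\in\Gr_{+}(d,n)$ satisfies $\varbar(w)<d$ for all $w\in V\setminus\{0\}$. Hence $\cA(X^*)\cap V=\{0\}$ for every $V\in U$, which is precisely hypothesis (1) of \Cref{thm:coordchange}.

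With both hypotheses of \Cref{thm:coordchange} verified, its conclusion delivers a strict Lee--Yang variety $\tilde{X}\subseteq(\pp^1)^n$ and a matrix $\tilde{L}$ with positive $d\times d$ minors such that $\Lambda(X,L)=\Lambda(\tilde{X},\tilde{L})$ is a Fourier quasicrystal, completing the proof. The only mildly delicate point is the bookkeeping when passing from $X$ to $X^*$ to confirm equidimensionality in $(\C^*)^n$; after that, the argument is a direct assembly of earlier results.
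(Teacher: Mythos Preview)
Your proposal is correct and follows essentially the same approach as the paper: verify the two hypotheses of \Cref{thm:coordchange} by taking $U=\Gr_{+}(d,n)$ and invoking \Cref{thm:deloneiff} for the Delone property. The only cosmetic difference is that the paper cites \Cref{prop:RR} for the amoeba condition while you unpack the same $\varbar$ argument directly via \Cref{thm:GK}; your version is slightly more explicit about the passage to $X\cap(\C^*)^n$, which the paper leaves implicit.
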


\begin{proof}
    By \Cref{prop:RR} the variety $X\cap(\C^*)^n$ satisfies condition (1) of \Cref{thm:coordchange} because $\Gr_{+}(d,n)$ is open. By \Cref{thm:deloneiff} it also satisfies part (2).
\end{proof}

\begin{cor}
   Let $\Lambda\subseteq\R^n$ be a Delone Fourier quasicrystal obtained as in \cite{lawton2024fourier}*{Theorem~3}. There is a strict Lee--Yang variety $\tilde{X}\subseteq(\pp^1)^n$ and an $n\times d$ matrix $\tilde{L}$ whose range is in $\Gr_{+}(d,n)$ such that $\Lambda=\Lambda(\tilde{X}, \tilde{L})$.
\end{cor}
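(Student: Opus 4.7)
The strategy is to realize the Lawton--Tsikh Fourier quasicrystal $\Lambda$ in the form $\Lambda(X,L)$ for an equidimensional algebraic subvariety $X\subseteq(\C^{*})^{n}$ of codimension $d$ and a matrix $L\in\R^{n\times d}$, and then invoke \Cref{thm:coordchange} to upgrade $(X,L)$ to a strict Lee--Yang pair $(\tilde X,\tilde L)$. The construction in \cite{lawton2024fourier}*{Theorem~3} already presents $\Lambda$ in essentially this shape: after substituting $z=\exp(2\pi iLx)$, the trigonometric equations become Laurent polynomial equations defining $X$, and real-valuedness of the trigonometric polynomials on $\R^{d}$ forces $X$ to be invariant under the coordinate-wise involution $z\mapsto 1/\overline{z}$.

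The Delone hypothesis on $\Lambda$ is exactly hypothesis (2) of \Cref{thm:coordchange}, so the crux is to verify hypothesis (1): the existence of an open neighborhood $U\subseteq\Gr(d,n)$ of $L\R^{d}$ such that $\cA(X)\cap V=\{0\}$ for all $V\in U$, where $\cA(X)$ denotes the amoeba. Real-rootedness of the Lawton--Tsikh system immediately yields the pointwise statement $L\R^{d}\cap\cA(X)=\{0\}$, and I would upgrade this to the required open condition by showing that the genericity hypotheses of \cite{lawton2024fourier}*{Theorem~3} force this intersection at the origin to be transversal in a suitable sense; a standard continuity argument on the Grassmannian then produces the open neighborhood $U$. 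Once hypotheses (1) and (2) are checked, \Cref{thm:coordchange} produces a nonsingular integer matrix $S\in\Z^{n\times n}$, a strict Lee--Yang variety $\tilde X\subseteq(\pp^{1})^{n}$ arising as the closure of the image of $X\cap(\C^{*})^{n}$ under the monomial map $f_{S}$ of \Cref{lem:amoeba1}, and the matrix $\tilde L=SL$ with range in $\Gr_{+}(d,n)$, satisfying $\Lambda=\Lambda(\tilde X,\tilde L)$. Strictness of $\tilde X$, i.e.~that $\tilde X(\T)$ lies in its smooth locus, is then forced by the Delone hypothesis via part (2) of \Cref{thm:deloneiff}.

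The main obstacle is the transversality step: one must extract from the precise hypotheses of \cite{lawton2024fourier}*{Theorem~3} that $L\R^{d}$ meets $\cA(X)$ transversally at the origin in a sense that is stable under perturbing the subspace $L\R^{d}$ within $\Gr(d,n)$. Because the Lawton--Tsikh framework closely mirrors ours, I expect this verification to be essentially a translation exercise between their genericity assumptions and the amoeba hypothesis of \Cref{thm:coordchange}, with no genuinely new machinery required beyond what is already developed in \Cref{sec:reduct}.
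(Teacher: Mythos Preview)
Your overall strategy---realize $\Lambda$ as $\Lambda(X,L)$ and then apply \Cref{thm:coordchange}---is exactly the paper's. The divergence is in how you verify the hypotheses of \Cref{thm:coordchange}.

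For hypothesis (1), you propose to first establish the pointwise statement $L\R^{d}\cap\cA(X)=\{0\}$ from real-rootedness and then upgrade to an open neighborhood in $\Gr(d,n)$ via a transversality argument. There are two problems with this. First, the pointwise statement does not follow from real-rootedness alone: if $a=Ly\in\cA(X)$ with $y\in\R^{d}$, then $\exp(a+ib)\in X$ for some $b\in\R^{n}$, but you cannot write this point as $\exp(2\pi iLx)$ for $x\in\C^{d}$ unless $b\in L\R^{d}$, which is not guaranteed. Second, ``transversality'' of a linear subspace with an amoeba at the origin is not a standard notion and would require real work to make precise and prove stable. The paper avoids all of this by citing \cite{lawton2024fourier}*{Theorem~2} (not Theorem~3), which already provides an $X$ and $L$ satisfying condition~(1) of \Cref{thm:coordchange} directly---the open amoeba condition is part of the Lawton--Tsikh setup, not something to be re-derived.

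Two smaller points where the paper is cleaner: for invariance under $z\mapsto1/\overline{z}$, the paper simply replaces $X$ by the Zariski closure of $X(\T)$, which does not change $\Lambda(X,L)$ and is automatically invariant; your appeal to ``real-valuedness of the trigonometric polynomials'' is not obviously part of the Lawton--Tsikh hypotheses. For equidimensionality (required by \Cref{thm:coordchange}), the paper notes that $X$ is a complete intersection cut out by $d$ equations; you do not address this.
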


\begin{proof}
    This follows from \Cref{thm:coordchange}. Indeed, by \cite{lawton2024fourier}*{Theorem~2} there is a  closed subvariety $X\subseteq(\C^*)^n$ of codimension $d$, cut out by $d$ equations, and an $n\times d$ matrix $L$ satisfying condition (1) in \Cref{thm:coordchange} such that
    $\Lambda=\Lambda(X,L)$. The variety $X$ is equidimensional because it is a complete intersection.
    Condition (2) is satisfied by assumption. Finally, after replacing $X$ by the Zariski closure of $X(\T)$, we can also assume that it is invariant under the coordinate-wise involution $z\mapsto1/\bar{z}$.
\end{proof}
\section{Genuine high-dimensionality} \label{Sec:non-triv}
The goal of this section is to prove that with our construction we can obtain Fourier quasicrystals that are genuinely high-dimensional in the sense that their supports do not contain one-dimensional Fourier quasicrystals nor subsets that are assymptotically close to such. This implies in particular that in $\R^2$ and $\R^3$ these Fourier quasicrystals do not contain the product of two lower dimensional Fourier quasicrystals. 

\begin{Def}[\cite{lawton-inverse}]
	A Fourier quasicrystal $\Lambda\subseteq\R^d$ is of \emph{toral type} if its spectrum is contained in a finitely generated subgroup of $\R^d$.
\end{Def}

\begin{ex}
	By \cite{olevskii2020fourier} every Fourier quasicrystal $\Lambda\subseteq\R$ is of toral type.
\end{ex}

\begin{lem}\label{lem:codimlemma}
	Let $\Gamma\subseteq\R^m$ be a Fourier quasicrystal of toral type which is also Delone. Let $M\in\R^{n\times m}$ be an $n\times m$ matrix such that $\psi(x)=\exp(2\pi i Mx)$ is a homomorphism from $\R^{m}$ to $\T^{n}$ with a dense image. Then the Zariski closure of 
	\[\psi(\Gamma)=\{\exp(2\pi i M x)\mid x\in\Gamma\}\]
in $\T^n$ has dimension at least $n-m$.
\end{lem}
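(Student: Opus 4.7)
The plan is to reduce, via the toral type hypothesis, to a dimension bound on the Zariski closure of $B'(\Gamma)$ in a larger torus $\T^{r'}$ adapted to both $M$ and the spectrum of $\Gamma$. Concretely, let $G' \subseteq \R^m$ be the finitely generated subgroup generated by the spectrum $\Lambda'$ together with $M^t \Z^n$, let $r'$ be its $\Z$-rank, choose a free $\Z$-basis encoded in a matrix $H \in \R^{m \times r'}$ (whose columns are then $\Q$-linearly independent), and set $B'(y) = \exp(2\pi i H^t y)$, a group homomorphism $\R^m \to \T^{r'}$ with dense image. Since $M^t \Z^n \subseteq G' = H\Z^{r'}$, there is an integer matrix $L' \in \Z^{n \times r'}$ with $M = L' H^t$, inducing a toric morphism $\pi\colon \T^{r'} \to \T^n$, $\pi(w)_j = w^{L'_j}$, that factors $\psi = \pi \circ B'$. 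A direct check using the density of $\psi$ shows that the rows of $L'$ are $\Z$-linearly independent, hence $\pi$ is surjective with fibers of dimension $r' - n$.

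Let $V \subseteq \T^n$ be the Zariski closure of $\psi(\Gamma)$, of dimension $c$, and let $\tilde V \subseteq \T^{r'}$ be the Zariski closure of $B'(\Gamma)$. Since $B'(\Gamma) \subseteq \pi^{-1}(V)$, we obtain $\dim \tilde V \leq c + r' - n$. It therefore suffices to prove $\dim \tilde V \geq r' - m$, since combined with the previous inequality this yields $c \geq n - m$.

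For this, introduce the distribution $\tilde F'$ on $\T^{r'}$ whose Fourier coefficients on $\Z^{r'}$ are $\widehat{\tilde F'}(h) = c_{Hh}$ (with $c_\xi = 0$ for $\xi \notin \Lambda'$). A summation formula analogous to \Cref{prop: summation formula} realizes $\mu_\Gamma$ as the pullback $(B')^* \tilde F'$. For every Laurent polynomial $p$ vanishing on $\tilde V$, the trigonometric polynomial $p \circ B'$ vanishes on $\Gamma$, and Fourier-transforming this identity gives $p \cdot \tilde F' = 0$, hence $\mathrm{supp}(\tilde F') \subseteq \tilde V$.

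The hard step is then to show $\dim \mathrm{supp}(\tilde F') \geq r' - m$. First, $H$ must have real rank $m$, for otherwise $G'$ would lie in a proper real subspace of $\R^m$, forcing $\mu_\Gamma$ to be translation-invariant in the orthogonal direction and contradicting the discreteness of $\Gamma$; hence $B'$ is an immersion. The remainder is a wavefront/transversality argument: because $(B')^* \tilde F' = \mu_\Gamma$ is a non-zero Radon measure, the singular support of $\tilde F'$ cannot have codimension greater than $m$ in $\T^{r'}$, since the pullback of such a distribution by an immersion of rank $m$ could not produce a measure of order zero. Making this pullback argument precise when $r'$ greatly exceeds $m$, so that transversality with the wavefront directions of $\tilde F'$ is not automatic, is the main technical obstacle I expect in this plan.
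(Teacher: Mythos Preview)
Your reduction---enlarging the spectrum group to include $M^t\Z^n$, choosing a $\Z$-basis encoded in $H$, and factoring $\psi=\pi\circ B'$ through $\T^{r'}$---is exactly the paper's setup, and your observation that surjectivity of $\pi$ reduces the claim to $\dim\tilde V\geq r'-m$ is also the paper's strategy. The argument that $\mathrm{supp}(\tilde F')\subseteq\tilde V$ via the Fourier identity $(p\circ B')\cdot\mu_\Gamma=0\Rightarrow p\cdot\tilde F'=0$ is correct.

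Where you diverge is at the final step, and the obstacle you anticipate is genuine, not merely technical. The heuristic ``a distribution supported in codimension $>m$ cannot pull back along a rank-$m$ immersion to an order-zero measure'' fails here: the immersion $B'$ has \emph{dense} image, so there is no genericity or transversality to exploit, and the pullback $(B')^*\tilde F'$ is not even a priori well-defined without a wavefront condition you have no means to verify. Even granting the formal identity $(B')^*\tilde F'=\mu_\Gamma$, nothing forces $\dim\mathrm{supp}(\tilde F')\geq r'-m$ from the mere fact that $\mu_\Gamma$ is a non-zero Radon measure; a dense winding line can meet a low-dimensional set in a rich way.

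The paper bypasses this entirely by invoking two external inputs. First, a Delone Fourier quasicrystal is automatically Bohr almost periodic. Second---and this is the real content---for a Bohr almost periodic Delone set $\Gamma\subseteq\R^m$ mapped into $\T^{r'}$ by a homomorphism $B'$ with dense image (adapted to the spectrum), the \emph{topological} closure of $B'(\Gamma)$ is already a finite union of $(r'-m)$-dimensional subtori. This structure theorem (due to Lawton) gives $\dim\tilde V\geq r'-m$ directly. Your distributional framework does not recover this result, and I do not see a way to complete the wavefront argument without essentially reproving it.
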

\begin{proof}
    Let $\Gamma'$ be the spectrum of $\Gamma$ and consider the subgroup $G$ of $\R^m$ generated by the elements of $\Gamma'$ and the rows of $M$. The subgroup $G$ is finitely generated since $ \Gamma$ is toral and we denote its rank by $l\in\N$. 
    We define a real matrix $M_1\in\R^{m\times l}$ whose columns comprise a $\Z$-basis of $G$. Then there is a unique integer valued matrix $M_2\in\Z^{l\times n}$ such that $M=M_{2}^{t}M_{1}^{t}$. We consider the group homomorphisms $\psi_1\colon\R^m\to\T^l$ defined by $x\mapsto\exp(2\pi i M_{1}^{t}x)$ and $\psi_2\colon\T^l\to\T^n$ defined by 
    \begin{equation*}
     \psi_{2}(z)= z^{M_{2}^t}=\left(\prod_{i=1}^{l}z_{i}^{(M_{2})_{i,1}},\ldots,\prod_{i=1}^{l}z_{i}^{(M_{2})_{i,n}}\right),   
    \end{equation*}
    so that $\psi(x)=(\psi_{2}\circ\psi_{1})(x)=\exp(2\pi i M x)$. 
    The map $\Z^l\to\R^m,\, k\mapsto M_1k$ is injective by construction (since the columns of $M_{1}$ are a $\Z$-basis of $G$), so that by Pontryagin duality, see for example \cite{stroppel}*{Proposition~23.2}, the image of $\psi_1$ is dense in $\T^l$.
    It follows from \cite{lawton2024fourier}*{Proposition~2} that $\Gamma$ is Bohr almost periodic, so that by \cite{lawton-inverse}*{Theorem~5} the closure of $\psi_1(\Gamma)$ in $\T^l$ is a union of $l-m$ dimensional tori. In particular, the Zariski closure $Z$ of $\psi_1(\Gamma)$ has dimension at least $l-m$. Because $\psi$ has dense image, the group homomorphism $\psi_2$ is surjective. Thus $l\geq n$ and the fibers of $\psi_2$ are all of dimension $l-n$. This shows that the dimension of the Zariski closure of $\psi_2(Z)$ is at least 
	\begin{equation*}
		\dim(Z)-(l-n) \geq (l-m)-(l-n)=n-m.  
	\end{equation*}
	This proves the claim.
\end{proof}
	\begin{Def}
        If $W\subseteq \R^{d}$ is an affine $m$-dimensional subspace, we call a set $\Gamma\subseteq W$ a \emph{Fourier quasicrystal in $W$} if it is isometric to a Fourier quasicrystal $\Lambda\subseteq\R^{m}$, and we say $\Gamma$ is of toral type if $\Lambda$ is.
	\end{Def}
\begin{thm}\label{thm:nontriv23}Let $\Lambda=\Lambda(X,L)\subseteq\R^{d}$ as in \Cref{thm:main1}, with the further assumptions that:
	\begin{enumerate}
		\item The variety $X$ is a curve, i.e. $\dim(X)=1$, and so $d=n-1$. 
		\item Furthermore, $X$ is irreducible and $X(\T)=X\cap\T^{n}$ is not contained in a coset of a proper subtorus of $ \T^{n}$.
		\item The $d\times d$ minors of $L$ are $\Q$-linearly independent.
	\end{enumerate}
	Then, there is no affine subspace $W\subsetneq\R^d$ such that $W\cap\Lambda$ contains a Fourier quasicrystal of toral type in $W$. 
\end{thm}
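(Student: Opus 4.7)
The plan is to argue by contradiction, using \Cref{lem:codimlemma} together with the codimension-one hypothesis on $X$. Suppose $W \subsetneq \R^d$ has dimension $m$ and $\Gamma \subseteq W \cap \Lambda$ is a Delone Fourier quasicrystal of toral type in $W$. Parametrise $W$ by an affine isometry $\iota\colon \R^m \to W$, $\iota(y) = x_0 + Uy$ with $U^tU = I_m$, and set $\Gamma_0 = \iota^{-1}(\Gamma)$, $M = LU$, $\psi(y) = \exp(2\pi i M y)$, and $X' = \exp(-2\pi i L x_0) \cdot X$. Then $\Gamma \subseteq \Lambda$ translates into $\psi(\Gamma_0) \subseteq X'(\T)$, and by \Cref{lem:translate} $X'$ is again an irreducible strict Lee--Yang curve whose torus part is not contained in a coset of a proper subtorus; it is enough to produce a contradiction for $X'$. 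The case $m = 0$ is vacuous, so assume $m \geq 1$.

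The first key step is to show that the closure $T_1 \subseteq \T^n$ of $\psi(\R^m)$ has dimension $l_1 \geq m + 1$. Since $l_1$ equals the dimension of the smallest $\Q$-rational subspace of $\R^n$ containing $LU\R^m$, this inequality fails exactly when $LU\R^m$ is itself rational. But \Cref{lem:assL}(1) combined with the curve hypothesis $c = n - d = 1$ gives $\dim_\Q v \geq 2$ for every nonzero $v \in L\R^d$, while any nonzero rational vector has $\dim_\Q v = 1$. Hence $L\R^d \cap \Q^n = \{0\}$, so the nontrivial subspace $LU\R^m \subseteq L\R^d$ cannot be rational, and $l_1 \geq m+1$.

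Applying \Cref{lem:codimlemma} to the dense-image homomorphism $\psi\colon \R^m \to T_1 \cong \T^{l_1}$ then yields that the Zariski closure $Z$ of $\psi(\Gamma_0)$ in $(\C^*)^n$ has complex dimension at least $l_1 - m \geq 1$. Since $Z \subseteq X'$ and $X'$ is irreducible of dimension one, this forces $Z = X'$. Inspecting the proof of \Cref{lem:codimlemma}, which via Lawton's inverse theorem realises the closure of $\psi(\Gamma_0)$ as a union of cosets of subtori, one sees that $Z$ is a union of cosets of one-dimensional algebraic subtori of $(\C^*)^n$; irreducibility of $X' = Z$ collapses this union to a single coset $X' = z_0 H$ with $H$ a one-dimensional algebraic subtorus. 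Intersecting with $\T^n$ and picking any $w \in X'(\T)$ gives $X'(\T) \subseteq w(H \cap \T^n)$, a coset of the proper real subtorus $H \cap \T^n \subsetneq \T^n$ (proper because $n \geq 2$). Translating back by $\exp(2\pi i L x_0) \in \T^n$ places $X(\T)$ in a coset of the same proper subtorus, contradicting hypothesis~(2).

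The main obstacle I anticipate is justifying the application of \Cref{lem:codimlemma} with target the proper subtorus $T_1 \subsetneq \T^n$ rather than the full $\T^n$: this requires rerunning the factorisation $\psi = \psi_2 \circ \psi_1$ from the published proof with $l_1$ in place of $n$ and extracting not only the dimension bound but also the underlying ``union of subtorus cosets'' structure. Both are technical rather than conceptual. The remaining steps---collapsing the union via irreducibility of $X'$, and converting a complex-algebraic coset statement into a real-subtorus coset statement via intersection with $\T^n$---are routine.
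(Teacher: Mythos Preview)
Your argument is correct but takes a more circuitous route than the paper's. The key difference is \emph{where} hypothesis~(2) enters.

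The paper first shows that $\psi$ has dense image in the \emph{full} torus $\T^n$, not merely in some subtorus $T_1$. The argument is short: since $x\mapsto\exp(2\pi iLx)$ is injective on $\R^d$ (by \Cref{lem:assL}(2)) and $M$ has full rank, $\psi(\Gamma)$ is an infinite subset of the irreducible one-dimensional complex curve $X'$, so its complex Zariski closure is all of $X'$; hence $X'(\T)$ lies in the closed subtorus $T_M=\overline{\psi(\R^m)}$, and hypothesis~(2) then forces $T_M=\T^n$. With density in $\T^n$ established, \Cref{lem:codimlemma} applies verbatim and gives the Zariski closure of $\psi(\Gamma)$ dimension at least $n-m\ge 2$ (since $m\le d-1=n-2$), which immediately contradicts $\dim X'(\T)=1$. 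No coset structure needs to be extracted from the proof of the lemma.

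Your route instead uses hypothesis~(3), via \Cref{lem:assL}(1), to obtain only $l_1\ge m+1$, and then must rerun the internals of \Cref{lem:codimlemma} inside $T_1$, extract the union-of-subtorus-cosets structure from Lawton's inverse theorem, collapse the union to a single coset by irreducibility of $X'$, check that $H\cap\T^n$ is a connected proper subtorus, and only then invoke hypothesis~(2). All of this works, but the ``obstacle'' you anticipate---applying the lemma with target $T_1\subsetneq\T^n$ and recovering the coset description---is one the paper's approach simply does not encounter. Using (2) upstream to force $T_1=\T^n$ lets the lemma apply as stated and turns the endgame into a one-line dimension count.
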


\begin{proof}
	Assume that this is not the case, so that there are $m\in\{1,\ldots,d-1\},\ a\in\R^{d},\ M\in\R^{d\times m}$ of full rank, and a Fourier quasicrystal of toral type $\Gamma\subseteq\R^{m}$ such that $\{My+a\mid y\in\Gamma\}\subseteq \Lambda$. Multiplying $X$ by $ \exp(-2\pi i La)\in\T^{n}$ provides a new Lee--Yang curve $X'$ that also satisfies our assumptions such that $\Lambda(X',L)=\Lambda-a$. Thus we may assume that $a=0$. 
    We consider the group homomorphism
    \begin{equation*}
        \psi\colon\R^m\to\T^n,\, y\mapsto \exp(2\pi i LMy).
    \end{equation*}
    We will show that $\psi$ has dense image.
    By construction we have that
	\[\psi(\Gamma)=\{\exp(2\pi i LM y)\mid y\in\Gamma\}\subseteq X(\T).\]
	Since $\R^d\to\T^n,\, x\mapsto \exp(2\pi i Lx)$ is injective, by \Cref{lem:assL}, and $M$ has full rank, it follows that $\psi(\Gamma)$ is infinite. Therefore, its Zariski closure in $\T^{n}$ is 
    \begin{equation*}
        \overline{\psi(\Gamma)}^{\rm \,zar}=X(\T)
    \end{equation*}
    because $X$ is irreducible and one-dimensional.
	On the other hand, the closure $T_{M}=\overline{\psi(\R^{m})}\subseteq\T^{n}$ of the image of $\psi$ is a subtorus, so it is Zariski closed in $\T^{n}$. Because $T_M$ contains $\psi(\Gamma)$ and thus its Zariski closure $X(\T)$, our assumption (2) implies that $\T_{M}=\T^{n}$. This shows that $\psi$ has a dense image in $\T^{n}$. Now \Cref{lem:codimlemma} says that $\overline{\psi(\Gamma)}^{\rm \,zar}$ has dimension at least $n-m\ge 2$, in contradiction to $\dim(X(\T))=1$.
\end{proof}

Because every one-dimensional Fourier quasicrystal is of toral type by \cite{olevskii2020fourier}, we obtain the following.

\begin{cor}\label{cor:no1dimfqc}
	Consider $\Lambda=\Lambda(X,L)$ as in \Cref{thm:nontriv23}. Then, there is no one-dimensional affine subspace $W\subsetneq\R^n$ such that $W\cap\Lambda$ contains a Fourier quasicrystal in $W$.
\end{cor}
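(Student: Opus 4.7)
The plan is to reduce this corollary directly to \Cref{thm:nontriv23} by invoking the one-dimensional structure theorem of Olevskii and Ulanovskii. The key observation is that the only additional hypothesis in \Cref{thm:nontriv23}, compared to \Cref{cor:no1dimfqc}, is that the Fourier quasicrystal contained in $W \cap \Lambda$ be of \emph{toral type}, i.e. that its spectrum lies inside a finitely generated subgroup of $\R$. In the one-dimensional case, this additional hypothesis is automatic.

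More precisely, I would argue by contradiction: suppose $W \subsetneq \R^d$ is a one-dimensional affine subspace (so $m=1$) such that $W \cap \Lambda$ contains a set $\Gamma$ that is isometric to a Fourier quasicrystal $\Lambda_0 \subseteq \R$. By \cite{olevskii2020fourier}, every Fourier quasicrystal in $\R$ is the zero set of a real-rooted trigonometric polynomial, and in particular its spectrum is contained in a finitely generated subgroup of $\R$. Hence $\Lambda_0$ is of toral type, and the isometric copy $\Gamma \subseteq W$ is of toral type as well. This produces a one-dimensional affine subspace $W \subsetneq \R^d$ such that $W \cap \Lambda$ contains a Fourier quasicrystal of toral type in $W$, directly contradicting \Cref{thm:nontriv23}.

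There is essentially no obstacle here beyond verifying that the ``toral type'' property is preserved under the isometric identification $\Gamma \cong \Lambda_0$; this is immediate since an affine isometry of $\R$ with any line $W$ sends a finitely generated subgroup containing the spectrum of $\Lambda_0$ to a finitely generated subset of $\R^d$ whose $\Z$-span is still finitely generated, and this span contains the (image of the) spectrum of $\Gamma$ viewed inside $W$. Hence the corollary follows at once from \Cref{thm:nontriv23} applied with $m = 1$.
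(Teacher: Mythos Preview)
Your proposal is correct and follows exactly the paper's approach: invoke \cite{olevskii2020fourier} to conclude that every one-dimensional Fourier quasicrystal is of toral type, and then apply \Cref{thm:nontriv23}. The final paragraph is unnecessary, since by the paper's definition a Fourier quasicrystal $\Gamma$ in $W$ is of toral type precisely when the corresponding $\Lambda_0\subseteq\R^m$ is, so there is nothing further to check about the isometric identification.
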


\begin{rem}
	In the proof of \Cref{thm:nontriv23} the Zariski closure of $\psi(\Gamma)$ does not change if we remove finitely many points. Thus in \Cref{cor:no1dimfqc} we can even deduce that there is no one-dimensional affine subspace $W\subsetneq\R^n$ and no finite set $S\subseteq W$ such that $(W\cap\Lambda)\cup S$ contains a Fourier quasicrystal in $W$. In \Cref{cor:no1dimfqc-asymp} we will prove an even stronger statement.
\end{rem}

\begin{cor} \label{cor:no1dimMain}
	Consider $\Lambda=\Lambda(X,L)$ as in \Cref{thm:nontriv23} and further assume that $d\in\{2,3\}$. Then $\Lambda$
	does not contain a set which is the product of two lower dimensional Fourier quasicrystals or a translate thereof.
\end{cor}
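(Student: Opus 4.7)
The plan is to reduce \Cref{cor:no1dimMain} to \Cref{cor:no1dimfqc}, which already rules out one-dimensional Fourier quasicrystals inside any one-dimensional affine slice of $\Lambda$. The main observation is that, because $d\leq 3$, every nontrivial decomposition of a subset of $\Lambda$ as a product $\Gamma_1\times\Gamma_2$ must have at least one one-dimensional factor; a single horizontal slice of that product then yields an isometric copy of the other factor lying on an affine line in $\R^d$.

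More precisely, suppose for contradiction that there exist $a\in\R^d$ and Fourier quasicrystals $\Gamma_i\subseteq\R^{d_i}$ with $d_1+d_2=d$ and $d_1,d_2\geq1$ such that
\begin{equation*}
a+(\Gamma_1\times\Gamma_2)\subseteq\Lambda.
\end{equation*}
Since $d\in\{2,3\}$ and both factors have positive dimension, one of them---say $\Gamma_2$---must be one-dimensional. Fixing any $p\in\Gamma_1$ and slicing, we obtain
\begin{equation*}
S=a+(\{p\}\times\Gamma_2)\subseteq\Lambda,
\end{equation*}
and $S$ lies on the affine line $W=a+(\{p\}\times\R^{d_2})\subsetneq\R^d$. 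Under the natural identification $W\cong\R$, the set $S$ is a translate of $\Gamma_2$, hence a Fourier quasicrystal in $W$ in the sense of the definition preceding \Cref{thm:nontriv23}. This contradicts \Cref{cor:no1dimfqc}.

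Since the substantive work has already been carried out in \Cref{thm:nontriv23} and \Cref{cor:no1dimfqc}, there is no real obstacle to the derivation of \Cref{cor:no1dimMain}; it is a direct slicing consequence. The same argument in fact also rules out rotated products of lower dimensional Fourier quasicrystals, since a one-dimensional slice of a rotated product still lies on an affine line and is still an isometric copy of one of the factors, which suffices to invoke \Cref{cor:no1dimfqc}.
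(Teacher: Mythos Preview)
Your proof is correct and follows essentially the same route as the paper: since $d\in\{2,3\}$, one factor of any nontrivial product must be one-dimensional, and slicing the (translated) product at a point of the other factor produces an isometric copy of a one-dimensional Fourier quasicrystal on an affine line in $\R^d$, contradicting \Cref{cor:no1dimfqc}. Your version simply spells out the slice explicitly and adds the (correct) remark that the same argument handles rotated products, since \Cref{cor:no1dimfqc} only requires an isometric copy of a one-dimensional FQ on some affine line.
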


\begin{proof}
	If $\Lambda$ contained the translate of a product of two lower dimensional Fourier quasicrystals, then at least one of these lower dimensional Fourier quasicrystals would be one-dimensional and thus contained in some one-dimensional affine subspace $W$ of $\R^n$. Now claim follows from \Cref{cor:no1dimfqc}.
\end{proof}

Next we show that $ W \cap \Lambda $ cannot even contain any discrete set which is asymptotically close to a Fourier quasicrystal.

\begin{Def}
	Two discrete sets $ \{ s_n \}_{n= - \infty}^\infty $ and $ \{ k_m \}_{m= - \infty}^\infty $ enumerated monotonically 
	are called \emph{asymptotically close}
	if  there exist $ N_\pm \in \mathbb N $ such that
	$$ \lim_{n \rightarrow \pm \infty} (s_n - k_{n+N_\pm}) = 0 .$$
\end{Def}

Let us remind that two zero sets of analytic almost periodic functions, that are asymptotically close,
coincide:
\begin{thm}[\cite{KuSu20}*{Theorem~5}] \label{ThKuSu}
	Let $k_n$ and $l_n$ inside the strip $ \big\{z = x + iy : |y| < h \big\} $ be zeroes of two almost periodic functions
	$f_1$ and $f_2$, respectively, both holomorphic in a slightly larger strip $ \big\{z = x + iy : |y| < H \big\}, h < H$. If there
	exists a subsequence $l_{n_m}$ of $l_n$ such that
	\begin{equation} \label{cond}
		\lim_{n \rightarrow \infty} (k_n - l_{n_m} ) = 0, 
	\end{equation}
	then all the zeroes of $f_1$ inside the smaller strip are zeroes of $f_2$ with at least the same multiplicity.
\end{thm}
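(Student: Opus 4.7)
The plan is to invoke the standard machinery of almost periodic holomorphic functions in a strip: Bochner's compactness theorem for the family of translates, Hurwitz's theorem on convergence of zeros under uniform convergence, and the compact topological group structure of the joint translation orbit closure.

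First I would fix a zero $w_{0}$ of $f_{1}$ of multiplicity $p$ in the strip $|y|<h$; the goal is to show that $f_{2}$ vanishes to order at least $p$ at $w_{0}$. To bring the far-away zeros $k_{n}$ back to $w_{0}$, translate by $\tau_{n}:=k_{n}-w_{0}$. Then $f_{1}(z+\tau_{n})$ has a zero at $z=w_{0}$ of multiplicity equal to that of $f_{1}$ at $k_{n}$, while the hypothesis $k_{n}-l_{n_{m}}\to 0$ forces $f_{2}(z+\tau_{n})$ to have a zero at $z=w_{0}+(l_{n_{m}}-k_{n})$, with this location converging to $w_{0}$. By Bochner's compactness applied jointly, extract a subsequence $\tau_{n_{j}}$ along which the pair $(f_{1}(z+\tau_{n_{j}}),f_{2}(z+\tau_{n_{j}}))$ converges uniformly on compact subsets of any strip $|y|<H'<H$ to a pair of almost periodic functions $(G_{1},G_{2})$. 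A further pigeonhole refinement stabilizes the integer multiplicities of $f_{1}$ at $k_{n_{j}}$ and of $f_{2}$ at the matched zero $l_{n_{m(n_{j})}}$, so that Hurwitz's theorem yields $G_{1}(w_{0})=0$ and $G_{2}(w_{0})=0$ with definite multiplicities.

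The main obstacle is transferring these facts about the limit pair $(G_{1},G_{2})$ back to $(f_{1},f_{2})$ while retaining the correct multiplicity $p$ at $w_{0}$. This is handled using Bohr's theorem: the joint translation orbit closure $\overline{\{(f_{1}(\cdot+\tau),f_{2}(\cdot+\tau))\mid\tau\in\R\}}$ is a compact abelian topological group, so $(f_{1},f_{2})$ itself lies in the orbit closure of $(G_{1},G_{2})$, and there is a sequence $\sigma_{k}$ with $(G_{1}(\cdot+\sigma_{k}),G_{2}(\cdot+\sigma_{k}))\to(f_{1},f_{2})$. Arranging this sequence so that the zeros of $G_{2}$ of the appropriate multiplicity get shifted back toward $w_{0}$, and running Hurwitz once more, propagates the multiplicity inequality from the limit pair back to the original pair. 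An alternative and conceptually cleaner formulation runs through the theory of almost periodic divisors of Favorov--Kadets: the divisors $D_{i}:=\sum_{k}\operatorname{mult}_{k}(f_{i})\,\delta_{k}$ are almost periodic as tempered distributions, the hypothesis gives $D_{1}\leq D_{2}$ along a recurrent sequence of translations at infinity, and almost periodicity of the divisors extends this inequality to the whole strip.
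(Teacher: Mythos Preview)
This theorem is not proved in the paper; it is quoted verbatim as \cite{KuSu20}*{Theorem~5} and used as a black box in the proof of \Cref{cor:no1dimfqc-asymp}. There is therefore no ``paper's own proof'' to compare your attempt against.

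That said, your outline follows the standard route one would expect for such a statement: Bochner compactness for the joint family of translates $(f_1(\cdot+\tau),f_2(\cdot+\tau))$, Hurwitz to track zeros under locally uniform limits, and the compact group structure of the orbit closure to return from a limit pair $(G_1,G_2)$ to $(f_1,f_2)$. One point you should tighten: you translate by $\tau_n=k_n-w_0$, so that $f_1(\cdot+\tau_n)$ has a zero at $w_0$ of the multiplicity of $f_1$ at $k_n$, not the multiplicity $p$ at $w_0$; you need an argument that in the limit $G_1$ acquires a zero of order \emph{at least} $p$ at $w_0$ (e.g.\ by first choosing the $\tau_n$ to be $\epsilon$-almost periods of $f_1$ so that $G_1=f_1$, or by invoking almost periodicity of the divisor of $f_1$ directly). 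Your alternative route via almost periodic divisors is closer in spirit to how this is usually handled and avoids that bookkeeping.
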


\begin{cor}\label{cor:no1dimfqc-asymp}
	If $\Lambda=\Lambda(X,L)\subseteq\R^{d}$ as in \Cref{thm:nontriv23}, and $d\in\{2,3\}$, then 
	there is no one-dimensional affine subspace $W\subsetneq\R^n$ such that $W\cap\Lambda$ is asymptotically close to
	a Fourier quasicrystal in $W$.
\end{cor}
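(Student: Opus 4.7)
The plan is to assume for contradiction that there is a one-dimensional affine subspace $W \subsetneq \R^{d}$ such that $W \cap \Lambda$ is asymptotically close to a Fourier quasicrystal $\Gamma$ in $W$, and then to upgrade this \emph{asymptotic} relation to the actual containment $\Gamma \subseteq W \cap \Lambda$; this immediately contradicts \Cref{cor:no1dimfqc}. The upgrading tool is \Cref{ThKuSu}, which forces an almost periodic function whose zeros have a subsequence asymptotic to the zero set of another almost periodic function to vanish at every zero of the latter.

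Concretely, I would fix an isometric parametrization $W = \{a + tv : t \in \R\}$ with $v \neq 0$, and write $\tilde{\Gamma} \subseteq \R$ for the image of $\Gamma$ under the identification $W \cong \R$. By Olevskii--Ulanovskii \cite{olevskii2020fourier}, $\tilde{\Gamma}$ is exactly the zero set of a real-rooted trigonometric polynomial $f_1(t)$, which extends to an entire almost periodic function in every horizontal strip of $\C$. I would then pick polynomials $p_1, \ldots, p_m \in \C[z_1, \ldots, z_n]$ whose common zero locus in $(\pp^1)^n$ is $X$, and set
\[
h_j(t) \;=\; p_j(\exp(2\pi i L(a + tv))), \qquad j = 1, \ldots, m.
\]
Each $h_j$ is an entire almost periodic function of $t$, and $W \cap \Lambda$ is the common real zero set of the $h_j$. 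Since $\Lambda$ is Delone by \Cref{thm:main1}, the set $W \cap \Lambda$ is discrete, so at least one $h_j$ is not identically zero; for each such $j$ the full complex zero set $Z_j$ of $h_j$ is discrete and contains $W \cap \Lambda$.

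For every such $j$, the assumed asymptotic closeness of $W \cap \Lambda$ and $\tilde{\Gamma}$ furnishes a subsequence of $Z_j$, coming from $W \cap \Lambda$, that is asymptotic to a monotone enumeration of $\tilde{\Gamma}$. Applying \Cref{ThKuSu} to $f_1$ and $h_j$, inside a sufficiently thin horizontal strip $\{|y| < h\}$, then gives $\tilde{\Gamma} \subseteq Z_j$. Intersecting over all $j$ with $h_j \not\equiv 0$ yields
\[
\tilde{\Gamma} \;\subseteq\; \Big(\bigcap_{j:\, h_j \not\equiv 0} Z_j\Big) \cap \R \;=\; W \cap \Lambda,
\]
so $W \cap \Lambda$ contains the Fourier quasicrystal $\Gamma$ in $W$, contradicting \Cref{cor:no1dimfqc}. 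The main technical step I expect to need care is the application of \Cref{ThKuSu}: one must pick the strip $\{|y| < h\}$ so that $f_1$ and each $h_j$ are holomorphic and almost periodic in a slightly larger strip, and enumerate the zeros of $h_j$ inside $\{|y| < h\}$ monotonically in a way compatible with the asymptotic enumeration of $W \cap \Lambda$.
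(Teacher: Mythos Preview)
Your proposal is correct and follows essentially the same approach as the paper's proof: reduce to \Cref{cor:no1dimfqc} by using \Cref{ThKuSu} to upgrade asymptotic closeness to genuine containment, with the almost periodic functions being the Olevskii--Ulanovskii trigonometric polynomial for $\Gamma$ on one side and the restrictions of the defining polynomials of $X$ along $t\mapsto\exp(2\pi i L(a+tv))$ on the other. Your write-up is in fact more careful than the paper's in two respects: you explicitly handle the possibility that some $h_j$ vanish identically, and you only claim the containment $\tilde{\Gamma}\subseteq W\cap\Lambda$ (which is all that \Cref{ThKuSu} directly yields and all that is needed), whereas the paper asserts that the two sets coincide.
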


\begin{proof}
Assume for the sake of a contradiction that there exists a one-dimensional affine subspace $W\subsetneq\R^n$ and a Fourier quasicrystal $\tilde{\Lambda}$ in $W$ which is asymptotically close to $W\cap\Lambda$.
Following  \cite{olevskii2020fourier} there exists a trigonometric polynomial whose zeroes coincides with $\tilde{\Lambda}$.
The set $ W \cap \Lambda $ is a common zero set for several trigonometric polynomials obtained as the restrictions of
the polynomials determining $ X $ to the curve $ \exp (2 \pi i L W)$. 
It is enough to consider zeroes on the real line.
Then Theorem \ref{ThKuSu} implies that
$ W \cap \Lambda $ coincides with $ \tilde{\Lambda}$. We may apply Corollary \ref{cor:no1dimfqc}.
\end{proof}

\section{Auto-correlation and the diffraction measure}\label{sec: autocorrelation}
Here we prove \Cref{thm: HU and AC}. For the reader’s convenience, we present a more explicit version.   
\begin{thm}\label{thm:autocorr}
Suppose that a set $ \Lambda\subseteq\R^{d}$ is a Fourier quasicrystal, with spectrum $\Lambda'$ and Fourier coefficients $ (c_{\xi})_{\xi\in\Lambda'} $, as in \Cref{def:FQ}. 
	Then, using the notation $N_{R}(x)=|\Lambda\cap B_{R}(x)|$, we have:
	\begin{enumerate}
		\item The auto-correlation $ \gamma $ of $ \Lambda $ exists and the diffraction measure is equal to $ \widehat{\gamma}=\sum_{\xi\in\Lambda'}|c_{\xi}|^{2}\delta_{k} $. Namely, for every $ f\in\mathcal{S}(\R^{d}) $,
		\[\lim_{R\to\infty}\frac{1}{\mathrm{Vol}(B_{R})}\sum_{x,y\in\Lambda\cap B_{R}}\hat{f}(x-y)=\sum_{\xi\in\Lambda'}|c_{\xi}|^{2}f(\xi).\]
				\item The Fourier coefficient $c_{0}$ is real, positive, and is the density of $\Lambda$
		\[\lim_{R\to\infty}\frac{N_{R}(0)}{\Vol(B_{R})}=c_{0}> 0,\]
		and there exists $C'>0$ such that for all $R>1$ \[\sup_{x\in\R^{d}}\left|N_{R}(x)-c_{0}\Vol(B_{R})\right|\le C' R^{d-1}.\]
	\end{enumerate} 
\end{thm}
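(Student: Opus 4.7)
For part (1), my plan is to use Parseval's identity combined with the Fourier quasicrystal structure. Since $\mu_R:=\mathbf{1}_{B_R}\mu$ (where $\mu=\sum_{x\in\Lambda}\delta_x$) is a finite measure, its autocorrelation $\mu_R*\tilde\mu_R=\sum_{x,y\in\Lambda\cap B_R}\delta_{x-y}$ is finite with continuous Fourier transform $|\hat\mu_R|^2$, so Parseval gives
\[
\sum_{x,y\in\Lambda\cap B_R}\hat f(x-y)=\int_{\R^d}f(\xi)\,|\hat\mu_R(\xi)|^2\,d\xi.
\]
Distributionally, $\hat\mu_R=\hat{\mathbf{1}_{B_R}}*\hat\mu=\sum_{\eta\in\Lambda'}c_\eta\,\hat{\mathbf{1}_{B_R}}(\cdot-\eta)$. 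Expanding $|\hat\mu_R|^2$ as a double sum over $\eta,\eta'\in\Lambda'$, changing variables $\xi=\eta+u/R$, and dividing by $\Vol(B_R)=v_d R^d$, the diagonal ($\eta=\eta'$) contributes in the limit $v_d^{-1}\|\hat{\mathbf{1}_{B_1}}\|_2^2\sum_\eta|c_\eta|^2 f(\eta)=\sum_\eta|c_\eta|^2 f(\eta)$, using $\|\hat{\mathbf{1}_{B_1}}\|_2^2=\|\mathbf{1}_{B_1}\|_2^2=v_d$ by Plancherel. The off-diagonal terms vanish in the limit because the peaks of $\hat{\mathbf{1}_{B_R}}$ centred at distinct $\eta,\eta'\in\Lambda'$ separate as $R\to\infty$, using the discreteness of $\Lambda'$.

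The main obstacle is that the series $\hat\mu_R=\sum_\eta c_\eta\,\hat{\mathbf{1}_{B_R}}(\cdot-\eta)$ does not converge absolutely: $\hat{\mathbf{1}_{B_R}}$ decays only as $|\xi|^{-(d+1)/2}$ while $|c_\eta|$ has polynomial growth. To circumvent this, I would first prove a smoothed version of the identity, replacing $\mathbf{1}_{B_R}$ by a Schwartz cutoff $\chi_R(x)=\chi(x/R)$ for which the analogous series converges absolutely, and then pass to the sharp cutoff via a sandwich argument using Schwartz bumps $\chi^-\le\mathbf{1}_{B_1}\le\chi^+$ approximating $\mathbf{1}_{B_1}$. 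Positivity of the measures involved ($\mu^-*\tilde\mu^-\le\mu_R*\tilde\mu_R\le\mu^+*\tilde\mu^+$) together with $\|\chi^\pm\|_2^2\to v_d$ makes the squeeze effective.

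For part (2), my plan is to apply the FQ identity to Schwartz approximations of the shifted indicator $\mathbf{1}_{B_R(y)}(\cdot)=\mathbf{1}_{B_R}(\cdot-y)$. Let $\chi_\epsilon^\pm$ be smooth with $\chi_\epsilon^-\le\mathbf{1}_{B_1}\le\chi_\epsilon^+$ and $\mathrm{supp}(\chi_\epsilon^\pm)\subseteq B_{1\pm\epsilon}$, and set $\psi_R^\pm(x):=\chi_\epsilon^\pm(x/R)$. Taking $\hat f(\cdot)=\psi_R^\pm(\cdot-y)$ in the FQ identity gives
\[
\sum_{x\in\Lambda}\psi_R^\pm(x-y)=c_0\,R^d{\textstyle\int}\chi_\epsilon^\pm\;+\;R^d\sum_{\xi\in\Lambda'\setminus\{0\}}c_\xi\,\check\chi_\epsilon^\pm(R\xi)\,e^{2\pi i\langle y,\xi\rangle},
\]
and the crucial point is that $|e^{2\pi i\langle y,\xi\rangle}|=1$, so every error term is bounded uniformly in $y$. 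Taking $\epsilon\sim 1/R$, the leading term becomes $c_0\Vol(B_R)+O(R^{d-1})$, and the tail over $\xi\neq 0$ is controlled by combining the rapid decay of $\check\chi_\epsilon^\pm$ (via integration by parts, tracking how the derivative bounds of $\chi_\epsilon^\pm$ scale with $\epsilon$) against the polynomial growth $\sum_{|\xi|\le T}|c_\xi|=O(T^m)$. Sandwiching $\sum_x\psi_R^-(x-y)\le N_R(y)\le\sum_x\psi_R^+(x-y)$ then yields the uniform estimate $\sup_y|N_R(y)-c_0\Vol(B_R)|=O(R^{d-1})$. Realness of $c_0$ follows from the Hermitian symmetry $\overline{c_\xi}=c_{-\xi}$ of $\hat\mu$ (since $\mu$ is real), and positivity from the fact that $c_0=\lim_R N_R(0)/\Vol(B_R)$ is the density of $\Lambda$, which is positive under the hypothesis that $\Lambda$ is relatively dense.
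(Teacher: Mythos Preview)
Your plan for part~(2) is close in spirit to the paper's, which also sandwiches $\mathbf{1}_{B_R}$ between smooth approximations and applies the FQ identity. The paper, however, uses $g_R=\mathbf{1}_{B_R}*\psi$ with a \emph{fixed} bump $\psi$ supported in $B_1$, so that $g_{R-1}\le\mathbf{1}_{B_R}\le g_{R+1}$, and then exploits the factorization $\hat g_R=\hat\psi\cdot\hat{\mathbf{1}_{B_R}}$: the Bessel bound $|\hat{\mathbf{1}_{B_R}}(\xi)|\le C R^{(d-1)/2}|\xi|^{-(d+1)/2}$ combined with the Schwartz decay of $\hat\psi$ makes the tail $O(R^{(d-1)/2})$. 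Your version with $\epsilon\sim 1/R$ and integration by parts can also reach $O(R^{d-1})$, but only if you use that the derivatives of $\chi_\epsilon^\pm$ are supported on an annulus of volume $O(\epsilon)$, so that $\|\partial^\alpha\chi_\epsilon^\pm\|_{L^1}=O(\epsilon^{\,1-|\alpha|})$ rather than $O(\epsilon^{-|\alpha|})$; without that extra factor of $\epsilon$ the tail bound degenerates to $O(R^d)$. There is also a genuine gap: your argument for $c_0>0$ invokes relative density of $\Lambda$, which is \emph{not} part of the hypothesis (\Cref{def:FQ} only assumes discreteness with polynomial growth). The paper instead chooses $\phi=f*\tilde f$ supported in a ball isolating $0$ in $\Lambda'$, so that $c_0=\sum_\xi c_\xi\phi(\xi)=\sum_{x\in\Lambda}|\hat f(x)|^2>0$ for a suitable modulation of $f$.

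For part~(1) your route differs substantially from the paper's, and the sandwich step has a real problem: the ordering $\mu^-*\tilde\mu^-\le\mu_R*\tilde\mu_R\le\mu^+*\tilde\mu^+$ holds only when tested against nonnegative functions, so it yields the limit only for $\hat f\ge 0$. Passing from this to arbitrary $f\in\mathcal{S}(\R^d)$ is not automatic (the $\gamma_R$ have total mass $\sim\Vol(B_R)$, so no tightness), and you do not address it. The paper avoids this by a more direct computation: it first proves, by the same mollified sandwich as in part~(2), the asymptotic
\[
\sum_{x\in\Lambda\cap B_R}e^{-2\pi i\langle\xi_0,x\rangle}=c_{\xi_0}\Vol(B_R)+O(R^{d-1})
\]
for each fixed $\xi_0$, then applies the FQ summation to the inner sum over \emph{all} of $\Lambda$ (so convergence is immediate), swaps the order of summation using the uniform bound $\bigl|\sum_\xi c_\xi\hat f(\xi)e^{2\pi i\langle\xi,y\rangle}\bigr|\le\sum_\xi|c_\xi\hat f(\xi)|$, and finishes with the exponential-sum asymptotic above. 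The boundary correction from $\sum_{x\in\Lambda}$ to $\sum_{x\in\Lambda\cap B_R}$ is handled last for compactly supported test functions. This never squares a conditionally convergent series and never needs a positivity sandwich.
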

\begin{proof} We first prove part (2) and then part (1). To see that $c_{0}>0$, let $\epsilon>0$ such that $|\xi|>\epsilon$ for all $\xi\in\Lambda'\setminus\{0\}$. Let $f$ be a smooth function supported inside $ B_{\epsilon/2}(0)$ and normalized in $L_{2}$. We can make sure that $\hat{f}(x)\ne 0$ for some $x\in\Lambda$ by changing $f(\xi)\mapsto f(\xi)e^{2\pi i\langle \xi,x_{0}\rangle}$ so that $\hat{f}(x)\mapsto\hat{f}(x-x_{0}) $, for appropriate shift $x_{0}$. Let $\tilde{f}(x)=\overline{f(-x)}$, so that the convolution $\phi=f*\tilde{f}$ is smooth, supported inside the ball  $B_{\epsilon}(0)$, and has a non-negative Fourier transform $\hat{\phi}=|\hat{f}|^{2}$ that is non-zero on some point of $\Lambda$. The $L_{2}$ normalization gives $\phi(0)=\int_{\R^{d}}|\hat{f}|^{2}dx=1$, and so the summation formula yields
	\[c_{0}=\sum_{\xi\in\Lambda'}c_{\xi}\phi(\xi)= \sum_{x\in\Lambda}\hat{\phi}(x)>0.\]
	To prove the rest of part (2), denote the inverse Fourier transform by $\check{f}(\xi)=\int_{\R^{d}}e^{2\pi i\langle \xi,x\rangle}f(x)dx=\hat{f}(-\xi)$, so that if $g=\hat{f}$ then  $\check{g}=f$. The summation formula can be written as   
\[		\sum_{x\in\Lambda}f(x) =\sum_{\xi\in\Lambda'}c_{\xi}\check{f}(\xi),\ \text{for all }\ f\in\mathcal{S}(\R^{d}).\]
	Notice that if $f$ is symmetric $f(x)=f(-x)$, then $\check{f}=\hat{f}$.
	We define a family of symmetric non-negative ``bump functions'' $g_{R}\in\mathcal{S}(\R^{d})$ for $R>1$, by
	\[g_{R}=\chi_{R}*\psi,\]
	where $\chi_{R}$ is the indicator function of $B_{R}(0)$ and $\psi\in\mathcal{S}(\R^{d})$ is some fixed non-negative smooth function supported in $B_{1}(0)$, which is normalized in the sense that $\int \psi(x)dx=1$, and symmetric, i.e. $\psi(-x)=\psi(x)$. Since both $\chi_{R}$ and $\psi$ are symmetric, then so does $g_{R}$, and therefore $ \check{g}_{R}(\xi)=\hat{g}_{R}(\xi)=\widehat{\psi}(\xi)\widehat{\chi}_{R}(\xi)$. 
	
	By construction, $ g_{R-1}(x)\le \chi_{R}(x) \le g_{R+1}(x)$ for all $x\in\R^{d}$ and all $R>1$, so the summation over $\Lambda$, shifted by $x_{0}\in\R^{d}$, gives
	\[\sum_{x\in\Lambda}g_{R-1}(x-x_{0})\le N_{R}(x_{0})=\sum_{x\in\Lambda}\chi_{R}(x-x_{0})\le\sum_{y\in\Lambda}g_{R+1}(x-x_{0}).\]
	The inverse Fourier transform of $x\mapsto g_{R}(x-x_{0})$ is $\xi\mapsto e^{2\pi i\langle\xi,x_{0}\rangle}\hat{g}_{R}(\xi)$, so the summation formula gives
	\[\sum_{\xi\in\Lambda'}c_{\xi}e^{2\pi i\langle \xi,x_{0}\rangle}\hat{g}_{R-1}(\xi)\le N_{R}(x_{0})\le\sum_{\xi\in\Lambda'}c_{\xi}e^{2\pi i\langle \xi,x_{0}\rangle}\hat{g}_{R+1}(\xi).\]
Using that $ \hat{g}_{R}(0)=\widehat{\psi}(0)\widehat{\chi}_{R}(0)=\Vol(B_{R})$ for all $R>1$, the above inequalities give 
	\begin{align*}
	\left|N_{R}(x_{0})-c_{0}\Vol(B_{R})\right|	& \le \max_{R'=R\pm 1}\left[c_{0}\left|\Vol(B_{R})-\Vol(B_{R'})\right|+\sum_{\xi\in\Lambda'\setminus\{0\}}\left|c_{\xi}\hat{g}_{R'}(\xi)\right|\right]\\ \le &	c_{0}\left(\Vol(B_{R+1})-\Vol(B_{R-1})\right)+	\max_{R'=R\pm 1}\sum_{\xi\in\Lambda'\setminus\{0\}}\left|c_{\xi}\hat{g}_{R'}(\xi)\right|.
	\end{align*}
	This bound is independent of $x_{0}$. Notice that $\Vol(B_{R+1})-\Vol(B_{R-1})=O(R^{d-1})$. To prove part (2), it is enough to show that $\sum_{\xi\in\Lambda'\setminus\{0\}}|c_{\xi}\hat{g}_{R'}(\xi)|=O(R^{d-1})$. The Fourier transform of the indicator of the unit ball is given by a Bessel function $ \widehat{\chi}_{1}(\xi)=\|\xi\|^{-d/2}J_{d/2}(\|\xi\|) $ which is known to satisfy the bound $ |J_{d/2}(r)|\le C_{J}r^{-\frac{1}{2}} $ for some $C_{J}>0$ (see for example \cite{abramowitz1968handbook}*{p.~364, 9.2.1}). Scaling gives
	\begin{equation*}
		|\widehat{\chi}_{R}(\xi)|=R^{d}|\widehat{\chi}_{1}(R\xi)|\le C_{J} R^{\frac{d-1}{2}}\|\xi\|^{-\frac{d+1}{2}},
	\end{equation*}
so for $\xi\ne0$ we can estimate
\begin{equation}\label{eq: ghat estimate}
	|\hat{g}_{R}(\xi)|=|\widehat{\psi}(\xi)\widehat{\chi}_{R}(\xi)|\le|\widehat{\psi}(\xi)| C_{J} R^{\frac{d-1}{2}}\|\xi\|^{-\frac{d+1}{2}}.
\end{equation}
    Since $\Lambda'$ is discrete there exists $\epsilon>0$ such that $\Lambda'\cap B_{\epsilon}=\{0\}$ which implies $\|\xi\|^{-\frac{d+1}{2}}<\epsilon^{-\frac{d+1}{2}}$ for all $\xi\in\Lambda'\setminus\{0\}$, which in turn gives 
    \[\sum_{\xi\in\Lambda'\setminus\{0\}}|c_{\xi}\hat{g}_{R}(\xi)|< R^{\frac{d-1}{2}}\left[C_{J} \epsilon^{-\frac{d+1}{2}}\sum_{\xi\in\Lambda'\setminus\{0\}}|c_{\xi}\widehat{\psi}(\xi)|\right].
    \]
Since $\psi\in\mathcal{S}(\R^{d})$ and $\Lambda$ is a Fourier quasicrystal, the polynomial growth implies that $\sum_{\xi\in\Lambda'}|c_{\xi}\widehat{\psi}(k)|<\infty$, and therefore $\sum_{\xi\in\Lambda'\setminus\{0\}}|c_{\xi}\hat{g}_{R}(\xi)|\le C'' R^{\frac{d-1}{2}}$ for all $R>1$, which proves part (2), since 
       \[\left|N_{R}(x_{0})-c_{0}\Vol(B_{R})\right|\le c_{0}\left(\Vol(B_{R+1})-\Vol(B_{R-1})\right)+C''(R+1)^{\frac{d-1}{2}}=O(R^{d-1}).\]

	For part (1), notice that $\sum_{x\in\Lambda}g_{R}(x)e^{-2\pi i\langle \xi_{0},x\rangle}=\sum_{\xi\in\Lambda'}c_{\xi}\hat{g}_{R}(\xi-\xi_{0})$. If we define the Fourier coefficient $c_{\xi_{0}}=0$ when $\xi_{0}\notin\Lambda'$, then similarly to part (2), we have 
	\begin{align*}
		&\left|\sum_{x\in\Lambda\cap B_{R}(0)}e^{-2\pi i \langle \xi_{0},x\rangle}-c_{\xi_{0}}\Vol(B_{R})\right|\\  \le &\left|\sum_{x\in\Lambda\cap B_{R}(0)}e^{-2\pi i \langle \xi_{0},x\rangle}-\sum_{x\in\Lambda}g_{R}(x)e^{-2\pi i \langle \xi_{0},x\rangle}\right|+\left|\sum_{\xi\in\Lambda'}c_{\xi}\hat{g}_{R}(\xi-\xi_{0})-c_{\xi_{0}}\Vol(B_{R})\right|\\
		 \le & N_{R+1}(0)-N_{R}(0)+\sum_{\xi\in\Lambda'\setminus\{\xi_{0}\}}\left|c_{\xi}\hat{g}_{R}(\xi-\xi_{0})\right|, 
	\end{align*}
	and the same argument as in part (2), using that $\psi\in\mathcal{S}(\R^{d})$ and that there is some positive lower bound $\|\xi-\xi_{0}\|\ge\epsilon>0$ for all $\xi\in\Lambda'\setminus\{\xi_{0}\}$, we conclude that there is some $C'''>0$ (that may depend on $\xi_{0}$) such that $\sum_{\xi\in\Lambda'\setminus\{\xi_{0}\}}\left|c_{\xi}\hat{g}_{R}(\xi-\xi_{0})\right|\le C'''R^{\frac{d-1}{2}}$ for all $R>1$. This means that summing $e^{-2\pi i \langle \xi,x\rangle}$ over $x\in\Lambda\cap B_{R}(0)$ is equal to $c_{\xi}\Vol(B_{R})+O(R^{d-1})$. Taking complex conjugate, for later use, this gives    
	\begin{equation}\label{eq: ck}
		\sum_{x\in\Lambda\cap B_{R}(0)}e^{2\pi i \langle \xi,x\rangle}=\overline{c_{\xi}}\Vol(B_{R})+O(R^{d-1}).
	\end{equation}
	Given $f\in\mathcal{S}(\R^{d})$ and $y\in\R^{d}$ let $f_{y}(x)=f(y-x)$. Its inverse Fourier transform is $\check{f}_{y}(\xi)=\hat{f}(\xi)e^{2\pi i\langle\xi,y\rangle}$, and the summation formula gives 
	\[|\sum_{x\in\Lambda}f(y-x)|=|\sum_{\xi\in\Lambda'}c_{\xi}\hat{f}(\xi)e^{2\pi i\langle \xi , y\rangle}|\le \sum_{\xi\in\Lambda'}|c_{\xi}\hat{f}(\xi)| ,\]
	 independently of $y$. This bound and \eqref{eq: ck} allows to switch the order of summation and take the limit inside in the next calculation
	\begin{align*}
		&\lim_{R\to\infty}\frac{1}{\mathrm{Vol}(B_{R})}\sum_{y\in \Lambda\cap B_R}\sum_{x\in\Lambda}f(y-x)\\  =&	\lim_{R\to\infty}\frac{1}{\mathrm{Vol}(B_{R})}\sum_{y\in \Lambda\cap B_R}\sum_{\xi\in\Lambda'}c_{\xi}\hat{f}(\xi)e^{2\pi i\langle \xi , y\rangle}\\
		 =& \sum_{\xi\in\Lambda'}c_{\xi}\hat{f}(\xi)\left [	\lim_{R\to\infty}\frac{1}{\mathrm{Vol}(B_{R})}\sum_{y\in \Lambda\cap B_R}e^{2\pi i\langle \xi , y\rangle}\right]\\
		 = &\sum_{\xi\in\Lambda'}|c_{\xi}|^{2}\hat{f}(\xi),
	\end{align*}
    using \Cref{eq: ck} in the last equality.
	It is left to show that for all $ f\in\mathcal{S}(\R^{d}) $
	\[\lim_{R\to\infty}\frac{1}{\mathrm{Vol}(B_{R})}\sum_{x,y\in \Lambda\cap B_R}f(y-x)=\lim_{R\to\infty}\frac{1}{\mathrm{Vol}(B_{R})}\sum_{y\in \Lambda\cap B_R}\sum_{x\in\Lambda}f(y-x),\]
	or equivalently, 
	\[\lim_{R\to\infty}\frac{1}{\mathrm{Vol}(B_{R})}\sum_{y\in \Lambda\cap B_R}\sum_{x\in\Lambda\setminus B_{R}}f(y-x)=0.\]
	Since compactly supported smooth functions are dense in $\mathcal{S}(\R^{d})$, it is enough to prove this convergence for $f\in C_{c}^{\infty}(\R^{d})$. Let $T$ large enough so that $ f $ is supported inside the ball $ B_{T}(0) $ and let $ R\gg T $. Using part (2) we can bound 
	\begin{align*}
		\left|\sum_{y\in \Lambda\cap B_{R}(0)}\sum_{x\in\Lambda\setminus B_{R}(0)}f(y-x)\right| & \le\|f\|_{\infty}\sum_{y\in \Lambda\cap B_{R}\setminus B_{R-T}}|\Lambda\cap B_{T}(y)|\\
		& \le \|f\|_{\infty} C^{2}T^{d-1}(\Vol(B_{R})-\Vol(B_{R-T}))\\&=O(R^{d-1}), 
	\end{align*}
	and therefore,
	\[\lim_{R\to\infty}\frac{1}{\Vol(B_{R})}\left|\sum_{y\in \Lambda\cap B_R}\sum_{x\in\Lambda\setminus B_{R}}f(y-x)\right|=0.\qedhere\]
\end{proof}

\section{Concrete examples of Delone Fourier quasicrystals}\label{sec:concreteexample}
Here we apply our construction to concrete examples of strict Lee--Yang varieties and examine the resulting Fourier quasicrystals. We start with a trivial example.

\begin{ex}[Lattices]
    If $X=\{(1,\ldots,1)\}\in(\pp^1)^n$, which is a strict Lee--Yang variety by \Cref{lem:zerodim}, and $L$ is an $n\times n$ matrix of positive determinant, then $\Lambda(X,L)$ is a lattice in $\R^n$. Every lattice in $\R^n$ can be obtained in this way.
\end{ex}

Next we apply the construction to one dimensional strict Lee--Yang varieties.

\begin{ex}\label{ex:rationalExplicit} 
For a concrete instance of \Cref{ex:rational}, we choose $f_1(t)=t-1$, $f_2(t)=t$ and $f_3=t+1$. Then we obtain
  \begin{equation*}
      \tilde{X} =   {\left\{\left(\frac{-1 + i + t}{-1 - i + t}, \ \ \frac{-i + t}{i + t},\ \ \frac{1 + i + t}{1 - i + t} \right) \mid t\in \C\cup\{\infty\} \right\}}
 \subseteq (\pp^1)^3
  \end{equation*}
  which is a strict Lee--Yang variety of codimension two. It is the closure of the curve considered in \Cref{ex:runningRational}. 
  Its multidegree is $(1,1,1)$. 
  One can check that $\{\theta \in [0,1]^3\mid \exp(2\pi i\theta)\in \tilde{X} \}$ affinely span all of $\R^3$. 
  Indeed, the image of $t=-1,0,1,2$ under the parametrization are affinely independent. 
  Therefore $\tilde{X}$ is not contained in any subtorus of $\T^3$. 
  One checks that the $2\times 2$ minors of
  \begin{equation*}
  L=\begin{pmatrix}
     1&0\\ 0 &1\\ -\sqrt{2}&\sqrt{3}
    \end{pmatrix}
 \end{equation*}
  are positive and linearly independent over $\Q$.
  Therefore, the set
 $\Lambda=\Lambda(\tilde{X},L)$
is a Delone Fourier quasicrystal as in \Cref{thm:main1} satisfying the properties from \Cref{thm:main2}. Its density is 
\begin{equation*}
 \Delta(\Lambda)=1+\sqrt{2}+\sqrt{3}\approx 4.14626.
\end{equation*}
The Fourier transform of its counting measure is supported on the set 
\[\Lambda' = \{L^tk \mid k\in \Z^3 \text{ with } \var(k)<2\}, \]
which is discrete, by \Cref{cor:imagediscrete}.
The curve $\tilde{X}$ and the discrete sets $\Lambda$, $\Lambda'$ appear in \Cref{fig:RationalCurveIntro}. Using \Cref{eq:ftonx} we can even compute the coefficients of the Fourier transform. For example the coefficient of $(0,0)\in\Lambda'$ equals the density $\Delta(\Lambda)$:
\begin{equation*}
	c_{(0,0)}=1+\sqrt{2}+\sqrt{3}\approx 4.14626.
\end{equation*}
For another example consider the point $\xi=\left(\sqrt{2}-1,-\sqrt{3}-1\right)\in\Lambda'$. Again using the integral formula in \Cref{eq:ftonx} we compute its Fourier coefficient as
\begin{equation*}
 c_\xi= \frac{4}{25} i \left((-2+i)+(2+i) \sqrt{2}+2 i \sqrt{3}\right)\approx -0.94053+0.132548 i.
\end{equation*}
\end{ex}

\begin{ex}\label{ex:elliptic}
    Consider the smooth projective curve $X$ of genus one defined by 
    \begin{equation}\label{eq:ellipticex}
       y^2= 1 + 4 x^2 - 8 x^3 + 4 x^4.
    \end{equation}
    By this we mean the normalization of the projective closure of the affine curve cut out by \Cref{eq:ellipticex}. Because $X(\R)$ has $2=g+1$ connected components, namely given by $y>0$ and $y<0$, the curve $X$ is separating. We let $X_+$ be the connected component of $X(\C)\setminus X(\R)$ that contains the point
    \begin{equation*}
        Q=\left(\frac{1+ \sqrt{2} + i}{2} , i\right).
    \end{equation*}
    We now consider the following three rational functions on $X$:
    \begin{equation*}
        f_1=x,\, f_2=-\frac{-1 + 4 x - 2 x^2 + y}{2 (-1 + x)},\, f_3=\frac{1 - 4 x + 2 x^2 - y}{4 (-1 + x) x}.
    \end{equation*}
    We claim that all three functions are separating. To see this, we first note that the imaginary part of $f_i(Q)$ is positive for $i=1,2,3$. Therefore, by \Cref{lem:interlcrit} it suffices to show that the zeros and poles of $f_i$ interlace on both connected components of $X(\R)$ for $i=1,2,3$. This amounts a straight-forward calculation. For example $f_1$ has zeros $(0,\pm1)$ and one pole at infinity on each connected component of $X(\R)$. The image of $X$ under the map
    \begin{equation*}
        X\to\pp^1\times\pp^1\times\pp^1,\, P\mapsto(f_1(P),-f_2(P),f_3(P))
    \end{equation*}
    is the zero set of the following two polynomials:
    \begin{equation*}
     P_1=2z_1w_2z_3+w_1z_2w_3\textrm{ and }P_2=2(z_1-w_1)(w_2-z_2)(z_3-w_3)-w_1z_2w_3.   
    \end{equation*}
    Here $w_i,z_i$ are the homogeneous coordinates on the $i$-th copy of $\pp^1$ for $i=1,2,3$. One checks that this is a smooth curve of genus one and hence the map is an embedding. Therefore, by \Cref{prop:leeyangconstr} its image $\tilde{X}$ under the coordinate-wise M\"obius transformation $z\mapsto\frac{z+i}{z-i}$ is a strict Lee--Yang variety. Its multidegree is $(2,2,2)$. As in \Cref{ex:rationalExplicit} one shows that$X$ is not contained in a proper subtorus.
    The entries of the matrix
    \begin{equation*}
    L=\begin{pmatrix}
        \frac{\exp(1)}{6} & \frac{\exp(\sqrt{5})}{3} & \frac{\exp(\sqrt{3})}{3}\\
        \frac{\exp(\sqrt{2})}{6} & \frac{\exp(\sqrt{11})}{3} & \frac{\exp(\sqrt{13})}{6}
    \end{pmatrix}
\end{equation*}
are algebraically independent over $\Q$ by Lindemann's theorem, so in particular the $2\times 2$ minors of $L$ are linearly independent over $\Q$. Therefore, the  set 
$\Lambda=\Lambda(\tilde{X},L)$
is a Delone Fourier quasicrystal as in \Cref{thm:main1} with the properties as in \Cref{thm:main2}. Its density is 
\begin{equation*}
 \Delta(\Lambda)\approx 10.6583.
\end{equation*}
\Cref{fig:elliptic} is a picture of the Voronoi diagram of $\Lambda$. The color of a cell determined by a point $x\in\Lambda$ is chosen according to the connected component of $\tilde{X}(\T)$ which contains $\exp(2\pi iLx)$.
\end{ex}

\begin{figure}
\begin{center}
\includegraphics[height=1.5in]{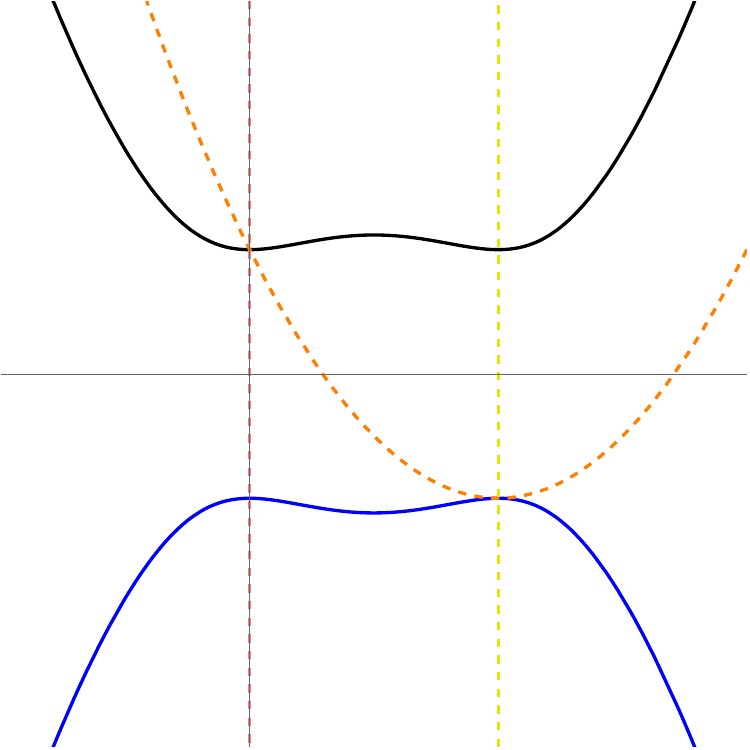} \ \ \ \  \includegraphics[height=1.5in]{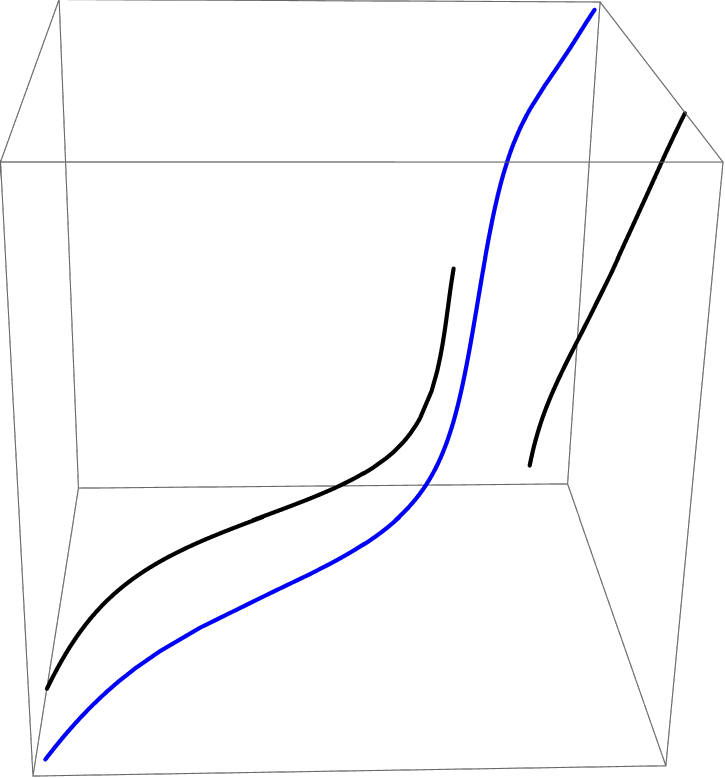}
\end{center}
\caption{The curves $X(\R)$ and $\{\theta \mid \exp(2\pi i\theta)\in \tilde{X} \}$ from \Cref{ex:elliptic}.}
\label{fig:ellipticCurve}
\end{figure}

\begin{figure}
\begin{center}
\includegraphics[height=1.8in]{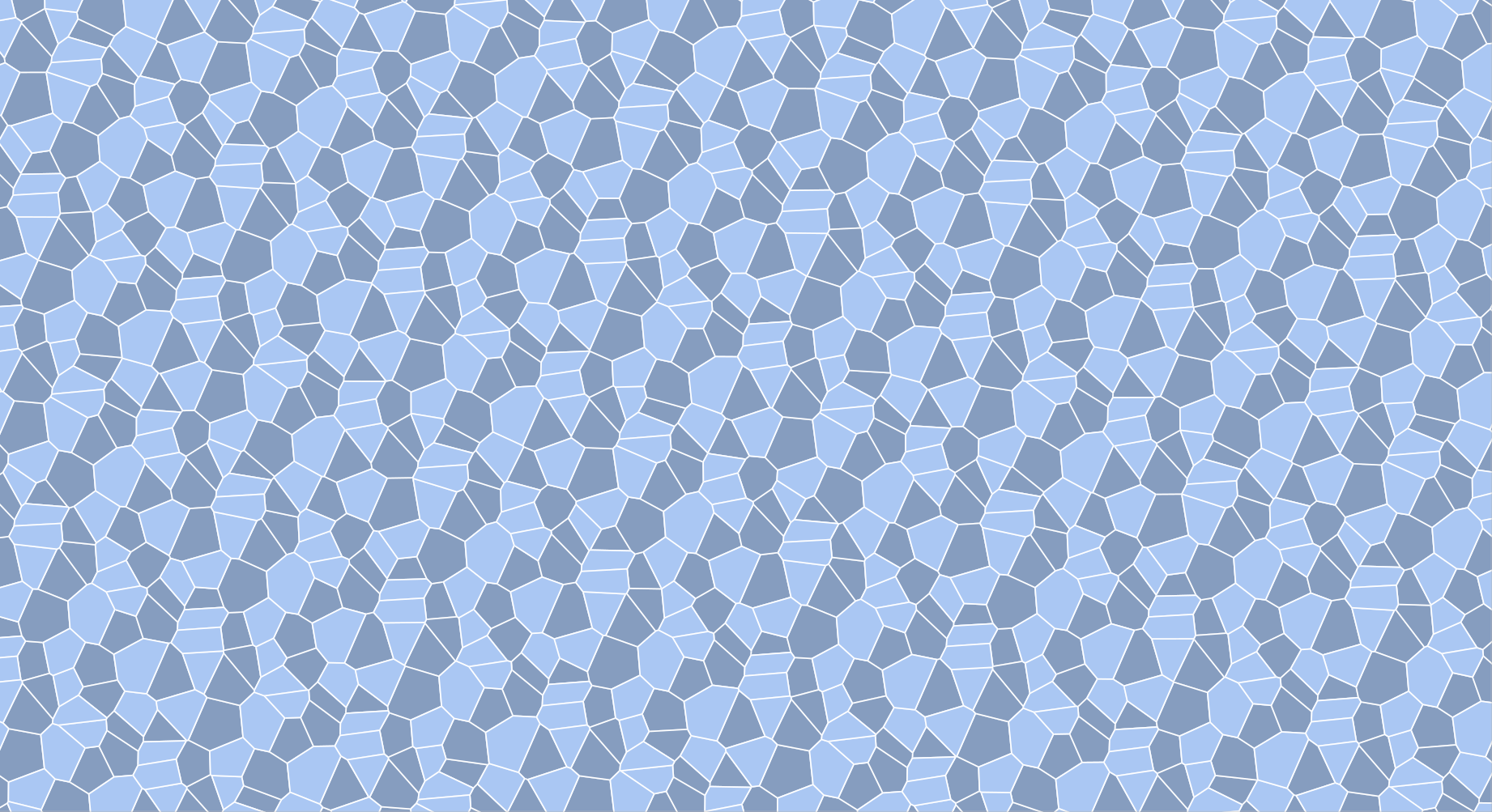}
\end{center}
\caption{The Voronoi diagram of the Fourier quasicrystal from \Cref{ex:elliptic}.}\label{fig:elliptic}
\end{figure}

\begin{ex}\label{ex:product2}
Here we consider a concrete instance of \Cref{ex:product}.
Let $X\subseteq (\pp^1)^4$ be  the variety defined by the two equations
\[-3w_1w_2 - z_1w_2 + w_1z_2 + 3 z_1 z_2 = 0 \text{ and } -3w_3w_4 - z_3w_4 + w_3z_4 + 3 z_3 z_4 = 0. \]
This is a product of two Lee--Yang curves in $(\pp^1)^2$ and has codimension two.  
The multidegree of $X$ is $(d_{12}, d_{13}, d_{14}, d_{23}, d_{24}, d_{34}) = (0,1,1,1,1,0)$.  
Let $L = \begin{pmatrix} 1 & 0& -ac  & -a  \\ 0 & 1 & abc + d & ab  \end{pmatrix}^t $ with $a =\sqrt{2}/2$, $b=\sqrt{3}/3$, $c = \sqrt{5}/5$ and $d = \sqrt{7}/7$. 
One can check that the minors of $L$ are positive and linearly independent over $\Q$. 
The resulting Fourier quasicrystal $\Lambda = \{x\in \R^2\mid \exp(2\pi i Lx)\in X\}$ (see \Cref{cor:leeyangfqc}) is defined by the 
trigonometric polynomial equations 
\[ \sin (\pi  (x_1- x_2))-3 \sin (\pi (x_1+ x_2) =0 \text{ and } 
 \sin (\pi(\ell_1- \ell_2))-3 \sin (\pi (\ell_1+ \ell_2))=0\]
 where $\ell_1 = -a c x_1 + (a b c + d)x_2$ and  $\ell_2 = -a x_1 + a b x_2$.
Figure~\ref{fig:prod} shows the zero sets of these trigonometric polynomial equations intersecting in $\Lambda$.
The discrete set 
\begin{equation*}
 \Lambda' = \{L^t k \mid k\in \Z^4, \var(k)<2\}   
\end{equation*}
supporting its Fourier transform is shown on the right.  
Note that the span of $\Lambda'$ over $\Q$ is a four-dimensional, whereas in \Cref{ex:rationalExplicit} and \Cref{ex:elliptic} 
the $\Q$-span of $\Lambda'$ was only three-dimensional. 
\end{ex}

\begin{figure}
\begin{center}
\includegraphics[height=1.8in]{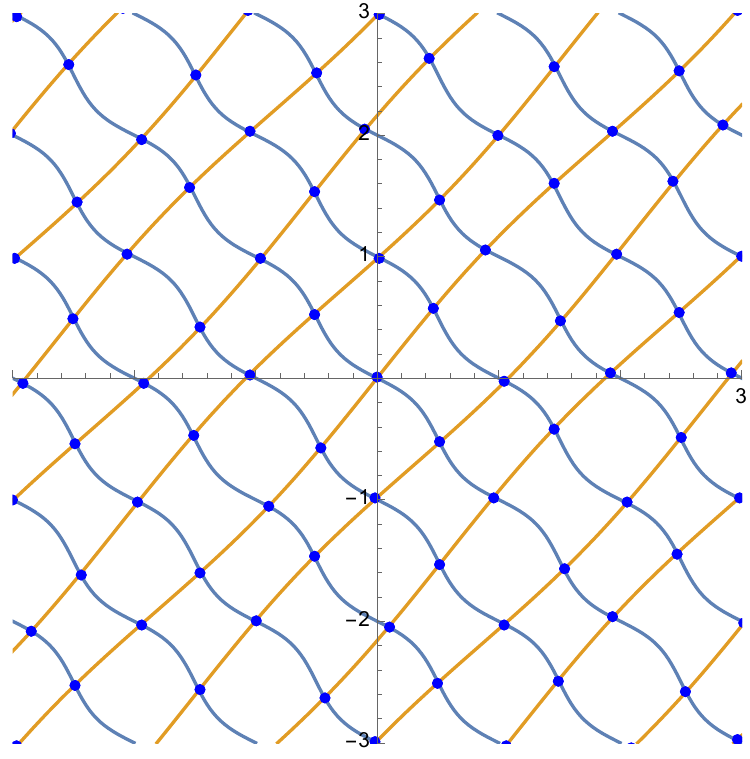}  \ \ \ 
\includegraphics[height=1.8in]{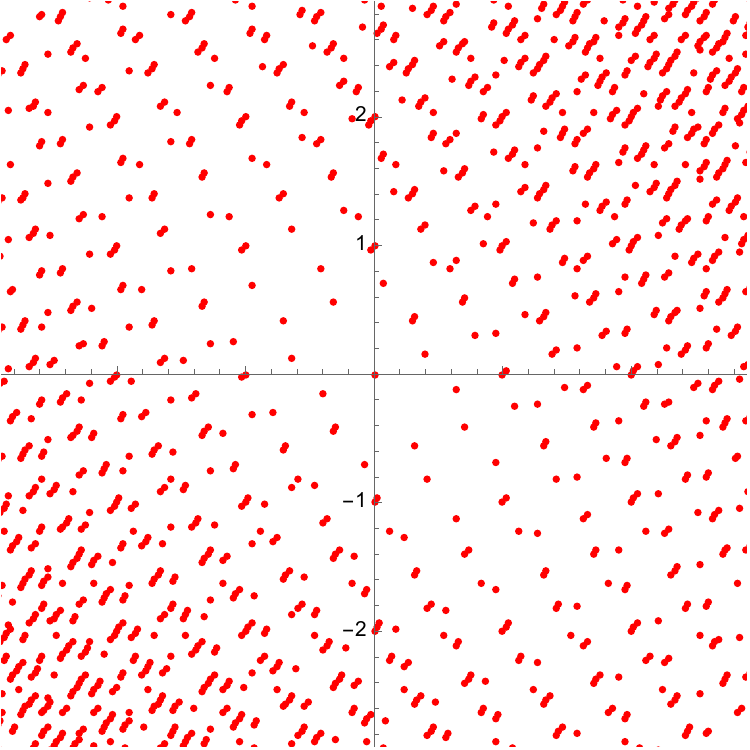} 
\end{center}
\caption{The Fourier quasicrystal $\Lambda$ from \Cref{ex:product2}  and discrete set $\Lambda'$ supporting its Fourier transform.}\label{fig:prod}
\end{figure}

\begin{ex}\label{ex:notcompleteintersection}
    Consider the following three rational functions on $\pp^1$:
    \begin{eqnarray*}
        f_1&=& \frac{\left(t^2-t-1\right) \left(t^2+t-1\right)}{2 t \left(2 t^2-3\right)},\\
        f_2&=&-\frac{t^4+3 t^3-3 t^2-5 t+1}{2 (t-1) \left(t^3-t^2-5 t-2\right)},\\
        f_3&=&\frac{2 t^4-5 t^2+1}{2 t \left(3 t^2-4\right)}.
    \end{eqnarray*}
    Each of them maps $i$ to the upper half-plane. By inspecting their zeros and poles and using \Cref{lem:interlcrit} we see that all three are separating. We  consider
        \begin{equation*}
        f=(f_1,-f_2,f_3)\colon \pp^1\to(\pp^1)^3.
    \end{equation*}
    First we compute $f_j-\frac{i}{2}$ for $j=1,2,3$:
    \begin{eqnarray}\label{eq:ihalf1}
        f_1-\frac{i}{2}&=& \frac{(t-i) \left(t^3-i t^2-2 t+i\right)}{2 t \left(2 t^2-3\right)}\\ \label{eq:ihalf2}
        f_2-\frac{i}{2}&=& -\frac{((1+i) t+(2-i)) \left(t^3-i t^2-2 t+i\right)}{2 (t-1) \left(t^3-t^2-5 t-2\right)}\\
        f_3-\frac{i}{2}&=& \frac{(2 t-i) \left(t^3-i t^2-2 t+i\right)}{2 t \left(3 t^2-4\right)}.
    \end{eqnarray}
    This shows that $f^{-1}(\frac{i}{2},-\frac{i}{2},\frac{i}{2})$ consists of the three distinct zeros of the polynomial 
    \begin{equation*}
     g=t^3-i t^2-2 t+i.   
    \end{equation*}
    In particular, the map $f$ is not injective. However, in order to apply \Cref{prop:leeyangconstr}, we will prove that the restriction of $f$ to $\pp^1(\R)$ is injective. To this end, we first show that $f^{-1}(f(i))$ has only one element. Assume for the sake of a contradiction that there exists $z\in f^{-1}(f(i))\setminus\{i\}$. Then $g(z)=0$ by (\ref{eq:ihalf1}). This in turn shows that $f_3(z)=\frac{i}{2}$. But since $f_3(i)\neq\frac{i}{2}$, this shows that $f^{-1}(f(i))=\{i\}$. Furthermore, because $i$ is a simple zero of $f_1-\frac{i}{2}$, this shows that the map $f$ is birational onto its image.
    Next we observe that if $f(z)=f(z')$ for $z\neq z'$, then 
    \begin{equation*}
     (f_1(z),f_3(z))=(f_1(z'),f_3(z'))   
    \end{equation*}
    is a singular point of the image of the planar curve in $(\pp^1)^2$ given as the image of $(f_1,f_3)$. This is cut out by the following bihomogeneous polynomial
    \[
        -{w_1}^4 {w_3}^4+92 {w_1}^4 {w_3}^2 {z_3}^2-1936 {w_1}^4 {z_3}^4-196 {w_1}^3 {w_3}^3 {z_1} {z_3}+8416 {w_1}^3 {w_3} {z_1} {z_3}^3
      \]\[  
        +92 {w_1}^2 {w_3}^4 {z_1}^2-13120 {w_1}^2 {w_3}^2 {z_1}^2 {z_3}^2+768 {w_1}^2 {z_1}^2 {z_3}^4+8640 {w_1} {w_3}^3 {z_1}^3 {z_3}\]\[-1600 {w_1} {w_3} {z_1}^3 {z_3}^3-2048 {w_3}^4 {z_1}^4+768 {w_3}^2 {z_1}^4 {z_3}^2.
    \]
    This curve has exactly three real singularities, namely   $(\pm \frac{3 \sqrt{\frac{11}{14}}}{8}, \pm \sqrt{\frac{2}{77}})$ and $(\infty,\infty)$. Each of them has two preimages under $(f_1,f_3)$ which are:
    \[\frac{1}{14} \left(-2 \sqrt{14}-\sqrt{154}\right)\textnormal{ and }\frac{1}{14} \left(2 \sqrt{14}-\sqrt{154}\right),\]
    \[\frac{1}{14} \left(\sqrt{154}-2 \sqrt{14}\right)\textnormal{ and }\frac{1}{14} \left(2 \sqrt{14}+\sqrt{154}\right),\]
    \[0\textnormal{ and }\infty.\]
    One verifies that $f_2$ takes different values on the two preimages of each real singularity. This shows that $f$ is injective on $\pp^1(\R)$ and therefore by \Cref{prop:leeyangconstr} the curve $\tilde{X}\subseteq(\pp^1)^3$ parametrized by
    \begin{equation*}
        \left( \frac{f_1+i}{f_1-i},\frac{-f_2+i}{-f_2-i},\frac{f_3+i}{f_3-i}\right)
    \end{equation*}
    is a strict Lee--Yang variety of multidegree $(4,4,4)$. Using a computer algebra system we find three Laurent polynomials that cut out $\tilde{X}\cap(\C^*)^3$. Under the map
    \begin{equation*}
        \R^2\to(\C^*)^3,\, x\mapsto\exp(2\pi i Lx)
    \end{equation*}
    these pull back to three exponential polynomials such that their common zero zero set in $\R^2$ is a Fourier quasicrystal whenever $L\in\R^{3\times 2}$ has only positive $d\times d$ minors. We have depicted their zero sets in \Cref{fig:notcompleteintersection} for     
    \begin{equation*}
  L=\begin{pmatrix}
     1&0\\ 0 &1\\ -\sqrt{2}&\sqrt{3}
    \end{pmatrix}.
 \end{equation*}
 For each two of the three exponential polynomials, there are points where these two vanish but the third one does not. Only those points in which all three vanish belong to the support of our Fourier quasicrystal. We even claim that there is no pair of Laurent polynomials that cuts out $\tilde{X}\cap(\C^*)^3$ scheme theoretically. To this end, we examine the tangent cone of $\tilde{X}$ at the point $P=(3,\frac{1}{3},3)$, which is the image of $(\frac{i}{2},-\frac{i}{2},\frac{i}{2})$ under the coordinate-wise M\"obius transformation $z\mapsto\frac{z+i}{z-i}$. It is the union of the tangent vectors of the three branches of $\tilde{X}$ meeting at $P$ corresponding to the three zeros of $g$. One checks that these three tangent vectors are linearly independent. Thus they correspond to three points in $\pp^2$ in general position. If $\tilde{X}\cap(\C^*)^3$ was cut out scheme theoretically by two Laurent polynomials $p_1,p_2$, then these three points would be the common zero set of the initial forms of $p_1$ and $p_2$ at $P$, each with multiplicity one. This is not possible by B\'ezout's theorem.
\end{ex}

\begin{figure}
\begin{center}
\includegraphics[height=1.8in]{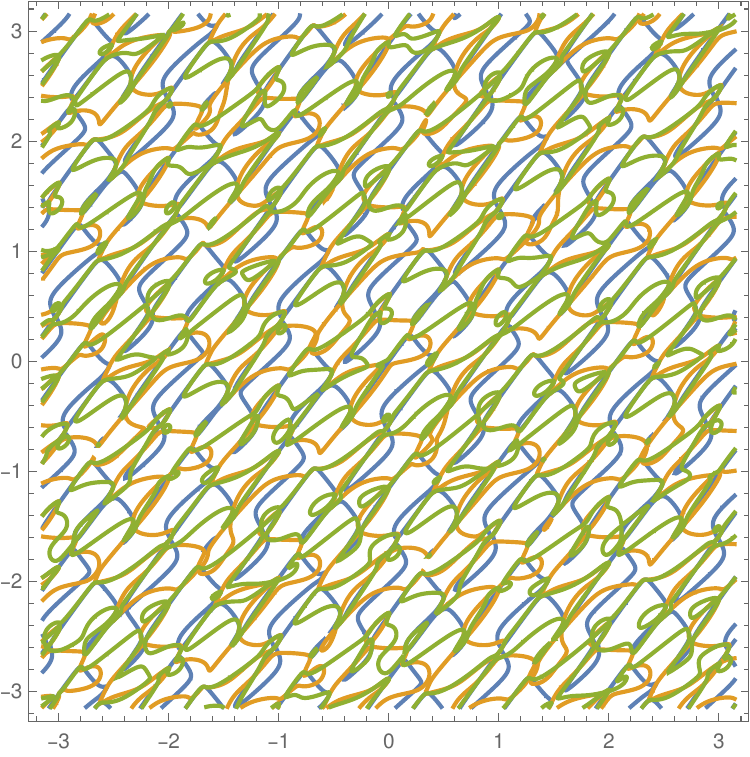}  \ \ \ 
\includegraphics[height=1.8in]{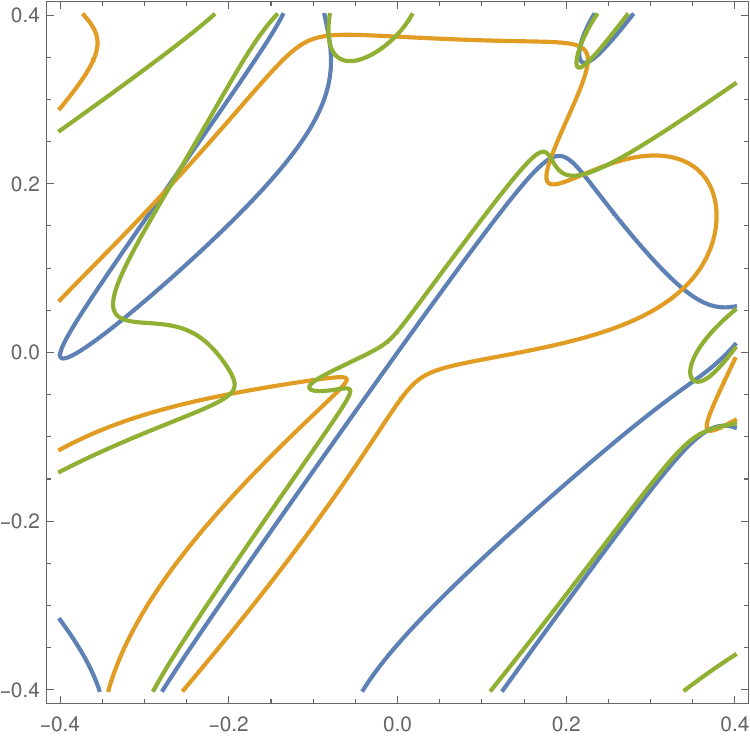} 
\end{center}
\caption{The zero sets of the three exponential polynomials from \Cref{ex:notcompleteintersection} in blue, green and orange. Only those points that lie on all three curves belong to the support of the Fourier quasicrystal.}
\label{fig:notcompleteintersection}
\end{figure}

 \noindent \textbf{Acknowledgements.} 
The authors would like to thank Philipp di~Dio, Larry Guth, Dominique Maldague and Peter Sarnak for interesting discussions on the question of high-dimensional FQs.  This project started at the Oberwolfach meeting on ``New Directions in Real Algebraic Geometry'' in March of 2023.

\bibliographystyle{plain}
\bibliography{biblio}

 \end{document}